\date{}
        \global\let\tikz@ensure@dollar@catcode=\relax
        \newcommand{\ignore}[1]{}
        \newcommand*{\subproofname}{Sub-Proof:}
        \renewcommand{\vec}[1]{\overline{#1}}
        \newcommand{\va}{{\vec{a}}\@ifnextchar{^}{\!\:}{}}
        \newcommand{\vc}{{\vec{c}}\@ifnextchar{^}{\!\:}{}}
        \newcommand{\vd}{{\vec{d}}\@ifnextchar{^}{\!\:}{}}
        \newcommand{\ve}{{\vec{e}}\@ifnextchar{^}{\!\:}{}}
        \newcommand{\vg}{{\vec{g}}\@ifnextchar{^}{\!\:}{}}
        \newcommand{\vh}{{\vec{h}}\@ifnextchar{^}{\!\:}{}}
        \newcommand{\vi}{{\vec{i}}\@ifnextchar{^}{\!\:}{}}
        \newcommand{\vj}{{\vec{j}}\@ifnextchar{^}{\!\:}{}}
        \newcommand{\vk}{{\vec{k}}\@ifnextchar{^}{\!\:}{}}
        \newcommand{\vl}{{\vec{\ell}}\@ifnextchar{^}{\!\:}{}}
        \newcommand{\vm}{{\vec{m}}\@ifnextchar{^}{\!\:}{}}
        \newcommand{\vn}{{\vec{n}}\@ifnextchar{^}{\!\:}{}}
        \newcommand{\vo}{{\vec{o}}\@ifnextchar{^}{\!\:}{}}
        \newcommand{\vp}{{\vec{p}}\@ifnextchar{^}{\!\:}{}}
        \newcommand{\vq}{{\vec{q}}\@ifnextchar{^}{\!\:}{}}
        \newcommand{\vr}{{\vec{r}}\@ifnextchar{^}{\!\:}{}}
        \newcommand{\vs}{{\vec{s}}\@ifnextchar{^}{\!\:}{}}
        \newcommand{\vt}{{\vec{t}}\@ifnextchar{^}{\!\:}{}}
        \newcommand{\vu}{{\vec{u}}\@ifnextchar{^}{\!\:}{}}
        \newcommand{\vv}{{\vec{v}}\@ifnextchar{^}{\!\:}{}}
        \newcommand{\vw}{{\vec{w}}\@ifnextchar{^}{\!\:}{}}
        \newcommand{\vy}{{\vec{y}}\@ifnextchar{^}{\!\:}{}}
        \newcommand{\vz}{{\vec{z}}\@ifnextchar{^}{\!\:}{}}
        \newcommand{\vA}{{\vec{A}}\@ifnextchar{^}{\!\:}{}}
        \newcommand{\vB}{{\vec{B}}\@ifnextchar{^}{\!\:}{}}
        \newcommand{\vC}{{\vec{C}}\@ifnextchar{^}{\!\:}{}}
        \newcommand{\vD}{{\vec{D}}\@ifnextchar{^}{\!\:}{}}
        \newcommand{\vE}{{\vec{E}}\@ifnextchar{^}{\!\:}{}}
        \newcommand{\vF}{{\vec{F}}\@ifnextchar{^}{\!\:}{}}
        \newcommand{\vG}{{\vec{G}}\@ifnextchar{^}{\!\:}{}}
        \newcommand{\vH}{{\vec{H}}\@ifnextchar{^}{\!\:}{}}
        \newcommand{\vI}{{\vec{I}}\@ifnextchar{^}{\!\:}{}}
        \newcommand{\vJ}{{\vec{J}}\@ifnextchar{^}{\!\:}{}}
        \newcommand{\vK}{{\vec{K}}\@ifnextchar{^}{\!\:}{}}
        \newcommand{\vL}{{\vec{L}}\@ifnextchar{^}{\!\:}{}}
        \newcommand{\vM}{{\vec{M}}\@ifnextchar{^}{\!\:}{}}
        \newcommand{\vN}{{\vec{N}}\@ifnextchar{^}{\!\:}{}}
        \newcommand{\vO}{{\vec{O}}\@ifnextchar{^}{\!\:}{}}
        \newcommand{\vP}{{\vec{P}}\@ifnextchar{^}{\!\:}{}}
        \newcommand{\vQ}{{\vec{Q}}\@ifnextchar{^}{\!\:}{}}
        \newcommand{\vR}{{\vec{R}}\@ifnextchar{^}{\!\:}{}}
        \newcommand{\vS}{{\vec{S}}\@ifnextchar{^}{\!\:}{}}
        \newcommand{\vT}{{\vec{T}}\@ifnextchar{^}{\!\:}{}}
        \newcommand{\vU}{{\vec{U}}\@ifnextchar{^}{\!\:}{}}
        \newcommand{\vV}{{\vec{V}}\@ifnextchar{^}{\!\:}{}}
        \newcommand{\vW}{{\vec{W}}\@ifnextchar{^}{\!\:}{}}
        \newcommand{\vY}{{\vec{Y}}\@ifnextchar{^}{\!\:}{}}
        \newcommand{\vX}{{\vec{X}}\@ifnextchar{^}{}{}}          
        \newcommand{\vZ}{{\vec{Z}}\@ifnextchar{^}{\!\:}{}}
\newcommand{\demph}[1]{\textbf{\textit{#1}}}
\author{Fedor Part\thanks{Department of Computer Science, Royal Holloway, University of London.  Fedor.Part@gmail.com} \and Iddo Tzameret\thanks{Department of Computer Science, Royal Holloway, University of London. Iddo.Tzameret@rhul.ac.uk} }
\newcommand{\iddo}[1]{}
\newcommand{\fedor}[1]{}
\newcommand{\DT}[1]{\text{DT}(#1)}
\newcommand{\DTsw}[1]{\text{DT}_{sw}(#1)}
\newcommand{\Ax}{\Gamma}
\newcommand{\reslin}[1]{{\normalfont Res}$(\text{\normalfont lin}_{#1}${\normalfont)}}
\newcommand{\reslinsem}[1]{\normalfont Sem-Res($\text{\normalfont lin}_{#1}${\normalfont)}}
\newcommand{\reslinsw}[1]{$\text{\normalfont Res}_{sw}(\text{\normalfont lin}_{#1}${\normalfont)}}
\newcommand{\treslin}[1]{{tree-like \normalfont Res($\text{\normalfont lin}_{#1}${\normalfont)}}}
\newcommand{\treslinAx}[2]{{tree-like \normalfont Res(}$\text{\normalfont lin}_{#1},#2${\normalfont)}}
\newcommand{\treslinsw}[1]{$\text{tree-like \normalfont Res}_{sw}(\text{\normalfont lin}_{#1}${\normalfont{)}}}
\newcommand{\PCa}[1]{$\normalfont{\normalfont PC}_{#1}$}
\newcommand{\linSys}[1]{\normalfont{\textsf{LinSys}}(#1)}
\newcommand{\subSum}[1]{\normalfont{\textsf{SubSum}}(#1)}
\newcommand{\negIm}[1]{\normalfont{\textsf{ImAv}}\left(#1\right)}
\newcommand{\imAx}[1]{\normalfont{\textsf{Im}}\!\left(#1\right)\!}
\newcommand{\phpInj}{\normalfont{\textsf{Pigeons}}}
\newcommand{\phpTot}{\normalfont{\textsf{Holes}}}
\newlength{\defbaselineskip}
\renewcommand{\singlespacing}{\setlength{\baselineskip}{1.2\defbaselineskip}}
\begin{document}

\title{Resolution with Counting:\\ Dag-Like Lower Bounds and Different Moduli}

\renewcommand{\singlespacing}{\setlength{\baselineskip}{1.0\defbaselineskip}}

\maketitle

\newcommand{\R}{\ensuremath{R}}

\renewcommand{\check}[1]{\textsf{\textcolor[rgb]{0.501961,0,1}{Check: #1}}}

\newcommand*\circled[1]{\tikz[baseline=(char.base)]{
   \node[shape=circle,draw,inner sep=1pt] (char) {#1};}}

\begin{abstract}
\textit{Resolution over linear equations} is a natural extension of the popular  resolution  refutation system, augmented with the ability to carry out basic counting. Denoted \reslin{\R}, this refutation system operates with disjunctions of linear equations with boolean  variables over a ring \R, to refute unsatisfiable sets of such disjunctions. 
Beginning in the work of \cite{RT07}, through the work of \cite{IS14} which focused on tree-like lower bounds, this refutation system was shown to be fairly strong. Subsequent work (cf.~\cite{Kra16,IS14,KO18,GK17})  made it  evident that establishing lower bounds against general \reslin R\ refutations is a challenging and interesting task since the system captures a ``minimal'' extension of resolution with counting gates for which no super-polynomial lower bounds are known to date. 


We provide the first super-polynomial size  lower bounds on general (dag-like) resolution over linear equations refutations in the  large characteristic regime. 
In particular we prove that the subset-sum principle  $1+ x_1 +\dots +2^n x_n = 0$ requires refutations of exponential-size over \Q. %
Our proof technique is nontrivial and  novel: roughly speaking, we show that under certain conditions every refutation of a subset-sum instance $f=0$ must pass through a fat clause containing an equation $f=\alpha$ for each $\alpha$ in the image of $f$ under boolean assignments. We develop  a somewhat different approach 
to prove exponential lower bounds against tree-like refutations of any subset-sum instance that depends on $n$ variables, hence also separating tree-like from dag-like refutations over the rationals.



We then turn to the finite fields regime, showing that the work of Itsykson and Sokolov \cite{IS14} who  obtained  tree-like lower bounds over $\F_2$ can be carried over and extended to every finite field. We establish new lower bounds and separations as follows: (\textbf{i}) 
for every pair of distinct primes $p,q$, there exist CNF formulas  with short tree-like refutations in \reslin{\F_p} that require exponential-size tree-like  \reslin{\F_q} refutations; (\textbf{ii}) random $k$-CNF formulas require exponential-size tree-like \reslin{\F_p} refutations, for every prime $p$ and constant $k$; and    (\textbf{iii}) exponential-size lower bounds for tree-like \reslin{\F} refutations of the pigeonhole principle, for \emph{every} field \F. 



 


\end{abstract}

 \note\ 
 \textit{The first 10 pages hold a detailed introduction to this work, including background, description of our results and proof techniques}.

\textit{This is  an improved version of a preliminary manuscript that has been circulated before. In particular, tree-like lower bounds on any subset-sum instance that depends on $n$ variables have been added, results about the proof complexity of \emph{linear systems} were added (Sec.~\ref{sec:Linear-Systems-with-Small-Coefficients}), and the dag-like lower bound was modified and rectified to deal with the weakening rule.}


\section{Introduction}

The resolution refutation system is among the most prominent and well-studied propositional proof systems, and for good reasons: it is a  natural and simple refutation system, that, at least in practice, is capable of being easily automatized. Furthermore,  while being non-trivial, it is simple enough to succumb to many lower bound techniques.     

Formally, a resolution refutation of an unsatisfiable CNF formula is a sequence of clauses $D_1,\dots,D_l=\emptyset$, where $\emptyset$ is the empty clause, such that each $D_i$ is either a clause of the CNF or is derived from previous clauses $D_j,D_k,j\leq k<i$ by means of applying the  following \emph{resolution rule}: from the clauses $C\vee x$ and $D\vee\neg x$ derive $C\vee D$. 

The \emph{tree-like} version of resolution, where every occurrence of a clause in the  refutation is used at most once as a premise of a rule, is of
particular importance, since it helps us  to understand certain kind of satisfiability algorithms known as DPLL algorithms (cf.~\cite{Nor15-siglog}). DPLL algorithms are simple recursive algorithms for solving SAT that are the basis of successful contemporary SAT-solvers.
%
The transcript of a run of DPLL on an unsatisfiable formula is a decision tree, which can be interpreted as a tree-like resolution refutation. Thus, lower bounds on the size of tree-like resolution refutations imply lower bounds on the run-time of DPLL algorithms (though it is important to clarify that contemporary SAT-solvers utilize more than the strength of tree-like resolution).

In contrast to the apparent practical success of SAT-solvers, a  variety of hard instances that require exponential-size refutations have been found for resolution during the years. Many classes of such hard instances are based on principles expressing some sort of counting. One famous example is the \textit{pigeonhole principle}, denoted $\text{PHP}^m_n$,  expressing that there is no (total) injective map from a set with cardinality $m$ to a set with cardinality $n$ if $m>n$ \cite{Hak85}. Another important example is \textit{Tseitin tautologies}, denoted $\text{TS}_G$, expressing that the sum of the degrees of vertices in a graph $G$ must be even \cite{Tse68}.

Since such counting tautologies are a source of hard instances for resolution, it is useful to study extensions of resolution that can efficiently count, so to speak. 
This is important firstly, because such systems may become the basis of  more efficient SAT-solvers and secondly, in order  to extend the frontiers of lower bound techniques against stronger and stronger propositional proof systems. Indeed, there are many  works dedicated to the study of weak systems operating with De Morgan formulas with counting connectives; these are variations of resolution that operate with disjunctions of certain arithmetic expressions.
%

One such extension of resolution was introduced by Raz and Tzameret \cite{RT07} under the name \textit{resolution over linear equations}  in which literals are replaced by linear equations. Specifically, the system
R(lin), which operates with disjunctions of linear equations over $\Z$ was studied in \cite{RT07}. This work demonstrated the power of resolution with counting over the integers, and specifically provided polynomial upper bounds for the pigeonhole principle and the Tseitin formulas, as well as other basic counting formulas.  It also established exponential lower bounds for a subsystem of R(lin), denoted $\text{R}^0(\text{lin})$. 
Subsequently, Itsykson and Sokolov \cite{IS14} studied resolution over linear equations over  $\F_2$,  denoted Res($\oplus$). They demonstrated the power of resolution with counting mod 2 as well as its limitations by means of several upper and tree-like lower bounds. Moreover, \cite{IS14} introduced DPLL algorithms, which can ``branch'' on arbitrary linear forms over $\F_2$, as well as parity decision trees, and showed a correspondence between parity decision trees and tree-like Res($\oplus$) refutations. In both \cite{RT07} and \cite{IS14} the dag-like lower bound question for resolution over linear equations remained open.
\medskip

Apart from being a very natural refutation system, understanding the proof complexity of resolution over linear equations is important for the following reason: {proving super-polynomial dag-like lower bounds against resolution over linear equations for prime fields and for the integers can be viewed as a first step towards the long-standing open problems of $\ACZ[p]$-Frege and \TCZ-Frege lower bounds, respectively. We explain this in what follows.  


Resolution operates with clauses, which are De Morgan formulas ($\neg$, unbounded fan-in $\vee$ and $\wedge$) of a particular kind, namely, of depth 1. Thus, from the perspective of proof complexity, resolution is a fairly weak version of the propositional-calculus, where the latter operates with arbitrary De Morgan formulas. Under a natural and general definition,  propositional-calculus systems go under the name \emph{Frege systems}: they can be  (axiomatic) Hilbert-style systems or sequent-calculus style systems. 
The task of proving lower bounds for general Frege systems is notoriously hard: no nontrivial lower bounds are known to date. Basically, the strongest fragment  of Frege systems, for which lower bounds are known are \ACZFrege\ systems, which are Frege proofs  operating with constant-depth formulas. For example, both $\text{PHP}^m_n$ and $\text{TS}_G$ do not admit sub-exponential  proofs in \ACZFrege\ \cite{Ajt88,PBI93,KPW95,BS02}. However, if we extend the De Morgan language with counting connectives such as unbounded fan-in mod $p$ 
($\ACZ[p]$-Frege) or threshold gates (\TCZ-Frege), then we step again into the darkness: proving super-polynomial lower bounds for these systems is a long-standing open problem on what can be characterized as the ``frontiers'' of proof complexity. Recent works by Kraj\'{i}\v{c}ek \cite{Kra16}, Garlik-Ko\l odziejczyk \cite{GK17} and Kraj\'{i}\v{c}ek-Oliveira \cite{KO18} had suggested possible approaches to attack dag-like \reslin{\F_2} lower bounds (though this problem remains open to date). 


\subsection{Our Results and Techniques}\label{intro:our-results}
In this work  we %
prove a host of new lower bounds, separations and upper bounds for resolution over linear equations. Our main  novel technical  contribution is a dag-like refutation lower bound over large characteristic fields. Conceptually, the proof idea exploits two main properties that recently have been found useful in proof complexity:\vspace{-3pt}




\begin{enumerate}
\item[(i)] Single axiom: the hard instance consists of a single unsatisfiable axiom (for boolean assignments) 
\begin{equation}\label{eq:bvp}
1+x_1+\dots +2^n x_n=0
\end{equation}
(unlike, for instance, a set of clauses). \vspace{-6pt} 
\item[(ii)] Large coefficients: the hard instance uses coefficients of exponential magnitude.  
\end{enumerate}
Although employing different approaches, both of these properties played a recent role in proof complexity lower bounds. Forbes et al.~\cite{FSTW16} used  subset-sum variants  (that is, unsatisfiable linear equations with boolean variables) to establish lower bounds on subsystems of the ideal proof system (IPS) over large characteristic fields, where IPS\ is the strong proof system introduced by Grochow and Pitassi \cite{GP14}. It is essential in both \cite{FSTW16} and our work that the hard instance takes the form of a single unsatisfiable axiom. Subsequently, in a very recent work, Alekseev et al.~\cite{AGHT19} established  conditional exponential-size lower bounds on full IPS refutations over the rationals  of the same subset-sum instance \eqref{eq:bvp}, where the use of big coefficients is again essential to the lower bound. We explain our deg-like lower bound in Section \ref{sec:characteristic-zero}.


%
%

\medskip 

The other novel contribution we make is a systematic development of  new kinds of  lower bound techniques against  \emph{tree-like} resolution over linear equations, both over the rationals and over finite fields. To this end we  develop  new and extend existing combinatorial techniques such as the Prover-Delayer game
method as originated in Pudlak and Impagliazzo \cite{PI00} for resolution, and developed further by Itsykson and Sokolov \cite{IS14}. Moreover, we  provide new applications in proof complexity of different combinatorial results; this include bounds on the size of essential coverings of the hypercube from Linial and Radhakrishnan \cite{LINIAL05}, a result about the hyperplane
coverings of the hypercube by Alon and F\"uredi \cite{AF93},  the notion of immunity from Alekhnovich and Razborov \cite{AR01} and Gilbert bound on linear error correcting
codes. 
 We further non-trivially extend the well-established principle of size-width tradeoffs  in resolution \cite{BSW99} to the setting of \reslin R\ (though it is important to note that most of our lower bounds do not follow from this tradeoff result).

%

\subsubsection{Background}
For a ring $R$, the refutation system \reslin{R} is defined as an extension of the resolution refutation system as follows (see Raz and Tzameret \cite{RT07}). The \emph{proof-lines} of \reslin{\R} are called \bemph{linear clauses} (sometimes called simply \emph{clauses}), which are defined as disjunctions of linear equations (with duplicate equations contracted). More formally, they are  disjunctions of the form:
$$
\left(\sum\nolimits_{i=1}^na_{1i}x_i+b_1=0\right)\vee\dots\vee\left(\sum\nolimits_{i=1}^na_{ki}x_i+b_k=0\right),
$$
where $k$ is some number (the \emph{width} of the clause), and $a_{ji}, b_j \in \R$. The \emph{resolution rule} is the following:  
\begin{center}
from  $(C\vee f=0)$ and $(D\vee g=0)$ derive  $(C\vee D\vee (\alpha f + \beta g) = 0),$
\end{center}
where $\alpha,\beta\in \R$, and where $C,D$ are linear clauses. A \reslin{R} \emph{refutation} of an unsatisfiable over 0-1 set of linear clauses $C_1,\ldots,C_m$ is a sequence of proof-lines, where each proof-line is either $C_i$, for $i\in[m]$, a boolean axiom $(x_i=0\vee x_i=1)$ for  some variable $x_i$, or was derived from previous proof-lines by the above resolution rule, or by the \emph{weakening rule} that allows to extend clauses with arbitrary disjuncts, or a \emph{simplification rule} allowing to discard false constant linear forms (e.g., $1=0$) from a linear clause. The last proof-line in a refutation is the empty clause (standing for the truth value \textsf{false}).

The \demph{size} of a \reslin R\  refutation is the total size of all the clauses
in the derivation, where the size of a clause is defined to be the total number of occurrences
of variables in it plus the total size of all the coefficient occurring in the clause. The size of a
coefficient when using integers (or integers embedded in characteristic zero rings) is the
standard size of the binary representation of integers (nevertheless, when we talk about ``big'' or ``exponential''  coefficients and ``polynomially bounded'' coefficients, etc., we mean that the \emph{magnitude} of the coefficients is big (exponential) or polynomially bounded).
 

We are generally interested in the  following questions: 
\begin{itemize}
\item[(Q1)] For a given ring  $\R$, what kind of counting can be efficiently performed in   \reslin{\R} and tree-like \reslin{\R}?
 \item[(Q2)] Can dag-like \reslin{\R} be separated from tree-like \reslin{\R}?
\item[(Q3)] Can tree-like systems for different rings $\R$ be separated?
\end{itemize}

\para{Tree-like \reslin{R} with semantic weakening.} 
In order to be able to do some non-trivial counting in \emph{tree-like} versions of resolution over linear equations we define a semantic version of the system as follows.

The system \reslinsw{\R}  is obtained from \reslin{R} by replacing the weakening and 
the simplification rules, as well as the boolean axioms, with the \emph{semantic weakening} rule (the symbol $\models$ will denote in this work semantic implication \emph{with respect to 0-1 assignments}):\footnotemark
\begin{prooftree}
   \AxiomC{$C$}
   \RightLabel{($C\models D$)\,.}
   \UnaryInfC{$D$}
\end{prooftree}
\footnotetext{Let $k=char(\R)$ be the characteristic of the ring \R. In case $k\notin \{1,2,3\}$, deciding whether an  \R-linear clause $D$ is a tautology (that is, holds for every 0-1 assignment to its variables) is at least as hard as deciding whether a 3-DNF is a tautology (because over characteristic $k\notin \{1,2,3\}$  linear equations can express conjunction of three conjuncts). For this reason \reslinsw{\R} proofs cannot be checked in polynomial time and thus \reslinsw{\R}
is not a Cook-Reckhow proof system unless $\P=\coNP$ (namely, the correctness of proofs in the system cannot necessarily be checked in polynomial-time, as required by a Cook-Reckhow propositional proof system \cite{CR79}; see Section \ref{sec:Propositional-Proof-Systems}).} 

The reason for studying  \reslinsw{\R} is mainly the following: Let $\Ax$ be an arbitrary set of tautological \R-linear clauses. Then, lower bounds for \treslinsw{\R} imply lower bounds for \treslin{\R} with formulas in $\Gamma$ as axioms. For example,
in case \F\ is a field of characteristic 0, the possibility to do counting in \treslin{\F} is quite limited. For instance, we show that  $2x_1+\cdots+2x_n=1$ requires an exponential-size in $n$ refutations (Theorem~\ref{thm:ssTLLB}). On the other hand, such contradictions \emph{do} admit short tree-like \reslin\F\ refutations in the presence of the following \emph{generalized boolean axioms} (which is a tautological linear clause): 
\begin{equation}\label{eq:gen-bool-axiom}
\imAx{f}:=\bigvee\nolimits_{A\in im_2(f)}(f=A),
\end{equation} 
where $im_2(f)$ is the image of $f$ under 0-1 assignments. Similar to the way the boolean  axioms $(x_i=0)\lor(x_i=1)$ state that the possible value
of a variable is either zero or one, the $\imAx{f}$ axiom states
all the possible values that the linear form $f$ can have. If a lower bound holds for \treslinsw{\F} it also holds, in particular, for
\treslin{\F} with the axioms $\imAx{f}$, and this makes \treslinsw{\F} a useful system, for which lower bounds against are sufficiently interesting.
%

\subsubsection{Characteristic Zero Lower Bounds} \label{sec:characteristic-zero}

For characteristic zero fields we will use mainly the rational number field \Q\ (though many of the results hold over any characteristic zero rings). First, we show that over \Q, whenever  $\alpha_1x_1+\cdots+\alpha_nx_n+\beta=0$ is unsatisfiable (over 0-1 assignments),
it has polynomial dag-like \reslin\Q\ refutations if the coefficients are polynomially bounded in magnitude, while it requires exponential dag-like \reslin\Q\ refutations for some subset-sum instances with exponential-magnitude coefficients. Note that $\alpha_1x_1+\cdots+\alpha_nx_n+\beta=0$ expresses
the \emph{subset-sum principle}:   $\alpha_1x_1+\cdots+\alpha_nx_n=-\beta$ is satisfiable iff there is a subset of the integral coefficients
$\alpha_i$ whose sum is precisely $-\beta$.  The lower bound is stated in the following theorem: 
\begin{theorem*}[Theorem~\ref{thm:dagLB}; Main dag-like lower bound]
Any \reslin{\Q} refutation of $x_1+2x_2+\cdots+2^nx_n+1=0$ requires size $2^{\Omega(n)}$.
\end{theorem*}

The proof of this theorem introduces a new lower bound technique. We show that every (dag- or tree-like) refutation $\pi$ of $x_1+2x_2+\cdots+2^nx_n+1=0$ can be transformed without  much increase in size into a derivation of a certain ``fat'' (exponential-size) clause $C_{\pi}$ from boolean axioms only.\footnote{The notion of showing that a refutation must go though a fat (i.e., wide) clause is well established in resolution lower bounds. However,  we  note that our lower bound is completely different from the known size-width based resolution lower bounds (as formulated in a generic way in the work of   Ben-Sasson and Wigderson \cite{BSW99}).} In order to prove that $C_\pi$ is fat, we ensure that every disjunct $g=0$ in $C_{\pi}$ has at most $2^{cn}$ satisfying boolean assignments, for some constant $c<1$. Because $C_{\pi}$ is derived from boolean  axioms alone, it must be a boolean tautology, that is, it must have $2^n$ satisfying assignment. Since every disjunct in $C_\pi$ is satisfied by at most $2^{cn}$ assignments, the number of disjuncts in the clause is at least $2^{(1-c)n}$. Since our constructed derivation is not much larger than the original refutation, the size of the original refutation must be $2^{\Omega(n)}$.

This proof relies in an essential way on the fact that the coefficients of the linear form have exponential  magnitude. Indeed, every contradiction of the form $f=0$
can be shown to admit polynomial-size dag-like \reslin\Q\ refutations whenever the coefficients of $f$ are polynomially bounded. 
A natural question is whether in the case of bounded coefficients, $f=0$ can be efficiently refuted already by \treslin\Q\ refutations.
The question turns out to be non-trivial, and we provide a negative answer:

\begin{theorem*}[Theorem~\ref{thm:ssTLLB}; Subset-sum tree-like lower bounds]
Let $f$ be any linear polynomial over \Q, which depends on $n$ variables. Then \treslin\Q\ refutations of $f=0$ are of size $2^{\Omega(\sqrt{n})}$. 
\end{theorem*}

The proof is in two stages. 
First, we use a transformation analogous to the one used for the dag-like lower bound to reduce the lower bound problem for
refutations of $f=0$ to a lower bound problem for derivations of clauses of a certain kind. Namely, we transform any tree-like refutation
$\pi$ of $f=0$ to a tree-like derivation of $C_{\pi}$ from boolean  axioms without much increase in size. The only difference is that this time we
ensure that in every disjunct $g=0$ of $C_{\pi}$, the linear polynomial $g$ depends on at least $\frac{n}{2}$ variables.

Second, we prove that \treslin\Q\ derivations of such a $C_{\pi}$ are large:


\begin{theorem*}[Theorem~\ref{thm:largeWeightLB}]
Any \treslin{\Q} derivation of any tautology of the form $\bigvee\nolimits_{j\in [N]}g_j=0$, for some positive $N$, where each $g_j$ is linear over $\Q$ and depends on at least $\frac{n}{2}$
variables, is of size $2^{\Omega(\sqrt{n})}$.
\end{theorem*} 

To prove this, as well as some other lower bounds, we extend the Prover-Delayer game technique as originated in Pudlak-Impagliazzo
\cite{PI00} for resolution, and developed further by Itsykson-Sokolov \cite{IS14} for \reslin{\F_2}, to general rings, including characteristic
zero rings (see Sec.~\ref{sec:PD-game}).\footnote{We note here (see Remark 1 in the next sub-section) that the lower bounds that we prove using Prover-Delayer games techniques in case $char(\F)=0$ \emph{do not} follow from lower bounds for \PCa{\F} using size-width relations.} 

We define a non-trivial strategy for Delayer in the corresponding game and prove that it guarantees $\sqrt{n}$ coins using a bound on the size
of essential coverings of the hypercube from Linial and Radhakrishnan \cite{LINIAL05}. The relation between Prover-Delayer games and \treslin\Q\ refutations allows us to conclude that 
the size of \treslin\Q\ refutations must be $2^{\Omega(\sqrt{n})}$.

Moreover, as a corollary of Theorem~\ref{thm:largeWeightLB} we obtain a lower bound on \treslin\Q\ \emph{derivations} (in contrast to refutations) of $\imAx{f}$~:

\begin{corollary*}[Corollary~\ref{cor:imTLLB}]
Let $f$ be any linear polynomial over \Q\ that depends on $n$ variables. Then \treslin\Q\ derivations of $\imAx{f}$ are of size $2^{\Omega(\sqrt{n})}$. 
\end{corollary*}

We also use Prover-Delayer games to prove an exponential-size  $2^{\Omega(n)}$ lower bound on  tree-like 
\reslinsw{\F} refutations of the pigeonhole principle $\text{PHP}^m_n$ \emph{for every field }$\F$ (including finite fields). This extends a previous result by Itsykson and Sokolov \cite{IS14}
for tree-like \reslin{\F_2}. 
\begin{theorem*}[Theorem~\ref{phpLB}; Pigeonhole principle lower bounds]
Let $\F$ be any (possibly finite) field. Then every tree-like \reslinsw{\F} refutation of $\neg{\rm PHP}^m_n$ has size $2^{\Omega\left(\frac{n-1}{2}\right)}$.
\end{theorem*}



Together with the  
polynomial upper bounds for $\text{PHP}^m_n$ refutations in dag-like \reslin{\F} for fields $\F$ of characteristic zero demonstrated by Raz and Tzameret \cite{RT07}, Theorem~\ref{phpLB} 
establishes a \emph{separation between dag-like \reslin{\F} and tree-like \reslinsw{\F}} for characteristic zero fields, for the language of unsatisfiable formulas in CNF:

\begin{corollary*} Over fields of characteristic zero \F, 
 \reslin \F\ has an exponential speed-up over tree-like \reslin \F\, as refutation systems for unsatisfiable formulas in CNF. 
\end{corollary*}

To prove Theorem \ref{phpLB} we need to prove that Delayer's strategy from \cite{IS14} is  successful over any  field.
This argument is new, and uses a result of  Alon-F\" uredi \cite{AF93} about the hyperplane coverings of the hypercube. 
\smallskip
 
We prove another  separation between dag-like \reslin{\Q} and tree-like \reslinsw{\Q}, as follows. For any ring \R\ we define the \emph{image avoidance principle} to be:
$$
\negIm{x_1+\dots+x_n}:=\{\langle x_1+\dots+x_n\neq k\rangle\}_{k\in\{0,\dots,n\}},
$$ 
where $\langle x_1+\dots+x_n\neq k\rangle := \bigvee\nolimits_{k'\in\{0,\dots,n\},~ k\neq k'}x_1+\dots+x_n=k'$. In words, the image avoidance principle expresses the contradictory statement that  for every $0\le i\le n$, $x_1+\dots+x_n$ equals some element in $\{0,\ldots,n\}\setminus i$. In more generality, let $f$ be a  linear form over $\Q$ and let $im_2(f)$ be the image of $f$ under 0-1 assignments to its variables. Define ${\langle f\neq A\rangle}:={\bigvee\nolimits_{A\neq B\in im_2(f)}(f=B)}$, where $A\in \Q$. We define
\begin{equation}\label{eq:imAv}
\negIm{f}:=\{\langle f\neq A\rangle: {A\in im_2(f)} \}\,.
\end{equation}

\begin{corollary*}[Corollary \ref{imAvUB}]
\label{imAvUB}
For every ring \R\ and  every linear form $f$ the contradiction $\negIm{f}$ admits polynomial-size \reslin{R} refutations.
\end{corollary*}


\begin{theorem*}[Theorem~\ref{thm:imAvLB}]
We work over \Q. Let $f=\epsilon_1x_1+\dots+\epsilon_nx_n$, where $\epsilon_i\in\{-1,1\}$. Then 
any tree-like \reslinsw{\Q} refutation of $\negIm{f}$ is of size at least $2^{\frac{n}{4}}$.
%
\end{theorem*}

The lower bound in Theorem \ref{thm:imAvLB} is one more novel application of the Prover-Delayer game argument, combined with the notion of immunity from Alekhnovich and Razborov \cite{AR01}, as we now briefly explain. 

Let $f$ be a linear form as in Theorem \ref{thm:imAvLB}. We consider an instance of the Prover-Delayer game for $\negIm{f}$. 
A position in the game is determined by a \emph{set $\Phi$ of linear non-equalities} of the form $g\neq 0$, which we think of as the set
of non-equalities learned up to this point by Prover. In the beginning $\Phi$ is empty. We define Delayer's strategy in
such a way that for $\Phi$ an end-game position, there is a satisfiable subset $\Phi'=\{g_1\neq 0,\ldots,g_m\neq 0\}\subseteq\Phi$ such that $\Phi'\models f=A$ 
for some $A\in\F$, and Delayer earns at least $|\Phi'|=m$ coins. Because $\F$ is of characteristic zero, it  follows that
${f\equiv A+1~(\text{mod}\ 2)\models} f\neq A\models g_1\cdot\ldots\cdot g_m=0$ and thus the $\frac{n}{4}$-immunity of $f\equiv A+1(\text{mod}\ 2)$ (\cite{AR01})
implies $m\geq \frac{n}{4}$. To conclude, by a standard argument if Delayer always earns $\frac{n}{4}$ coins, then the shortest proof is of size at least
$2^{\frac{n}{4}}$.
\smallskip 

Table \ref{tab:1}  sums up our knowledge up to this point with respect to \Q\ (and for some cases any characteristic 0 field):

%

\begin{table}[h!]\label{tab:table1}
\begin{center}
\def\arraystretch{1.5}
\begin{tabu}{c|[1pt]c|c|c|c|c}
    &   $\sum\limits_{i=1}^n2x_i=1$ &  $\sum\limits_{i=1}^n2^ix_i=-1$  & $\negIm{\sum\limits_{i=1}^nx_i}$   & $\text{PHP}^m_n$ \begin{small}\begin{footnotesize}(CNF)\end{footnotesize}\end{small} &    $\imAx{\sum\limits_{i=1}^nx_i}$ \\ \tabucline[1pt] \\ 
\hline
t-l \reslin{\Q}  &     $2^{\Omega(\sqrt{n})}$  &   $2^{\Omega(n)}$             & $2^{\Omega(n)}$ & $2^{\Omega(n)}$ & $2^{\Omega(\sqrt{n})}$ \\
\hline
t-l \reslinsw{\Q} & \textsf{poly}   & \textsf{poly}                      & $2^{\Omega(n)}$ & $2^{\Omega(n)}$ & \textsf{poly} \\
\hline
\reslin{\Q} & \textsf{poly} & $2^{\Omega(n)}$ & \textsf{poly} &  \textsf{poly} \scriptsize \cite{RT07} & \textsf{poly}\\

\end{tabu}
\end{center}
\caption{\small Lower and upper bounds for \Q. The notation  t-l \reslin{\text{\R}} stands for tree-like \reslin{\R}. The rightmost column
describes bounds on \textit{derivations}, in contrast to refutations. All results except the upper bound on PHP are from the current  work.}
\label{tab:1}
\end{table}

\subsubsection{Finite Fields Lower Bounds} 

We now turn to resolution over linear equations in \emph{finite fields.} We obtain many new tree-like lower bounds (see Table \ref{tab:2}). 



We  already discussed above  lower bounds for the pigeonhole principle which hold both for positive and zero characteristic. We furthermore prove a separation between tree-like \reslin{\text{$\F_{p^k}$}} (resp.~tree-like \reslinsw{\text{$\F_{p^k}$}}) and tree-like \reslin{\text{$\F_{q^l}$}}
(resp.~tree-like \reslinsw{\text{$\F_{q^l}$}})
for every pair of distinct primes $p\neq q$ and every $k,l\in \N\setminus\{0\}$. The separating instances are mod $p$ Tseitin formulas $\text{TS}^{(p)}_{G,\sigma}$ (written as CNFs), which are reformulations of the standard Tseitin graph formulas  $\text{TS}_G$ for counting mod $p$. Furthermore, we establish an exponential lower bound for tree-like \reslinsw{\text{$\F_{p^c}$}} on random $k$-CNFs.\footnote{We thank Dmitry Itsykson for telling us about the lower bound for random $k$-CNF for the case of tree-like \reslin {\F_2}, that was proved by  Garlik and Ko\l odziejczyk using size-width relations (unpublished note). Our result extends Garlik and Ko\l odziejczyk's result to all finite fields. Similar to their result, we   use a size-width argument and simulation by the polynomial calculus to establish the lower bound.}

The lower bounds for tree-like \reslin{\F} for finite fields $\F$ are obtained via a variant of the size-width relation for tree-like
\reslin{\F} together with a translation to polynomial calculus over the field \F, denoted \PCa{\F} \cite{CEI96}, such that \reslin{\F}
proofs of width $\omega$ are translated to \PCa{\F} proofs of degree $\omega$ (the \emph{width} $\omega$ of a clause is defined to be
the total number of disjuncts in a clause). This establishes the lower bounds for the size of tree-like \reslin{\F} proofs via lower
bounds on \PCa{\F} degrees. 

We show that 
$$
\omega_0(\phi\vdash\perp)=O\left(\omega_0(\phi)+\log{S_{\text{t-l\ \reslin{R}}}(\phi\vdash\perp)}\right),
$$
where $\omega_0$ is what we call the \emph{principal width}, which counts the number of linear equations in clauses when we treat as identical
those defining parallel hyperplanes, and $S_{\text{t-l\ \reslin{R}}}(\phi\vdash\perp)$ denotes the minimal size of a tree-like \reslin{R} 
refutation of $\phi$. 

Specifically, over finite fields the following upper and lower bounds provide exponential separations:

\begin{theorem*}[Theorem~\ref{sizeWidth}; Size-width relation]
Let $\phi$ be an unsatisfiable set of linear clauses over a field \F. The following relation between principal width and size holds for both  tree-like \reslin{\F} and
 \treslinsw{\F}:
${S(\phi\vdash\perp)=2^{\Omega(\omega_0(\phi\vdash\perp)-\omega_0(\phi))}}$. If $\F$ is a finite field, then the same relation holds for 
the (standard) width of a clause $\omega$. 
\end{theorem*}

This extends to every field a result by Garlik-Ko\l odziejczyk \cite[Theorem 14]{GK17} who showed a size-width
relation for a system denoted tree-like $\text{PK}^{\text{id}}_{O(1)}(\oplus)$, which is a system extending tree-like
\reslin{\F_2} by allowing arbitrary constant-depth De Morgan formulas as inputs to $\oplus$ (XOR gates)
(though note that our result does not deal with \emph{arbitrary }constant-depth formulas).

\begin{theorem*}[Theorem~\ref{pcsim}]
Let $\F$ be a field and $\pi$ be a \reslin{\F} refutation of an unsatisfiable CNF formula  $\phi$. Then, there exists a \PCa{\F} 
refutation $\pi'$ of (the arithmetization of) $\phi$ of degree $\omega(\pi)$.
\end{theorem*}

\begin{corollary*}[Corollary~\ref{tsLB}; Tseitin mod $p$  lower bounds]
For any fixed prime $p$ there exists a constant $d_0=d_0(p)$ such that the following holds. If $d\geq d_0$, $G$ is a $d$-regular directed graph
satisfying certain expansion properties, and $\F$ is a finite field such that $char(\F)\neq p$, then every tree-like \reslin{\F} refutation
of the Tseitin mod $p$ formula  $\neg\text{\normalfont TS}^{(p)}_{G,\sigma}$ has size $2^{\Omega(dn)}$.
\end{corollary*}

\begin{corollary*}[Corollary~\ref{rndLB}; Random $k$-CNF formulas lower bounds]
Let $\phi$ be a randomly generated $k$-CNF with clause-variable ratio $\Delta$, and where  $\Delta=\Delta(n)$ is such that
$\Delta=o\left(n^{\frac{k-2}{2}}\right)$, and let \F\ be a finite field. Then, every tree-like \reslin{\F} refutation of
$\phi$ has size $2^{\Omega\left(\frac{n}{\Delta^{2/(k-2)}\cdot\log{\Delta}}\right)}$ with probability $1-o(1)$.
\end{corollary*}

\begin{remark}
We  stress that the size-width relation of Theorem~\ref{sizeWidth} \textbf{cannot} be used for transferring \PCa{\F} degree lower bounds
to \treslin{\F} size lower bounds in case $char(\F)=0$. This is due to the essential difference between principal width and width in this case. Thus,
all the lower bounds that we prove using Prover-Delayer games techniques in case $char(\F)=0$ \textbf{do not} follow from lower bounds for \PCa{\F}.
\end{remark}

\smallskip 

Table \ref{tab:2} shows the results for \reslin{\R} over finite fields. 

\begin{table}[h!]
\begin{center}\def\arraystretch{1.5}
\begin{tabu}{l|[1pt]c|c|c|c|c}
                           &   $A\overline{x}=\overline{b}$   & $\text{TS}^{(-)}_{G,\sigma}$ & $\text{TS}^{(q)}_{G,\sigma}$  & random $k$-CNF &    $\text{PHP}^m_n$ \\
\tabucline[1pt] \\ 
t-l \reslin{\text{$\F_{p^k}$}} &   $2^{\Omega(n)}$                             & \textsf{poly} & $2^{\Omega(dn)}$ &
 $2^{\Omega\left(\frac{n}{\Delta^{2/(k-2)}\cdot\log{\Delta}}\right)}$ & $2^{\Omega(n)}$ \\
\hline
t-l Res($\oplus$)  & \textsf{poly} \mbox{\scriptsize \cite{IS14}} & \textsf{poly} \mbox{\scriptsize \cite{IS14}} & $2^{\Omega(dn)}$   & $2^{\Omega\left(\frac{n}{\Delta^{2/(k-2)}\cdot\log{\Delta}}\right)}$
\mbox{\scriptsize \cite{GK17}}  & $2^{\Omega(n)}$ {\scriptsize \cite{IS14}}\\ 

\hline
t-l \reslinsw{\text{$\F_{p^k}$}} & \textsf{poly}                         & \textsf{poly} & \circled{?} & \circled{?} & $2^{\Omega(n)}$\\
\end{tabu}
\end{center}
\caption{\small Lower bounds over finite fields. Here $G$ is $d$-regular graph and $\Delta$ is the clause density (number of clauses divided by the number of variables), $A\overline{x}=\overline{b}$ stands for a linear system over  $\F_{p^k}$ that has no 0-1 solutions in the first and the third rows,  and in the second row the linear system $A\overline{x}=\overline{b}$ is over $\F_2$. The notation $\text{TS}^{(-)}_{G,\sigma}$ stands for
 $\text{TS}^{(p)}_{G,\sigma}$ in the first and the third rows and for
$\text{TS}^{(2)}_{G,\sigma}$ in the second row.
 t-l \reslin{\text{\R}} stands for tree-like \reslin{\R}, and $p\neq q$ are primes (in the second row and third column we assume $q\neq 2$).
 Circled ``?'' denotes an open problem. The results marked with \cite{IS14,GK17} were proved in the respective papers. All other results are from the current work.}
\label{tab:2}
\end{table}


\subsubsection{Complexity of Linear Systems} 

The tree-like \reslin{\F} upper bounds for  mod $p$ Tseitin formulas in the case $char(\F)=p$ stem from  the following proposition:
\begin{proposition*}[Proposition~\ref{linSysUB}; Upper bounds on unsatisfiable linear systems] Let \F\ be a field and assume that the linear system $A\, \overline x=\overline b$, where $A$ is a $k\times n$ matrix over \F, has no solutions (over \F).
Let $\phi$ be a CNF formula encoding  the linear system $A\, \overline x=\overline b$. Then, there exist tree-like \reslin{\F} refutations of $\phi$ of size
polynomial in the sum of sizes of encodings of all coefficients in $A$.
\end{proposition*}

The upper bound in Proposition~\ref{linSysUB} applies only to linear systems that are unsatisfiable over the \emph{\uline{whole}} field \F. 
But does any system $A\, \overline x=\overline b$ over \F\ that has a satisfying assignment over \F, but \emph{not} over 0-1 assignments,
admit  polynomial-size \reslin{\F} refutations? 

For fields $\F$ with $char(\F)\geq 5$ or $char(\F)=0$ it is known that 0-1 satisfiability of $A\, \overline x=\overline b$ is \NP-complete (see Sec.~\ref{sec:linSysCompl}). This means that unless $\coNP=\NP$ there exist 0-1 unsatisfiable linear systems that require superpolynomial dag-like \reslin{\F} refutations. Moreover, the reduction $R$ from $k$-UNSAT is such that $\phi\in k\text{-UNSAT}$ 
has \reslin\F\ refutations of size $S$ iff the system $R(\phi)$ has \reslin\F\ refutations of size $O(S)$. Thus, in general proving lower bounds for linear systems can be as hard as proving lower bounds for CNFs: lower bounds for some linear systems imply
lower bounds for CNFs.  

An unconditional explicit bound for \treslin{\F} can be obtained via \PCa{\F} using size-width relation for finite fields (Theorem~\ref{sizeWidth}) and Proposition~\ref{thm:linSysPC}. In particular, hard instances of the form $A\, \overline x=\overline b$ can be constructed by applying the reduction in the proof of \NP-completeness of 0-1 satisfiability of linear systems to, say, mod 2 Tseitin formulas. Our work implies an exponential lower bound for the size of \treslin{\F} refutations of these systems (for large enough, but constant, characteristic) and we conjecture that they are hard for dag-like \reslin\F\ as well.    

We prove an upper bound for linear systems and suggest another, more direct, construction of a hard candidate, using error-correcting codes.

\begin{theorem*}[Theorem~\ref{thm:linSysImUB}; Upper bound on 0-1 unsatisfiable linear systems]
Let $A_{f_1,\ldots,f_m}:\F^n\rightarrow \F^m$ be an affine map $\overline{x}\mapsto (f_1(\overline{x}),\ldots, f_m(\overline{x}))$, where $f_1,\ldots,f_m$ are linear forms.
If the system $f_1=0,\ldots,f_m=0$ is unsatisfiable over 0-1, that is, if $0\notin im_2(A_{f_1,\ldots,f_m}\, \overline{x})$, then there exists a \reslin{\F} refutation of this
system of size $poly(n+|im_2(A_{f_1,\ldots,f_m}\, \overline{x})|)$.
\end{theorem*}

The instance is constructed specifically to be  hard for a simple and natural model of decision trees, which can be simulated both by  tree-like \reslin{\F} and \PCa{\F}
and reflects a natural strategy to refute 0-1 unsatisfiable linear systems. Such a strategy for refuting $A\, \overline{x}=\overline{b}$ can be informally described as follows:
select variables and try to assign them 0-1 values until the system $(A\, \overline{x}=\overline{b})\rst_{\rho}$ becomes unsatisfiable over \F, where $\rho$ is the
current assignment, and refute it by a polynomial-size refutation, guaranteed by Proposition~\ref{linSysUB} (above). Formally, a decision tree for $A\, \overline{x}=\overline{b}$
is a binary decision tree, where every leaf is marked with unsatisfiable over \F\ system $(A\, \overline{x}=\overline{b})\rst_{\rho}$, where $\rho$ consists of variable
assignments on the path from the root to the leaf.

The matrix $A$ of the instance is constructed as a generator matrix of a linear error-correcting $(n,k,d)_q$ code, where $n$ is the code length, $k$ is the dimension of the code space, $d$ is the minimal distance of the code and $q=|\F|$. The parameter $k$ is chosen to be large enough to ensure that $q^k>2^n$ and thus there exists some 
$\overline{b}$ such that $A\, \overline{x}=\overline{b}$ has no 0-1 solutions. On the other hand, $d=\Omega(\frac{n}{\log n})$ is chosen to be large enough to ensure 
that all the leaves of a decision tree for $A\, \overline{x}=\overline{b}$ are sufficiently deep in the tree: if $\rho$ assigns at most $k<d$ variables, then the code generated by 
$A\rst_{\rho}$ has a minimal distance at least $d-k$ and therefore $A\rst_{\rho}$ has full rank. The existence of this code is guaranteed by the Gilbert-Varshamov bound.

\begin{theorem*}[Theorem~\ref{thm:linSysDTLB}; Lower bound for decision trees on linear systems]
For every $n\in\N$ there exists a 0-1 unsatisfiable linear system $A\, \overline{x}=\overline{b}$ over a finite field $\F_q$, $q > 2$, with $n$ variables,
such that any decision tree for this system is of size $2^{\Omega\left(\frac{n}{\log n}\right)}$.
\end{theorem*}

\subsubsection{Nondeterministic Linear Decision Trees} 
There is a well-known size preserving (up to a constant factor) correspondence between tree-like resolution refutations for
unsatisfiable formulas $\phi$ and decision trees, which solve the following problem: given an assignment $\rho$ for the variables of $\phi$, determine
which clause $C\in\phi$ is falsified by querying values of the variables under the assignment $\rho$. In Itsykson-Sokolov \cite{IS14} this correspondence
was generalized to  tree-like $\text{Res}(\oplus)$ refutations and parity decision trees. In the current work we initiate the study of linear decision trees and their properties over different characteristics, extending the correspondence to
a correspondence between \treslin{R} (and \treslinsw{R}) derivations to what we call  \emph{nondeterministic linear decision trees} (NLDT). 

NLDTs for
an unsatisfiable set of linear clauses $\phi$ are binary rooted trees,
where every edge is labeled with a non-equality $f\neq 0$ for a linear form $f$ and every leaf is labeled with a linear clause $C\in\phi$,
which is violated by the non-equalities on the path from the root to the leaf. (Note that in the same manner that in a (boolean) decision tree (which corresponds to a tree-like resolution refutation) we go along a path from the root to a leaf, choosing those edges that violate a literal $x_i$ or $\neg x_i$, in an NLDT we branch   along a path that violates equalities $f=0$, or equivalently, certifies non-equalities of the form $f\neq 0$.)  

\begin{theorem*}[Theorem~\ref{treeDTequiv}]
If $\phi$ is an unsatisfiable CNF formula, then every tree-like \reslin{R} or tree-like \reslinsw{R} refutation can be transformed into a corresponding  NLDT for $\phi$ of the same size up to a constant factor, and vice versa (note that the NLDTs for the two types of refutations are different).
\end{theorem*}

\section{Preliminaries}

\subsection{Notation}\label{sec:notations}

Denote by  $[n]$ the set $\{1,\dots,n\}$.
We use $x_1,x_2,\ldots$ to denote variables, both propositional and  algebraic. Let $f$ be a linear polynomial (equivalently, an affine function) over a ring \R, that is, a function of the form $\sum_{i=1}^n a_ix_i+a_0$ with $a_i\in\R$. We sometimes refer to a linear form as a \emph{hyperplane}, since a linear form determines a hyperplane. We denote by $im_2(f)$ the image of $f$ under 0-1 assignments to its variables; ${\langle f\neq A\rangle}:={\bigvee\nolimits_{A\neq B\in im_2(f)}(f=B)}$, where $A\in \R$. 

A \emph{linear clause} is a formula of the form $\left(\sum\nolimits_{i=1}^na_{1i}x_i+b_1=0\right)\vee\dots\vee\left(\sum\nolimits_{i=1}^na_{ki}x_i+b_k=0\right)$ with $x_1,\dots,x_n$ variables, and $a_{ij},b_i$'s ring elements (when the ring is specified in advanced). We sometimes abuse notation by writing a linear equation as $\sum\nolimits_{i=0}^na_{1i}x_i=-b_1$ instead of $\sum\nolimits_{i=0}^na_{1i}x_i+b_1=0$.
We assume that all the disjuncts in a linear clause are distinct. 

For  $\phi$  a set of clauses or linear clauses,  $vars(\phi)$ denotes the set of variables occurring in $\phi$ and let $\text{Vars}$ denote the set of \emph{all} variables.

Let $A$ be a matrix over a ring. We introduce the notation $Ax\doteqdot b$ for a system of linear non-equalities, where a \bemph{non-equality} means $\neq$ (note the difference between $Ax\doteqdot b$, which stands for $A_i\cd x\neq b_i$, for \textit{all} rows $A_i$ in $A$, and $Ax\neq b$, which stands for $A_i\cd x\neq b_i$, for \textit{some} row $A_i$ in $A$).

If $f$ is a linear polynomial over \R\ and $A$ is a matrix over \R, denote by $|f|$ the sum of sizes of encodings of coefficients in $f$ and by $|A|$ the sum of sizes of encodings of elements in $A$.

If $C=(\bigvee\nolimits_{i\in[m]}f_i=0)$ is a linear clause, denote by $\neg C$ the  \emph{set} of non-equalities $\{f_i\neq 0\}_{i\in [m]}$. Conversely, 
if $\Phi=\{f_i\neq 0\}_{i\in[n]}$ is a set of
non-equalities, denote $\neg\Phi:=\bigvee\nolimits_{i\in[m]}f_i=0$.

If $\phi$ is a set of linear clauses over a ring $\R$ and $D$ is a linear clause over $\R$, denote by $\bigwedge\nolimits_{C\in\phi}C\models D$ and $\bigwedge\nolimits_{C\in\phi}C\models_{\R} D$
semantic entailment over 0-1 and $\R$-valued assignments respectively. 

Let $l$ be a linear polynomial not containing the variable $x$. If $C$ is a linear clause, denote by $C\!\rst_{x\leftarrow l}$ the linear clause, which is obtained
from $C$ by substituting $l$ for $x$ everywhere in $C$. If $\phi=\{C_i\}_{i\in I}$ is a set of clauses, denote $\phi\rst_{x\leftarrow l}:=\{C_i\rst_{x\leftarrow l}\}_{i\in I}$.
We define a \emph{linear substitution} $\rho$ to be a sequence $(x_1\leftarrow l_1,\ldots,x_n\leftarrow l_n)$ such that each linear polynomial $l_i$ does not depend on $x_i$. For a clause or
a set of clauses $\phi$ we define $\phi\rst_{\rho}:=(\ldots((\phi\rst_{x_1\leftarrow l_1})\rst_{x_2\leftarrow l_2})\ldots)\rst_{x_n\leftarrow l_n}$.


\subsection{Propositional Proof Systems}\label{sec:Propositional-Proof-Systems}
A \emph{clause} is an expression of the form $l_1\vee \dots \vee l_k$, where $l_i$ is a literal, where a \emph{literal} is a propositional variable $x$ or its negation
$\neg x$. A formula is in \emph{Conjunctive Normal Form} (CNF) if it is a conjunction of clauses. A CNF can thus be defined simply as a set of clauses. The 
choice of a reasonable binary encoding of sets of clauses allows us to define the language $\text{UNSAT}\subset\{0,1\}^*$ of unsatisfiable propositional formulas in CNF.
We sometimes interpret an element in $\text{UNSAT}$ as a formula and sometimes as a set of clauses. 
Dually, a formula is in \emph{Disjunctive Normal Form} (DNF) if it is a disjunction of conjunctions of literals and TAUT is the language of tautological
propositional formulas in DNF. There is a bijection between TAUT and UNSAT, which preserves the size of the formula, given by negation. 

A formula is in $k$-CNF (resp.~$k$-DNF) if it is in CNF (resp.~DNF) and every clause (resp.~conjunct) has at most $k$ literals. $k$-UNSAT (resp.~$k$-TAUT) is the
language of unsatisfiable (resp.~tautological) formulas in $k$-CNF (resp.~$k$-DNF).

\begin{definition}[Cook-Reckhow propositional proof system \cite{CR79}]\label{PPSdef}
A \emph{propositional proof system} $\Pi$ is a polynomial time computable onto function $\Pi:\{0,1\}^*\rightarrow \text{TAUT}$.
\end{definition}
$\Pi$-proofs of $\phi\in\text{TAUT}$ are elements in $\Pi^{-1}(\phi)$. Definition~\ref{PPSdef} can be generalized
to arbitrary languages: proof system for a language $L$ is polynomial time computable onto function $\Pi:\{0,1\}^*\rightarrow L$. In particular,  
a \emph{refutation system} $\Pi$ is a proof system for UNSAT. Post-composition with negation turns a propositional proof system  into a refutation system and vise versa.

Denote by $S(\pi)$, and alternatively by $|\pi|$, the size of the binary encoding of a proof $\pi$ in a proof system $\Pi$. For 
$\phi\in\text{UNSAT}$ and a refutation system $\Pi$ denote by $S_{\Pi}(\phi\vdash \perp)$ (we sometimes omit the subscript $\Pi$ when it is clear from the context) the minimal
size of a $\Pi$-refutation of $\phi$.

The \emph{resolution} system (which we denote also by  Res) is a refutation system, based on the following rule, allowing to derive new clauses from  given ones:

\begin{prooftree}
        \centering
        \def\labelSpacing{12pt}
        \AxiomC{$C \vee x$}
        \AxiomC{$D \vee \neg x$}
        \RightLabel{(Resolution rule).}
        \BinaryInfC{$C \vee D$}
\end{prooftree}        
A \emph{resolution derivation} of a clause $D$ from a set of clauses $\phi$ is a sequence of clauses $(D_1,\dots,D_s\equiv D)$ such that for every $1\leq i\leq s$ 
either $D_i\in \phi$ or $D_i$ is obtained from previous clauses by applying the resolution rule. A \emph{resolution refutation} of $\phi\in\text{UNSAT}$ is a 
resolution derivation of the empty clause from $\phi$, which stands for the truth value \textsf{False}. 

A resolution derivation is \emph{tree-like} if every clause in it is used at most once as a premise of a rule. Accordingly, \emph{tree-like resolution} is the resolution system allowing only tree-like refutations.

Let $\F$ be a field. A \emph{polynomial calculus} \cite{CEI96} derivation of a polynomial $q\in \F[x_1,\dots,x_n]$ from a set of polynomials
$\mathcal{P}\subseteq \F[x_1,\dots,x_n]$ is a sequence $(p_1,\dots,p_s),p_i\in \F[x_1,\dots,x_n]$ such that for every $1\leq i\leq s$ either $p_i=x_j^2-x_j$,
$p_i\in\mathcal{P}$ or $p_i$ is obtained from previous polynomials by applying one of the following rules:\vspace{-14pt} 
\begin{prooftree}
        \centering
        \def\labelSpacing{12pt}
        \AxiomC{$f$}
        \AxiomC{$g$}
        \RightLabel{($\alpha,\beta\in \F,f,g\in \F[x_1,\dots,x_n]$)}
        \BinaryInfC{$\alpha f + \beta g$}

        \AxiomC{$f$}
        \RightLabel{($f\in \F[x_1,\dots,x_n]$)\,.}
        \UnaryInfC{$x\cdot f$}
        \noLine
        \BinaryInfC{}

\end{prooftree}
A polynomial calculus refutation of $\mathcal{P}\subseteq \F[x_1,\dots,x_n]$ is a derivation of $1$. The degree $d(\pi)$ of a polynomial calculus derivation $\pi$ is the maximal total degree of
a polynomial appearing in it.
This defines the proof system \PCa{\F} for the language of unsatisfiable systems of polynomial equations over $\F$. It can be turned into a proof system for $k$-UNSAT via
\emph{arithmetization of clauses} as follows: $(x_1\vee\ldots\vee x_k\vee \neg y_1\vee\ldots\vee \neg y_l)$ is represented as $(1-x_1)\cdot\ldots\cdot(1-x_k)\cdot y_1\cdot\ldots\cdot y_l=0$.

\subsection{Hard Instances}
\subsubsection{Pigeonhole Principle}\label{sec:PHP-intro}

The \emph{pigeonhole principle} states that there is no injective mapping from the set $[m]$  
to the set $[n]$, for $m>n$. Elements of the former and the latter sets are  referred to as \emph{pigeons} and \emph{holes}, respectively.
The CNF formula, denoted $\text{PHP}^m_n$, encoding the negation of this principle is defined as follows. Let the set of propositional variables
$\{x_{i,j}\}_{i\in [m],j\in [n]}$ correspond to the mapping from $[m]$ to $[n]$, that is, $x_{i,j}=1$ iff the $i^{\text{th}}$ pigeon is mapped 
to the $j^{\text{th}}$ hole. Then 
$\neg\text{PHP}^m_n:=\phpInj^m_n\cup \phpTot^m_n\in\text{UNSAT}$, where 
$\phpInj^m_n=\{\bigvee\nolimits_{j\in [n]}x_{i,j}\}_{i\in [m]}$ are axioms for pigeons and 
$\phpTot^m_n=\{\neg x_{i,j} \vee \neg x_{i',j}\}_{i\neq i'\in [m],j\in[n]}$ are axioms for holes.

Weaker (namely, easier to refute) versions of $\neg\text{PHP}^m_n$ are obtained by augmenting it with the \emph{functionality} axioms 
$\text{Func}^m_n:= \{\neg x_{i,j} \vee \neg x_{i,j'}\}_{i\in [m],j\neq j'\in [n]}$ ($\neg\text{FPHP}^m_n$) or the \emph{surjectivity}
axioms $\text{Surj}^m_n:=\{\bigvee\nolimits_{i\in [m]}x_{i,j}\}_{j\in [n]}$ ($\neg \text{onto-PHP}^m_n$).

\subsubsection{Mod $p$ Tseitin Formulas}\label{sec:tsDef}

We use the version given in \cite{AR01} (which is different from the one in \cite{BGIP01,RT07}). Let $G=(V,E)$ be a directed $d$-regular graph. We 
assign to every edge $(u,v)\in E$ a corresponding variable $x_{(u,v)}$. Let $\sigma:V\rightarrow \F_p$. The \emph{Tseitin mod $p$ formulas}
$\neg\text{TS}^{(p)}_{G,\sigma}$ are the CNF encoding of the following equations for all $u\in V$:

\begin{equation}\label{eq:tseitin-mod-p}
\sum\limits_{(u,v)\in E}x_{(u,v)}-\sum\limits_{(v,u)\in E}x_{(v,u)}\equiv\sigma(u)\mod{p}\,.
\end{equation}
Note that we use the standard encoding of boolean functions as CNF formulas and the number of clauses, required to encode these equations is $O(2^d|V|)$. $\neg\text{TS}^{(p)}_{G,\sigma}$ is unsatisfiable if $\sum\nolimits_{u\in V}\sigma(u)\not \equiv 0\mod{p}$. To see this, note that if we sum \eqref{eq:tseitin-mod-p} over all nodes $u\in V$ we obtain precisely $\sum\nolimits_{u\in V}\sigma(u)$ which is different from $0\mod{p}$; but on the other hand, in this sum over all nodes $u\in V$ each edge $(u,v)\in E$ appears once with a positive sign as an outgoing edge from $u$ and with a negative sign as an incoming edge to $v$, meaning the the total sum is 0, which is a contradiction.  

In particular, $\neg\text{TS}^{(2)}_{G,\sigma}$ are the classical Tseitin
formulas \cite{Tse68} and $\text{TS}^{(2)}_{G,1}$, where $1$ is the constant function $v\mapsto 1$ (for all $v\in V$), expresses the fact that the sum of total degrees (incoming $+$
outgoing) of the vertices is even.

The proof complexity of Tseitin tautologies depends on the properties of the graph $G$. For example, if $G$ is just a union of $K_{d+1}$ (the complete graphs
on $d+1$ vertices), then they are easy to prove. On the other hand, they are known to be hard for some proof systems if $G$ satisfies certain expansion properties. 

Let $G=(V,E)$ be an \emph{undirected} graph. For $U,U'\subseteq V$ define $e(U,U'):=\{(u,u')\in E\,|\,u\in U,u'\in U'\}$. Consider the following
measure of expansion for $r\geq 1$:
$$c_E(r,G):=\min\limits_{|U|\leq r}\frac{e(U,V\backslash U)}{|U|}$$
$G$ is $(r,d,c)$-expander if $G$ is $d$-regular and $c_E(r,G)\geq c$. There are explicit constructions of good expanders. For example:

\begin{proposition}[Lubotzky \textit{et.~al}~\cite{LPS88}]
For any $d$, there exists an explicit construction of $d$-regular graph $G$, called Ramanujan graph, which is 
$(r,d,d(1-\frac{r}{n})-2\sqrt{d-1})$-expander for any $r\geq 1$.
\end{proposition}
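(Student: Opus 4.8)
The plan is to derive the stated vertex-boundary expansion from the spectral (Ramanujan) bound on the second eigenvalue of the Lubotzky--Phillips--Sarnak graphs, so the argument naturally splits into three parts: the explicit construction, the eigenvalue bound, and the passage from the spectral gap to edge expansion via the expander mixing lemma.

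First I would recall the construction. For a prime $p\equiv 1\pmod 4$ and a much larger prime $q$, one forms the Cayley graph of $\mathrm{PGL}(2,\F_q)$ (or of $\mathrm{PSL}(2,\F_q)$, depending on whether $p$ is a square modulo $q$) with respect to the symmetric set of $p+1$ generators arising from the $p+1$ integral quaternions of norm $p$. This yields, explicitly, a connected $(p+1)$-regular graph $G=X^{p,q}$ on $n$ vertices (with $n=q(q^2-1)$ or $q(q^2-1)/2$ according to the construction); for a general degree $d$ one either restricts to the case $d-1$ prime, or invokes Morgenstern's extension to prime powers.

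Next I would quote the one nonelementary ingredient, the Ramanujan property itself: every eigenvalue $\mu$ of the adjacency matrix of $X^{p,q}$ other than the trivial eigenvalues $\pm d$ satisfies $\lvert\mu\rvert\le 2\sqrt{d-1}$. This is the content of \cite{LPS88}: the nontrivial eigenvalues are matched with Hecke eigenvalues, equivalently with Fourier coefficients of weight-two cusp forms, and the bound then follows from Deligne's proof of the Ramanujan--Petersson conjecture (the Weil conjectures for the associated modular curve). I would use this purely as a black box.

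Finally I would translate the spectral gap into the combinatorial statement using the sharp form of the expander mixing lemma. Writing $\lambda:=\max\{\lvert\mu\rvert:\mu\text{ an eigenvalue of }G,\ \mu\neq\pm d\}\le 2\sqrt{d-1}$, decomposing the indicator vector $\mathbf 1_U$ into its component along the all-ones eigenvector plus a remainder orthogonal to it, and bounding the remainder's contribution by $\lambda$ times a product of norms, one obtains for every $U\subseteq V$ with $\lvert U\rvert=s$ the estimate
\[
\left\lvert\, e(U,V\setminus U)-\tfrac{d\,s(n-s)}{n}\,\right\rvert\ \le\ \lambda\cdot\tfrac{s(n-s)}{n},
\]
hence $e(U,V\setminus U)\ge \tfrac{s(n-s)}{n}(d-\lambda)$. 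Dividing by $s=\lvert U\rvert\le r$ and using $\bigl(1-\tfrac sn\bigr)(d-\lambda)=d\bigl(1-\tfrac sn\bigr)-\lambda\bigl(1-\tfrac sn\bigr)\ge d\bigl(1-\tfrac rn\bigr)-\lambda\ge d\bigl(1-\tfrac rn\bigr)-2\sqrt{d-1}$ gives $\frac{e(U,V\setminus U)}{\lvert U\rvert}\ge d\bigl(1-\tfrac rn\bigr)-2\sqrt{d-1}$; minimizing over $\lvert U\rvert\le r$ then yields $c_E(r,G)\ge d(1-r/n)-2\sqrt{d-1}$, i.e.\ $G$ is an $\bigl(r,d,d(1-r/n)-2\sqrt{d-1}\bigr)$-expander. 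The main obstacle is concentrated entirely in the eigenvalue bound $\lvert\mu\rvert\le 2\sqrt{d-1}$, which rests on Deligne's theorem and is the real work of \cite{LPS88}; the construction and the mixing-lemma computation are routine.
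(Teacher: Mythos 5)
The paper does not prove this proposition; it is stated as a citation to Lubotzky--Phillips--Sarnak, so there is no proof in the paper to compare against. Your sketch is a correct rendering of the standard argument: you correctly reduce the stated $(r,d,c)$-expansion (with $c_E(r,G)=\min_{|U|\le r}e(U,V\setminus U)/|U|$) to the Ramanujan eigenvalue bound $|\mu|\le 2\sqrt{d-1}$ via the sharp mixing inequality $e(U,V\setminus U)\ge\frac{s(n-s)}{n}(d-\lambda)$, and the final chain $(1-\frac{s}{n})(d-\lambda)\ge d(1-\frac{r}{n})-\lambda\ge d(1-\frac{r}{n})-2\sqrt{d-1}$ is correct (you only need the one-sided bound $w^TAw\le\lambda\|w\|^2$, which holds even in the bipartite case since the $-d$ eigenspace contributes a nonpositive term). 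Your caveat that the explicit LPS/Morgenstern constructions exist only for $d$ with $d-1$ a prime power is also a fair observation about the slightly informal ``for any $d$'' phrasing in the proposition as stated.
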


\begin{proposition}[Alekhnovich-Razborov \cite{AR01}]
For any fixed prime $p$ there exists a constant $d_0=d_0(p)$ such that the following holds. If $d\geq d_0$, $G$ is a $d$-regular Ramanujan
graph on $n$ vertices (augmented with arbitrary orientation of its edges) and $char(\F)\neq p$, then for every function $\sigma$ such that
$\neg\text{TS}^{(p)}_{G,\sigma}\in \text{UNSAT}$ every \PCa{\F} refutation of $\neg\text{TS}^{(p)}_{G,\sigma}$ has degree $\Omega(dn)$.
\end{proposition}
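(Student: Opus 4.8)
The plan is to prove this polynomial calculus degree lower bound by the now-standard duality between low-degree \PCa{\F} refutations and linear ``pseudo-expectation'' operators (the $R$-operator / design method), and to construct the operator using the expansion of $G$; this is essentially the argument of Alekhnovich and Razborov. Recall the duality: there is no degree-$D$ \PCa{\F} refutation of a polynomial system $\mathcal{Q}$ (together with the boolean axioms $x_i^2-x_i$) provided one exhibits an $\F$-linear functional $R$ on polynomials of degree $\le D$ with $R[1]\neq 0$, with $R[m(x_i^2-x_i)]=0$ for all $i$ and all monomials $m$ with $\deg\le D$ (so $R$ may be regarded as defined on multilinear polynomials of degree $\le D$), with $R[q\cdot m]=0$ for every $q\in\mathcal{Q}$ and every monomial $m$ with $\deg(qm)\le D$, and — the condition that makes the duality go through for PC rather than merely for Nullstellensatz — with $\ker R$ closed under multiplication by a single variable within degree $D$ (equivalently, $R[x_i f]=R[x_i\,R[f]]$ whenever the degrees permit). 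It therefore suffices to build such an $R$ for $\mathcal{Q}$ the arithmetized clauses of $\neg\text{TS}^{(p)}_{G,\sigma}$ and $D=\Omega(dn)$.

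I would construct $R$ via reduction to a normal form driven by the Tseitin constraints. Over a field with $\mathrm{char}(\F)\neq p$, the clauses located at a vertex $v$ — each involving only the $d$ edge-variables incident to $v$ — are together equivalent, on $0/1$ points, to the genuine modular linear constraint $\ell_v(\overline{x})\equiv\sigma(v)\pmod p$, where $\ell_v$ is the signed sum of the incident edge-variables. This is precisely where $\mathrm{char}(\F)\neq p$ is essential: in characteristic $p$ each $\ell_v=\sigma(v)$ becomes an honest linear equation and the whole system is refutable in low degree, so no such $R$ can exist. Given a multilinear monomial $m$ of degree $\le D$, its variables span a vertex set $S$ with $|S|\le 2D$; taking an appropriate bounded-size ``closure'' $\overline{S}\supseteq S$, one eliminates the constraint at each vertex of $\overline{S}$ in turn by solving it for a private, not-yet-used incident edge and rewriting $m$ accordingly. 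One then sets $R[m]$ to be the constant term of the resulting normal form and extends linearly; by design $R[qm]=0$ for the arithmetized clauses $q$, and $R[1]=1$.

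The main obstacle is the well-definedness of $R$: one must show that $R[m]$ does not depend on the order in which constraints are eliminated, nor on the choice of private edges, and that $\ker R$ is closed under multiplication by a variable while staying within degree $D$. This is exactly where the Ramanujan property of $G$ enters. For $|S|\le r$ one has $e(S,V\setminus S)\ge\bigl(d(1-r/n)-2\sqrt{d-1}\bigr)|S|$, which for $d\ge d_0$ and $r$ below a small constant fraction of $n$ is at least $c|S|$ with $c=\Omega(d)$; by Hall's theorem, applied to the bipartite incidence between the touched vertex-constraints and their boundary edges, this yields a system of distinct representatives — a private boundary edge for each constraint. That makes the elimination a terminating, locally confluent rewriting system, so the normal form is unique, and multiplying a monomial by one more variable enlarges $S$ by at most two vertices, so we remain inside the expansion window and $\ker R$ stays multiplication-closed, as long as $D$ is a sufficiently small constant multiple of $dn$.

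Finally, to extract the quantitative bound one tracks constants: the closure of any $\le 2D$-vertex set can be kept below a fixed fraction of $|V|=n$ whenever $D\le\epsilon dn$ for a suitable $\epsilon=\epsilon(d_0)$, so the construction succeeds for $D=\Omega(dn)$, whence every \PCa{\F} refutation of $\neg\text{TS}^{(p)}_{G,\sigma}$ has degree $\Omega(dn)$. One should also note that the only global reason the system is unsatisfiable — summing all $n$ vertex-constraints to obtain $\sum_{u\in V}\sigma(u)\not\equiv 0\pmod p$ — involves degree $|E|=dn/2$, comfortably above $D$, so it does not interfere with the local construction of $R$.
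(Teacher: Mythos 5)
The paper does not prove this proposition; it states it as a citation from Alekhnovich and Razborov \cite{AR01} and uses it as a black box (combined with Theorems~\ref{sizeWidth} and~\ref{pcsim}) to derive the tree-like \reslin{\F} size lower bound of Corollary~\ref{tsLB}. So there is no in-paper argument to compare against, and what you have written is a sketch of the \cite{AR01} proof itself.

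Your frame is right: exhibit a linear functional $R$ on degree-$\le D$ polynomials with $R[1]\neq 0$, annihilating the boolean axioms and the arithmetized Tseitin clauses times low-degree monomials, and with $\ker R$ closed under multiplication by variables within the degree budget; and you correctly observe that $char(\F)\neq p$ is what keeps the vertex constraints from collapsing to linear equations. The gap is precisely the step you flag as ``the main obstacle.'' Defining $R[m]$ by ``solving each vertex constraint for a private boundary edge and rewriting $m$'' is linear elimination, and it does not directly apply here: over a field with $char(\F)\neq p$ the arithmetization of the $\mathrm{MOD}_p$ constraint at a degree-$d$ vertex has polynomial degree $\Theta(d)$, so ``solving'' for one edge variable produces a degree-$(d-1)$ multilinear expression, one that is moreover only defined on the part of the cube where the constraint is locally satisfiable. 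The system of distinct representatives that Hall's theorem gives you (via expansion) guarantees such an elimination step can always be \emph{taken}; it does not by itself give termination, confluence of a partially-defined non-linear rewriting system, or the ideal property of $\ker R$. Asserting ``local confluence'' here is the substance of the proof, not a consequence of the SDR.

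What the sketch is missing is the quantitative property of the local constraints that makes the construction go through in \cite{AR01}: the $\Omega(d)$-\emph{immunity} of the $\mathrm{MOD}_{p,\sigma(v)}$ function on $d$ bits over any field with $char(\F)\neq p$ --- no nonzero polynomial of degree $o(d)$ vanishes on its level set. (This is Theorem~4.4 of \cite{AR01}; the present paper itself invokes exactly this in Lemma~\ref{immunityLm}.) Alekhnovich and Razborov's general theorem combines $\ell$-immunity of each local constraint with expansion of the constraint--variable incidence structure to get a \PCa{\F} degree lower bound; for the Tseitin mod $p$ formula on a $d$-regular Ramanujan graph with $\ell=\Omega(d)$ this yields $\Omega(dn)$. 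Without invoking immunity (or an equivalent robustness property of $\mathrm{MOD}_p$), $R$ is not pinned down and multiplicativity of its kernel is open. Your closing remark that the global unsatisfiability ``involves degree $|E|=dn/2$, comfortably above $D$'' is also not doing real work: the $D=\Omega(dn)$ you are proving is within a constant factor of $dn/2$, and the $R$-operator method never reasons about a single global derivation; the quantitative content is entirely in immunity-times-expansion.
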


\subsubsection{Random k-CNFs}

A random $k$-CNF is a formula $\phi\sim\mathcal{F}^{n,\Delta}_k$ with $n$ variables that is generated by picking randomly and independently
$\Delta\cdot n$ clauses from the set of all ${n \choose k}\cdot 2^k$ clauses.

\begin{proposition} [Alekhnovich-Razborov \cite{AR01}]
Let $\phi\sim\mathcal{F}^{n,\Delta}_k,k\geq 3$ and $\Delta=\Delta(n)$ is such that $\Delta=o\left(n^{\frac{k-2}{2}}\right)$. Then every \PCa{\F} refutation of
$\phi$ has degree $\Omega\left(\frac{n}{\Delta^{2/(k-2)}\cdot\log{\Delta}}\right)$ with probability $1-o(1)$ for any field $\F$.
\end{proposition}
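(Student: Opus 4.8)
This is the theorem of Alekhnovich and Razborov, used here as a black box, so in the paper one simply cites \cite{AR01}; for completeness, here is an outline of \emph{their} argument. It splits into a routine probabilistic part (random $k$-CNFs are good expanders) and a substantial ``expansion implies degree'' part, and it is the latter that accounts for the independence of the characteristic of $\F$.

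\emph{Step 1 (random $k$-CNFs are boundary expanders).} To $\phi\sim\mathcal{F}^{n,\Delta}_k$ associate the $k$-uniform hypergraph $H_\phi$ on the $n$ variables whose hyperedges are the variable sets of the $\Delta n$ clauses (polarities forgotten). Call $H_\phi$ an $(r,\varepsilon)$-\emph{boundary expander} if every set $S$ of at most $r$ hyperedges satisfies $|\partial S|\ge\varepsilon|S|$, where $\partial S$ is the set of variables lying in exactly one hyperedge of $S$. Counting incidences, $|\partial S|<\varepsilon|S|$ forces $S$ to touch fewer than $\tfrac12(k+\varepsilon)|S|$ variables, so a union bound over $s\le r$, over the $\binom{\Delta n}{s}$ sets $S$ of size $s$, and over the $\binom{n}{t}$ choices of a $t:=\lceil\tfrac12(k+\varepsilon)s\rceil$-element set of variables containing all variables of $S$, using that a random $k$-clause has variable set inside a fixed $t$-set with probability $\binom{t}{k}/\binom{n}{k}\le(t/n)^k$, bounds the expected number of ``bad'' sets by $\sum_{s\le r}A(s)^s$ with $A(s)=\Theta(\Delta\,n^{1-k/2}s^{k/2-1})$. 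For $k\ge 3$ and $\Delta=o(n^{(k-2)/2})$ this is $o(1)$ as soon as $A(r)=o(1)$, i.e.\ as soon as $r=o(n/\Delta^{2/(k-2)})$ --- the exponent $2/(k-2)$ being forced by balancing $s^{k/2-1}$ against $n^{k/2-1}/\Delta$, and $r=\Omega(n/(\Delta^{2/(k-2)}\log\Delta))$ is a concrete such choice, the $\log\Delta$ being exactly the slack that pushes the failure probability below $o(1)$. Thus, with probability $1-o(1)$, $H_\phi$ is an $(r,\varepsilon)$-boundary expander with $\varepsilon$ a fixed positive constant and $r$ as above. (This is the same expansion underlying the classical resolution-width lower bound for random $k$-CNFs.) One also records the mild nondegeneracy of the arithmetized axioms: each clause $C$ becomes the degree-$k$ polynomial $\prod_{i\in C}\ell_i$ with each $\ell_i\in\{x_i,1-x_i\}$, a product of $k$ distinct affine forms, which genuinely constrains each of its variables given the others.

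\emph{Step 2 (boundary expansion $\Rightarrow$ \PCa{\F} degree lower bound).} This is the core of \cite{AR01}, and it is where their notion of \emph{immunity} enters (the same notion invoked in the excerpt in connection with the image-avoidance lower bound). The combinatorial claim is that an $(r,\varepsilon)$-boundary expander with nondegenerate constraints is ``$\Omega(r)$-immune'' in the relevant sense: via a Hall-type matching on boundary variables, every small set of clauses has a locally satisfying partial assignment, and --- the delicate point, using $\varepsilon>0$ together with a closure/potential argument --- these local assignments stay mutually consistent across all monomials of degree at most $d:=cr$. One then upgrades this to a degree bound by the standard linear-algebraic route: a degree-$\le d$ \PCa{\F} refutation would put $1$ in the degree-$\le d$ part of the ideal generated by the axiom polynomials together with the Boolean relations $x_j^2-x_j$; the consistent family of local assignments defines a linear functional (an ``$R$-operator'') on degree-$\le d$ polynomials that annihilates every axiom-multiple and every $x_j^2-x_j$ multiple but sends $1$ to a nonzero value (the empty set of clauses being vacuously satisfiable), a contradiction. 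The whole construction is purely combinatorial and linear-algebraic, with no use of the characteristic, so it works over an arbitrary field $\F$, giving every \PCa{\F} refutation degree $\ge cr$. Combined with Step~1, this yields the stated bound $\Omega(n/(\Delta^{2/(k-2)}\log\Delta))$ with probability $1-o(1)$ over every field.

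\textbf{Expected main obstacle.} Step~2 is the hard part: maintaining consistency of the local (Hall-matching) assignments up to degree $\Theta(r)$ in the \emph{non-binomial} regime, where the axioms are genuine degree-$k$ products rather than binomials $x_i-c$; this is exactly what necessitates the careful ``closure'' machinery of \cite{AR01}. Step~1 is routine by comparison, the only care being the exact exponent of $\Delta$ and the $\log\Delta$ factor in the union bound.
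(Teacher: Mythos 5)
Your treatment matches the paper's: this proposition is imported directly from Alekhnovich--Razborov \cite{AR01} (it appears in the preliminaries with only a citation, and is later invoked via their Corollary 4.7), so treating it as a black box is exactly what the paper does. Your supplementary outline --- whp boundary expansion of the clause--variable hypergraph at scale $r=\Omega\left(n/(\Delta^{2/(k-2)}\cdot\log\Delta)\right)$, followed by the characteristic-independent closure/$R$-operator argument converting expansion into a \PCa{\F} degree bound --- is a faithful sketch of the argument in \cite{AR01}, but no part of it is needed or reproduced in this paper.
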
 

\subsection{Error-Correcting Codes}

\begin{definition}[\cite{Betten06}]\label{def:ECC}
Let $A:\F_q^k\hookrightarrow\F_q^n$ be a linear embedding. The image $C=im(A)$ of $A$ is called $(n,k,d)_q$-code if for any 
$\overline{x},\overline{y}\in C$ it holds that  $d_H(\overline{x},\overline{y})\geq d$, where 
$d_H(\overline{x},\overline{y})=|\{i\, |\, x_i\neq y_i\}|$ is the Hamming distance. The matrix of $A$ is called generator matrix for $C$.
\end{definition}

\begin{theorem}[Gilbert bound \cite{Betten06}]\label{thm:Gilbert}
If $q$ is a power of a prime and $n,k,d\in\N, n\geq k$ are such that inequality
$$\sum\limits_{i=1}^d\left(\begin{array}{c}n \\ i\end{array}\right)\cd (q-1)^i<q^{n-k+1}$$
holds, then there exists $(n,k,d)_q$-code.
\end{theorem}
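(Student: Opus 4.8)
The statement is the Gilbert--Varshamov bound, and the plan is to build a generator matrix greedily, one row at a time, maintaining the invariant that every nonzero $\F_q$-linear combination of the rows chosen so far has Hamming weight at least $d$. First I would record that the Hamming ball $B_{d-1}=\{v\in\F_q^{n}:\mathrm{wt}(v)\le d-1\}$ has size $\sum_{i=0}^{d-1}\binom{n}{i}(q-1)^{i}$ (there are $\binom{n}{i}(q-1)^{i}$ vectors of weight exactly $i$), and that the hypothesis implies $|B_{d-1}|<q^{\,n-k+1}$, since $\sum_{i=0}^{d-1}\binom{n}{i}(q-1)^{i}\le\sum_{i=1}^{d}\binom{n}{i}(q-1)^{i}$ (the two sides differ by $\binom{n}{d}(q-1)^{d}-1\ge 0$).

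Next comes the greedy construction. I pick $r_1,\dots,r_k\in\F_q^{n}$ in order. Suppose $r_1,\dots,r_j$ have already been chosen so that $\mathrm{wt}\bigl(\sum_{i\le j}c_ir_i\bigr)\ge d$ for every $\overline{c}\in\F_q^{j}\setminus\{0\}$. Call a candidate $r_{j+1}$ \emph{bad} if $\mathrm{wt}\bigl(c_{j+1}r_{j+1}+\sum_{i\le j}c_ir_i\bigr)\le d-1$ for some $\overline{c}\in\F_q^{j}$ and some $c_{j+1}\in\F_q^{*}$; multiplying by $c_{j+1}^{-1}$, this is equivalent to $r_{j+1}\in\bigcup_{\overline{c}\in\F_q^{j}}\bigl(B_{d-1}-\sum_{i\le j}c_ir_i\bigr)$, which is a union of $q^{j}$ translates of $B_{d-1}$ and so has at most $q^{j}|B_{d-1}|$ elements. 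Since $j\le k-1$, the number of bad candidates is at most $q^{k-1}|B_{d-1}|<q^{n}$ by the previous paragraph, so a good $r_{j+1}$ always exists. Because $B_{d-1}$ contains the zero vector, every element of $\spn(r_1,\dots,r_j)$ is bad, so a good $r_{j+1}$ is automatically outside that span; and since the combinations with $c_{j+1}=0$ are controlled by the old invariant, the invariant is restored.

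After $k$ steps, $r_1,\dots,r_k$ are linearly independent and span a $k$-dimensional subspace of $\F_q^{n}$ all of whose nonzero vectors have weight $\ge d$. Taking $A$ to be the linear embedding $\F_q^{k}\hookrightarrow\F_q^{n}$ whose image is this subspace yields an $(n,k,d)_q$-code with generator matrix $A$, which is what the theorem asks for.

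The only delicate point is the per-step count: one must notice that adding a new row creates new weight constraints only through combinations whose new coefficient is nonzero, that all of these collapse by scaling to the case $c_{j+1}=1$, and that the worst step is the last one ($j=k-1$); the remainder is just the elementary computation of $|B_{d-1}|$ and the inequality relating the two forms of the hypothesis. I do not expect a genuine obstacle beyond getting this bookkeeping right.
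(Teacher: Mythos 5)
Your proof cannot be compared against anything in the paper, because the paper does not prove Theorem~\ref{thm:Gilbert} at all: it imports the Gilbert--Varshamov bound from the coding-theory literature (\cite{Betten06}) and uses it as a black box in the proof of Theorem~\ref{thm:linSysDTLB}. Judged on its own, your greedy construction is the standard textbook argument and it is correct: the count $|B_{d-1}|=\sum_{i=0}^{d-1}\binom{n}{i}(q-1)^{i}$, the normalization of a bad extension to the case $c_{j+1}=1$ via scaling (weight is invariant under multiplication by a nonzero scalar, and $c_{j+1}^{-1}\overline{c}$ ranges over all of $\F_q^{j}$), the union bound $q^{j}|B_{d-1}|\le q^{k-1}|B_{d-1}|<q^{n}$ guaranteeing a good candidate, and the observation that $0\in B_{d-1}$ forces each new row outside $\spn(r_1,\dots,r_j)$ are all in order; the resulting $k$-dimensional subspace has all nonzero weights at least $d$, hence pairwise distances at least $d$ by linearity, which is exactly an $(n,k,d)_q$-code in the sense of Definition~\ref{def:ECC}. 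One small caveat: your inequality ``the two sides differ by $\binom{n}{d}(q-1)^{d}-1\ge 0$'' silently assumes $d\le n$ (and you also need $d\ge 1$ so that $0\in B_{d-1}$). For $d>n$ the theorem as literally stated is false (e.g.\ $k=1$, $d=n+1$ satisfies the hypothesis, yet no code in $\F_q^{n}$ has minimum distance exceeding $n$), so this is an implicit assumption of the paper's formulation rather than a gap in your argument, but it would be worth stating the restriction $1\le d\le n$ explicitly.
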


\subsection{Complexity of Linear Systems}\label{sec:linSysCompl}

It is a well-known fact that deciding 0-1 satisfiability of linear systems over $\F_p,p\geq 5$ or of linear systems over
$\Q$ (even if coefficients are small) are \NP-complete problems. Indeed, for example, the $3$-clause $(x_1\vee \neg x_2\vee x_3)$ can  
be represented as the linear equation with additional boolean  variables $y_1, y_2$: $x_1+(1-x_2)+x_3=1+y_1+y_2$. In this
way $k$-SAT reduces to 0-1 satisfiability of linear systems over a field of characteristic $0$ or $p>k$.

\begin{theorem}\label{thm:linSysNP}
The problem of deciding 0-1 satisfiability of linear systems over a field of characteristic $0$ or $p\geq 5$ is \NP-complete.
In case of characteristic $0$ this also holds if the size of coefficients is required to be bounded by a constant.
\end{theorem}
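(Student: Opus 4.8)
The plan is to establish membership in \NP\ and \NP-hardness separately, the latter by the reduction from $3$-SAT already sketched before the statement. For membership: over $\Q$ or over a fixed finite field $\F_p$, a $0$-$1$ assignment $\overline{a}$ serves as a polynomial-size certificate, since evaluating each equation on $\overline{a}$ only requires adding up at most $n$ of the (binary-encoded, over $\Q$) coefficients of $A$ and comparing with $\overline{b}$; this is polynomial time, so the language is in \NP.

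For hardness I would take a $3$-CNF $\psi=\bigwedge_{j\in[m]}C_j$, write each clause as $C_j=(\ell_{j,1}\vee\ell_{j,2}\vee\ell_{j,3})$ with each $\ell_{j,t}\in\{x_i,\,1-x_i\}$, introduce two fresh variables $y_{j,1},y_{j,2}$ per clause, and output the linear system whose $j$-th equation is $\ell_{j,1}+\ell_{j,2}+\ell_{j,3}=1+y_{j,1}+y_{j,2}$. This is computable in polynomial time and uses only $n+2m$ variables, $m$ equations, coefficients in $\{-1,0,1\}$, and constants of magnitude at most $2$. The correctness argument is the integer-arithmetic observation that, under a $0$-$1$ assignment, the left side of the $j$-th equation is an integer in $\{0,1,2,3\}$ that is $\geq 1$ precisely when $C_j$ is satisfied, while the right side takes exactly the values $\{1,2,3\}$ as $(y_{j,1},y_{j,2})$ ranges over $\{0,1\}^2$; hence a $0$-$1$ assignment to the $x_i$'s extends to a $0$-$1$ solution of the system if and only if it satisfies all the clauses, so the system is $0$-$1$ satisfiable if and only if $\psi$ is.

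The remaining point, which is also the only place where the hypotheses enter, is that this reasoning is about integers whereas the equations live over the field $\F$. Over $\Q$ nothing more is needed. Over $\F_p$ I must rule out a ``wrap-around'' $0$-$1$ solution, that is, one in which the two sides of some equation coincide in $\F_p$ but not as integers; since both sides lie in $\{0,1,2,3\}$ and $p\geq 5>3$, distinct elements of $\{0,1,2,3\}$ have distinct residues modulo $p$, so equality in $\F_p$ coincides with integer equality and correctness is preserved. (This is exactly why $p\geq 5$ is needed; a reduction from $k$-SAT would instead require $p>k$ and $k-1$ auxiliary variables per clause, matching the two used here for $k=3$.) Finally, the coefficients and constants produced above are bounded by $2$ independently of $\psi$, so the very same reduction proves \NP-hardness over $\Q$ under the constant-coefficient-size restriction, which gives the last clause. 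I anticipate no real difficulty: the only things to watch are the absence of ``wrap-around'' solutions over $\F_p$ and the boundedness of the coefficients for the characteristic-zero refinement, both immediate from the explicit construction.
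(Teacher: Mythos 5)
Your proposal is correct and follows essentially the same approach as the paper. In fact, the paper gives no formal proof of Theorem~\ref{thm:linSysNP} at all: it presents the statement as well-known folklore, with only the one-line hint (in the paragraph immediately preceding the theorem) that a $3$-clause such as $(x_1\vee\neg x_2\vee x_3)$ becomes $x_1+(1-x_2)+x_3=1+y_1+y_2$, and the observation that this reduces $k$-SAT to $0$-$1$ satisfiability of linear systems over characteristic $0$ or over characteristic $p>k$. Your write-up expands exactly that reduction into a full argument, adding the routine NP-membership check, the explicit verification that the integer identity ``LHS $\geq 1$ iff $C_j$ is satisfied, RHS ranges over $\{1,2,3\}$'' yields the equisatisfiability, the ``no wrap-around'' point justifying why $p\geq 5$ (more generally any field of characteristic $0$ or $>3$) suffices, and the observation that all coefficients have magnitude at most $2$, which gives the constant-size refinement over characteristic zero. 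One small generalization worth being explicit about, since the theorem says ``a field of characteristic $p$'' rather than ``$\F_p$'': the no-wrap-around argument applies verbatim to any field of characteristic $p\geq 5$, not just the prime field, because the integers $\{0,1,2,3\}$ still map to distinct elements of the prime subfield. Otherwise there is nothing to add.
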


The mapping $R$ of $k$-CNFs to linear systems described above can be used to translate lower bounds on degree of \PCa{\F} refutations
from $k$-CNFs to linear systems. 

\begin{proposition}\label{thm:linSysPC}
If $\phi\in k\text{-UNSAT}$ and \F\ is a field such that $char(\F)>k$ or $char(\F)=0$, then $\phi$ admits \PCa{\F} refutations of
degree $d$ iff $R(\phi)$ admits \PCa{\F} refutations of degree $O(d)$. 
\end{proposition}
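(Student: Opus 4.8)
I will plan to establish the two implications separately, each by an explicit transformation of \PCa{\F} refutations that changes the degree by only a factor depending on the fixed $k$. Fix notation: for a clause $C\in\phi$ of width $w\le k$, write $v^{C}_{1},\dots,v^{C}_{w}$ for its literal values (each $v^{C}_{i}$ is a variable $x$ or $1-x$) and set $s_{C}:=\sum_{i=1}^{w}v^{C}_{i}$; recall $R(\phi)$ is the polynomial system $\{L_{C}=0:C\in\phi\}$ with $L_{C}:=s_{C}-1-\sum_{j=1}^{w-1}y^{C}_{j}$, the $y^{C}_{j}$ being fresh per clause, and its arithmetization carries the boolean axioms $x_{i}^{2}-x_{i}$ and $(y^{C}_{j})^{2}-y^{C}_{j}$; the arithmetization of $\phi$ is $\{p_{C}=0:C\in\phi\}$ with $p_{C}:=\prod_{i=1}^{w}(1-v^{C}_{i})$ together with the axioms $x_{i}^{2}-x_{i}$. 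Throughout I will use that $\mathrm{char}(\F)=0$ or $\mathrm{char}(\F)>k$ makes $0,1,\dots,k$ pairwise distinct in $\F$ and $k!$ invertible — which is exactly what makes $R$ correct and powers the interpolation below — together with the standard fact that a polynomial of degree $e$ vanishing on $\{0,1\}^{n}$ has a \PCa{\F} derivation of degree $\le e$ from the boolean axioms (iteratively rewrite $x_{i}^{2}\mapsto x_{i}$).

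\textbf{From $\phi$ to $R(\phi)$.} First I would derive each $p_{C}$ from $R(\phi)$ in degree $O(k)$. On $\{0,1\}$-assignments, $p_{C}$ coincides with the Lagrange indicator $\tfrac{1}{(-1)^{w}w!}\prod_{a=1}^{w}(s_{C}-a)$ of the event ``$s_{C}=0$''; both are polynomials of degree $\le w$, so their difference vanishes on the cube and is derivable in degree $\le w$. Writing $t_{C}:=1+\sum_{j}y^{C}_{j}$, the polynomial $\prod_{a=1}^{w}(S-a)-\prod_{a=1}^{w}(T-a)$ is divisible by $S-T$, and $s_{C}-t_{C}=L_{C}$, so multiplying the axiom $L_{C}$ by the cofactor (of degree $\le w-1$) gives $\prod_{a}(s_{C}-a)\equiv\prod_{a}(t_{C}-a)$ in degree $\le w$; and $\prod_{a=1}^{w}(t_{C}-a)$ vanishes on the cube (one factor is always $0$ since $t_{C}\in\{1,\dots,w\}$ there), hence is derivable in degree $\le w$. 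Chaining these yields a degree-$O(k)$ derivation of $p_{C}$, and prepending these to a given degree-$d$ refutation of $\phi$ gives a \PCa{\F} refutation of $R(\phi)$ of degree $O(d)$.

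\textbf{From $R(\phi)$ to $\phi$.} This is the substantive direction, and the obstacle is eliminating the auxiliary variables $y^{C}_{j}$, which are not functionally determined by $\overline{x}$. The plan is to substitute a well-chosen low-degree polynomial for each $y^{C}_{j}$. For $1\le j\le w-1$, let $Y^{C}_{j}\in\F[\overline{x}]$ be obtained by composing $s_{C}$ with the unique univariate polynomial of degree $\le w$ interpolating the $w+1$ points $(a,\mathbf{1}[a\ge j+1])$ for $a=0,\dots,w$ (it exists because $0,\dots,k$ are distinct in $\F$); then $\deg Y^{C}_{j}\le w$ and, on $\{0,1\}$-assignments, $Y^{C}_{j}$ equals the threshold indicator $\mathbf{1}[s_{C}\ge j+1]$, so that $\sum_{j}Y^{C}_{j}=\max(s_{C}-1,0)$ there. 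Therefore, writing $\widehat{L}_{C}:=L_{C}(\overline{x},\overline{Y})$, both $\widehat{L}_{C}+p_{C}$ and $(Y^{C}_{j})^{2}-Y^{C}_{j}$ are polynomials of degree $O(k)$ vanishing on $\{0,1\}^{n}$, so each has a degree-$O(k)$ \PCa{\F} derivation from the arithmetization of $\phi$ (for $\widehat{L}_{C}$ use $\widehat{L}_{C}=-p_{C}+(\widehat{L}_{C}+p_{C})$). Now, given a degree-$D$ \PCa{\F} refutation $\pi$ of $R(\phi)$, I would substitute $y^{C}_{j}\mapsto Y^{C}_{j}$ in every line: a monomial of degree $\le D$ is sent to a polynomial of degree $\le Dk=O(D)$, so every line becomes a polynomial in $\overline{x}$ of degree $O(D)$. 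The substituted sequence is a \emph{generalized} \PCa{\F} derivation of $1$ — linear combinations and $x_{i}$-multiplications stay genuine; a $y^{C}_{j}$-multiplication becomes multiplication by the fixed polynomial $Y^{C}_{j}$, simulated by $O(1)$ genuine steps with degree overhead $\le k$; and the only non-standard axioms it invokes are $\widehat{L}_{C}$ and $(Y^{C}_{j})^{2}-Y^{C}_{j}$, which we replace by their degree-$O(k)$ derivations from the arithmetization of $\phi$. This produces a genuine \PCa{\F} refutation of $\phi$ of degree $O(D)$.

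I expect the main difficulty to be precisely the choice of the substitution polynomials $Y^{C}_{j}$: they must be simultaneously low-degree (so substitution blows the degree up by only a constant factor) and have partial sums equal to $\max(s_{C}-1,0)$, so that after substitution the axioms $L_{C}$ collapse to low-degree consequences of the clauses $p_{C}$ — the threshold indicators $\mathbf{1}[s_{C}\ge j+1]$ work because $\sum_{j}\mathbf{1}[s_{C}\ge j+1]$ equals $s_{C}-1$ exactly when the clause is satisfied and $0$ (hence $s_{C}-1+p_{C}$) when it is falsified. The remaining ingredients — degree accounting under substitution, simulating multiplication by a fixed bounded-degree polynomial by a constant number of \PCa{\F} steps, and the cube-vanishing implies low-degree-derivable fact — are routine bookkeeping.
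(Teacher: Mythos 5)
Your proposal is correct and follows essentially the same route as the paper's proof: one direction derives each arithmetized clause from its linear translation in degree $O(k)$ and composes, while the other eliminates the auxiliary variables $y^{C}_{j}$ by substituting degree-at-most-$k$ polynomials in the clause variables so that the substituted axioms become low-degree consequences of the (arithmetized) clauses, then patches the substituted refutation. The only difference is one of explicitness: the paper merely asserts the existence of suitable substitution polynomials and appeals to implicational completeness of \PCa{\F} for the clause-wise degree-$O(k)$ derivations, whereas you instantiate them concretely (threshold interpolants, the Lagrange/factorial identity, and the divisibility trick), which amounts to filling in the same argument in detail.
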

\begin{proof}
Denote $\sigma$ the mapping from literals to linear polynomials such that: $\sigma(x):=x$ and $\sigma(\neg x):= 1-x$. Let
$\tau$ be the following mapping from clauses to linear polynomials: 
$\tau(l_1\vee\dots\vee l_s):=\sigma(l_1)+\dots+\sigma(l_s)-1-y^{(1)}_{l_1\vee\dots\vee l_s}-\dots-y^{(s-1)}_{l_1\vee\dots\vee l_s}$, where
$y^{(i)}_{l_1\vee\dots\vee l_s}$ are auxiliary boolean  variables. Then $R$ translates $\phi=\{C_i\}_{i\in[m]}$ to the 0-1 unsatisfiable linear system $L$:
$\tau(C_1)=0,\ldots,\tau(C_m)=0$. 

Assume $L$ has \PCa{\F} refutation $\pi$ of degree $d$. If $x_1,\ldots,x_n$ are variables of $\phi$, then all the auxiliary 
variables $y^{(i)}_{C_j}$ can
be substituted with polynomials $v^{(i)}_{C_j}(x_1,\ldots,x_n)$ of degree at most $k$ such that $C_j\models (\tau(C_j)\rst_{\rho_v})=0$,
where $\rho_v$ stands for the substitution and the entailment is over 0-1 assignments. It is easy to see that $\pi$ can be extended 
to the proof $\pi\rst_{\rho_v}$ of degree at most $k\cd d$, where all the auxiliary variables are substituted with the corresponding polynomials.
Due to implicational completeness of \PCa{\F}, there are \PCa{\F} derivations $\pi_j : C_j\vdash (\tau(C_j)\rst_{\rho_v})=0$ of degree
at most $k$. Composition of $\{\pi_j\}_{j\in[m]}$ with $\pi\rst_{\rho_v}$ gives a \PCa{\F} refutation of degree at most $k\cd d$.

Conversely, if $\pi$ is a \PCa{\F} refutation of $\phi$ of degree $d$, then the composition of derivations $\tau(C_j)=0\vdash C_j$
with $\pi$ gives a refutation of $L$ of degree at most $\max(k,d)$.
  
\end{proof} 

%
\section{Resolution over Linear Equations for General Rings}\label{sec:resLin}

In this section we define and outline some basic properties of systems that are extensions of resolution, where clauses are disjunctions of linear equations over a ring $R$: 
$\left(\sum\nolimits_{i=0}^na_{1i}x_i+b_1=0\right)\vee\dots\vee\left(\sum\nolimits_{i=0}^na_{ki}x_i+b_k=0\right)$. Recall that disjunctions of this form are called \emph{linear clauses}, and that we assume that all disjuncts are distinct, hence contract duplicate linear equations.
We sometimes abuse notation by writing a linear equation as $\left(\sum\nolimits_{i=0}^na_{1i}x_i=-b_1\right)$ instead of $\left(\sum\nolimits_{i=0}^na_{1i}x_i+b_1=0\right)$.

The rules of \reslin{R} are as follows (cf.~\cite{RT07}): 

\begin{prooftree}
        \centering
        \def\labelSpacing{12pt}
        \AxiomC{$C \vee f(\overline{x})=0$}
        \AxiomC{$D \vee g(\overline{x})=0$}
        \RightLabel{($\alpha,\beta\in R$)}
        \LeftLabel{(Resolution)}
        \BinaryInfC{$C \vee D \vee \left(\alpha f(\overline{x})+ \beta g(\overline{x})\right)=0$}

\end{prooftree}

\begin{prooftree}
        \AxiomC{$C \vee a=0$}
        \RightLabel{($0\neq a\in R$)}
        \LeftLabel{(Simplification)}
        \UnaryInfC{$C$}
        
        \AxiomC{$C$}
        \LeftLabel{(Weakening)}
        \UnaryInfC{$C \vee f(\overline{x})=0$}
        \noLine
        \BinaryInfC{}
\end{prooftree}
where $f(\overline{x}),g(\overline{x})$ are linear forms over $R$ and $C,D$ are linear clauses. Note that contraction of duplicates disjuncts is done automatically when applying the resolution rule.  The \emph{boolean  axioms} are defined as follows: 
$$ x_i=0 \lor x_i=1 \text{, ~for $x_i$ a variable}$$
A \reslin{R} \emph{derivation} of a linear clause
$D$ from a set of linear clauses $\phi$ is a
sequence of linear clauses $(D_1,\dots,D_s\equiv D)$ such that for every $1\leq i\leq s$ 
either $D_i\in \phi$ or is a boolean  axiom or $D_i$ is obtained from previous clauses by applying one of the rules above.
A \reslin{R}
\emph{refutation} of an unsatisfiable
set of linear clauses $\phi$ is a \reslin{R} derivation of the empty clause (which stands for \textsf{false}) from $\phi$. The \bemph{size} of a \reslin\R\ derivation  is the total size of all the clauses in the derivation, where the size of a clause is defined to be the total number of occurrences of variables in it plus the total size of all the coefficient occurring in the clause. The size of a coefficient when using integers (or integers embedded in characteristic zero rings) will be the standard size of the binary representation of integers.



In this definition we assume that $R$ is a non-trivial ($R\neq\textbf{0}$) ring such that there are polynomial-time algorithms for addition, multiplication
and taking additive inverses.






Along with  size, we will be dealing with two complexity measures of derivations: \emph{width} and \emph{principal width}.

\begin{definition}\label{omegaDef}
A clause $C=(f_1=0\vee\dots\vee f_m=0)$ has \textbf{width} $\omega(C)=m$ and \textbf{principal width} $\omega_0(C)=\left|\{f_i\}_{i\in[m]}/_\sim\right|$ 
where $\sim$ identifies \R-linear forms $f_i=0$ and $f_j=0$ if they define parallel hyperplanes, that is, if $f_i=Af_j+B$ or $f_j=Af_i+B$ for some $A,B\in\R$.
For $\mu\in\set{\omega,\omega_0}$, the measure $\mu$ associated with a 
\reslin{R} derivation $\pi=(D_1,\dots,D_s)$ is 
$\mu(\pi):=\max\nolimits_{1\leq i\leq s}\mu(D_i)$. For $\phi\in\text{UNSAT}$, denote by $\mu(\phi\vdash \perp)$ the minimal value of $\mu(\pi)$ over all \reslin{R} refutations $\pi$.
\end{definition}

\begin{proposition}
\reslin{R} is sound and complete. It is also implicationally complete, that is if $\phi$ is a set
of linear clauses and $C$ is a linear clause such that $\phi\models C$, then there exists a \reslin{R} derivation of $C$ from $\phi$.
\end{proposition}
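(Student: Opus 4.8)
The plan is to treat soundness by a direct rule‑by‑rule check and completeness via implicational completeness, of which ordinary (refutational) completeness is the special case $C=\bot$. For soundness I would induct on the length of a derivation, showing every proof‑line is a semantic consequence over $0$-$1$ assignments of $\phi$: the boolean axioms hold on all $0$-$1$ assignments, and each rule preserves $0$-$1$ validity. For the resolution rule, if a $0$-$1$ assignment satisfies $C\vee(f=0)$ and $D\vee(g=0)$ then either it already satisfies $C$ or $D$, or it satisfies both $f=0$ and $g=0$ and hence $\alpha f+\beta g=0$; simplification and weakening are immediate. In particular $\phi\vdash\bot$ implies $\phi$ is unsatisfiable.

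I would prove refutational completeness by induction on the number $n$ of variables occurring in $\phi$. For $n=0$ each clause is a disjunction of constant equations $b=0$, unsatisfiability forces some clause to have all its constants nonzero, and repeated simplification reduces it to $\bot$. For the inductive step, fix a variable $x_n$; the sets $\phi\rst_{x_n\leftarrow0}$ and $\phi\rst_{x_n\leftarrow1}$ are $0$-$1$-unsatisfiable over $n-1$ variables, so by induction they have refutations $\pi_0,\pi_1$. The technical core is a \emph{lifting lemma}: from $C$ and the boolean axiom $(x_n=0)\vee(x_n=1)$ one derives $C\rst_{x_n\leftarrow\epsilon}\vee(x_n=1-\epsilon)$ by resolving, one at a time, each disjunct $h+cx_n=0$ of $C$ against $x_n=\epsilon$ to replace it by $h+c\epsilon=0$; consequently any \reslin{R} derivation from $\phi\rst_{x_n\leftarrow\epsilon}$ lifts to a derivation from $\phi$ of the same conclusion together with the extra disjunct $x_n=1-\epsilon$ --- axioms of $\phi\rst_{x_n\leftarrow\epsilon}$ are supplied by the lifting lemma, boolean axioms for $x_j$ ($j<n$) by weakening, and resolution/simplification/weakening steps are replayed, contracting the repeated copy of the extra disjunct (which, being the only thing mentioning $x_n$, is never itself consumed). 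Lifting $\pi_1$ and $\pi_0$ gives derivations of the unit clauses $(x_n=0)$ and $(x_n=1)$ from $\phi$; resolving these two yields the false equation $1=0$, and one simplification step gives $\bot$.

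For implicational completeness, suppose $\phi\models C$ with $C=\bigvee_{i\in[m]}(g_i=0)$. I would first observe that over $0$-$1$ assignments $g_i\neq0$ is equivalent to the linear clause $N_i:=\bigvee_{0\neq v\in im_2(g_i)}(g_i=v)$, so $\phi\cup\{N_1,\dots,N_m\}$ is $0$-$1$-unsatisfiable and, by the previous step, has a refutation $\pi$. I would then transform $\pi$ into a \reslin{R} derivation of $C$ from $\phi$ by adjoining the disjuncts of $C$ to every line: a line from $\phi$ or a boolean axiom becomes a weakening of itself; each resolution/simplification/weakening step is simulated by the same rule followed by a weakening that restores the full copy of $C$; the terminal line $\bot$ becomes exactly $C$; and a use of the auxiliary axiom $N_i$ is replaced by a short derivation of $N_i\vee C$, which exists because $N_i\vee(g_i=0)$ is the generalized boolean axiom $\imAx{g_i}$ and every $\imAx{g}$ is \reslin{R}-derivable. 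This last fact I would prove by induction on $|\supp(g)|$: for $g=a_0+a_1x_1$, resolve two copies of the boolean axiom for $x_1$ to obtain $(x_1=1)\vee(a_1x_1=0)$ and $(x_1=0)\vee(a_1x_1-a_1=0)$, then resolve these to cancel $x_1$ and simplify; for $g=g'+a_rx_r$ with $x_r\notin\supp(g')$, resolve each disjunct $(g'=v)$ of $\imAx{g'}$ against the boolean axiom for $x_r$ twice, producing $(g=v)$ and $(g=v+a_r)$, and cancel $x_r$ at the end.

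The main obstacle is entirely in the two ``lifting'' constructions --- the restriction/lifting lemma and the derivability of $\imAx{g}$ --- where one must verify that contracting repeated disjuncts and applying the simplification rule never destroy the extra disjunct(s) carried along, and that a weakening is inserted wherever a resolution step consumes a disjunct that also belongs to the clause being appended. The degenerate situation in which $C$ is trivially valid through a disjunct such as $0=0$ (equivalently, some $g_i$ is the zero form) must be excluded or normalized away, as is standard; everything else is routine bookkeeping.
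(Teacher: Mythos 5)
Your proof is correct, and it takes a genuinely different route from the paper's. The paper dispatches implicational completeness in a single line by invoking Proposition~\ref{univUB}, which in turn is obtained as a byproduct of the correspondence between tree-like \reslin{R} derivations and nondeterministic linear decision trees (Proposition~\ref{univDT} together with Theorem~\ref{treeDTequiv}); in effect, the paper builds an NLDT that queries all variables one by one, observes that each branch violates some clause, and converts the tree into a derivation. Your argument is instead a self-contained proof-theoretic induction: refutational completeness by induction on the number of variables (restrict on $x_n$, lift the two restricted refutations by carrying the tag disjunct $x_n=1-\epsilon$ and never letting it be consumed, then resolve the two unit clauses $x_n=0$ and $x_n=1$), followed by a refutation-to-derivation transformation (adjoin $C$ to every line and replace each occurrence of the auxiliary assumption $N_i$ by a weakening of $\imAx{g_i}$). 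The two local lemmas you lean on --- the derivability of $\imAx{g}$ by induction on $|\mathrm{supp}(g)|$, and the lifting of a restricted derivation --- coincide in spirit with the paper's Proposition~\ref{imf} and with the standard substitution trick also used there (e.g.\ Proposition~\ref{substProp}), so the ingredients overlap, but the decomposition is different. What each route buys: the paper's is shorter \emph{on the page} because the NLDT machinery is developed anyway for the tree-like lower bounds and yields an explicit $O(2^n|C|)$ tree-like size bound; yours is elementary, avoids a forward reference to Section~\ref{sec:NLDT}, and makes the shape of the resulting derivation transparent.

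Your final caveat about the degenerate disjunct $0=0$ is well taken and worth keeping: neither the syntactic axioms of \reslin{R} as literally written nor your adjoining trick produce the bare clause $(0=0)$, and the paper handles this implicitly (its proof of Proposition~\ref{imf} treats $b=b$ as a one-step derivation, i.e.\ $0=0$ is tacitly an axiom). If you want to be scrupulous, either stipulate that $0=0$ is an axiom (matching what the paper uses in practice), or normalize $C$ by discarding disjuncts $g_i=0$ with $g_i$ a nonzero constant (never satisfied over $0$-$1$) and treating the case that $C$ contains $0=0$ separately. This does not affect the correctness of the main argument.
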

\begin{proof}
The soundness can be checked by inspecting that each rule of \reslin{R} is sound. Implicational completeness (and thus completeness) follows
from Proposition~\ref{univUB}.
\end{proof}

We now define two systems of resolution   with linear equations over a ring, where some of the rules are semantic:
\reslinsw{R} and \reslinsem{R}. \reslinsw{R} is obtained from \reslin{R}
by replacing the boolean axioms with $0=0$, discarding simplification rule and replacing the weakening rule with the following \textit{semantic weakening rule}:


\begin{prooftree}
        \AxiomC{$C$}
        \RightLabel{($C\models D$)}
        \LeftLabel{(Semantic weakening)}
        \UnaryInfC{$D$}
\end{prooftree}

The system \reslinsem{R} has no axioms except for $0=0$, and has only the following \textit{semantic resolution rule}:

\begin{prooftree}
        \AxiomC{$C$}
        \AxiomC{$C'$}
        \RightLabel{($C\wedge C'\models D$)}
        \LeftLabel{(Semantic resolution)}
        \BinaryInfC{$D$}
\end{prooftree}

It is easy to see that $\text{\reslin{R}}\leq_p\text{\reslinsw{R}}\leq_p\text{\reslinsem{R}}$, where $P\le_p Q$ denotes that $Q$ polynomially simulates $P$. 

In contrast to the case $\R=\F_2$
(see \cite{IS14}), for rings \R\ with $char(\R)\notin\{1,2,3\}$ both \reslinsw{R} and \reslinsem{R} are not Cook-Reckhow proof systems,
unless $\P=\NP$: 
\begin{proposition}\label{swHardness}
The following decision problem is \coNP-complete: given a linear clause over a ring R with $char(R)\notin \{1,2,3\}$ decide whether it is a tautology under 0-1 assignments. 
\end{proposition}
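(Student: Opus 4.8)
The plan is to show membership in \coNP\ first, then \coNP-hardness by reducing from the complement of 3-SAT (equivalently, giving a polynomial-time reduction from 3-TAUT, which is \coNP-complete).

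\textbf{Membership in \coNP.} The complement problem — "is there a 0-1 assignment falsifying the linear clause $C=\bigvee_{i\in[m]}(f_i=0)$?" — is in \NP: guess an assignment $\overline{a}\in\{0,1\}^n$ and check in polynomial time that $f_i(\overline{a})\neq 0$ for every $i\in[m]$. Here one must be a little careful about the ring: since the ring $R$ is fixed and comes with polynomial-time arithmetic, evaluating each $f_i$ at a 0-1 point is polynomial-time, and testing equality with $0$ in $R$ is polynomial-time. Hence tautologyhood under 0-1 assignments is in \coNP.

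\textbf{\coNP-hardness.} I would reduce from 3-TAUT: given a 3-DNF $\psi=\bigvee_{j\in[m]} t_j$ over variables $x_1,\dots,x_n$, where each term $t_j$ is a conjunction of at most three literals, build a single $R$-linear clause $C_\psi$ that is a 0-1 tautology iff $\psi$ is. The key gadget, as the footnote in the excerpt hints, is that over a ring of characteristic $\notin\{1,2,3\}$ a single linear equation can express a conjunction of up to three literals. Concretely, for a term $t_j = \ell_{j,1}\wedge \ell_{j,2}\wedge \ell_{j,3}$, write $\sigma(\ell)=x$ if $\ell=x$ and $\sigma(\ell)=1-x$ if $\ell=\neg x$; then under any 0-1 assignment, $\sigma(\ell_{j,1})+\sigma(\ell_{j,2})+\sigma(\ell_{j,3})=3$ if and only if $t_j$ is satisfied, and the value otherwise lies in $\{0,1,2\}$. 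Since $\chara(R)\notin\{1,2,3\}$, the element $3\in R$ is distinct from $0,1,2$, so the linear equation $\sigma(\ell_{j,1})+\sigma(\ell_{j,2})+\sigma(\ell_{j,3})-3=0$ holds under a 0-1 assignment exactly when $t_j$ is true. (Terms with one or two literals are handled the same way with the constant adjusted to $1$ or $2$, which are likewise $\neq 0$ in $R$; a degenerate term like $x\wedge\neg x$ never holds, and can be represented by any unsatisfiable-under-0-1 equation such as $x+(1-x)-2=0$, or simply dropped.) Now set
\begin{equation*}
C_\psi \;:=\; \bigvee_{j\in[m]}\Bigl(\sigma(\ell_{j,1})+\sigma(\ell_{j,2})+\sigma(\ell_{j,3})-3=0\Bigr).
\end{equation*}
For any 0-1 assignment $\overline{a}$, $C_\psi$ holds at $\overline{a}$ iff some disjunct holds iff some term $t_j$ is true at $\overline{a}$ iff $\psi(\overline{a})=1$. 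Hence $C_\psi$ is a 0-1 tautology iff $\psi\in 3\text{-TAUT}$. The construction is clearly polynomial time (each disjunct has a constant number of variable occurrences and a constant-size coefficient bounded in terms of $R$), so this is a valid many-one reduction, establishing \coNP-hardness and thus \coNP-completeness.

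\textbf{Main obstacle.} The only subtlety — and the place the argument genuinely uses the hypothesis — is the gadget correctness over an arbitrary ring $R$ with $\chara(R)\notin\{1,2,3\}$: I need that $0,1,2,3$ are pairwise distinct as elements of $R$, which is exactly the characteristic condition, and I need that evaluating/comparing these expressions is polynomial time, which is guaranteed by the standing assumption on $R$. One should also double-check the edge case of terms with fewer than three literals and of contradictory terms, and confirm that the bit-size of the constant $3\in R$ (and of the reduction output generally) is polynomially bounded — both are routine given the model. No issue arises for $\chara(R)=2$ or $3$ being excluded on the hardness side; that exclusion is precisely what makes the three-literal gadget work, and for those small characteristics the problem is indeed easy (linear equations collapse to affine/parity-type constraints), consistent with the statement being only about $\chara(R)\notin\{1,2,3\}$.
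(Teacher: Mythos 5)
Your proof is correct and takes essentially the same approach as the paper: both reduce 3-TAUT (tautological 3-DNF) to the problem by encoding each term as a single linear equation that holds under a 0-1 assignment exactly when the term does, using the hypothesis $\chara(R)\notin\{1,2,3\}$ to ensure $0,1,2,3$ are pairwise distinct in $R$. Your write-up adds a bit more detail (explicit coNP membership, edge cases for short or contradictory terms, bit-size) that the paper leaves implicit.
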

\vspace{-12pt} 
\begin{proof}
Consider a 3-DNF $\phi$ and encode every conjunct $(x^{\sigma_1}_{i_1}\wedge \dots \wedge x^{\sigma_k}_{i_k})\in\phi,1\leq k\leq 3,\sigma_i\in\{0,1\}$ as
the equation 
$(1-2\sigma_1)x_1+\dots+(1-2\sigma_k)x_k=k-(\sigma_1+\dots+\sigma_k)$, where $x^0:=x,x^1:=\neg x$. Then $\phi$ is tautological  if and only if the disjunction
of these linear equations is tautological (that is, for every 0-1 assignment to the variables at least one of the equations hold, when the equations are computed
over a ring with characteristic zero or finite characteristic bigger than 3).
\end{proof}


We leave it as an open question to determine the complexity  of verifying a correct application of the semantic weakening in  case $char(\text{\R})=3$ or in case $char(\R)=2$
and $\text{\R}\neq \F_2$. In the case $\R=\F_2$ the negation of a clause is a system of linear equations and thus the existence of
solutions for it can be checked in polynomial time. Therefore \reslinsw{\text{$\F_2$}} is a Cook-Reckhow propositional proof system. The definitions 
of \reslin{\text{$\F_2$}}, \reslinsw{\text{$\F_2$}} and \reslinsem{\text{$\F_2$}} coincide with the definitions of syntactic $\text{Res}(\oplus)$,
$\text{Res}(\oplus)$ and $\text{Res}_{\text{sem}}(\oplus)$ from \cite{IS14}, respectively\footnote{There is, however, one minor difference 
in the formulation of syntactic $\text{Res}(\oplus)$ and \reslin{\text{$\F_2$}}: the former does not have the boolean axioms, but has an extra rule (\emph{addition rule}).}.
As showed in \cite{IS14}, \reslin{\text{$\F_2$}}, \reslinsw{\text{$\F_2$}} and \reslinsem{\text{$\F_2$}} are polynomially equivalent.
\medskip 

We now show that if $char(\R)\notin\{1,2,3\}$, then \reslinsw{R} is polynomially bounded as
a proof system for $3$-UNSAT (that is, admits polynomial-size refutation for every instance): 
\begin{proposition}\label{swPB}
If $char(\R)\notin\{1,2,3\}$, then dag-like \reslinsw{R} and tree-like \reslinsem{R} are polynomially bounded (not necessarily Cook-Reckhow) propositionally proof systems for 3-UNSAT.
\end{proposition}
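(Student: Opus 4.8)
The plan is to exploit the same feature of $\R$ used in Proposition~\ref{swHardness}: over a ring with $char(\R)\notin\{1,2,3\}$ any conjunction of at most three literals is equivalent, over $0/1$ assignments, to a single linear equation. Fix an unsatisfiable $3$-CNF $\phi=C_1\wedge\dots\wedge C_m$; if $\emptyset\in\phi$ a refutation is a single line, so assume otherwise, and let $1\le k_i\le 3$ be the width of $C_i$. Encoding the conjunction $\overline{C_i}$ of literals exactly as in the proof of Proposition~\ref{swHardness}, I obtain a linear form $E_i$ over $\R$ in the $\le 3$ variables of $C_i$ that takes only values in $\{-k_i,\dots,0\}$ under $0/1$ assignments, with $(E_i=0)\equiv\overline{C_i}$, hence $C_i\equiv\widetilde{C}_i:=\bigvee_{v=1}^{k_i}(E_i+v=0)$; the distinctness of $0,1,\dots,k_i$ in $\R$ needed for this is exactly the hypothesis $char(\R)\notin\{1,2,3\}$. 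Set $D:=\bigvee_{i=1}^{m}(E_i=0)$. Since $\phi$ is unsatisfiable, $D\equiv\neg\phi$ is a $0/1$-tautology, and $|D|=\mathrm{poly}(|\phi|)$ as each $E_i$ has $O(1)$ size.

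For tree-like \reslinsem{R} I would build a ``caterpillar'' refutation $D_m,D_{m-1},\dots,D_0=\emptyset$. First derive $D_m:=D$ from two copies of the axiom $0=0$ by one semantic-resolution step (legitimate since $D$ is a tautology). Then, for $j=m,m-1,\dots,1$, semantic-resolve $D_j=\bigvee_{i=1}^{j}(E_i=0)$ with the input clause $C_j$ to obtain $D_{j-1}:=\bigvee_{i=1}^{j-1}(E_i=0)$; this is sound because every $0/1$ assignment satisfying $C_j$ forces $E_j\ne 0$, hence cannot satisfy the disjunct $(E_j=0)$ of $D_j$ and must satisfy $D_{j-1}$. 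The last step ($j=1$) yields $D_0=\emptyset$, since $(E_1=0)\wedge C_1\equiv\overline{C_1}\wedge C_1$ is unsatisfiable. Each $D_j$ occurs exactly once as a premise, so this is tree-like; it has $O(m)$ proof-lines, each of size $\mathrm{poly}(|\phi|)$.

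For dag-like \reslinsw{R} I would keep the same skeleton, deriving $D_m$ from $0=0$ by one semantic weakening, and simulating each semantic-resolution step ``$D_j,C_j\vdash D_{j-1}$'' by $O(1)$ syntactic steps: semantic-weaken $C_j$ to $\widetilde{C}_j$; apply the resolution rule to the disjunct $(E_j=0)$ of $D_j$ and the disjunct $(E_j+1=0)$ of $\widetilde{C}_j$ with coefficients $1$ and $-1$, producing $D_{j-1}\vee\bigvee_{v=2}^{k_j}(E_j+v=0)\vee(-1=0)$; then, for each $v=2,\dots,k_j$, resolve the current clause's disjunct $(E_j+v=0)$ against the disjunct $(E_j=0)$ of $D_j$ with coefficients $1$ and $-1$, which deletes $(E_j+v=0)$, re-contracts the already-present $D_{j-1}$, and adds the disjunct $(v=0)$; finally, semantic-weaken away all the false constant disjuncts $(-1=0),(2=0),\dots,(k_j=0)$, using that $(C\vee(c=0))\equiv C$ whenever $c\ne 0$ in $\R$. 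Here $D_j$ is reused $O(k_j)=O(1)$ times, so the object is genuinely dag-like but still of total size $\mathrm{poly}(|\phi|)$.

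The one delicate point — and the step I expect to need the most care — is the arithmetic bookkeeping in the dag-like simulation: one must check that the constants produced while clearing the leftover disjuncts, namely $-1$ and the integers $v,v-1\in\{1,2\}$, are nonzero in $\R$. This is precisely where, and for the same reason as in Proposition~\ref{swHardness}, the hypothesis $char(\R)\notin\{1,2,3\}$ is used. Everything else is routine: all rules of both systems are sound, so every proof-line is automatically a tautological consequence of the inputs and no further verification is needed. Consequently every unsatisfiable $3$-CNF $\phi$ has a tree-like \reslinsem{R} refutation and a dag-like \reslinsw{R} refutation of size $\mathrm{poly}(|\phi|)$, which is the assertion (with the ``not Cook--Reckhow'' caveat being exactly the content of Proposition~\ref{swHardness}).
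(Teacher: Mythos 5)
Your proof is correct and follows essentially the same approach as the paper's: encode the negation of each clause $C_i$ as a single linear equation $E_i=0$ (valid over $0/1$ because $char(\R)\notin\{1,2,3\}$), derive $\bigvee_i(E_i=0)$ in one step from $0=0$, and then peel off disjuncts one at a time by resolving with the input clauses. The only cosmetic difference is in the dag-like \reslinsw{R} simulation (you first semantically weaken $C_j$ to $\widetilde{C}_j$ and resolve the working clause against $D_j$, whereas the paper resolves $D_{s+1}$ directly against $C_{s+1}$ literal by literal), and the constants to be cleared at the end are actually $-1,2,\dots,k_j$ rather than ``$-1$ and $v,v-1\in\{1,2\}$,'' but both observations are inconsequential since $char(\R)\notin\{1,2,3\}$ already ensures all of $-1,2,3$ are nonzero.
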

\begin{proof}
Let $\phi(x_1,\ldots,x_n)=\{C_i\}_{i\in [m]}\in 3\text{-UNSAT}$. Given 
$C=(x^{\sigma_{1}}_{j_{1}}\vee \ldots \vee x^{\sigma_{k}}_{j_{k}})$ define $lin(\neg C):=\left((2\sigma_{1}-1)x_{j_{1}}+\ldots+(2\sigma_{k}-1)x_{j_{k}}-(\sigma_{1}+\ldots+\sigma_{k})\right)$
where $\sigma_{i}\in \{0,1\},j_{l}\in [n],x^0:=x,x^1:=\neg x$. The linear clause
$lin(\neg\phi):=\bigvee\nolimits_{i\in[m]}lin(\neg C_i)=0$ is a  tautology (under 0-1 assignments) and thus can be derived in \reslinsw{R} in a single step as a weakening of $0=0$ or 
resolving $0=0$ with $0=0$ in tree-like \reslinsem{R}.

In tree-like \reslinsem{R} the disjunct $lin(\neg C_i)=0$ can be eliminated from $lin(\neg\phi)$ by a single resolution with $C_i$, thus the empty 
clause is derived by a sequence of $m$ resolutions of $lin(\neg \phi)$ with $C_1,\ldots,C_m$.

Similarly, the disjuncts $lin(\neg C_i)=0$ are eliminated from $lin(\neg\phi)$ in \reslinsw{R}, but with a few more steps.
Let $D_0$ be the empty clause and $D_{s+1}:=D_s\vee lin(\neg C_{s+1})=0,0\leq s < m$. Assume $D_{s+1}$ is
derived and assume without loss of generality, that $C_{s+1}=(x_1=1\vee\ldots\vee x_k=1)$ and thus $lin(\neg C_{s+1})=\left(-x_{1}-\ldots-x_{k}\right)$. Derive 
$D_s$ as follows. Resolve $D_{s+1}$ with $C_{s+1}$ on $lin(\neg C_{s+1})+(x_k-1)$ to get the clause 
$E_1:=D_s\vee \left(-x_{1}-\ldots-x_{k-1}-1\right)=0\vee x_1=1\vee\ldots\vee x_{k-1}=1$ and apply 
semantic weakening to get $E_1':=D_s\vee x_1=1\vee\ldots\vee x_{k-1}=1$. Resolve $D_{s+1}$ 
with $E_1'$ on $lin(\neg C_{s+1})+(x_{k-1}-1)$ and apply semantic weakening to get the clause $E_2':=D_s\vee x_1=1\vee\ldots\vee x_{k-2}=1$.
After $k$ steps the clause $D_s=E_k'$ can be  derived. 
\end{proof}

The following proposition is straightforward, but useful as it allows, for example, to transfer results about \reslin{\Q} to \reslin{\Z}.

\begin{proposition}\label{fracFields}
If $R$ is an integral domain and $Frac(R)$ is its field of fractions, then \reslin{R} is equivalent to \reslin{Frac(R)} and 
\treslin{R} is equivalent to \treslin{Frac(R)}.
\end{proposition}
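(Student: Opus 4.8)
The plan is to prove both directions of the equivalence by lifting refutations back and forth between \reslin{R} and \reslin{Frac(R)}, exploiting that every element of $Frac(R)$ is $r/s$ for $r,s\in R$ with $s\neq 0$, and that multiplying a linear equation through by a nonzero ring element preserves its solution set over $0$-$1$ assignments (indeed over any assignment, since $R$ is an integral domain and hence has no zero divisors). Note that the $0$-$1$ satisfiability of a linear clause is unaffected by such scaling, so the notion of what constitutes a refutation is common to both systems.

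First I would handle the trivial direction: any \reslin{R} refutation is literally a \reslin{Frac(R)} refutation, since $R\subseteq Frac(R)$ and all the rules (resolution with coefficients $\alpha,\beta$, simplification, weakening, boolean axioms) are special cases of the corresponding $Frac(R)$ rules; the size is unchanged (or changes only by the fixed encoding convention for elements of $R$ versus their images in $Frac(R)$, which is immaterial up to constants). The same observation restricted to tree-like derivations gives that \treslin{R} simulates into \treslin{Frac(R)}.

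For the converse, suppose $\pi=(D_1,\dots,D_s)$ is a \reslin{Frac(R)} refutation of a set $\phi$ of $R$-linear clauses. Each proof-line $D_i$ is a disjunction of equations $f_{ij}=0$ with coefficients in $Frac(R)$; let $c_i\in R\setminus\{0\}$ be a common denominator clearing all coefficients appearing in $D_i$ (the product of the denominators of the finitely many coefficients in $D_i$ works). Replace $D_i$ by $\widetilde D_i$, the clause obtained by scaling each disjunct $f_{ij}=0$ to $c_i f_{ij}=0$, which now has coefficients in $R$ and the same $0$-$1$ (indeed same $R$- and same $Frac(R)$-) solution set, since $c_i$ is a nonzero element of the integral domain $R$. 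The key step is to check that each rule application in $\pi$ becomes a valid short \reslin{R} derivation of $\widetilde D_i$ from the already-translated premises: a resolution step $D_j,D_k\vdash D_i$ with $Frac(R)$-coefficients $\alpha,\beta$ can be re-derived by first weakening/rescaling and then applying the $R$-resolution rule with suitable coefficients $\alpha',\beta'\in R$ chosen so that $\alpha' c_j f_j + \beta' c_k g_k = c_i(\alpha f_j + \beta g_k)$ up to an overall nonzero scalar in $R$ (clearing denominators of $\alpha,\beta$ simultaneously); the resulting surplus constant factors are absorbed by the simplification rule and the remaining disjuncts of $\widetilde D_i$ are recovered by weakening, noting $\widetilde C\vee\widetilde D$ and $\widetilde{C\vee D}$ differ only by nonzero rescalings of individual disjuncts, which are interderivable by one resolution-with-a-boolean-axiom-or-weakening step each. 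Boolean axioms, simplification, and weakening translate directly. The empty clause translates to the empty clause, so $(\widetilde D_1,\dots,\widetilde D_s)$ expands to a \reslin{R} refutation of $\phi$ whose size is polynomially bounded in $|\pi|$ and in the sizes of the clearing constants $c_i$, each of which is at most the product of the (polynomially many, polynomially sized) denominators in $D_i$, hence polynomial overall. The whole construction is local — each step of $\pi$ is expanded into a constant number of $R$-steps — so it preserves the tree-like structure, giving the tree-like equivalence as well.

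The main obstacle I anticipate is bookkeeping around the interaction of rescaling with the contraction of duplicate disjuncts and with the weakening rule: after clearing denominators, two disjuncts of $D_i$ that were distinct as $Frac(R)$-equations could in principle collapse, or conversely the scaled clause $\widetilde D_i$ could fail to be syntactically exactly $\widetilde D_j \vee \widetilde D_k$ even when $D_i = D_j \vee D_k$, because the common denominators $c_i, c_j, c_k$ need not agree. Handling this cleanly — by choosing the $c_i$ coherently (e.g. take $c_i$ to be a common clearing constant for \emph{all} of $\pi$ at once, namely the product of denominators over all coefficients in all proof-lines, which is still of polynomial \emph{magnitude} issues aside, or argue locally with extra weakening steps) — is the one place the argument needs care, but it is routine given that $R$ is an integral domain. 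The paper's excerpt already signals this is ``straightforward,'' consistent with this expectation.
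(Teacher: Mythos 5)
Your overall plan — the trivial forward direction plus clearing denominators to lift a $\reslin{Frac(R)}$ refutation back to $R$ — is the same as the paper's, which does it by multiplying every line through by a single lcm of all denominators occurring in the proof. You choose per-line denominators $c_i$ first, correctly locate the resulting subtlety (the $c_i$ need not agree, so scaled copies of a shared disjunct no longer contract, and the resolved disjunct carries its own scale factor), and correctly point to a single global clearing constant as the cleaner option, which is exactly what the paper does.

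The specific repair you propose, however, would not go through. You write that $\widetilde C\vee\widetilde D$ and $\widetilde{C\vee D}$, which differ by nonzero rescalings of individual disjuncts, are ``interderivable by one resolution-with-a-boolean-axiom-or-weakening step each.'' Neither rule does that: resolving $E\vee(h=0)$ against a boolean axiom $(x=0\vee x=1)$ unavoidably introduces a spurious disjunct such as $x=1$, and weakening only ever adds disjuncts, it cannot rescale one. The rule that does rescale $h=0$ to $ch=0$ inside $E\vee(h=0)$ is resolving the clause against a second copy of itself with $\alpha+\beta=c$; this is harmless dag-like, but for the tree-like half of the statement it duplicates the whole subderivation and so blows up size. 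Moreover even the global-lcm version, as stated, does not give a literal \reslin{R} derivation after scaling: a resolution step with coefficients $\alpha=a/d,\ \beta=b/d$ produces, after multiplying through by $L$, a resolved disjunct off by the extra factor $d$, so the invariant ``$D_i'$ equals $D_i$ scaled uniformly by $L$'' fails. The way to make the argument actually work is to abandon per-line uniformity and keep only the weaker invariant that every disjunct of $D_i'$ is some nonzero $R$-multiple of a disjunct of $D_i$, accepting that disjuncts which contract in $D_i$ may reappear with different scale factors (and hence fail to contract) in $D_i'$; since the disjunct-count is subadditive under resolution, this costs only a polynomial blow-up and works for both dag- and tree-like derivations. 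At that level of detail your plan does coincide with the paper's, and the paper's own one-line sketch also does not spell this out; but the rescaling mechanism you name is the step that would actually fail.
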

\begin{proof}
Every proof in \reslin{\R} is also a proof in \reslin{Frac(\R)}. To get the converse, just multiply every line by the least
common multiple (lcm) of all the coefficients in the \reslin{Frac(\R)} proof. If $a_1,\ldots,a_N\in\R$ is the list of denominators of
all the coefficients in a \reslin{Frac(\R)} proof $\pi$, then under a reasonable encoding of \R: 
$|lcm(a_1,\ldots,a_N)|\leq |a_1|+\dots+|a_N|\leq |\pi|$. Therefore the corresponding \reslin{\R} proof is of size at most $O(|\pi|^2)$. 
\end{proof}

\subsection{Basic Counting in \reslin{R} and \reslinsw{R}}

Here we introduce several unsatisfiable sets of linear clauses that  express some counting principles, and serve to exemplify the ability of dag-like \reslin{R}, \treslin{R} and \treslinsw{R} to reason about counting, for a ring \R. We then summarize what we know about refutations of these instance in our different systems, proving along the way some upper bounds and stating some lower bounds proved in the sequel. 

Our unsatisfiable instances are the following: 

\begin{description}
\item[Linear systems:] If $A=(B|b)$ is an $m\times (n+1)$ matrix over $\R$, where the $B$ sub-matrix 

 consists of the first $n$ columns,
such that $B \overline{x}=b$ has no 0-1 solutions, then ($B_i$ is the $i$th row in $B$):
\begin{equation}\label{eq:linSys}
\linSys{A}:=\{B_i\cd \overline{x}=b_i\}_{i\in[m]\,.}
\end{equation}
\item[Subset Sum:] Let  $f$ be a linear form over $\R$ such that $0\notin im_2(f)$. Then,
\begin{equation}\label{eq:subSum}
\subSum{f}:=\{f=0\}\,.
\end{equation}
\item[Image avoidance:] Let  $f$ be a  linear form over $\R$ and recall the notation $\langle f\neq A \rangle$ from Sec.~\ref{sec:notations}. We define
\begin{equation}\label{eq:imAv}
\negIm{f}:=\{\langle f\neq A\rangle: {A\in im_2(f)} \}\,.
\end{equation}
\end{description}

We also consider the following (tautological) generalization of the boolean  axiom $x=0 \vee x=1$.   
\begin{description}
\item[Image axiom:] For $f$ a  linear form, define
\begin{equation}\label{eq:im}
\imAx{f}:=\bigvee\limits_{A\in im_2(f)}f=A\,.
\end{equation}
\end{description}


\medskip 

\subsubsection*{Dag-Like \reslin{R}}

\noindent\uline{Upper bounds.} For any given linear form $f$, $\imAx{f}$ has a \reslin{R}-derivation of polynomial-size (in the size of $\imAx{f}$):
\begin{proposition}\label{imf}
Let $f=\sum\nolimits_{i=1}^na_ix_i+b$ be a linear form over $R$. There exists a \reslin{R} derivation of $\imAx{f}$ of size polynomial in $|\imAx{f}|$ and of principal width 
at most $3$.
\end{proposition}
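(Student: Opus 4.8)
The plan is to assemble $\imAx{f}$ one variable at a time. Write $f=\sum_{i=1}^{n}a_ix_i+b$ and set $f_k:=\sum_{i=1}^{k}a_ix_i+b$, so that $f_0=b$ is a constant and $f_n=f$. Since any boolean value of $f_k$ extends to a boolean value of $f$ by setting $x_{k+1},\dots,x_n$ to $0$, we have $im_2(f_0)\subseteq im_2(f_1)\subseteq\dots\subseteq im_2(f_n)=im_2(f)$; in particular $|im_2(f_k)|\le|im_2(f)|$ for every $k$, which is the fact that keeps widths (and hence sizes) under control. I will derive $\imAx{f_0},\imAx{f_1},\dots,\imAx{f_n}=\imAx{f}$ in this order.

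For the base case, $\imAx{f_0}$ is the one-disjunct clause $(f_0=b)$, i.e.\ the trivial equation $(0=0)$. I derive $(0=0)$ from a boolean axiom $B=(x_1=0)\vee(x_1-1=0)$ in a constant number of steps: resolving $B$ with $B$ on the two $(x_1=0)$ disjuncts with coefficients $1,-1$ produces the disjunct $(0=0)$ and, after contraction, the clause $(x_1-1=0)\vee(0=0)$; symmetrically, resolving $B$ with $B$ on the two $(x_1-1=0)$ disjuncts gives $(x_1=0)\vee(0=0)$; finally, resolving these two clauses on $(x_1-1=0)$ and $(x_1=0)$ with coefficients $1,-1$ yields the false constant disjunct $(-1=0)$, which the simplification rule removes, leaving $(0=0)$.

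For the inductive step, given a derivation of $\imAx{f_k}$ I derive $\imAx{f_{k+1}}$; if $a_{k+1}=0$ there is nothing to do, so assume $a_{k+1}\ne 0$. Enumerate $im_2(f_k)=\{A_1,\dots,A_m\}$ (note $m\le|im_2(f)|$) and, for $0\le j\le m$, let
\[
C_j\ :=\ \bigvee_{l\le j}\big[(f_{k+1}=A_l)\vee(f_{k+1}=A_l+a_{k+1})\big]\ \vee\ \bigvee_{l>j}(f_k=A_l).
\]
Then $C_0=\imAx{f_k}$, and $C_m$ equals $\imAx{f_{k+1}}$ after contracting repeated disjuncts, because the constants occurring in $C_m$ are exactly $im_2(f_k)\cup(im_2(f_k)+a_{k+1})=im_2(f_{k+1})$. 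I pass from $C_{j-1}$ to $C_j$ in two resolution steps against the boolean axiom $\imAx{x_{k+1}}=(x_{k+1}=0)\vee(x_{k+1}-1=0)$, reusing $C_{j-1}$ (so the derivation is genuinely dag-like). First resolve $C_{j-1}$ with $\imAx{x_{k+1}}$ on the disjuncts $(f_k-A_j=0)$ and $(x_{k+1}-1=0)$ with coefficients $1$ and $a_{k+1}$; the new disjunct is $(f_k-A_j)+a_{k+1}(x_{k+1}-1)=0$, i.e.\ $(f_{k+1}=A_j+a_{k+1})$, giving
\[
C'_{j-1}\ :=\ \big(C_{j-1}\setminus\{(f_k=A_j)\}\big)\ \vee\ (x_{k+1}=0)\ \vee\ (f_{k+1}=A_j+a_{k+1}).
\]
Then resolve $C_{j-1}$ with $C'_{j-1}$ on $(f_k-A_j=0)$ and $(x_{k+1}=0)$ with coefficients $1$ and $a_{k+1}$; the new disjunct is $(f_{k+1}=A_j)$, and one checks that the resulting clause, after contraction, is precisely $C_j$.

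For the bounds: there are at most $n$ stages, each contributing $O(|im_2(f)|)$ new clauses, plus the $O(1)$ clauses of the base case, so $O(n\,|im_2(f)|)$ clauses in all; each clause has width $O(|im_2(f)|)$, and every coefficient equals $0$, $\pm 1$, some $a_i$, some $A_l$, or $A_l+a_{k+1}$, hence has size $O(|\imAx{f}|)$, so the total size is polynomial in $|\imAx{f}|$. For the principal width: within stage $k$ the only linear forms appearing are scalar shifts of $f_k$, $f_{k+1}$ and $x_{k+1}$, while in the base case only shifts of $x_1$ and the zero form occur; since disjuncts defining parallel hyperplanes are identified, every clause in the derivation has principal width at most $3$. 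The one delicate point is arranging the two-step resolution in the inductive step so that the disjunct $(f_k=A_j)$ is replaced by exactly $(f_{k+1}=A_j)\vee(f_{k+1}=A_j+a_{k+1})$ with no stray disjunct reintroduced; the width (and therefore size) bound then rests on the elementary observation $im_2(f_k)\subseteq im_2(f)$.
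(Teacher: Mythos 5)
Your proof is correct and takes essentially the same route as the paper: both argue by induction on the prefix $f_k=\sum_{i\le k}a_ix_i+b$, using the boolean axiom for $x_{k+1}$ to split each disjunct $(f_k=A)$ into $(f_{k+1}=A)\vee(f_{k+1}=A+a_{k+1})$, with the same size and principal-width accounting. The only differences are cosmetic: you replace disjuncts one at a time (two resolutions with $C_{j-1}$ per disjunct) where the paper first builds the two ``restriction'' clauses $C_0,C_1$ and then resolves them on $x_{k+1}$, and your base case gives an explicit derivation of $(0=0)$ from a boolean axiom, whereas the paper treats $b=b$ as immediate.
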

\begin{proof}

We construct derivations of $\imAx{\sum\nolimits_{i=1}^ka_ix_i+b}$, $0\leq k\leq n$, inductively on $k$.

\Base  $k=0$. In this case $\imAx b$ is just the axiom $b=b$ and thus derived in one step. 

\induction Let $f_k:=\sum\nolimits_{i=1}^ka_ix_i+b$ and assume $\imAx{f_k}$ was already  derived.
Derive $C_0:=\left(\bigvee\nolimits_{A\in im_2(f_k)}f_k+a_{k+1}x_{k+1}=A\right)\vee x_{k+1}=1$ from $\imAx{f_k}$ by
$|im_2(f_k)|$ many resolution applications  with $x_{k+1}=0\vee x_{k+1}=1$. Similarly derive 
$C_1:={\left(\bigvee\nolimits_{A\in im_2(f_k)}f_k+a_{k+1}x_{k+1}=A+a_{k+1}\right)}\vee x_{k+1}=0$ and obtain
$\imAx{f_{k+1}}$ by resolving $C_0$ with $C_1$ on $x_{k+1}$. The size of the derivation is $~n\cd|\imAx{f}|,$ ~and as
there is no clause with more than $3$ equations that  determines  non-parallel
hyperplanes, hence the principal width of the derivation is at most $3$.
\end{proof}

\begin{proposition}\label{ssUB}
For every linear form $f$ such that $0\notin im_2(f)$, the contradiction $\subSum{f}$ admits \reslin{R} refutation of size polynomial in $|\imAx{f}|$.
\end{proposition}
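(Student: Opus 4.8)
The plan is to first invoke Proposition~\ref{imf} to obtain, in size polynomial in $|\imAx{f}|$, a \reslin{R} derivation (from boolean axioms only) of the image axiom
\[
\imAx{f} \;=\; \bigvee_{A\in im_2(f)} f = A,
\]
and then peel off its disjuncts one at a time by resolving against the single axiom $f=0$ of $\subSum{f}$. The crucial observation is that since $0\notin im_2(f)$, every element $A$ appearing in $\imAx{f}$ is nonzero, which is exactly what licenses the simplification rule after each resolution step.

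Concretely, write $im_2(f)=\{A_1,\dots,A_t\}$ with $t=|im_2(f)|$, and set $C_0:=\imAx{f}$. Given $C_{j-1}=\bigvee_{i\ge j}(f=A_i)$, I would resolve $C_{j-1}$ with the axiom $f=0$ on the disjunct $f=A_j$ (i.e.\ on the linear equation $f-A_j=0$), taking the resolution coefficients $\alpha=1$, $\beta=-1$. Since $\alpha(f-A_j)+\beta f = -A_j$, the resolvent is $\bigl(\bigvee_{i>j}(f=A_i)\bigr)\vee(-A_j=0)$; as $-A_j\neq 0$, the simplification rule discards the false constant disjunct $-A_j=0$, yielding $C_j:=\bigvee_{i>j}(f=A_i)$. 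After $t$ such rounds (each consisting of one resolution application followed by one simplification) we reach $C_t=\emptyset$, the empty clause, so this completes a \reslin{R} refutation of $\subSum{f}=\{f=0\}$.

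For the size bound: every clause produced in the peeling phase is a sub-disjunction of $\imAx{f}$ possibly augmented by a single constant equation, hence of size $O(|\imAx{f}|)$; there are $O(t)=O(|\imAx{f}|)$ of them, so the peeling phase has size $O(|\imAx{f}|^2)$, and adding the $\mathrm{poly}(|\imAx{f}|)$-size derivation of $\imAx{f}$ from Proposition~\ref{imf} keeps the total polynomial in $|\imAx{f}|$. I do not anticipate a genuine obstacle here; the only points requiring care are (i) checking that the resolution rule of \reslin{R} indeed permits the coefficient choice $\alpha=1,\beta=-1$ that cancels $f$ against $f-A_j$, and (ii) verifying that $0\notin im_2(f)$ is precisely the hypothesis making each residual constant $-A_j$ nonzero so the simplification rule applies. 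Both are immediate from the definitions in Section~\ref{sec:resLin}.
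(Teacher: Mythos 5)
Your proposal is correct and follows essentially the same route as the paper: derive $\imAx{f}$ via Proposition~\ref{imf} and then eliminate its disjuncts one at a time by resolving with the axiom $f=0$ (followed by simplification, which is where $0\notin im_2(f)$ is used). You have simply spelled out the peeling phase in more detail than the paper does.
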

\begin{proof}
First construct the shortest derivation of $\imAx{f}$, and then by a sequence of $|im_2(f)|$ many application of the resolution rule with $f=0$ derive the empty clause.
By Proposition~\ref{imf} the resulting refutation is of polynomial in $|\imAx{f}|$ size. 
\end{proof}

\begin{proposition}\label{eqFromNeq}
Let $f$ be a linear form over $R$, $a\in im_2(f)$ and ${\phi=\{\langle f\neq b\rangle \}_{b\in im_2(f),\,b\neq a}}$. Then there exists
\reslin{R} derivation $\pi$ of $f=a$ from $\phi$, such that $S(\pi)=poly(|\phi|)$ and $\omega_0(\pi)\leq 3$.
\begin{proof}
Let $A_1,\dots,A_N=a$ be an enumeration of all the
elements in $im_2(f)$. By Proposition~\ref{imf} there exists a derivation of 
$\left(\bigvee\nolimits_{i\geq 1}f=A_i\right)$ of principal width at most $3$. For $1<k<N$, we derive ${C:=\left(\bigvee\nolimits_{i\geq k+1}f=A_i\right)}$ from  $\left(\bigvee\nolimits_{i\geq k}f=A_i\right)=(C\vee f=A_k)$
and $\langle f\neq A_k\rangle = (C\vee f=A_1\vee\dots\vee f=A_{k-1})$ in $k-1$ steps as follows: at the $s$th step we get 
$(C\vee f-f=A_s-A_k\vee f=A_{s+1}\vee\dots\vee f=A_{k-1})=(C\vee f=A_{s+1}\vee\dots\vee f=A_{k-1})$ by resolving 
$C\vee f=A_s\vee\dots\vee f=A_{k-1}$ with $C\vee f=A_k$. We thus obtain a derivation of principal width $\omega_0\leq 3$ and of size 
$(1+\dots+(N-2))|f|=\frac{(N-1)(N-2)}{2}|f|$.
\end{proof}
\end{proposition}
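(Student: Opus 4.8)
The plan is to manufacture the tautological clause $\imAx{f}=\bigvee_{A\in im_2(f)}(f=A)$ from the boolean axioms alone and then to peel off its disjuncts one at a time, using the non-equalities of $\phi$, until only $f=a$ remains. First I would invoke Proposition~\ref{imf} to obtain a \reslin{R} derivation of $\imAx{f}$ of size $poly(|\imAx{f}|)$ and principal width at most $3$; since $|\imAx{f}|\le|\phi|$ this contributes only polynomially to the final bound.

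For the peeling, enumerate $im_2(f)=\{A_1,\dots,A_N\}$ with $A_N=a$, and set $E_k:=\bigvee_{i\ge k}(f=A_i)$, so that $E_1=\imAx{f}$ is already available and $E_N=(f=a)$ is the goal. To pass from $E_k$ to $E_{k+1}$ (for $1\le k<N$) I would use the hypothesis clause $\langle f\neq A_k\rangle=\bigvee_{i\ne k}(f=A_i)\in\phi$, viewed as $\big(\bigvee_{i<k}(f=A_i)\big)\vee E_{k+1}$, and repeatedly resolve it against $E_k=(f=A_k)\vee E_{k+1}$. The key move is this: resolving on the disjuncts $(f-A_s=0)$ and $(f-A_k=0)$ with coefficients $1$ and $-1$ produces the disjunct $(A_k-A_s=0)$, which — because $A_s$ and $A_k$ are \emph{distinct} elements of the ring — is a false constant equation and is removed by the simplification rule (the duplicated copy of $E_{k+1}$ being contracted automatically). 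After $k-1$ such steps the clause has been reduced to $E_{k+1}$; iterating $k=1,\dots,N-1$ produces $f=a$. (When $k=1$ there is nothing to do, since $\langle f\neq A_1\rangle$ already equals $E_2$; when $N=1$ the clause $\imAx{f}=(f=a)$ is itself the derivation. Note that $E_k$ is resolved against $k-1$ times, so the construction genuinely uses the dag-like format.)

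It then remains to do the bookkeeping. The total number of resolution and simplification steps is $\sum_{k=1}^{N-1}(k-1)=O(|im_2(f)|^2)$, and every clause produced is a sub-disjunction of $\imAx{f}$ optionally carrying one extra false constant, hence of size $O(|\imAx{f}|)$; so $S(\pi)=poly(|\phi|)$. For the principal width, every disjunct occurring during the peeling is either an equation $(f=c)$ — and all of these define parallel hyperplanes, so they collapse to a single $\sim$-class — or a constant equation $(c=0)$, adding at most one further class; thus $\omega_0\le 2$ throughout the peeling, while Step~1 already has $\omega_0\le 3$, giving $\omega_0(\pi)\le 3$ overall.

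I expect the only delicate points to be purely formal: checking that each application of the resolution rule is legal (resolving on disjuncts that are genuinely present, with coefficients in $R$) and that the cancellation $(f-A_s)-(f-A_k)$ really yields a \emph{false} constant — which hinges solely on $A_s\ne A_k$ — together with the principal-width accounting against Definition~\ref{omegaDef}, remembering in particular that the many parallel hyperplanes $f=A_i$ are counted only once. There is no conceptual obstacle here; the argument is a direct, if slightly fiddly, construction.
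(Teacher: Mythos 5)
Your proposal is correct and follows essentially the same route as the paper's own proof: derive $\imAx{f}$ via Proposition~\ref{imf}, then peel off the disjuncts $f=A_k$ one at a time by resolving against $\langle f\neq A_k\rangle$ with coefficients $1,-1$, discarding the resulting nonzero constant equation. Your explicit treatment of the simplification step, contraction, and the width accounting only makes precise what the paper leaves implicit, so there is nothing to fix.
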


\begin{corollary}\label{imAvUB}
For every ring \R\ and every  linear form $f$ the contradiction $\negIm{f}$ admits polynomial-size \reslin{R} refutations.
\end{corollary}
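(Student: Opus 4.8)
The statement is essentially a corollary of Proposition~\ref{eqFromNeq}, so the plan is to use that proposition to derive an equation $f=a$ and then knock out the one remaining axiom. First I would fix any $a\in im_2(f)$ (this set is nonempty; if $|im_2(f)|=1$ then $\langle f\neq a\rangle$ is the empty disjunction and $\negIm{f}$ already contains the empty clause, so we may assume $|im_2(f)|\ge 2$). The key observation is that
\[
\phi \;:=\; \negIm{f}\setminus\{\langle f\neq a\rangle\}\;=\;\{\langle f\neq b\rangle : b\in im_2(f),\ b\neq a\}
\]
is exactly the hypothesis set in Proposition~\ref{eqFromNeq}. Hence there is a \reslin{R} derivation $\pi_0$ of $f=a$ from $\phi$ with $S(\pi_0)=poly(|\phi|)\le poly(|\negIm{f}|)$ (and incidentally $\omega_0(\pi_0)\le 3$).

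It then remains to combine $f=a$ with the one remaining axiom $\langle f\neq a\rangle\in\negIm{f}$ to produce the empty clause. Writing $\langle f\neq a\rangle=(f=b_1)\vee\dots\vee(f=b_m)$, where $b_1,\dots,b_m$ enumerate $im_2(f)\setminus\{a\}$, I would peel off the disjuncts one at a time: resolving $(f=a)$, i.e.\ $(f-a=0)$, against $(f-b_1=0)\vee\dots\vee(f-b_m=0)$ with coefficients $\alpha=1,\beta=-1$ on the first disjunct yields $(a-b_1=0)\vee(f=b_2)\vee\dots\vee(f=b_m)$, and since $b_1\neq a$ are distinct elements of $R$ we have $a-b_1\neq 0$, so the simplification rule deletes this disjunct. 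Iterating $m$ times yields the empty clause, at an extra cost $O(|im_2(f)|\cdot|f|)=poly(|\negIm{f}|)$, since $|im_2(f)|$ and $|f|$ are both bounded by the size of $\langle f\neq a\rangle$ and hence by $|\negIm{f}|$. Adding the two parts, the refutation has size $poly(|\negIm{f}|)$.

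There is no substantive obstacle here — the content is entirely in Proposition~\ref{eqFromNeq}. The only point deserving a word of care is the validity of the simplification steps: they rely on the elementary fact that two distinct elements of $im_2(f)\subseteq R$ differ by a nonzero element of $R$, which is automatic and holds over any nontrivial ring $R$, so the argument indeed goes through in the stated generality.
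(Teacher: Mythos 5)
Your proof is correct and follows the same route as the paper: pick $a\in im_2(f)$, invoke Proposition~\ref{eqFromNeq} to derive $f=a$ from the remaining clauses of $\negIm{f}$, and then resolve $f=a$ against $\langle f\neq a\rangle$ repeatedly, simplifying away the resulting nonzero constants, to reach the empty clause. You spell out the peeling-off step and the edge case $|im_2(f)|=1$ a bit more explicitly than the paper does, but the substance is identical.
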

\begin{proof}
Pick some $a\in im_2(f)$. By Proposition~\ref{eqFromNeq} there is a derivation of $f=a$ from $\negIm{f}$ of polynomial size. This derivation
can be extended to a refutation of $\negIm{f}$ by a sequence of resolution rule applications of $f=a$ with $\langle f\neq a\rangle\in \negIm{f}$.
\end{proof}

In Section~\ref{sec:linSysDLUB} we prove an upper bound for $\linSys{A}$ in terms of the size of the image of the affine map,
corresponding to $A$ (Theorem~\ref{thm:linSysImUB}). All other \reslin{\R} upper bounds for $\linSys{A}$  are tree-like. So for more $\linSys{A}$
upper bounds we refer the reader to the \treslin{\R} upper bounds further in this section.
\medskip

\noindent\uline{Lower bounds.} In Sec.~\ref{sec:ssLB} we  prove an exponential lower bound for $\subSum{f}$ in case $f$ is a linear form with
large coefficients (Theorem~\ref{thm:dagLB}).

\subsubsection*{Tree-Like \reslin{R}}

\uline{Upper bounds.} In case $R$ is a finite ring, in Sec.~\ref{sec:NLDT} we prove that the clauses in $\imAx{f}$ admit derivations
of polynomial size (Theorem~\ref{ImUB}). Obviously, in that case ($R$ is finite) any unsatisfiable $R$-linear equation $f=0$ has at most
$|R|$ variables and $\subSum{f}$ are always refutable in constant size. In contrast, in case $R=\Q$ we prove a lower
bound for $\imAx{f}$, $\subSum{f}$ and $\negIm{f}$ for a specific $f$ with small coefficients (see the lower bounds below).

In case a matrix $A=(B|b)$ with entries in a field $\F$ defines a system of equations  $B\overline{x}=b$, that  is unsatisfiable under arbitrary $\F$-valued
assignments (not just under 0-1 assignments), we prove a polynomial upper bound for \treslin{\F} refutations of $\linSys{A}$.
\begin{proposition}\label{linSysUB}
If a $m\times (n+1)$ matrix $A=(B|b)$ with entries in a field $\F$ is such that $B\overline{x}=b$ has no $\F$-valued solutions, then there exists
\treslin{\F} refutation of $\linSys{A}$ of linear size.
\end{proposition}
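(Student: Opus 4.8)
The plan is to read off from the $\F$-unsatisfiability of $B\overline{x}=b$ a linear-algebraic certificate of inconsistency and then realize that certificate directly as a tree-like chain of resolution steps capped by one simplification. First I would invoke the Fredholm alternative (equivalently, the elementary fact that $B\overline{x}=b$ is solvable over $\F$ iff $\rank(B)=\rank(B\mid b)$): since the system has no $\F$-valued solution there is a vector $\overline{\lambda}=(\lambda_1,\dots,\lambda_m)\in\F^m$ with $\sum_{i=1}^m\lambda_i B_i=\overline{0}\in\F^n$ but $c:=\sum_{i=1}^m\lambda_i b_i\neq 0$. Discarding the rows with $\lambda_i=0$ and relabeling, I may assume $\lambda_1,\dots,\lambda_r$ are all nonzero, $1\le r\le m$. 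If $r=1$ then $\lambda_1 B_1=\overline{0}$ forces $B_1=\overline{0}$ while $\lambda_1 b_1=c\neq 0$ forces $b_1\neq 0$, so the axiom $B_1\overline{x}=b_1$ is already the false constant equation $-b_1=0$ and a single application of the simplification rule derives the empty clause.

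For $r\ge 2$ I would build the derivation iteratively from the axioms $(B_i\overline{x}-b_i=0)$, $i\in[r]$, using that the resolution rule applied to width-one clauses $(f=0)$ and $(g=0)$ with multipliers $\alpha,\beta$ (and $C=D=\emptyset$) yields the width-one clause $(\alpha f+\beta g=0)$: resolve $(B_1\overline{x}-b_1=0)$ with $(B_2\overline{x}-b_2=0)$ using $\lambda_1,\lambda_2$ to obtain $D_2:=\bigl(\lambda_1(B_1\overline{x}-b_1)+\lambda_2(B_2\overline{x}-b_2)=0\bigr)$, and then for $i=3,\dots,r$ resolve $D_{i-1}$ with the axiom $(B_i\overline{x}-b_i=0)$ using multipliers $1,\lambda_i$ to obtain $D_i:=\bigl(\sum_{j=1}^i\lambda_j(B_j\overline{x}-b_j)=0\bigr)$. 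Each $D_i$ is a single linear equation, and each axiom and each $D_i$ is used at most once, so the derivation is tree-like. When $i=r$ the linear form is $\bigl(\sum_{j=1}^r\lambda_j B_j\bigr)\overline{x}-\sum_{j=1}^r\lambda_j b_j=-c$, so $D_r$ is the false constant equation $-c=0$, and one application of the simplification rule to $D_r$ derives the empty clause.

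For the size bound, the refutation uses $r\le m$ axiom lines, $r-1\le m-1$ resolution steps and one simplification, i.e.\ $O(m)$ proof lines, each of which is a single linear equation in the $n$ variables of the system and hence of size comparable to one clause of $\linSys{A}$; picking $\overline{\lambda}$ via Cramer's rule keeps the coefficients of size polynomial in $|A|$, so the total size is linear in $|\linSys{A}|$. I do not expect a genuine obstacle: the only points requiring care are the degenerate cases (some $\lambda_i=0$, a zero row of $B$, or a proper prefix sum $\sum_{j\le i}\lambda_j B_j$ vanishing while $\sum_{j\le i}\lambda_j b_j\neq 0$, in which case the chain terminates even earlier, and the one truly vacuous case in which $\sum_{j\le i}\lambda_j(B_j\overline{x}-b_j)$ is identically zero for some $i<r$, which is eliminated by restarting the chain from row $i+1$ since $\lambda_{i+1},\dots,\lambda_r$ then already form a certificate), and, if one insists on a literally linear rather than merely polynomial bound, the choice of a small inconsistency certificate $\overline{\lambda}$.
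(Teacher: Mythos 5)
Your proposal is correct and follows essentially the same route as the paper: extract the linear-algebraic inconsistency certificate $\overline{\lambda}$ with $\overline{\lambda}^{T}B=0$, $\overline{\lambda}^{T}b\neq 0$, chain the axioms together by $m-1$ width-one resolutions to reach a false constant equation, and finish with simplification. Your extra care about degenerate cases, tree-likeness, and the magnitude of the $\lambda_i$ only fills in details the paper leaves implicit (indeed the paper itself flags that the coefficients should be noted as bounded), so no further comparison is needed.
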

\begin{proof}
It is a well-known fact from linear algebra that $B\overline{x}=b$ has no $\F$-valued solutions iff there exists $\alpha\in \F^m$ such that 
$\alpha^TB=0$ and $\alpha^T b=1$. Therefore, by $m-1$ resolutions of $B_1\overline{x}-b_1=0,\ldots,B_m\overline{x}-b_m=0$ we can derive 
$-\alpha_1(B_1\overline{x}-b_1)-\ldots-\alpha_m(B_m\overline{x}-b_m)=0$, which is $1=0$. \fedor{mention that $\alpha_i$ are bounded}
\end{proof}

\noindent\uline{Lower bounds.} In Sec.~\ref{sec:ssLB} we prove \treslin{\Q} exponential-size lower bounds
for derivations of $\imAx{f}$ and refutations of $\subSum{f}$ for any $f$ (Corollary~\ref{cor:imTLLB} and Theorem~\ref{thm:ssTLLB}). For $\negIm{f}$
whenever $f$ is of the form $f=\epsilon_1x_1+\ldots+\epsilon_nx_n-A$
for some $\epsilon_i\in\{-1,1\},A\in\F$ the lower bound holds even for the stronger system \treslinsw{\F} (see below).

\subsubsection*{Tree-Like \reslinsw{R}}

\noindent\uline{Upper bounds.} Most of the instances above admit short derivations/refutations in \treslinsw{R}: $\imAx{f}$ is semantic weakening of
$0=0$ and thus derivable in one step; The empty clause is a semantic weakening of $\subSum{f}$ and $\linSys{A}$ and thus can be refuted via deriving 
$\bigvee\nolimits_{i\in[m]}\langle A_i\overline{x}-b_i\neq 0\rangle$
as a semantic weakening of $0=0$ and resolving it with equalities in $\linSys{A}=\{A_i\overline{x}-b_i=0\}_{i\in[m]}$. 

\noindent\uline{Lower bounds.} In case \F\ is a field of characteristic zero,  $\negIm{f}$ are hard even for \treslinsw{R}
whenever $f$ is of the form $f=\epsilon_1x_1+\ldots+\epsilon_nx_n-A$ for some $\epsilon_i\in\{-1,1\},A\in\F$ (Theorem~\ref{thm:imAvLB}).

\subsection{CNF Upper Bounds for \reslin{R} }\label{sec:upper_bounds}
 In this section we outline two basic polynomial upper bounds, which we use to establish  our separations in subsequent sections: short \treslin{R} refutations  for CNF encodings of linear systems 
over a ring $R$, and short \reslin{R} refutations for $\neg\text{PHP}^m_n$. Together with our lower bounds, these imply the separation between \treslin{\F} and \treslin{\F'}, where $\F,\F'$ are fields of positive characteristic such that $char(\F)\neq char(\F')$.  The short refutation of the pigeonhole principle will imply a  separation between dag-like and tree-like \reslin{\F} for fields $\F$ of characteristic 0.

In what follows we consider standard CNF encodings of linear equations $f=0$ where the linear equations are considered as boolean  functions (i.e., functions from 0-1 assignments to $\bits$); we do not use extension variable in these encodings. 

\begin{proposition}\label{cnfLinSysUB}
Let \F\  be a field and $A\overline{x}=b$ be a system of linear equations that has no solution over \F, where $A$ is $k\times n$ matrix with entries in \F, and $A_i$ denotes the $i$th row in $A$. Assume that $\phi_i$ is a CNF encoding of $A_i\cd \overline{x}-b_i=0$, for $i\in[k]$. Then, there exists a \treslin{\F} refutation 
of $\phi=\{\phi_i\}_{i\in[k]}$ of size polynomial in $|\phi|+\sum\nolimits_{i\in[k]}\big|A_i\cd \overline{x}-b_i=0\big|$.
\end{proposition}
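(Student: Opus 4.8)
Write $f_i:=A_i\cdot\overline x-b_i$, so that by hypothesis each $\phi_i$ is a CNF equivalent (as a Boolean function on the variables it mentions, which are those in the support of $f_i$) to the single linear equation ``$f_i=0$''. The plan is to reduce to Proposition~\ref{linSysUB} via the following local statement, proved below: \emph{for every field $\F$ and every CNF $\psi$ equivalent to a single $\F$-linear equation ``$g=0$'', there is a \treslin{\F} derivation of the one-disjunct clause ``$g=0$'' from $\psi$ of size $\mathrm{poly}(|\psi|+|g{=}0|)$.} Granting this, use it to derive ``$f_1=0$'',\,\ldots,\,``$f_k=0$'' from $\phi_1,\ldots,\phi_k$ and prepend these derivations to the linear-size \treslin{\F} refutation of the system $\{f_i=0\}_{i\in[k]}$ supplied by Proposition~\ref{linSysUB} (which applies because $A\,\overline x=b$ has no $\F$-solution; that refutation performs $k-1$ resolutions of the ``$f_i=0$'' with suitable field coefficients $\gamma_i$, where $\gamma^{T}A=0$ and $\gamma^{T}b=1$, reaching the false constant equation ``$-1=0$'', which the simplification rule turns into the empty clause), identifying each axiom ``$f_i=0$'' of the latter with the conclusion of the former. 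Since the pieces meet only at the clauses ``$f_i=0$'', the assembled proof is tree-like, and since $\gamma$ can be taken of size $\mathrm{poly}(|A|)$ by Cramer's rule and $|A|\le\sum_i|A_i\cdot\overline x-b_i{=}0|$, its size is polynomial in $|\phi|+\sum_i|A_i\cdot\overline x-b_i{=}0|$, as required.

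To prove the local statement I induct on the number $m$ of variables of $g$, writing $a\neq 0$ for the coefficient in $g$ of the variable $x$ one branches on; note that throughout the recursion the current form has \emph{full} support on its current variable set. For $m\le 1$: if $\psi$ is $0$-$1$ unsatisfiable, a constant-size derivation produces the empty clause (on one variable every CNF is, up to contraction of duplicates, among at most three clauses), and ``$g=0$'' follows by one weakening; otherwise $\psi$ pins the single remaining variable $y$ to a value $c$, and ``$g=0$'' is a scalar multiple of the unit clause ``$y=c$'', obtained by resolving that clause with a second copy of itself using coefficients $a,0$. For the inductive step, restrict $\psi$ to $x=0$ and to $x=1$, obtaining CNFs $\psi_0,\psi_1$ equivalent to ``$g_0=0$'' and ``$g_1=0$'' with $g_0:=g|_{x=0}$ and $g_1:=g|_{x=1}=g_0+a$. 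Apply the induction hypothesis to get \treslin{\F} derivations of ``$g_0=0$'' from $\psi_0$ and ``$g_1=0$'' from $\psi_1$, and lift them to $\psi$ in the standard way — re-attach to each axiom the literal on $x$ it lost under the restriction and carry that disjunct down the proof, in which $x$ is never resolved — to obtain \treslin{\F} derivations from $\psi$ of $E_0:=(g_0=0)\vee(x=1)$ and $E_1:=(g_1=0)\vee(x=0)$.

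Now two resolutions against the Boolean axiom $(x=0)\vee(x=1)$ correct these clauses. Resolving $E_0$ with the Boolean axiom on the disjunct pair $(g_0=0),(x=0)$ with coefficients $1,a$ yields $(g_0+ax=0)\vee(x=1)=(g=0)\vee(x=1)=:E_0'$; resolving $E_1$ with the Boolean axiom on the pair $(g_1=0),(x-1=0)$ with coefficients $1,a$ yields $(g_1+a(x-1)=0)\vee(x=0)=(g=0)\vee(x=0)=:E_1'$, using $g_1+a(x-1)=g_0+a+ax-a=g$. Finally, resolving $E_0'$ with $E_1'$ on their respective $x$-disjuncts with coefficients $1,-1$ produces $(g=0)\vee(-1=0)$, which the simplification rule turns into ``$g=0$''. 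This part is tree-like and costs only $O(|g{=}0|)$ in addition to the two recursive derivations.

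The crux is the size bound, since the recursion just described is binary and a naive count costs $2^m$. The remedy is to branch only on a variable occurring in some non-tautological clause of the \emph{current} restricted CNF, and to prune a branch as soon as that CNF becomes a tautology (which, by full support of $g$, forces the restriction to be total) or acquires an all-false clause (then ``$g{=}0$'' follows from the empty clause by one weakening). One must then argue that, because $\psi$ is equivalent to a \emph{single} linear equation over $\F$, the resulting pruned recursion tree has only $\mathrm{poly}(|\psi|,m)$ leaves: each leaf is either a total $0$-$1$ solution of $g=0$ or the first point on its branch at which a clause of $\psi$ becomes fully falsified, and the rigidity of zero-sets of affine forms on the Boolean cube keeps both counts polynomial in $|\psi|$ — informally, whenever either count is large, $\psi$ is itself large. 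Making this counting precise, together with the routine check that no coefficient produced in the derivation exceeds $O(|g{=}0|)$ in bit-size, is the only real work; the rest is the bookkeeping above. I expect this size analysis to be the main obstacle.
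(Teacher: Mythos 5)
Your high-level plan agrees with the paper's: derive each equation $f_i := A_i\cdot\overline{x}-b_i=0$ from its CNF encoding $\phi_i$, then compose with the linear-size \treslin{\F} refutation of the unsatisfiable system $\{f_i=0\}_{i\in[k]}$ supplied by Proposition~\ref{linSysUB}. Your gluing gadget for the inductive step (restrict on a variable $x$, lift the two sub-derivations by carrying the lost literal as an extra disjunct, close with two resolutions against the Boolean axiom, one more resolution, and a simplification) is also a correct way to assemble the pieces, and the arithmetic in it checks out.

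The gap is exactly where you flag it. You never establish that the derivation of $f_i=0$ from $\phi_i$ is polynomial in $|\phi_i|+|f_i{=}0|$, and the pruned recursion you propose cannot supply the bound: in the standard CNF encoding the paper uses (Section~\ref{sec:upper_bounds}; no extension variables), $\phi_i$ consists of one maxterm per falsifying assignment, so \emph{every} clause of $\phi_i$ mentions all $n_i$ variables. Hence no clause can become all-false, and no restriction of $\phi_i$ can become a tautology, before depth $n_i$; your recursion tree is the \emph{full} binary tree on $n_i$ variables, and the pruning is vacuous. The observation that actually closes the gap, and that the paper's proof uses, is that this is nevertheless fine: since $f_i$ has full support on $n_i$ variables, it has at most $2^{n_i-1}$ Boolean solutions, hence $\phi_i$ contains at least $2^{n_i-1}$ maxterms of length $n_i$, so $|\phi_i|=\Theta(2^{n_i})$. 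With that in hand, the brute-force derivation of $f_i=0$ from $\phi_i$ (Proposition~\ref{univUB}, i.e.\ a depth-$n_i$ decision tree converted to a tree-like derivation via Theorem~\ref{treeDTequiv}) already has size $O(2^{n_i}\cdot|f_i{=}0|)=O(|\phi_i|\cdot|f_i{=}0|)$, which is polynomial; no cleverness about the recursion tree is needed. Your remark that ``whenever either count is large, $\psi$ is itself large'' is pointing at exactly this fact, but it is a property of the chosen encoding, not of zero sets of affine forms in general, and you leave it unproved. Once you add the size observation, your argument collapses to the paper's proof.
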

\begin{proof}
The idea is to derive the actual linear system of equations from their CNF encoding, and then refute the linear system using a previous upper bound (Proposition~\ref{linSysUB}). 

If $n_i$ is the number of variables in $A_i\cd \overline{x}-b_i=0$, then $|\phi_i|=\Theta(2^{n_i})$. By Proposition~\ref{univUB} proved in the sequel there exists
a \treslin{\F} derivation of $A_i\cd \overline{x}-b_i=0$ from $\phi_i$ of size $O(2^{n_i}|A_i\cd \overline{x}-b_i=0|)=O(|\phi_i|\cd\big|A_i\cd \overline{x}-b_i=0\big|)$.

By Proposition~\ref{linSysUB} there exists a \treslin{\F} refutation of ${\{A_i\cd \overline{x}-b_i=0\}_{i\in[k]}}$ of size $O\left(\sum\nolimits_{i\in[k]}|A_i\cd \overline{x}-b_i=0|\right)$.
The total size of the resulting refutation of $\phi$ is 
$O
\left(
    {\sum\nolimits_{i\in[k]} \big|\phi_i|\cd|A_i\cd \overline{x}-b_i=0\big|}\right)
$ 
and thus is 
$O\left({\left(\sum\nolimits_{i\in[k]}|\phi_i|+\sum\nolimits_{i\in[k]}|A_i\cd \overline{x}-b_i=0|\right)^2}\right)=
O\left({\left(|\phi|+\sum\nolimits_{i\in[k]}|A_i\cd \overline{x}-b_i=0|\right)^2}\right)$.
\end{proof}

As a corollary we get the polynomial upper bound for the Tseitin formulas (see Sec.~\ref{sec:tsDef} for the definition):

\begin{theorem}
Let $G=(V,E)$ be a $d$-regular directed  graph, $p$ a prime number, $\sigma:V\rightarrow \F_p$ such that $\sum\nolimits_{u\in V}\sigma(u)\not\equiv 0 ~({\rm mod}\ p)$,
then $\neg{\rm TS}^{(p)}_{G,\sigma}$ admit \treslin{\text{$\F_p$}} refutations of polynomial size.
\end{theorem}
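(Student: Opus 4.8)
The plan is to recognize that $\neg\mathrm{TS}^{(p)}_{G,\sigma}$ is, by its very definition (Section~\ref{sec:tsDef}), nothing but a CNF encoding of the linear system over $\F_p$ given by the equations \eqref{eq:tseitin-mod-p}, one for each vertex $u\in V$. Once this is noted, the theorem falls out of the tree-like upper bound for CNF encodings of \emph{field-unsatisfiable} linear systems, namely Proposition~\ref{cnfLinSysUB}. The only hypothesis of that proposition that requires verification is that the system $A\overline{x}=b$ coming from \eqref{eq:tseitin-mod-p} has no solution \emph{over all of $\F_p$} (not merely no $0$-$1$ solution); this is exactly the strength of unsatisfiability that Proposition~\ref{linSysUB}, and hence Proposition~\ref{cnfLinSysUB}, demands.

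First I would record the $\F_p$-unsatisfiability of the system. Summing \eqref{eq:tseitin-mod-p} over all $u\in V$, each edge $(u,v)\in E$ contributes $+x_{(u,v)}$ once as an outgoing edge of $u$ and $-x_{(u,v)}$ once as an incoming edge of $v$, so the left-hand side telescopes to $0$, while the right-hand side equals $\sum_{u\in V}\sigma(u)\not\equiv 0\pmod p$. This argument uses no boolean constraint on the edge variables, so it shows the encoded system $A\overline{x}=b$ is unsatisfiable over $\F_p$; equivalently, there is an $\alpha\in\F_p^{|V|}$ (here the all-ones vector) with $\alpha^T A=0$ and $\alpha^T b\neq 0$. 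Then I would apply Proposition~\ref{cnfLinSysUB} with $\F=\F_p$, $A$ the $|V|\times|E|$ coefficient matrix of \eqref{eq:tseitin-mod-p}, and $\phi=\neg\mathrm{TS}^{(p)}_{G,\sigma}$. It remains only to check the resulting size is polynomial in the formula size: each row $A_i\cdot\overline{x}-b_i=0$ is a linear form in at most $d$ variables with coefficients in $\{-1,0,1\}$, so $\sum_i|A_i\cdot\overline{x}-b_i=0|=O(d|V|\log p)$, while the standard CNF encoding has $|\phi|=O(2^d|V|)$; Proposition~\ref{cnfLinSysUB} then produces a tree-like $\mathrm{Res}(\mathrm{lin}_{\F_p})$ refutation of size polynomial in $|\phi|+\sum_i|A_i\cdot\overline{x}-b_i=0|=\mathrm{poly}(2^d|V|\log p)=\mathrm{poly}(|\phi|)$.

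There is no genuine obstacle here: the statement is a direct corollary of Proposition~\ref{cnfLinSysUB}. The only point worth a sentence in the write-up is the distinction, emphasized around Proposition~\ref{linSysUB}, between $0$-$1$-unsatisfiability and full $\F$-unsatisfiability — it is the hypothesis $\sum_{u\in V}\sigma(u)\not\equiv 0\pmod p$ that upgrades the Tseitin system to the latter, and that is precisely what lets the refutation derive $1=0$ by a single $\F_p$-linear combination of the equations extracted from their CNF encodings.
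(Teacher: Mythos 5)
Your argument is exactly the paper's: observe that the Tseitin mod~$p$ system is unsatisfiable over all of $\F_p$ (by summing the equations over $V$), and then invoke Proposition~\ref{cnfLinSysUB}. Your write-up is somewhat more detailed about the size accounting, but the route is identical.
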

\begin{proof}
$\neg\text{TS}^{(p)}_{G,\sigma}$ is an unsatisfiable system of linear equations over $\F_p$ (note that no assignment of \F-elements to the variables in $\neg\text{TS}^{(p)}_{G,\sigma}$ is satisfying, and so we do not need to use the (non-linear) boolean  axioms to get the unsatisfiability of the system of equations). Therefore, by Proposition~\ref{cnfLinSysUB} there exists a \treslin{\text{$\F_p$}}
refutation of $\neg\text{TS}^{(p)}_{G,\sigma}$ of polynomial size.
\end{proof}

\begin{theorem}[Raz and Tzameret \cite{RT07}]\label{phpUB}
Let $R$ be a ring such that $char(R)=0$. There exists a \reslin{R} refutation of $\neg PHP^m_n$ of polynomial size.
\end{theorem}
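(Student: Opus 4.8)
The plan is to refute $\neg\text{PHP}^m_n$ over a characteristic-zero ring $R$ by exploiting the counting power that linear equations grant us; the key observation is that over $R$ we can literally add up the pigeon-axioms and derive a contradiction with the hole-axioms by counting. First I would introduce, for each pigeon $i\in[m]$, the ``row-sum'' linear form $s_i := \sum_{j\in[n]} x_{i,j}$, and for each hole $j\in[n]$ the ``column-sum'' linear form $t_j := \sum_{i\in[m]} x_{i,j}$. The pigeon clause $\bigvee_{j\in[n]} x_{i,j}$ together with the boolean axioms can be used to derive the linear equation $s_i = 1$ — more precisely, it is enough to derive the linear clause asserting $s_i \ge 1$ in the relevant sense, but since we also intend to use the hole-axioms to force each $x_{i,j}$ to behave like an injective assignment, the cleaner route is: from $\text{Holes}^m_n$, i.e.\ the axioms $\neg x_{i,j}\vee\neg x_{i',j}$ for $i\ne i'$, derive the linear clauses bounding each column sum $t_j\le 1$. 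Then summing the $n$ column inequalities gives $\sum_{j} t_j = \sum_{i,j} x_{i,j} \le n$, while summing the $m$ pigeon equations gives $\sum_i s_i = \sum_{i,j} x_{i,j} \ge m > n$, a contradiction obtained by a single resolution-style cancellation of the linear form $\sum_{i,j}x_{i,j}$.

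More concretely, the key steps in order would be: (1) For each pigeon $i$, use the pigeon-axiom $\bigvee_{j} x_{i,j}=1$ — note the CNF clause $\bigvee_j x_{i,j}$ is, after arithmetization/encoding, exactly the linear clause $\bigvee_j (x_{i,j}=1)$ — and resolve it successively against the boolean axioms $x_{i,j}=0\vee x_{i,j}=1$ to produce a linear clause that is (a weakening of) $s_i\ge 1$; here I would actually follow the $\imAx{\cdot}$-style construction from Proposition~\ref{imf} to get control of $s_i$ as a linear form. (2) For each hole $j$, process the at-most-one axioms $\neg x_{i,j}\vee\neg x_{i',j}$ to derive a linear clause forcing $t_j \le 1$; the point is that a disjunction of equations of the form $(x_{i,j}=0)\vee(x_{i',j}=0)$ over all pairs, combined with boolean axioms, semantically entails $t_j\le 1$ and (by implicational completeness, Proposition 3.5, but with explicit polynomial-size derivations since each is over few relevant variables per step) can be derived efficiently. (3) Add the $m$ equations $s_i=1$ to get the single equation $\sum_{i,j}x_{i,j}=m$. (4) Add the $n$ inequalities $t_j\le 1$ to get $\sum_{i,j}x_{i,j}\le n$. (5) Resolve these two linear facts: since $\sum_{i,j}x_{i,j}$ is a common linear form, a single application of the resolution rule with coefficients $+1$ and $-1$ cancels it and yields a constant contradiction $m - n \le 0$, i.e.\ $m\le n$, which is false over a characteristic-zero ring; apply the simplification rule to reach the empty clause. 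Each of steps (1)--(4) contributes $\mathrm{poly}(m,n)$ many clauses each of $\mathrm{poly}(m,n)$ size, so the whole refutation is polynomial.

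The step I expect to be the main obstacle is (2) — turning the \emph{disjunctive} hole-axioms $\neg x_{i,j}\vee\neg x_{i',j}$ into a single clean linear statement $t_j\le 1$ within \reslin{R}, because ``$\le$'' is not directly expressible as a single linear equation over $R$; one has to represent $t_j\le 1$ as the linear clause $(t_j=0)\vee(t_j=1)$, i.e.\ as $\imAx{t_j}$ restricted by the known pairwise-zero constraints, and argue that the pairwise axioms genuinely let us \emph{derive} (not just semantically imply) that $t_j$ avoids the values $2,3,\dots,m$. The natural way around this is to build, for each hole $j$, a small derivation that case-splits on the variables $x_{1,j},\dots,x_{m,j}$ in sequence, at each stage using one at-most-one axiom to kill the branch where two of them are $1$, so that the surviving linear clause after processing all $m$ of them is exactly $(t_j=0)\vee(t_j=1)$; this is essentially the ``$\imAx$ with constraints'' construction and costs $\mathrm{poly}(m)$ per hole. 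A secondary subtlety is that \cite{RT07} works over $\Z$ (or $\Q$) where these integer inequalities make sense; for a general characteristic-zero ring $R$ one invokes Proposition~\ref{fracFields} to pass to $\Q$ (or notes $\Z\hookrightarrow R$ and all coefficients used are $0,\pm1$), so the argument transfers verbatim. Once (2) is in hand, steps (3)--(5) are just linear algebra realized by the resolution rule, and the size bound is immediate.
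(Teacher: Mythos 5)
The paper's own proof of this theorem is a one-line appeal to Raz--Tzameret \cite{RT07}: it invokes their polynomial upper bound for $\neg\text{PHP}^m_n$ in \reslin{\Z} and notes that any \reslin{\Z} refutation is literally a \reslin{R} refutation whenever $\text{char}(R)=0$. Your proposal instead tries to reconstruct the RT07 refutation from scratch. The high-level counting idea (total sum $\ge m$ versus total sum $\le n$) is exactly the right one, and you were right to flag step~(2) as a subtlety.

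However, steps~(3)--(5) contain a genuine gap. First, from the pigeon axiom $\bigvee_{j}(x_{i,j}=1)$ and the boolean axioms one can only derive the width-$n$ linear clause $\bigvee_{k\ge 1}(s_i=k)$, not the single equation $s_i=1$: nothing in $\neg\text{PHP}^m_n$ (which has only $\phpInj$ and $\phpTot$, no functionality axioms) prevents a pigeon from sitting in several holes. Second, and more fundamentally, ``adding'' linear \emph{clauses} is not a single resolution step. The \reslin{} resolution rule replaces one chosen disjunct from each premise by their linear combination while keeping \emph{all} the remaining disjuncts of both premises; so from $\bigvee_{k}(s_1=k)$ and $\bigvee_{k'}(s_2=k')$, resolving a chosen pair gives
$\bigl(\bigvee_{k\ne a}(s_1=k)\bigr)\vee\bigl(\bigvee_{k'\ne b}(s_2=k')\bigr)\vee(s_1+s_2=a+b)$,
not the clean clause $\bigvee_{K}(s_1+s_2=K)$. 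Trying to resolve away the residual $(s_1=k)$ and $(s_2=k')$ disjuncts by further resolutions with these same clauses keeps reintroducing them. Constructing, with polynomial size, the cumulative clauses $\bigvee_{l\ge r}\bigl(\sum_{i\le r}s_i=l\bigr)$ for the pigeons and $\bigvee_{l\le r}\bigl(\sum_{j\le r}t_j=l\bigr)$ for the holes is exactly the nontrivial technical content of the RT07 upper bound, and your sketch treats it as a single ``resolution-style cancellation'' plus an addition of equations. To make this into a proof you would need to reproduce RT07's iterative derivation of these counting clauses (their counting lemma), or supply an alternative construction with a size bound.
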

\begin{proof}
This follows from the upper bound of \cite{RT07} for \reslin{\Z} and the fact that any \reslin{\Z} proof can be interpreted as
\reslin{R} if $R$ is of characteristic $0$.
\end{proof}

\section{Dag-Like Lower Bounds}\label{sec:LBDL}\label{sec:ssLB}


In this section we prove an exponential lower bound on the size of dag-like \reslin{\Q} refutations of $\subSum{f}$, where $f=1+x_1+\dots+2^nx_n$.

The lower bound is obtained by defining a mapping, that sends every refutation $\pi$ of $f=0$ to a derivation $\pi'$ from the boolean axioms of some clause $C_{\pi}$, in such a way that $\pi'$ satisfies two properties:

\begin{enumerate}
\item $\pi'$ is at most polynomially larger than $\pi$;
\item $C_{\pi}$ is exponentially large.
\end{enumerate}

We ensure that the second property holds by defining the construction of $\pi'$ in such a way that every disjunct $g=0$ in $C_{\pi}$ has a sufficiently 
small number $Z_g$ of 0-1 solutions, namely $Z_g$ is at most $2^{cn}$, for some constant  $c<1$. This, together with the observation that $C_{\pi}$
must be a boolean  tautology, because it is derivable from the boolean  axioms only, implies that $C_{\pi}$ must be of exponential size (since $C_\pi$ has $2^n$ satisfying assignments and each disjunct contributes at most $2^{cn}$ satisfying disjunctions). Therefore,
by the first property, $\pi$ must be of exponential size.

The fact that $f$ has exponentially large coefficients is essential in our proof that $C_{\pi}$ is of exponential size. All contradictions 
of the form $f=0$, where $f$ has polynomially bounded coefficients, have polynomial dag-like \reslin{\Q} refutations and, thus, there is no
hope to prove strong bounds for dag-like refutations in this case. However, in Sec~\ref{LBTL} we prove that any $f=0$, as long as 
$f$ depends on $n$ variables, must have \treslin{\Q} refutations of size at least $2^{\Omega(\sqrt{n})}$. The argument relies on
a similar transformation from refutations $\pi$ of $f=0$ to derivations of some $C_{\pi}$ and in this way reduces the problem to proving
size lower bounds against \treslin{\Q} derivations of $C_{\pi}$ from the boolean  axioms. 

In order to deal with both tree-like and dag-like lower bounds  we formulate and prove a generalised statement about the translation. For both dag-like and tree-like lower bounds we 
need that for all the disjuncts $g=0$ in $C_{\pi}$ a certain predicate $\mathcal{P}$ holds for $g$. In case of the dag-like bound, 
$\mathcal{P}(g)=1$ iff $g=0$ has at most $2^{cn}$ 0-1 solutions, while in case of the tree-like bound $\mathcal{P}(g)=1$ iff $g$ depends on
at least $\frac{n}{2}$ variables. In Theorem~\ref{thm:refToDer} we prove that the translation can be achieved as long as $\mathcal{P}$
satisfies certain properties (in what follows $\F[x_1,\ldots,x_n]_{\leq 1} $ denotes the linear polynomials in $\F[x_1,\ldots,x_n]$).

\begin{theorem}\label{thm:refToDer}
Let $f$ be a linear polynomial over a field $\F$ with $n$ variables and let $\mathcal{P}:\mathbb{P}(\F[x_1,\ldots,x_n]_{\leq 1})\rightarrow \{0,1\}$
be a predicate on the projective space\footnote{Here, a \emph{projective space} $\mathbb{P}(\F[x_1,\ldots,x_n]_{\leq 1})$ means the set of linear polynomials quotient  by the relation $f\sim \alpha f$ for nonzero scalars $\alpha$.} of linear polynomials over $\F$ satisfying the following properties:
\begin{enumerate}
\item for all linear polynomials $g$ and for all but at most one $a\in\F$: $\mathcal{P}(g+af)=1$;
\item for all $b\in\F$: $\mathcal{P}(b+f)=1$.
\end{enumerate}
If there exists \reslin{\F} (resp.~\treslin{\F}) refutation of $f=0$ of size $S$, then there exists \reslin{\F} (resp.~\treslin{\F})
derivation of size $O(n\cd S^3)$ of a linear clause $\bigvee\nolimits_{j\in[N]}g_j=0$ (for some positive $N$), where $\mathcal{P}(g_j)=1$ for every $j\in[N]$. 
\end{theorem}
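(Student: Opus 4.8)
The plan is to prove Theorem~\ref{thm:refToDer} by a syntactic transformation of $\pi$ that \emph{defers} every resolution step that uses the axiom $f=0$. Write $\pi=(D_1,\dots,D_s)$ with $D_s$ the empty clause. Processing these lines in order, I would build --- line by line --- a \reslin{\F} derivation $\pi'$ from the boolean axioms alone, of clauses $D_1',\dots,D_s'$, maintaining the invariant that each $D_i'$ is a $0$-$1$ tautology of the form $D_i'=E_i\vee F_i$, where $E_i$ is $D_i$ with each disjunct $(g=0)$ replaced by a representative of its class in the pencil $\{\,g+cf:c\in\F\,\}$ (so $E_i$ is $D_i$ up to shifting disjuncts by scalar multiples of $f$), and $F_i$ is a disjunction of equations $(h=0)$ each satisfying $\mathcal{P}(h)=1$. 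Since $D_s$ is the empty clause, $E_s$ is empty, so $C_\pi:=D_s'=F_s$ is a tautological linear clause $\bigvee_{j\in[N]}g_j=0$ with $\mathcal{P}(g_j)=1$ for every $j$, which is exactly the claim (and $N\ge 1$ is automatic, since the empty clause is not a tautology: some disjuncts must be created along the way --- this is the ``fat clause'' that drives the lower bound).

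The inductive step is a case analysis on the rule producing $D_i$. A boolean axiom is copied verbatim. If $D_i=C\vee D\vee(\alpha h+\beta h'=0)$ comes by resolving $D_j=C\vee(h=0)$ with $D_k=D\vee(h'=0)$, then by the invariant $D_j'$ contains a pencil-representative of $(h=0)$ and $D_k'$ one of $(h'=0)$; resolving $D_j'$ and $D_k'$ on these two disjuncts with suitably rescaled coefficients produces a clause whose new disjunct is a pencil-representative of $\alpha h+\beta h'$, whose remaining disjuncts are the pencil-shifts of $C$ and $D$, and which still carries $F_j\vee F_k$ --- so the invariant survives, modulo the routine contraction bookkeeping when two shifted disjuncts coincide or the combination collapses to a constant. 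The crucial point is that a resolution of $D_j=C\vee(h=0)$ against the axiom $f=0$, yielding $D_i=C\vee(\alpha h+\beta f=0)$ with $\alpha\ne 0$, requires \emph{no} step in $\pi'$: since $\alpha h+\beta f$ spans the same pencil as $h$, the representative of $(h=0)$ already present in $D_j'$ is automatically a pencil-representative of $\alpha h+\beta f$, so one sets $D_i':=D_j'$ and the shift that $\pi$ performed is deferred. A weakening or simplification of $\pi$ is mimicked on $D_i'$; in particular, discarding a false constant disjunct in $\pi$ corresponds in $\pi'$ either to discarding it (if it stayed constant) or to absorbing the corresponding $f$-parallel disjunct into $F_i$ (if it had been deferred-shifted).

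The real obstacle --- and exactly where the two hypotheses on $\mathcal{P}$ enter as stated --- is keeping every disjunct that ends up in some $F_i$ compliant with $\mathcal{P}$, in the presence of the weakening rule and of degenerate resolutions against $f=0$ (those with a zero coefficient on the non-axiom premise, which \emph{orphan} a disjunct: $\pi$ cuts it, but $\pi'$ cannot). I would first normalize $\pi$ to delete outright no-ops (resolutions against $f=0$ with a zero coefficient on either premise, and resolutions producing $0=0$); after this the only disjuncts that can be forced into an $F_i$ are honest $f$-parallel ones arising from deferred steps and from weakenings/simplifications --- covered by property~2 --- together with disjuncts $(g=0)$ that $\pi$ introduces by weakening, for which I would, at the point of the weakening, instead introduce $(g+af=0)$ with $a$ chosen to avoid the single exceptional value permitted by property~1, via a short sub-derivation from the boolean axioms whose side-products are themselves literals shifted by multiples of $f$ (again safe by property~1). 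The paper's own remark that the dag-like bound was ``rectified to deal with the weakening rule'' signals that this normalization-and-leakage bookkeeping is the delicate heart of the argument; everything else is routine. Finally, a size count --- $\pi'$ has $O(nS)$ lines, each of size $\mathrm{poly}(S)$ --- yields the bound $O(n\cdot S^3)$, and $\pi'$ is tree-like whenever $\pi$ is, since every line of $\pi'$ is built only from the line(s) of $\pi'$ corresponding to the premise(s) of the matching step of $\pi$.
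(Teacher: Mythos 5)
Your high-level architecture is the same as the paper's: translate each line $D_i$ of $\pi$ to a line $D_i'$ maintaining the invariant that $D_i' = E_i \vee F_i$, where $E_i$ consists of one $f$-pencil representative for each disjunct of $D_i$ and $F_i$ consists only of $\mathcal{P}$-good equations, so that the final line (for $D_s=\emptyset$) is a $\mathcal{P}$-good tautology. That is precisely the paper's $\tau_k(D) = \bigl(\bigvee_{t} g_t + a_t f = 0\bigr) \vee \bigvee_s h_s = 0$ with $\mathcal{P}(h_s)=1$. So the skeleton is right. But the proposal has a genuine gap in the one place where the argument is actually delicate, and you misplace the role of Property 1.

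Your claim that ``after normalization the only disjuncts that can be forced into an $F_i$ are honest $f$-parallel ones arising from deferred steps and from weakenings/simplifications'' is false. The critical source of $F_i$-disjuncts is \emph{contraction inside an ordinary resolution step}, having nothing to do with $f=0$. If a linear form $g$ appears as a disjunct of both premises $D_j = C \vee (h=0)$ and $D_k = D \vee (h'=0)$ (or coincides, after rescaling, with the resolvent $\alpha h + \beta h'$), then in $D_i$ all these copies contract to a single disjunct $g=0$, but in $D_j'$ and $D_k'$ they may carry \emph{different} shifts $g+a_1 f$, $g+a_2 f$. After resolving, both copies survive in $D_i'$, yet only one of them can serve as the official pencil representative of $g$ in $E_i$; the other must be absorbed into $F_i$, and is in general \emph{not} $f$-parallel and \emph{not} a weakening artifact. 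This is exactly where Property 1 is needed and exactly what ``routine contraction bookkeeping'' hides: among the shifts $a_1, a_2, \ldots$ that arise for a fixed $g$, at most one value of $a$ can have $\mathcal{P}(g+af)=0$, so you must designate the minimizer of $\mathcal{P}$ as the official representative (the one allowed to be bad), and then all remaining copies are guaranteed $\mathcal{P}$-good and may safely join $F_i$. The paper's definition of the sets $\mathcal{F}_g$, $X$ and $Y$ makes this choice explicit; your proposal never makes it, and without it the inductive invariant breaks. In fact you have the two hypotheses reversed: Property 1 earns its keep in ordinary resolution contraction, while Property 2 is what covers the simplification case (when a constant disjunct $b=0$ in $D\vee b=0$ has been shifted to $b+af$ with $a\neq 0$, it must move into $F_i$, and $\mathcal{P}(b+af)=\mathcal{P}(b/a+f)=1$ by Property 2). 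Your pre-shifting of weakening disjuncts, and the auxiliary sub-derivation you invoke for it, are unnecessary: in the paper's Case 5 the new disjunct $g=0$ simply joins the official part $E_i$, which carries no $\mathcal{P}$-constraint, and a single weakening step suffices. Your proposed up-front normalization (deleting $\alpha=0$ resolutions with $f=0$) is also not clearly sound --- such a step genuinely removes $h=0$ from the clause and is not simulated by the remaining rules without $f=0$ --- and in any case the paper handles these cases uniformly inside Case 3 without any normalization. Finally, the line count should be $O(S)$, not $O(nS)$: the factor $n$ in the stated $O(n S^3)$ comes from the per-disjunct encoding size, not from extra lines.
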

\begin{proof}

We now sketch the plan of the proof. Assume that  $\pi$ is a \reslin{\F} refutation   of $f=0$.
By taking out resolutions with $f=0$ we transform $\pi$ into a derivation $\pi'$ of some clause $C$ such that $\mathcal{P}(g)=1$ for every
disjunct $g=0$ in $C$. We do this in such a way that $\pi'$ is not much larger than $\pi$: $|\pi'|=O(n\cd |\pi|^3)$.


Denote $\pi_{\leq k}$ the fragment of $\pi$, consisting of the first $k$ lines of $\pi$. By induction on $k$ we define the sequence $\pi'_{k}$ of derivations
of some clauses $D_k$ from boolean  axioms. The derivations $\pi'_{k}$ are defined together with a surjective function $\tau_k$ from lines of $\pi_{\leq k}$ to
lines of $\pi'_k$ such that if $D=\left(\bigvee\limits_{t\in[m]}g_t=0\right)$ is a line in $\pi_{\leq k}$, then
$$\tau_k(D)=\left(\bigvee\limits_{t\in[m]}g_t+a_tf=0\right)\vee\bigvee\limits_{s\in[m']}h_s=0$$ is a line in $\pi'_k$, where $a_t\in\F$ and
each $h_s$ is a
linear polynomial. Moreover, $\tau_k(D)$ satisfies the following properties:
\begin{enumerate}
\item For each $h_s=0$: $\mathcal{P}(h_s)=1$. 
\item The sets $H_D$ of disjuncts $h_s=0$ in $\tau_k(D)$ are not too large: $\left|\bigcup_{D\in\pi_{\leq k}}H_D\right|\leq 2|\pi_{\leq k}|$. 
\item The numbers $a_t$ and coefficients of $h_s$ are not too large: their bit-size  does not exceed the maximal bit-size of coefficients in $\pi$.
\end{enumerate}

Before we proceed to the inductive definition of $\pi'_k$, we finish the proof assuming that $\pi'_k$ described above exists. 
If $l$ is the length of $\pi$, then $\pi':=\pi'_l$ contains
a derivation of $\tau_l(\emptyset)$, where $\emptyset$ denotes the empty clause. 

We now turn to the inductive definition of $\pi'_k$.
 
\Base Define $\pi'_0$ to be the empty derivation.
\induction Assume $\pi'_k$ and $\tau_k$ satisfy the properties above and $k$ is smaller than the length of $\pi$. If $D$ is the last line of $\pi_{\leq k+1}$,
then $\tau_{k+1}$ extends $\tau_k$ to $D$ and $\pi'_{k+1}$ either extends $\pi'_k$ with $\tau_{k+1}(D)$ or coincides with $\pi'_k$.
Consider the possible cases in which the last
line $D$ of $\pi_{\leq k+1}$ is derived:

\case 1 Boolean  axiom: $D=(x_i=0\vee x_i=1)$. Then $\pi'_{k+1}$ extends $\pi'_k$ with $D$ and $\tau_{k+1}(D)=D$.

\case 2 $D=(f=0)$. Then $\pi'_{k+1}$ extends $\pi'_k$ with the axiom $0=0$ and $\tau_{k+1}(D)=(f-f=0)$.

\case 3 $D$ is derived by resolution: $D=(C_1\vee C_2\vee \alpha G_1+\beta G_2=0)$ for some lines $(C_1\vee G_1=0)$ and $(C_2\vee G_2=0)$ in $\pi_{\leq k}$.

If $C_i=\bigvee\limits_{t\in[m_i]}g^{(i)}_t=0$, by induction hypothesis $\tau_k(C_i\vee G_i=0)$ is of the form ($i=1,2$):
$$\tau_k(C_i\vee G_i=0)=\left(G_i+A_if=0\vee\bigvee\limits_{t\in[m_i]}g^{(i)}_t+a^{(i)}_tf=0\right)\vee\bigvee\limits_{s\in[m'_i]}h^{(i)}_s=0$$

Define $\tau_{k+1}(D)$ to be the following resolution of $\tau_k(C_1\vee G_1=0)\in\pi'_k$ with $\tau_k(C_2\vee {G_2=0})\in\pi'_k$:
\begin{multline*}
\tau_{k+1}(D):=\left(\alpha G_1+\beta G_2+(\alpha A_1+\beta A_2)f=0\vee\bigvee_{i=1,2}\bigvee\limits_{t\in[m_i]}g^{(i)}_t+a^{(i)}_tf=0\right)\vee \\
\vee\bigvee_{i=1,2}\bigvee\limits_{s\in[m'_i]}h^{(i)}_s=0
\end{multline*}
The derivation $\pi'_{k+1}$ extends $\pi'_k$ with $\tau_{k+1}(D)$. It remains to be shown that $\tau_{k+1}(D)$ is of required form and that $\tau_{k+1}$
satisfies the required properties.

If we consider the clause $(\alpha G_1+\beta G_2=0\vee C_1\vee C_2)$ as a \emph{multiset} of disjuncts and $C_1$, $C_2$, as usual, as sets of disjuncts, there can be up to three identical copies of $g=0$ (from $C_1$, from $C_2$ and from $\{\alpha G_1+\beta G_2=0\}$), that are contracted to a single element in the set $D$. In $\tau_{k+1}(D)$ these copies can be different because of different $+af$ terms and, thus, can be non-contractible.

For every disjunct $g=0$ in $D$, denote $\mathcal{F}_g$ the set of disjuncts in $\tau_{k+1}(D)$ that correspond to $g$, namely,
$(g^{(i)}_j+a^{(i)}_jf=0)\in\mathcal{F}_g$ iff $g^{(i)}_j=g$ and $(\alpha G_1+\beta G_2+(\alpha A_1+\beta A_2)f=0)\in\mathcal{F}_g$ iff
$\alpha G_1+\beta G_2=g$. 
For every $g=0\in D$, pick one
element $g+af=0\in\mathcal{F}_g$, which minimises $\mathcal{P}(g+af)$, and denote $X$ the set of these elements. Denote 
$Y:=\left(\bigcup_{g=0\in D}\mathcal{F}_g\right)\backslash X$. Write $\tau_{k+1}(D)$ as follows:
$$\tau_{k+1}(D)=\left(\bigvee\limits_{g+af=0\in X}g+af=0\right)\vee\left(\bigvee_{i=1,2}\bigvee\limits_{s\in[m'_i]}h^{(i)}_s=0\vee
\bigvee\limits_{g+af=0\in Y}g+af=0\right)$$

We now show that $\tau_{k+1}$ satisfies all the desired properties:
\begin{enumerate}
\item For every $h^{(i)}_s=0$, $\mathcal{P}(h^{(i)}_s)=1$ holds by induction hypothesis. 
For every $g+af=0\in Y$, $\mathcal{P}(g+af)=1$ holds by definition of $Y$. 
\item Note that $|H_D\backslash\{h^{(i)}_s=0\}_{i,s}|\leq 2|D|$. By induction hypothesis
 $|\bigcup_{\tilde{D}\in\pi_{\leq k}}H_{\tilde{D}}|\leq 2|\pi_{\leq k}|$.

It follows that 
$|\bigcup_{\tilde{D}\in\pi_{\leq k}}H_{\tilde{D}}\cup H_D|=|\bigcup_{\tilde{D}\in\pi_{\leq k}}H_{\tilde{D}}\cup (H_D\backslash \{h^{(i)}_s=0\}_{i,s})|\leq 
|\bigcup_{\tilde{D}\in\pi_{\leq k}}H_{\tilde{D}}|+|H_D\backslash \{h^{(i)}_s=0\}_{i,s}|\leq 2|\pi_{\leq k}|+2|D|\leq 2|\pi_{\leq k+1}|$.
\item The absolute values of coefficients in $\pi'_{k+1}$ do not exceed the maximal absolute value of coefficients in $\pi$.
\end{enumerate}

\case 4 $D$ is derived by simplification from a line $D\vee b=0$ in $\pi_{\leq k}$. If $D=\left(\bigvee\limits_{t\in[m]}g_t=0\right)$, then $\tau_k(D\vee b=0)$
has the form: $\tau_{k}(D\vee b=0)=\left(\bigvee\limits_{t\in[m]}g_t+a_tf=0\right)\vee b+af=0$.

If $a=0$, we apply simplification to $\tau_{k}(D\vee b=0)$ to derive $\tau_{k+1}(D):=\left(\bigvee\limits_{t\in[m]}g_t+a_tf=0\right)$ and let $\pi'_{k+1}$
extend $\pi'_k$ . 

Otherwise, if $a\neq 0$, we define $\tau_{k+1}(D)$ to be $\tau_{k+1}(D):=\tau_k(D\vee b=0)$ and $\pi'_{k+1}:=\pi'_k$.

\case 5 $D$ is derived by weakening from a line $C$ of $\pi_{\leq k}$: $D=(C\vee g=0)$ for some $g$. Define $\tau_{k+1}(D):=(\tau_k(C)\vee g=0)$ and 
let $\pi'_{k+1}$ extend $\pi'_k$ with $\tau_{k+1}(D)$.
\end{proof}

\begin{lemma}\label{lm:IKtradeoff}
Let $g:\Z^n\rightarrow \Z$ be a linear function. For the sets $I(g):=im_2(g)$ and $K(g):=g^{-1}(0)\cap\{0,1\}^n$ it holds that $|I(g)|\cd |K(g)|\leq 3^n$.
\end{lemma}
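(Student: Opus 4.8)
The plan is to prove the bound by exhibiting an explicit injection from the product set $I(g)\times K(g)$ into the ternary cube $\{-1,0,1\}^n$, which has exactly $3^n$ elements; such an injection yields $|I(g)|\cdot|K(g)|\le 3^n$ at once. Write $g(x)=a_0+\sum_{i=1}^n a_ix_i$ with $a_0,a_i\in\Z$ (the constant term will turn out to be irrelevant to the argument). If $K(g)=\emptyset$ the inequality is trivial, so I may assume $K(g)\neq\emptyset$, and in particular $0\in I(g)$. For every value $v\in I(g)$ I fix once and for all a canonical $0$-$1$ preimage $y(v)\in\{0,1\}^n$ with $g(y(v))=v$ — say the lexicographically least such point, which exists precisely because $v\in im_2(g)$. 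I then define
\[
\Phi\colon I(g)\times K(g)\longrightarrow\{-1,0,1\}^n,\qquad \Phi(v,x):=y(v)-x .
\]
Since $y(v),x\in\{0,1\}^n$, each coordinate of $y(v)-x$ lies in $\{-1,0,1\}$, so $\Phi$ is well defined; the whole content of the argument is that $\Phi$ is injective.

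The key step is to recover $(v,x)$ from $z:=\Phi(v,x)$. One recovers $v$ first: since $x\in K(g)$ we have $\sum_i a_ix_i=-a_0$, and since $g(y(v))=v$ we have $\sum_i a_i\,y(v)_i=v-a_0$, so subtracting coordinate-wise,
\[
\sum_{i=1}^n a_iz_i=\sum_i a_i\,y(v)_i-\sum_i a_ix_i=(v-a_0)-(-a_0)=v .
\]
Hence $v$ is determined by $z$ alone, namely $v=\sum_i a_iz_i$; this pins down the canonical preimage $y(v)$, and then $x=y(v)-z$ is determined as well. Therefore $\Phi(v,x)=\Phi(v',x')$ forces first $v=v'$ and then $x=x'$, so $\Phi$ is injective and $|I(g)|\cdot|K(g)|=|I(g)\times K(g)|\le 3^n$, as claimed.

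I expect the only real obstacle to be spotting this injection. The naive approach — induct on $n$ by splitting on a variable $x_n$, writing $g=g_0+a_nx_n$ with $g_0$ a linear function of the remaining $n-1$ variables, and using $K(g)=K(g_0)\sqcup K(g_0+a_n)$ together with $I(g)\subseteq I(g_0)\cup I(g_0+a_n)$ — only delivers $|I(g)|\cdot|K(g)|\le 2\cdot 2\cdot 3^{n-1}=4\cdot 3^{n-1}$ (one factor $2$ from the union of images, one from the disjoint union of root sets), which overshoots $3^n$ and so fails to close the induction. The injection into $\{-1,0,1\}^n$ avoids this loss and transparently accommodates both extremes of the trade-off: when the subset sums $a_0+\sum_{i\in S}a_i$ are all distinct we have $|K(g)|\le 1$, while when $g$ depends on few variables $|I(g)|$ is correspondingly small.
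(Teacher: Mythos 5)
Your proof is correct and follows essentially the same route as the paper: the paper fixes a canonical $0$--$1$ preimage $v_a$ for each $a\in I(g)$ and injects $I(g)\times K(g)$ into $\{0,1,2\}^n$ via $(a,u)\mapsto v_a+u$, using linearity of $g$ to recover $a$ (and hence $v_a$ and $u$) from the sum, exactly as you recover $v$ from $\sum_i a_i z_i$. Your use of the difference $y(v)-x$ landing in $\{-1,0,1\}^n$ instead of the sum landing in $\{0,1,2\}^n$ is only a cosmetic variation.
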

\begin{proof}
For every element $a\in I(g)$ choose some $v_a\in\{0,1\}^n$ such that $g(v_a)=a$. Consider the set $X:=\{v_a+u\}_{a\in I(g),u\in K(g)}\subset \{0,1,2\}^n$.

It is easy to see that $|X|=|I(g)|\cd |K(g)|$. Indeed, if $v_a+u=v_{a'}+u'$, then $g(v_a)+g(u)-g(0)=g(v_a+u)=g(v_{a'}+u')=g(v_{a'})+g(u')-g(0)$ and therefore
 $a=a', v_a=v_{a'}, u=u'$.

On the other hand, $|X|\leq 3^n$.
\end{proof}

\begin{lemma}\label{lm:dichotomy}
Let $f=1+2x_1+\dots+2^nx_n$ and $g:\Z^n\rightarrow \Z$ be a linear function. For any 
$a\in\Z\backslash \{0\}$ one of the following holds:
\begin{enumerate}
\item $g=0$ has at most $3^{\frac{n}{2}}$ 0-1 solutions. 
\item $g+af=0$ has at most $3^{\frac{n}{2}}$ 0-1 solutions. 
\end{enumerate}
\end{lemma}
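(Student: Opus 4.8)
The plan is to adapt the packing argument from the proof of Lemma~\ref{lm:IKtradeoff}, but applied to the \emph{product} of the two solution sets rather than to a single one. Write $K := g^{-1}(0)\cap\{0,1\}^n$ for the set of 0-1 solutions of $g=0$ and $L := (g+af)^{-1}(0)\cap\{0,1\}^n$ for the set of 0-1 solutions of $g+af=0$. I claim the map $K\times L\to\{0,1,2\}^n$ given by $(v,u)\mapsto v+u$ is injective. Granting this, $|K|\cdot|L|\le 3^n$, so at least one of $|K|$, $|L|$ is at most $3^{n/2}$; that is precisely the asserted dichotomy (the first alternative when $|K|\le 3^{n/2}$, the second when $|L|\le 3^{n/2}$).

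To verify injectivity, suppose $v,v'\in K$ and $u,u'\in L$ with $v+u=v'+u'$. Since $g$ and $h:=g+af$ are affine, any such $\ell$ satisfies $\ell(w+w')=\ell(w)+\ell(w')-\ell(0)$; evaluating both sides of $v+u=v'+u'$ under $g$ and using $g(v)=g(v')=0$ yields $g(u)=g(u')$, while evaluating under $h$ and using $h(u)=h(u')=0$ yields $h(v)=h(v')$. The latter equality unwinds, again via $g(v)=g(v')=0$, to $a\,f(v)=a\,f(v')$, and since $a\neq 0$ this forces $f(v)=f(v')$ by cancellation in $\Z$. Finally $f=1+2x_1+\dots+2^nx_n$ separates points of $\{0,1\}^n$ by uniqueness of binary representations, so $v=v'$, and then $u=u'$ follows from $v+u=v'+u'$. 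Hence the map is injective, and $|K|\cdot|L|\le|\{0,1,2\}^n|=3^n$.

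This argument is short and I do not anticipate a real obstacle; the only points requiring care are that \emph{both} structural facts about the instance are used --- the composition identity for affine maps (which alone only transfers constraints between the $u$'s and between the $v$'s) and the injectivity of $f$ on the cube (which is what actually collapses $v$ to $v'$) --- and that the hypothesis $a\neq 0$, together with working over the integral domain $\Z$, is exactly what licenses cancelling $a$ in $a\,f(v)=a\,f(v')$. Note that the same proof in fact works for any $f$ that is injective on $\{0,1\}^n$, not just for the specific subset-sum instance.
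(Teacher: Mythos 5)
Your proof is correct. The paper factors the argument differently: it first observes that the 0-1 solution set $L$ of $g+af=0$ injects into the boolean image $im_2(g)$ (for each $b$ there is at most one boolean point with $g=b$ and $b+af=0$, since the latter pins down the value of $f$ and $f$ is injective on $\{0,1\}^n$), and then invokes Lemma~\ref{lm:IKtradeoff}, $|im_2(g)|\cdot|K(g)|\le 3^n$, to finish. You instead build the injection $K\times L\hookrightarrow\{0,1,2\}^n$ in a single step via $(v,u)\mapsto v+u$, merging the two observations into one injectivity check and bypassing the ``choose a representative $v_a$ for each $a\in im_2(g)$'' step inside the proof of Lemma~\ref{lm:IKtradeoff}. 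The key ingredients are the same in both versions: the affine composition identity, the packing of $\{0,1\}^n+\{0,1\}^n$ into $\{0,1,2\}^n$, and the fact that $f$ separates points of the cube. One minor remark: the intermediate deduction $g(u)=g(u')$ is never used in your argument --- the chain $v+u=v'+u'\Rightarrow h(v)=h(v')\Rightarrow f(v)=f(v')\Rightarrow v=v'\Rightarrow u=u'$ already closes the loop --- so that sentence could be dropped. Your closing remark that only the injectivity of $f$ on $\{0,1\}^n$ is used also holds for the paper's version, so there is no loss of generality in either formulation.
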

\begin{proof}
For every $b\in\Z$, there exists at most one boolean  assignment that satisfies both $g=b$ and $b+af=0$. Therefore the number of 0-1 solutions of
$g+af=0$ is at most the size of the boolean  image $im_2(g)$ of $g$. By Lemma~\ref{lm:IKtradeoff} either $|im_2(g)|\leq 3^{\frac{n}{2}}$ or
$|g^{-1}(0)\cap\{0,1\}^n|\leq 3^{\frac{n}{2}}$.
\end{proof}

\begin{theorem}\label{thm:dagLB}
Let $f=1+2x_1+\dots+2^nx_n$. Any \reslin{\Q} refutation of $f=0$ is of size $2^{\Omega(n)}$.
\end{theorem}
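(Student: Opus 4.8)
The plan is to deduce Theorem~\ref{thm:dagLB} from Theorem~\ref{thm:refToDer} by choosing a suitable predicate and then running a covering count. Set $f = 1+2x_1+\dots+2^nx_n$ (note $0\notin im_2(f)$, so $f=0$ is genuinely unsatisfiable over $\{0,1\}$) and define, for a linear polynomial $g$ over $\Q$,
$$\mathcal{P}(g)=1 \iff \bigl|\{\,v\in\{0,1\}^n : g(v)=0\,\}\bigr|\le 3^{n/2}.$$
This is well defined on the projective space $\mathbb{P}(\Q[x_1,\dots,x_n]_{\le 1})$ because the boolean zero-set of $g$ does not change under $g\mapsto\alpha g$ for $\alpha\ne 0$.

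First I would verify the two hypotheses of Theorem~\ref{thm:refToDer}. Property~2 is immediate: the map $v\mapsto f(v)$ is injective on $\{0,1\}^n$ — writing $v$ also for the integer with binary digits $x_1,\dots,x_n$ we have $f(v)=1+2v$, so $f$ takes the $2^n$ distinct odd values $1,3,\dots,2^{n+1}-1$ — hence $b+f=0$ has at most one boolean solution for every $b\in\Q$ and thus $\mathcal{P}(b+f)=1$. Property~1 is the substantive point, and this is where the exponential growth of the coefficients of $f$ enters via Lemma~\ref{lm:dichotomy}: I claim that for every fixed linear $g$ there is at most one $a\in\Q$ with $\mathcal{P}(g+af)=0$. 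Indeed, suppose $a_1\ne a_2$ both had $g+a_if=0$ with more than $3^{n/2}$ boolean solutions. Put $h:=g+a_1f$. Then $h=0$ has more than $3^{n/2}$ boolean solutions, while $h+(a_2-a_1)f=g+a_2f=0$ also has more than $3^{n/2}$ boolean solutions and $a_2-a_1\ne 0$; after multiplying $h$ by a common denominator $M$ (and $a_2-a_1$ by the corresponding integer), which affects no boolean zero-set, this contradicts Lemma~\ref{lm:dichotomy} applied to $Mh$ and the integer $M(a_2-a_1)$. Hence $\mathcal{P}(g+af)=1$ for all but at most one $a$, as required.

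With both hypotheses in hand, Theorem~\ref{thm:refToDer} turns a \reslin{\Q} refutation of $f=0$ of size $S$ into a \reslin{\Q} \emph{derivation} $\pi'$ of size $O(n\cdot S^3)$ whose last line is a linear clause $C_\pi=\bigvee_{j\in[N]}g_j=0$ with $\mathcal{P}(g_j)=1$ for every $j$, i.e.\ each $g_j=0$ has at most $3^{n/2}$ boolean solutions. Since $\pi'$ uses only the boolean axioms $x_i=0\vee x_i=1$ and the axiom $0=0$, all of which are tautologies over $\{0,1\}$-assignments, and every \reslin{\Q} rule (resolution, weakening, simplification) preserves being such a tautology, $C_\pi$ is itself a tautology over $\{0,1\}$. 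Consequently $\{0,1\}^n=\bigcup_{j\in[N]}\{v:g_j(v)=0\}$, and therefore $2^n\le N\cdot 3^{n/2}$, i.e.\ $N\ge(2/\sqrt3)^{\,n}=2^{\Omega(n)}$. Finally $N\le|C_\pi|\le|\pi'|=O(n\cdot S^3)$, whence $S\ge\bigl(N/O(n)\bigr)^{1/3}=2^{\Omega(n)}$, which proves the theorem.

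I expect the only real obstacle to be Property~1 of the predicate — that a fixed linear form $g$ has at most one ``fat direction'' $g+af$ — since that is precisely the step consuming the nontrivial arithmetic content (Lemmas~\ref{lm:IKtradeoff} and~\ref{lm:dichotomy}, and with them the assumption that the coefficients of $f$ grow like $2^i$); everything else is bookkeeping on top of Theorem~\ref{thm:refToDer} together with an elementary covering argument.
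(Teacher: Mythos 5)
Your proof is correct and follows essentially the same route as the paper's: the same predicate $\mathcal{P}$ (with $2^{(0.5\log 3)n}=3^{n/2}$), the same appeal to Theorem~\ref{thm:refToDer} and Lemma~\ref{lm:dichotomy}, and the same covering count. You are a bit more explicit than the paper in two places — verifying Property~2 via injectivity of $f$ on the cube, and clearing denominators to reduce the $\Q$-case of Property~1 to the $\Z$-statement of Lemma~\ref{lm:dichotomy} — but these are exactly the implicit steps behind the paper's ``By Lemma~\ref{lm:dichotomy}, $\mathcal{P}$ satisfies the properties in Theorem~\ref{thm:refToDer}.''
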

\begin{proof}
Define the predicate $\mathcal{P}(g)$ on linear polynomials over $\Q$ as follows: $\mathcal{P}(g)=1$ iff $g=0$ has at most 
$2^{(0.5\cd {\log 3})n}$ 0-1 solutions. By Lemma~\ref{lm:dichotomy}, $\mathcal{P}$ satisfies the properties in Theorem~\ref{thm:refToDer}.
Therefore, by Theorem~\ref{thm:refToDer}, if $\pi$ is a refutation of $f=0$, then there exists a derivation $\pi'$ of some clause $C=\bigvee\limits_{j\in[N]}g_j=0$ from the boolean  axioms, where each $g_j=0$ has at most $2^{(0.5\cd {\log 3})n}$ 0-1 solutions. Moreover $|\pi'|=O(n\cd |\pi|^3)$. As $C$ must be a boolean  tautology, that satisfied by $2^n$ assignments, it must contain at least $2^{(1-0.5\cd {\log 3})n}$ disjuncts (because every disjunct contributes at most $2^{(0.5\cd {\log 3})n}$ satisfying assignments). Therefore $|\pi|=2^{\Omega(n)}$.
\end{proof}

\section{Linear Systems with Small Coefficients}
\label{sec:Linear-Systems-with-Small-Coefficients}

In this section we study 0-1 unsatisfiable linear systems over \emph{finite fields}.

Firstly, we prove an upper bound, which is polynomial in 
$|im_2(A\, \overline{x})|$, where $A=A_{f_1,\dots,f_m}:\F^n\rightarrow \F^m$ is an affine map $\overline{x}\mapsto (f_1(\overline{x}),\ldots,f_m(\overline{x}))$.
In contrast to the case of a single equation $f=0$, the size of the image $|im_2(A\, \overline{x})|$ does not fully characterise the size of the shortest
\reslin{\F} refutation of $f_1=0,\ldots,f_m=0$: there is an example, where $|im_2(A\, \overline{x})|$ is large, but the size for refuting $f_1=0,\ldots,f_m=0$ 
is small.

Secondly, we prove a superpolynomial lower bound on a linear system for a \emph{restricted} tree-like \reslin{\F}.

\subsection{An Upper Bound}\label{sec:linSysDLUB}

Denote $\langle A_{f_1,\ldots,f_m}\, \overline{x}\neq 0\rangle$ the linear clause 
$\left(\langle f_1\neq 0\rangle\vee\dots\vee \langle f_m\neq 0 \rangle\right)$.
The clause $\langle A_{f_1,\ldots,f_m}\, \overline{x}\neq 0\rangle$ is a tautology iff the system $f_1=0,\ldots,f_m=0$ is 0-1 unsatisfiable. Therefore, any 0-1
unsatisfiable system $f_1=0,\ldots,f_m=0$ can be refuted by first deriving $\langle A_{f_1,\ldots,f_m}\, \overline{x}\neq 0\rangle$ from boolean  axioms
and then resolving it with $f_1=0,\ldots,f_m=0$. We now prove an upper bound for derivations of $\langle A\, \overline{x}\neq 0 \rangle$ in terms of
$|im_2(A\, \overline{x})|$.

\begin{theorem}\label{thm:linSysImUB}
Let $f_1=0,\ldots,f_m=0$ be a 0-1 unsatisfiable system with $n$ variables. There exists a derivation of 
$\langle A_{f_1,\ldots,f_m}\, \overline{x}\neq 0\rangle$ of size $poly(n+|im_2(A_{f_1,\ldots,f_m}\, \overline{x})|)$. 
\end{theorem}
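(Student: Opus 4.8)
The plan is to derive the tautology $\langle A_{f_1,\ldots,f_m}\,\overline{x}\neq 0\rangle$ from the boolean axioms by processing the variables $x_1,\dots,x_n$ one at a time, in the style of the inductive step of Proposition~\ref{imf}. Write $f_i=\sum_{j=1}^{n}a_{ij}x_j+b_i$ and, for $0\le k\le n$, put $f_i^{(k)}:=\sum_{j=1}^{k}a_{ij}x_j+b_i$, $A^{(k)}:=A_{f_1^{(k)},\ldots,f_m^{(k)}}$ and $S_k:=im_2(A^{(k)}\,\overline{x})\subseteq\F^m$ (the image over $0$--$1$ assignments to $x_1,\dots,x_k$). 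Two elementary facts drive everything. First, $\overline 0\notin S_k$ for every $k\le n$: a point $(v_1,\dots,v_k)$ with $f_i^{(k)}(\overline v)=0$ for all $i$ would, extended by zeros, be a 0-1 solution of $f_1=0,\dots,f_m=0$. Second, $S_k\subseteq S_n=im_2(A\,\overline{x})$, since $A^{(k)}(v_1,\dots,v_k)=A(v_1,\dots,v_k,0,\dots,0)$; hence $|S_k|\le N:=|im_2(A\,\overline{x})|$ for all $k$.

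I maintain a clause $C_k$, derivable from the boolean axioms, that is a tautology over $\{0,1\}^k$ and encodes membership in $S_k$ compactly: for every $\overline B\in S_k$ it contains a single disjunct $\bigl(f_{i(\overline B)}^{(k)}=B_{i(\overline B)}\bigr)$, where $i(\overline B)$ is a coordinate with $B_{i(\overline B)}\neq 0$ (one exists since $\overline 0\notin S_k$); thus every disjunct of $C_k$ has the form $(f^{(k)}_i=\beta)$ with $\beta\ne 0$ and $|C_k|\le|S_k|\le N$. The base clause $C_0$ corresponds to $S_0=\{(b_1,\dots,b_m)\}$ and is (after contraction) the trivially true clause $0=0$, obtained from two copies of a boolean axiom in $O(1)$ steps. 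To pass from $C_k$ to $C_{k+1}$ I case on $x_{k+1}$ exactly as in Proposition~\ref{imf}: resolving $C_k$ against fresh copies of $x_{k+1}=0\vee x_{k+1}=1$ yields the ``$x_{k+1}\!\leftarrow\!0$'' shift $\bigvee_{\overline B\in S_k}(f_{i(\overline B)}^{(k+1)}=B_{i(\overline B)})\vee(x_{k+1}=1)$ and the ``$x_{k+1}\!\leftarrow\!1$'' shift $\bigvee_{\overline B\in S_k}(f_{i(\overline B)}^{(k+1)}=B_{i(\overline B)}+a_{i(\overline B),k+1})\vee(x_{k+1}=0)$; resolving these two on the $x_{k+1}$-literals with coefficients $1,-1$ produces the false constant disjunct $-1=0$, discarded by the simplification rule, and leaves a clause whose disjuncts witness the elements of $S_{k+1}=S_k\cup(S_k+\overline a_{k+1})$, $\overline a_{k+1}:=(a_{1,k+1},\dots,a_{m,k+1})$. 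Since $\overline 0\notin S_n$, the final clause $C_n=\bigvee_{\overline B\in S_n}(f_{i(\overline B)}=B_{i(\overline B)})$ has all disjuncts $(f_{i(\overline B)}=B_{i(\overline B)})$ with $B_{i(\overline B)}\in im_2(f_{i(\overline B)})\setminus\{0\}$, i.e.\ $C_n$ is a subclause of $\langle A\,\overline{x}\neq 0\rangle$; a concluding sequence of weakenings adds the remaining disjuncts to obtain $\langle A\,\overline{x}\neq 0\rangle$ itself.

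For the size bound: each $C_k$ has $\le N$ disjuncts, each of bit-size $O(|f_i|)$; producing $C_{k+1}$ from $C_k$ uses $O(N)$ resolutions with $O(N)$ auxiliary boolean-axiom copies, so the cost is polynomial in $N$, $n$ and the coefficient sizes, and likewise over all $n$ steps; the final weakenings cost at most the size of $\langle A\,\overline{x}\neq 0\rangle$, which has at most $mN$ disjuncts. This yields the stated bound $poly(n+|im_2(A\,\overline{x})|)$ once one reads, as usual, the size of the system $f_1=0,\dots,f_m=0$ as part of the input; in particular, if the system is minimal then already $m\le|im_2(A\,\overline{x})|$, since for a minimal system each $i$ admits a 0-1 assignment $\overline v_i$ falsifying only $f_i$, and the corresponding images $A(\overline v_i)$ are $m$ distinct vectors of $im_2(A\,\overline{x})$.

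The step I expect to be the main obstacle is the induction $C_k\to C_{k+1}$, where two invariants must be preserved simultaneously. (i) No blow-up: the two shifts each contribute $|S_k|$ disjuncts, so naively $|C_{k+1}|$ could be $2|S_k|$ rather than $|S_{k+1}|\le N$; one must arrange that whenever $\overline B$ and $\overline B+\overline a_{k+1}$ both lie in $S_k$, the disjunct for $\overline B+\overline a_{k+1}$ arising from the ``$\leftarrow 0$'' shift and the one arising (from $\overline B$) in the ``$\leftarrow 1$'' shift are literally the same equation and hence contracted automatically — this forces a compatible, foresighted choice of the witnessing coordinates $i(\cdot)$ across all steps. (ii) No spurious $=0$ disjuncts: the ``$x_{k+1}\!\leftarrow\!1$'' shift sends $B_{i(\overline B)}$ to $(\overline B+\overline a_{k+1})_{i(\overline B)}$, which may vanish although $\overline B+\overline a_{k+1}\neq\overline 0$; since a disjunct $(f_i^{(k+1)}=0)$ can never be removed later (only the simplification rule shrinks a clause, and only by dropping a \emph{constant} disjunct), the witnessing coordinate of such an element must be re-selected on the spot. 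Both points are handled by a careful bookkeeping of which coordinate is assigned to each element of $S_k$ — equivalently, by strengthening the induction hypothesis so that each $\overline B\in S_k$ carries a coordinate that stays nonzero under all shifts still to come — or, more cheaply, by tolerating $|C_k|=O(N)$ instead of $\le N$ and absorbing the slack into the polynomial bound; everything else is the routine resolution-with-boolean-axioms manipulation already present in Proposition~\ref{imf}.
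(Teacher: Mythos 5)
Your plan is in the same spirit as the paper's --- process the variables one at a time and bound the work per step by $|im_2(A\,\overline{x})|$ --- but the single-clause compression you rely on genuinely fails, for precisely the reason you flag as issue (ii), and the fix you sketch does not exist. The paper keeps at level $k$ a \emph{family} $L_k$ of up to $|im_2(A\,\overline{x})|$ clauses, one per element of $im_2(A^{(k)}\,\overline{x})$, each of the form $\bigvee_i\langle f_i\rst_{x_1\leftarrow\epsilon_1,\dots,x_k\leftarrow\epsilon_k}\neq 0\rangle$; there the targets $B$ of disjuncts $(f_i\rst_\rho=B)$ are \emph{final} values in $im_2(f_i\rst_\rho)\setminus\{0\}$, and they are carried over unchanged (hence stay nonzero) when two sibling clauses from level $k+1$ are resolved with a boolean axiom to produce a level-$k$ clause. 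Your single clause $C_k$ carries \emph{partial} targets $\beta$ on the processed prefix, and the $x_{k+1}\!\leftarrow\!1$ shift $\beta\mapsto\beta+a_{i,k+1}$ really can hit $0$. The forward-looking witness you propose --- a coordinate $i(\overline B)$ of $\overline B\in S_k$ that stays nonzero under all subsequent shifts --- need not exist: for the minimal 0-1 unsatisfiable system $f_1=x_1-x_2-1$, $f_2=-x_1+x_2-1$ over $\Q$, one has $im_2(A\overline{x})=\{(-1,-1),(0,-2),(-2,0)\}$ and $S_0=\{(-1,-1)\}$, but coordinate $1$ vanishes on $(0,-2)$ and coordinate $2$ vanishes on $(-2,0)$, so no single coordinate of $(-1,-1)$ survives all extensions, regardless of the order in which you branch on $x_1,x_2$.

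A stray disjunct $(f_i=0)$ in $C_n$ is fatal to the intended clean-up: weakening only adds disjuncts; simplification only deletes disjuncts $(a=0)$ with $a$ a nonzero constant; and resolving $(f_i=0)$ against the axiom $f_i=0$ yields either a rescaled $(f_i=0)$ or the trivially true $(0=0)$, so the refutation stalls at a nonempty clause rather than reaching $\perp$. Your ``cheaper'' option of tolerating $|C_k|=O(N)$ only controls the blow-up in (i); it does nothing for (ii). Nor can the witness be re-selected later: which $a_{i,k+1}$ you use as the resolution coefficient is exactly where the witness is committed, and that happens at step $k$. To make an argument of this flavour go through you would have to replace the single compressed clause by the paper's layered family --- $|im_2(A^{(k)}\overline{x})|\le|im_2(A\overline{x})|$ many clauses at level $k$, each containing, for every $i$, \emph{all} of $\langle f_i\rst_\rho\neq 0\rangle$ rather than a single witnessing disjunct. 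Your observation that $m\le|im_2(A\overline{x})|$ for a minimal system, and the resolution mechanics around Proposition~\ref{imf}, are fine; it is the single-witness, single-clause compression that breaks.
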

\begin{proof}
We arrange the derivation in $n$ layers $L_0,\ldots,L_n$ in such a way that $L_0:=\{\langle A_{f_1,\ldots,f_m}\, \overline{x}\neq 0\rangle\}$ and 
$$L_k:=\{\left(\langle f_1\rst_{x_1\leftarrow \epsilon_1,\ldots, x_k\leftarrow \epsilon_k}\neq 0\rangle\vee\ldots\vee
 \langle f_m\rst_{x_1\leftarrow \epsilon_1,\ldots, x_k\leftarrow \epsilon_k}\neq 0\rangle\right)\}_{\overline{\epsilon}\in\{0,1\}^k}$$
It is easy to see, that the following map is an embedding $L_k\hookrightarrow im_2(A_{f_1,\ldots,f_m}\, \overline{x})$: 
\begin{multline*}
\left(\langle f_1\rst_{x_1\leftarrow \epsilon_1,\ldots, x_k\leftarrow \epsilon_k}\neq 0\rangle\vee \ldots\vee
 \langle f_m\rst_{x_1\leftarrow \epsilon_1,\ldots, x_k\leftarrow \epsilon_k}\neq 0\rangle\right)\mapsto \\
\left(f_1(\epsilon_1,\ldots, \epsilon_k, 0,\ldots, 0), \ldots,
 f_m(\epsilon_1,\ldots, \epsilon_k, 0,\ldots, 0)\right)
\end{multline*}
Therefore $|L_k|\leq |im_2(A_{f_1,\ldots,f_m}\, \overline{x})|$.

It remains to note that every clause in $L_k$ can be derived from clauses in $L_{k+1}$ in $O(|im_2(A_{f_1,\ldots,f_m}\, \overline{x})|)$ steps.
Indeed, if $C\in L_k$, then $C\rst_{x_{k+1}\leftarrow 0}\in L_{k+1}$ and $C\rst_{x_{k+1}\leftarrow 1}\in L_{k+1}$, and $C$ can be derived from
$C\rst_{x_{k+1}\leftarrow 0}$ and $C\rst_{x_{k+1}\leftarrow 1}$ and the axiom $(x_{k+1}=0\vee x_{k+1}=1)$ in a standard way.
\end{proof}

\begin{remark}
In contrast to the case of a single equation, dag-like \reslin{\F} refutations of $f_1=0,\ldots,f_m=0$ for $m\geq 2$ are not lower-bounded by
$|im_2(A_{f_1,\ldots,f_m}\, \overline{x})|$ in general. For example, the system $x_1-2x_{n+1}=0,x_n-2x_{2n}=0,x_{2n+1}+x_{n+1}+\ldots+x_{2n}-2=0$
has refutation of size $O(n)$, but $|im_2(A_{f_1,\ldots,f_m}\, \overline{x})|=2^{\Omega(n)}$.
\end{remark}

\subsection{Lower Bound for Restricted Tree-Like \reslin{\F}}

We define the following natural model of decision trees, certifying 0-1 unsatisfiability of linear systems over \F:

\begin{definition}\label{def:simpleDT}
Let $A\, \overline{x}=\overline{b}$ be a 0-1 unsatisfiable linear system over $\F$. A decision tree $T$ for $A\, \overline{x}=\overline{b}$ is
a binary tree, such that:
\begin{itemize}
\item Every internal node is labelled with a variable $x_i$ and two branches correspond to assignments $x_i\leftarrow 0$ and $x_i\leftarrow 1$.
\item If $\rho_v$ is the variable assignment made along the path from the root to a leaf $v$, the system $(A\, \overline{x}=\overline{b})\rst_{\rho_v}$
is unsatisfiable over the whole field \F\ (not just over 0-1).
\end{itemize}
\end{definition}

It is easy to see that this model of decision trees can be simulated by tree-like \reslin{\F}. We argue that this model captures the strength of a natural
fragment of tree-like \reslin{\F}. If $T$ is a decision tree for the system $f_1=0,\ldots,f_m=0$ then a corresponding tree-like proof $\pi$
for every leaf $v$ in $T$ derives the set of clauses
$$\left\{\left(f_k\rst_{\rho_v}=0\vee\bigvee\limits_{i\in[n]|\rho_v(i)\neq *}x_i=1-\rho_v(i)\right)\right\}_{k\in[m]}$$
where $\rho_v:[n]\mapsto \{0,1,*\}$ ($\rho_v(i)=*$ iff $x_i$ is unassigned) is the assignment at $v$. By the leaf condition in 
Definition~\ref{def:simpleDT} the system $f_1\rst_{\rho_v}=0,\ldots,f_m\rst_{\rho_v}=0$ is unsatisfiable over \F, therefore there exist
$a_1,\ldots,a_m\in\F$ such that $a_1f_1\rst_{\rho_v}+\cdots+a_mf_m\rst_{\rho_v}=1$ and the proof $\pi$ uses this to derive further the
clause $\bigvee\limits_{i\in[n]|\rho_v(i)\neq *}x_i=1-\rho_v(i)$ from the clauses above for every leaf $v$. This is the \emph{only place}, where
counting is essentially used in $\pi$, the rest of the proof is just a standard resolution refutation obtained from $T$ by the well-known correspondence
between decision trees and tree-like resolution refutations. It is an interesting question whether this fragment is strictly weaker than  full
tree-like \reslin{\F}.    

We now prove a sub-exponential lower bound for this model and, consequently, for the
corresponding fragment of tree-like \reslin{\F}.

\begin{theorem}\label{thm:linSysDTLB}
For every $n\in\N$ there exists a 0-1 unsatisfiable linear system $A\, \overline{x}=\overline{b}$ over a finite field $\F_q,q > 2$ with $n$ variables
such that any decision tree for this system is of size $2^{\Omega\left(\frac{n}{\log n}\right)}$.
\end{theorem}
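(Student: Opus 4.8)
The plan is to take $A$ to be the generator matrix of a suitable linear code over $\F_q$ and to translate the statement ``the partial 0-1 assignment $\rho$ makes $(A\overline x=\overline b)\rst_\rho$ unsatisfiable over $\F_q$'' into ``the row space of $A$ contains a nonzero codeword supported on the $|\rho|$ coordinates fixed by $\rho$''. If the code has large minimum distance, the latter forces $|\rho|$ to be large, so every leaf of a decision tree is deep, and the tree is big.

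Fix any prime power $q\ge 3$. Let $k$ be the least integer with $q^k>2^n$, so $k=n/\log_2 q+O(1)$, and since $\log_2 q>1$ we have $k\le n$ for large $n$ and $n-k+1\ge(1-1/\log_2 q)\,n-O(1)=\Theta(n)$. Put $d:=\lceil c\,n/\log n\rceil$ for a constant $c=c(q)>0$ to be fixed. A short estimate shows the Gilbert--Varshamov inequality of Theorem~\ref{thm:Gilbert} holds: $\sum_{i=1}^{d}\binom ni(q-1)^i\le d\,\big((q-1)n\big)^d\le q^{\,(c/\log_2 q)\,n+o(n)}$, whereas $q^{\,n-k+1}\ge q^{\,(1-1/\log_2 q)n-O(1)}$, so the left side is below the right side once $c<\log_2 q-1$ and $n$ is large. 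Hence there is an $(n,k,d)_q$-code $C$; let $A\in\F_q^{k\times n}$ be a generator matrix of $C$, so that $A$ has full row rank and its row space is exactly $C$.

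Since the map $\{0,1\}^n\to\F_q^k,\ \overline x\mapsto A\overline x$, has image of size at most $2^n<q^k$, we may pick $\overline b\in\F_q^k$ for which $A\overline x=\overline b$ has no 0-1 solution; as $A$ has full row rank the system is still satisfiable over $\F_q$. This is the claimed instance. (It is non-vacuous: the complete depth-$n$ binary tree is a valid decision tree for it, since every total 0-1 assignment $\rho$ has $A\rho\ne\overline b$ and hence leaves some equation of the form $0=(\text{nonzero})$, so each leaf is correctly labelled.) Now let $\rho$ be any partial 0-1 assignment fixing $t<d$ variables, and let $A\rst_\rho\in\F_q^{k\times(n-t)}$ be $A$ with those $t$ columns deleted. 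If $A\rst_\rho$ did not have full row rank $k$, some nonzero $v\in\F_q^k$ would satisfy $v^{\top}(A\rst_\rho)=0$, i.e.\ the codeword $v^{\top}A\in C$ would be supported on the $t$ deleted coordinates, so $0<\mathrm{wt}(v^{\top}A)\le t<d$ (nonzero because $A$ has full row rank), contradicting $\mathrm{dist}(C)=d$. Therefore $A\rst_\rho$ has full row rank, so $(A\overline x=\overline b)\rst_\rho$ is satisfiable over $\F_q$, and $\rho$ cannot label a leaf of any decision tree for $A\overline x=\overline b$.

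Consequently, in any decision tree for this system every leaf is at depth $\ge d$; hence every node at depth $<d$ is internal and so has two children, all $2^d$ nodes at depth $d$ are present, and the tree has at least $2^d=2^{\Omega(n/\log n)}$ nodes. The only delicate point is the parameter balancing of the second paragraph: one must make $k$ just large enough to force $q^k>2^n$ — which is exactly where $q>2$ is used, since $q=2$ would need $k>n$ — while keeping the Gilbert--Varshamov slack $n-k+1$ linear in $n$ so that a distance $d=\Theta(n/\log n)$ is attainable; the rest is routine linear algebra together with the standard correspondence between decision-tree depth and size. (With the entropy form of the Gilbert--Varshamov bound one can even take $d=\Omega(n)$, giving a $2^{\Omega(n)}$ bound, but the stated estimate suffices and keeps the code-existence computation elementary.)
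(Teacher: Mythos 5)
Your proof is correct and follows essentially the same route as the paper: take $A$ to be the generator matrix of an $(n,k,d)_q$-code with $q^k>2^n$ and $d=\Theta(n/\log n)$, whose existence is certified by the Gilbert(--Varshamov) bound, pick $\overline b$ outside the 0-1 image, and then observe that puncturing fewer than $d$ columns preserves full row rank, so the restricted system stays solvable over $\F_q$ and every leaf must lie at depth at least $d$. The only differences from the paper are cosmetic: you spell out the full-row-rank-via-minimum-distance step as an explicit ``short codeword'' contradiction, you add the (correct, but unstated in the paper) remark that a decision tree exists because every total 0-1 assignment yields an $\F_q$-unsatisfiable restriction, and your parameter bookkeeping (choosing $k$ minimal with $q^k>2^n$ and $c<\log_2 q-1$) is phrased slightly differently from the paper's fixed $d=n/(10\log n)$, $k=n/\log q+1$, but both lead to the same Gilbert-bound verification and the same $2^{\Omega(n/\log n)}$ conclusion.
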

\begin{proof}
We construct the matrix $A$ as a generator matrix of a linear $(n,k,d)_q=(n,\frac{n}{\log q}+1,\Omega(\frac{n}{\log n}))_q$ error-correcting code 
(Definition~\ref{def:ECC}). 

The condition $k>\frac{n}{\log q}$, which this code satisfies, assures that $q^k>2^n$ and therefore there exists $\overline{b}\in \F_q^k$ such that
$A\, \overline{x}=\overline{b}$ is 0-1 unsatisfiable. 

Note that depths of all leaves in any decision tree for $A\, \overline{x}=\overline{b}$ are at least $d$. Indeed, if $k<d$ variables are substituted
at $v$ by $\rho_v$, then the minimal distance of the code, generated by $A\rst_{\rho_v}$, is at least $d-k$ and, in particular, $A\rst_{\rho_v}$ has
full rank, therefore $v$ is not a leaf. Thus any decision tree for $A\, \overline{x}=\overline{b}$ has size at least $2^d=2^{\Omega(\frac{n}{\log n})}$.

The existence of such a code is guaranteed by the Gilbert bound (Theorem~\ref{thm:Gilbert}). Recall that the Gilbert bound claims the existence of a linear $(n,k,d)_q$ code whenever
$$\sum\limits_{i=1}^d\left(\begin{array}{c}n \\ i\end{array}\right)\cd (q-1)^i<q^{n-k+1}$$
holds. In our case, if we assign $d=\frac{n}{10\log n}$:
\begin{multline*} 
\sum\limits_{i=1}^d\left(\begin{array}{c}n \\ i\end{array}\right)\cd (q-1)^i< d\cd q^{\frac{d\log n}{\log q}}\cd q^d
\leq \frac{n}{10\log n}\cd q^{n(\frac{1}{10\log q}+\frac{1}{\log n})} < q^{n(1-\frac{1}{\log q})+1}\,.
\end{multline*}
\end{proof}

\section{Tree-Like Lower Bounds}\label{LBTL}

\subsection{Nondeterministic Linear Decision Trees}\label{sec:NLDT}
In this section we extend the classical correspondence between tree-like resolution refutations and decision trees (cf.~\cite{BKS04}) to tree-like \reslin{R} and tree-like \reslinsw{R}.
We define \emph{nondeterministic linear decision trees} (NLDT), which generalize parity decision trees, proposed in \cite{IS14} for $R=\F_2$,
to arbitrary rings.
We shall use these trees in the sequel to establish some of our upper and lower bounds (though not for our dag-like lower bounds). 

Let $\phi$ be a set of linear clauses (that we wish to refute) and $\Phi$ a set of linear non-equalities over $R$ (that we take as assumptions).
Consider the following two decision problems: 

\begin{itemize}
  \item[DP1] Assume $\Phi\models\neg\phi$. Given a satisfying boolean  assignment $\rho$ to $\Phi$, determine which clause $C\in\phi$ is violated by $\rho$
             by making queries of the form: which of $f|_{\rho}\neq 0$ or $g|_{\rho}\neq 0$ hold for linear forms $f,g$ in case $f|_{\rho}+g|_{\rho}\neq 0$.
  \item[DP2] Similar to DP1, only that we  assume $\Phi\models_R\neg\phi$, and given $R$-valued assignment $\rho$, satisfying $\Phi$, we ask to find a clause $C\in\phi$
             falsified by $\rho$.
\end{itemize}

Below we define NLDTs of types $\DTsw{R}$ and $\DT{R}$, which provide solutions to DP1 and DP2, respectively. The root of a tree is labeled with a system $\Phi$, the edges in a tree are labeled with linear
non-equalities of the form $f\neq 0$ and the leaves are labeled with clauses $C\in\phi$. Informally, at every node $v$ there is a set $\Phi_v$
of all \emph{learned} 
non-equalities, which is
the union of $\Phi$ and
 the set of non-equalities along the path from the root to the node.
 If $v$ is
an internal node, two outgoing edges $f\neq 0$ and $g\neq 0$ define a query to be made at $v$, where $f+g\neq 0$ is a consequence of
$\Phi_v$. If $v$ is a leaf, then $\Phi_v\cup\Phi$ contradicts a clause $C\in \phi$.

Starting from the root, based on the assignment $\rho$, we go along a path, from the root to a leaf, by choosing in each node to go along the left edge $f\neq 0$ or the right edge $g\neq 0$, depending on whether $f|_{\rho}\neq 0$ or $g|_{\rho}\neq 0$. Note that  $f|_{\rho}\neq 0$ and $g|_{\rho}\neq 0$ may not be mutually exclusive, and this is why the decision made in each node may be \emph{nondeterministic}.    


\begin{definition}[Nondeterministic linear decision tree NLDT; $\DT{R}\label{DTdef}$, $\DTsw{R}$] 
Let $\phi$ be a set of linear clauses and $\Phi$ be a set of linear non-equalities over a ring $R$.
A nondeterministic linear decision tree $T$ of
type $\DT{R}$ and of type $\DTsw{R}$ for $(\phi,\Phi)$ is a binary rooted tree, where every edge is labeled with some linear non-equality $f\neq0$,
in such a way that the conditions below hold.
In what follows, for a node $v$, we denote by $\Phi_{r\leadsto v}$ the set of non-equalities along the path from the root $r$ to $v$ and by 
$\Phi_v$ the set $\Phi_{r\leadsto v}\cup\Phi$. We say that $\Phi_v$ is
the set of \emph{learned} non-equalities at $v$.  
\begin{enumerate}
\item Let $v$ be an internal node. Then $v$ has two outgoing edges labeled by linear non-equalities $f_v\neq 0$ and $g_v\neq 0$, such that:
\begin{itemize}
    \item If $T\in \DT{R}$, then $\alpha f_v+\beta g_v\neq 0\in \Phi_v\cup {\{a\neq 0\;|\; a\in R\setminus 0\}}$ for some $\alpha,\beta\in R$.

    \item If $T\in \DTsw{R}$, then $\Phi_v\models \alpha f_v+\beta g_v\neq 0$ for some $\alpha,\beta\in R$.
\end{itemize}
 
\item\label{leafCond} A node $v$ is a leaf if
there is a linear clause $C\in\phi\cup\{0=0\}$ which is violated by $\Phi_v$ in the following sense:
        \begin{itemize}
         \item  If $T\in \DT{R}$, then $\neg C\subseteq \Phi_v\cup\{a\neq 0\; |\;a\in R\setminus 0\}$.
         \item If  $T\in\DTsw{R}$, then $\Phi_v\models \neg C$.       
        \end{itemize}
%
\end{enumerate}
\end{definition}

In case $\Phi$ is empty, we sometimes simply write that the NLDT is for $\phi$ instead of $(\phi,\emptyset)$.

Assume $\Phi\models\neg\phi$. Then an NLDT for $(\phi\cup\{x=0\vee x=1\,|\, x\in vars(\phi)\},\Phi)$ of type $\DT{R}$ can be converted into an NLDT of type $\DTsw{R}$
for $(\phi,\Phi)$ by truncating all maximal subtrees with all leaves from $\{x=0\vee x=1\,|\, x\in vars(\phi)\}$ and marking their roots with arbitrary clauses from $\phi$.

Below we give several examples (and basic properties) of NLDTs.

\para{Example 1}
Let $\phi$ be a set of clauses, representing unsatisfiable CNF. Then any standard decision tree on boolean  variables is an NLDT for 
$\phi\cup\{x=0\vee x=1\,|\,x\in vars(\phi)\}$ of type $\DT{R}$, where
a branching on the value of a variable $x$ is realized by branching on $(1-x)+x\neq 0$ to either  $1-x\neq 0$ or $x\neq 0$.
This is illustrated by (the proof of) the following proposition:

\begin{proposition}\label{univDT}
If $\Phi$ is a set of linear non-equalities and $\phi$ is a set of linear clauses over $\R$ such that $\Phi\models \neg \phi$, then there exists a $\DT{R}$
tree for $(\phi\cup\{x=0\vee x=1\,|\,x\in vars(\phi\cup\{\neg\Phi\})\},\Phi)$ of size $O(2^n|\Phi|)$, where $n=|vars(\phi\cup\{\neg\Phi\})|$.
\end{proposition}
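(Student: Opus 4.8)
The plan is to build the tree by recursion on $n=\lvert vars(\phi\cup\{\neg\Phi\})\rvert$, branching on one variable at a time and exploiting the trivial identity $x+(1-x)=1$, which is exactly what Example 1 alludes to.

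For $n=0$ every linear form occurring in $\phi\cup\Phi$ is a constant. Since $\Phi\models\neg\phi$ over the (unique, empty) $0$-$1$ assignment, either some $g\neq 0$ in $\Phi$ actually has $g=0$, in which case the single node is already a leaf witnessed by $0=0$ (as $\neg(0=0)=\{0\neq 0\}\subseteq\Phi$), or some clause $C=\bigvee_j(d_j=0)\in\phi$ has all $d_j$ equal to nonzero constants, so $\neg C\subseteq\{a\neq 0\mid a\in R\setminus 0\}$ and the node is a leaf labelled $C$. For the inductive step I would pick any $x\in vars(\phi\cup\{\neg\Phi\})$ and make the root an internal node with outgoing edges labelled $x\neq 0$ and $1-x\neq 0$: this is a legal $\DT{R}$ node because $1\cdot x+1\cdot(1-x)=1$, so $\alpha f_v+\beta g_v\neq 0$ lies in $\{a\neq 0\mid a\in R\setminus 0\}$ with $\alpha=\beta=1$.

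Below the edge $x\neq 0$ the subtree should solve $(\phi\rst_{x\leftarrow 1},\Phi\rst_{x\leftarrow 1})$, which still satisfies $\Phi\rst_{x\leftarrow 1}\models\neg(\phi\rst_{x\leftarrow 1})$ over $0$-$1$ assignments and has at most $n-1$ variables; by induction it admits a $\DT{R}$ tree $T_1$ of size $O(2^{n-1}\lvert\Phi\rvert)$, and symmetrically below $1-x\neq 0$ one gets $T_0$ for $\phi\rst_{x\leftarrow 0}$. The one technical point — and the place I expect to be the main obstacle — is that below $x\neq 0$ we have only \emph{learned the non-equality} $x\neq 0$, whereas $T_1$ was built over the \emph{substituted} system, and $\DT{R}$ is purely syntactic (it cannot use $x\neq 0\models x=1$). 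Concretely, an edge of $T_1$ may be justified by a non-equality $g\rst_{x\leftarrow 1}\neq 0$ inherited from $\Phi\rst_{x\leftarrow 1}$, where $g=ax+g'\in\Phi$ with $g'$ free of $x$, while only $g\neq 0$ (and $x\neq 0$) is literally available. I would repair this by inserting, right below the edge $x\neq 0$ and before $T_1$, a chain of ``conversion'' nodes: for each such $g$ a node with left edge $(a+g')\neq 0$ (i.e.\ $g\rst_{x\leftarrow 1}\neq 0$) and right edge $(x-1)\neq 0$, which is legal since $1\cdot(a+g')+a\cdot(x-1)=ax+g'=g$ and $g\neq 0$ is learned. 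On any assignment consistent with $x\neq 0$ the left branch is available and records $g\rst_{x\leftarrow 1}\neq 0$; the right branch records $x-1\neq 0$, so together with the already-learned $x\neq 0$ the boolean axiom $(x=0)\vee(x=1)$ is violated and that child is immediately a leaf. The same device, applied at each leaf of $T_1$, converts a restricted clause label $C\rst_{x\leftarrow 1}$ back to the clause $C\in\phi$ by turning each learned $f_j\rst_{x\leftarrow 1}\neq 0$ into $f_j\neq 0$ (using $1\cdot f_j+(-b_j)\cdot(x-1)=f_j\rst_{x\leftarrow 1}$ where $f_j=b_jx+f_j'$); leaves of $T_1$ already labelled by boolean axioms of other variables are kept verbatim. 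The edge $1-x\neq 0$ is handled symmetrically with $x\leftarrow 0$.

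Correctness is then a routine induction: any $0$-$1$ assignment $\rho$ satisfying $\Phi$ traces a path to a leaf whose clause it violates, because at a variable-branching node at least one of $x\neq 0$, $1-x\neq 0$ holds under $\rho$; at a conversion node the substituted edge is available since $\rho$ agrees with the substitution already forced on the current branch; and, being a total boolean assignment satisfying $\Phi$, $\rho$ falsifies $\phi$ (as $\Phi\models\neg\phi$), so it ends at a leaf labelled by a clause of $\phi$ it violates. For the size, each of the two children of the root carries one recursively built tree together with $O(\lvert\Phi\rvert)$ conversion nodes in front of it (and only a bounded number of conversion nodes per leaf), so summing conversion nodes by recursion depth gives $\sum_{k<n}2^{k+1}\cdot O(\lvert\Phi\rvert)=O(2^n\lvert\Phi\rvert)$ overall, which matches the claimed bound. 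Once the conversion gadgets and the ``boolean-axiom shortcut'' that immediately closes their off-branches are in place, nothing else is more than bookkeeping.
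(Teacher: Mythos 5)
Your proposal is correct and takes essentially the same approach as the paper's own: brute-force branch on all $n$ variables (each off-branch immediately giving a boolean-axiom leaf since it records both $x\neq 0$ and $x-1\neq 0$), and use chains of auxiliary branching nodes, each justified by an $\alpha f+\beta g$ recombination against an already-learned non-equality, to syntactically transport the violated $g\neq 0\in\Phi$ or clause $C\in\phi$ into $\Phi_v$. The paper simply organizes this non-recursively — it first constructs the full $2^{n}$-leaf branching tree $T_0$, obtaining $\Phi_v=\{x_i\neq\nu_i\}_{i\in[n]}\cup\Phi$ at each leaf, and then hangs a single substitution chain (your conversion gadget run once per variable per disjunct) below each leaf — whereas you interleave the same chain with the branching by recursing on the restricted instance and re-converting labels at leaves, which is a purely organizational difference.
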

\begin{proof}
Let $vars(\phi\cup\{\neg\Phi\})=\{x_1,\dots,x_n\}$ and fix an ordering on these variables. Construct a tree $T_0$ with $2^n$ nodes, that  branches on
$x_1,\dots,x_n$, in this order. Thus, in  every leaf $v$ of $T_0$ a total assignment to the variables is determined (i.e., $\Phi_v=\{x_i\neq \nu_i\}_{i\in [n]}\cup\Phi$ for some $\nu_i\in\{0,1\}$). 
Since $\Phi\models \neg\phi$, this assignment violates either some clause $C=(f_1=0\vee\dots\vee f_m=0)$ in $\phi$ or some non-equality $g\neq 0$ in $\Phi$. 
We  augment $T_0$ to $T$ by attaching a subtree to every leaf $v$ of $T_0$ depending on whether the former or latter condition holds for $v$, as follows:

\case 1 $\{x_i\neq \nu_i\}_{i\in [n]}\models \neg C$. We attach a subtree to $v$ that makes $m$  sequences of branches as follows. If $f_i=a_1x_1+\ldots+a_nx_n+b$ then 
$a_1(1-\nu_1)+\ldots+a_n(1-\nu_n)+b\neq 0$ holds and the $i$th sequence is the following sequence of ``substitutions'':
$(a_1x_1+a_2(1-\nu_2)+\ldots+a_n(1-\nu_n)+b)+(a_1(1-\nu_1)-a_1x_1)\neq 0$ to $a_1x_1+a_2(1-\nu_2)+\ldots+a_n(1-\nu_n)+b\neq 0$ and $a_1(1-\nu_1)-a_1x_1\neq 0$,
\ldots, $(a_1x_1+\ldots+a_{n-1}x_{n-1}+a_n(1-\nu_n)+b)+(a_n(1-\nu_n)-a_nx_n)\neq 0$ to $f_i\neq 0$ and $a_n(1-\nu_n)-a_nx_n\neq 0$. All the right branches
lead to nodes $u$ such that $\{x_i\neq 0, x_i\neq 1\}\subseteq \Phi_u$ for some $i\in[n]$ and thus they satisfy the $\DT{R}$ leaf condition in Definition \ref{DTdef}. Such a sequence indeed performs substitutions: the edge to the leftmost node is $f_i\neq 0$ and as we go upwards, we 
apply the substitutions $x_n\leftarrow 1-\nu_n$, \ldots, $x_1\leftarrow 1-\nu_1$ to this non-equality.

In the leftmost node $w$ in the end of the $m$th sequence, $\{f_1\neq 0,\dots,f_m\neq 0\}\subseteq \Phi_w$ holds and thus again $C$ is violated at $w$
in the sense of  Definition~\ref{DTdef} and therefore $w$ is a legal $\DT{R}$-leaf.

\case 2 $\{x_i\neq \nu_i\}_{i\in [n]}\models g=0$, where $g\neq 0\in \Phi_v$. Let $g=a_1x_1+\ldots+a_nx_n+b$. Attach to $v$ a subtree that makes the following
branches: $(a_1(1-\nu_1)+a_2x_2+\ldots+a_nx_n+b)-(a_1(1-\nu_1)-a_1x_1)\neq 0$ to $(a_1(1-\nu_1)+a_2x_2+\ldots+a_nx_n+b)\neq 0$ and
$a_1(1-\nu_1)-a_1x_1\neq 0$,\ldots, $(a_1(1-\nu_1)+\ldots+a_{n-1}(1-\nu_{n-1})+a_n(1-\nu_n)+b)-(a_n(1-\nu_n)-a_nx_n)\neq 0$ to $1\neq 0$ and $a_1(1-\nu_1)-a_1x_1\neq 0$. All leaves of the subtree
satisfy the condition for $\DT{R}$ leaves in Definition~\ref{DTdef}.

The tree $T$ is a $\DT{R}$ tree for $(\phi,\Phi)$.
\end{proof}

\para{Example 2}
Let $\phi$ be as in Example 1. \textit{Parity decision trees}, as defined in \cite{IS14}, are NLDTs for $\phi$ of type $\DTsw{\F_2}$:
branching on the value of an $\F_2$-linear form $f$ is realized by branching from  $(1-f)+f\neq 0$ to $1-f\neq 0$ and $f\neq 0$. And the converse also holds: a branching
of $f+g\neq 0$ to $f\neq 0$ and $g\neq 0$, where, say, $f$ is a non-constant $\F_2$-linear form, is equivalent to branching on the value of $f$.

\para{Example 3}
Let $\phi=\{f_1=0,\dots,f_m=0\}$, where $f_1,\dots,f_m$ are $R$-linear forms such that $f_1+\ldots+f_m=1$. Then a polynomial-size NLDT of type $\DT{R}$ for $\phi$
makes the following branchings, where all right edges lead to a leaf: $(f_1+\ldots+f_{m-1})+f_m\neq 0$ (this is just $1\neq 0$) to $f_1+\ldots+f_{m-1}\neq 0$
and $f_m\neq 0$, \ldots, $f_1+f_2\neq 0$ to $f_1\neq 0$ and $f_2\neq 0$.

\bigskip

We now show the equivalence between NLDTs and tree-like \reslin{R} proofs.

\begin{theorem}\label{treeDTequiv}
Let $\phi$ be a set of linear clauses over a ring \R\ and $\Phi$ be a set of linear non-equalities over \R. Then, there exist decision trees
$\DT{R}$ (resp.~$\DTsw{R}$)  for $(\phi\cup\{x=0\vee x=1\,|\,x\in vars(\phi)\},\Phi)$ (resp. $(\phi,\Phi)$) of size $s$ iff there exist  \treslin{R}
(resp.~\treslinsw{R}) derivations of the clause $\neg\Phi=\bigvee\nolimits_{f\neq 0\in \Phi}f=0$ from $\phi$ of size $O(s)$.
\end{theorem}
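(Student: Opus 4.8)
The plan is to replay the classical correspondence between tree-like resolution and Boolean decision trees in our richer setting, proving both directions and both variants (\treslin{R} $\leftrightarrow$ $\DT{R}$ and \treslinsw{R} $\leftrightarrow$ $\DTsw{R}$) in parallel, since they differ only in whether ``$\,\in\Phi_v\cup\{a\neq 0 : a\in R\setminus 0\}$'' is replaced by semantic entailment ``$\Phi_v\models\cdot\,$'' and whether the Boolean axioms $x=0\vee x=1$ are present. First I would record a \emph{monotonicity lemma}: if $T$ is a valid $\DT{R}$ (resp.\ $\DTsw{R}$) tree for $(\psi,\Psi)$ and $\Psi\subseteq\Psi'$ (resp.\ $\Psi'\models\Psi$), then $T$ with its root relabelled by $\Psi'$ is still a valid tree for $(\psi,\Psi')$. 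This is immediate from Definition~\ref{DTdef}: passing from $\Psi$ to $\Psi'$ only enlarges (resp.\ only strengthens) every learned set $\Phi_v$, and all internal-node and leaf conditions are preserved under enlarging/strengthening $\Phi_v$. This lemma is what lets us glue subtrees together.

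\textbf{From NLDTs to proofs.} Given a tree $T$ for $(\phi',\Phi)$, where $\phi'=\phi\cup\{x=0\vee x=1 : x\in vars(\phi)\}$ in the $\DT{R}$ case and $\phi'=\phi$ in the $\DTsw{R}$ case, attach to each node $v$ the clause $D_v:=\neg\Phi_v=\bigvee_{g\neq 0\in\Phi_v}(g=0)$ and construct a tree-like derivation of $D_v$ from $\phi'$ by bottom-up induction on $T$. At a leaf $v$ labelled $C\in\phi'\cup\{0=0\}$: in the syntactic case $\neg C\subseteq\Phi_v\cup\{a\neq 0 : a\in R\setminus 0\}$, so from the axiom $C$ discard its false constant disjuncts by simplification to obtain a sub-clause of $D_v$ and weaken to $D_v$; in the semantic case $\Phi_v\models\neg C$ gives $C\models D_v$, so $D_v$ follows from $C$ by one semantic weakening. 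At an internal node $v$ with children reached by edges $f_v\neq 0$ and $g_v\neq 0$ one has $\Phi_{l(v)}=\Phi_v\cup\{f_v\neq 0\}$ and $\Phi_{r(v)}=\Phi_v\cup\{g_v\neq 0\}$, hence $D_{l(v)}=D_v\vee(f_v=0)$ and $D_{r(v)}=D_v\vee(g_v=0)$ as clauses; resolving these with the coefficients $\alpha,\beta$ witnessing the node condition yields $D_v\vee(\alpha f_v+\beta g_v=0)$, which equals $D_v$ after a contraction (if $(\alpha f_v+\beta g_v=0)$ already occurs in $D_v$), a simplification (if $\alpha f_v+\beta g_v$ is a nonzero constant), or a semantic weakening (in the $\DTsw{R}$ case, since $\Phi_v\models\alpha f_v+\beta g_v\neq 0$ forces $D_v\vee(\alpha f_v+\beta g_v=0)\models D_v$). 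Since $\Phi_{r\leadsto r}=\emptyset$ for the root $r$, we get $D_r=\neg\Phi$; the derivation inherits the tree shape of $T$ and each node adds $O(1)$ lines, so its size is within a constant factor of $|T|$.

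\textbf{From proofs to NLDTs.} For the converse, induct on a tree-like \treslin{R} (resp.\ \treslinsw{R}) derivation of $\neg\Phi$ from $\phi'$, proving that every derived clause $D$ has an NLDT $T_D$ of type $\DT{R}$ (resp.\ $\DTsw{R}$) for $(\phi',\neg D)$ whose size is linear in the size of the sub-derivation of $D$; the case $D=\neg\Phi$ (where $\neg D=\Phi$) then gives the theorem. An axiom $C\in\phi'\cup\{0=0\}$ gives the single-node tree labelled $C$, valid since $\neg C\subseteq\neg C$ (resp.\ $\neg C\models\neg C$). If $D=C_1\vee C_2\vee(\alpha f+\beta g=0)$ is obtained by resolving $C_1\vee(f=0)$ with $C_2\vee(g=0)$, let $T_D$ have root labelled $\neg D$ and internal, with left edge $f\neq 0$ to a copy of $T_{C_1\vee(f=0)}$ and right edge $g\neq 0$ to a copy of $T_{C_2\vee(g=0)}$; the node condition holds since $(\alpha f+\beta g)\neq 0\in\neg D$, and because $\neg C_i\subseteq\neg D$ the two subtrees stay valid in their new context by the monotonicity lemma. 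A weakening $D\supseteq D'$ (resp.\ semantic weakening $D'\models D$) needs no new node, since $\neg D\supseteq\neg D'$ (resp.\ $\neg D\models\neg D'$) makes $T_{D'}$ already an NLDT for $(\phi',\neg D)$. A simplification deriving $D$ from $D\vee(a=0)$, $a\in R\setminus 0$, is also free in the $\DT{R}$ case: $T_{D\vee(a=0)}$ is a tree for $(\phi',\neg D\cup\{a\neq 0\})$, and deleting the tautological constant non-equality $a\neq 0$ from the assumption set breaks nothing, since wherever it could have been used it is still available through the ``$\cup\{b\neq 0 : b\in R\setminus 0\}$'' slack in the node and leaf conditions. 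Boolean axioms are members of $\phi'$ in the $\DT{R}$ case and do not arise in \reslinsw{R}.

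The main obstacle is not any single step but keeping two bookkeeping layers straight: the ambient assumption set (which \emph{is} $\neg D$ and keeps changing as the derivation grows) versus the path sets $\Phi_{r\leadsto v}$, and arranging that every gluing operation only enlarges or strengthens the learned sets so that the monotonicity lemma applies — which is precisely why that lemma must be stated both in a set-inclusion form (for $\DT{R}$) and an entailment form (for $\DTsw{R}$). The remaining points are local: the simplification rule, which naively shrinks the assumption set but is harmless because constant non-equalities are ``for free'' in the syntactic conditions; the degenerate leaf label $0=0$, which only occurs over an unsatisfiable $\Phi_v$ and is handled trivially; and the size accounting, which is routine once one fixes the convention that the size of an NLDT counts its edge labels and, at each node, its learned set $\Phi_v$.
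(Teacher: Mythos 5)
Your proposal is correct and follows essentially the same approach as the paper: the NLDT-to-proof direction is the paper's construction (replace $v$ by $\neg\Phi_v$, use resolution plus contraction/simplification/semantic-weakening at internal nodes and weakening at leaves), and the proof-to-NLDT direction produces the same tree (resolution steps become branchings labeled $f\neq 0$ and $g\neq 0$; weakening and simplification steps are contracted away, with simplification absorbed by the $\{a\neq 0\,:\,a\in R\setminus 0\}$ slack). The only stylistic difference is that you package the verification of the converse direction as a bottom-up induction with an explicit monotonicity lemma, where the paper directly transforms the proof tree and checks the invariant $\neg C_u\subseteq\Phi_u$ (resp.\ $C_u\models\neg\Phi_u$) top-down; these are two presentations of the same argument.
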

\begin{proof}
$(\Rightarrow)$ 
Let $T_{\phi}$ be an NLDT of type $\DT{R}$ or $\DTsw{R}$ for $\phi$. We construct a \treslin{R} or
\treslinsw{R} derivation from $T_{\phi}$, respectively, as follows. Consider the tree of clauses $\pi_0$, obtained from $T_{\phi}$ by replacing every vertex $u$
 with the clause
$\neg\Phi_u$. This tree is not a valid tree-like derivation yet. We augment it to a valid derivation $\pi$ by appropriate insertions of 
applications of weakening and simplification rules. 

\medskip

\case 1 If $\neg \Phi_u\in\pi_0$ is a leaf, then $\Phi_u$ violates a clause $D\in\phi\cup\{0=0\}$.
By condition~\ref{leafCond} in Definition~\ref{DTdef},  $\neg\Phi_u$ must be a weakening of $D$  
(syntactic for $T_{\phi}\in\DT{R}$ and semantic for $T_{\phi}\in\DTsw{R}$) and we add $D$ as the only child of this node.

\medskip

\case 2 
Let $\neg \Phi_u\in\pi_0$ be an internal node with two outgoing edges labeled with $f_u\neq 0$ and $g_u\neq 0$. 

If $T_{\phi}\in\DT{R}$, then $\alpha f_u+\beta g_u\neq 0\in \Phi_u\cup {\{a\neq 0\, |\, a\in R\setminus 0\}}$.
Apply resolution to $\neg\Phi_{l(u)}=(\neg\Phi_u\vee f_u=0)$ and $\neg\Phi_{r(u)}=(\neg\Phi_u\vee g_u=0)$ to derive 
$\neg\Phi_u\vee \alpha f_u+\beta g_u=0$. In case $\alpha f_u+\beta g_u\neq 0\in\Phi_u$ this clause coincides with $\neg\Phi_u$
and no additional steps are required. In case $\alpha f_u+\beta g_u\neq 0\in {\{a\neq 0\, |\, a\in R\setminus 0\}}$ insert an application
of the simplification rule to get a derivation of $\neg\Phi_u$.

If $T_{\phi}\in\DTsw{R}$, $\Phi_u\models \alpha f_u+\beta g_u\neq 0$, we derive $\neg\Phi_u\vee\alpha f_u+\beta g_u=0$
from $\neg\Phi_{l(u)}=(\neg\Phi_u\vee f_u=0)$ and $\neg\Phi_{r(u)}=(\neg\Phi_u\vee g_u=0)$ by an application of the resolution rule and
then deriving $\neg\Phi_u$ by an application of the semantic weakening rule.
\bigskip

\noindent$(\Leftarrow)$
Conversely, assume $\pi$ is a \treslin{R} or a \treslinsw{R} derivation of a (possibly empty) clause $\mathcal{C}$ from $\phi$.
In what follows, when we say weakening we mean syntactic or semantic weakening depending on $\pi$ being  a \treslin{R} or a \treslinsw{R} derivation,
respectively.

Let the edges in the proof-tree of $\pi$ be directed from conclusion to premises.
We turn this proof-tree into a decision tree $T_{\pi}$ for $(\phi,\neg \mathcal{C})$ as follows. Every node of outgoing degree $2$ in the proof-tree
$\pi$ is a clause
obtained from its children by a resolution rule. For each  such node $C\vee D\vee (\alpha f+\beta g=0)$ we label its
outgoing edges to $C\vee f=0$ and $D\vee g=0$ with $f\neq 0$ and $g\neq 0$, respectively. 
We contract all unlabeled edges, which are precisely those corresponding to applications of weakening and simplification rules.
If $C_1,\dots,C_k$ is a  maximal (with
respect to inclusion) sequence of weakening and simplification rule applications (the latter occur only in \reslin{R} derivations), then we contract it to $C_k$.
 In this way we obtain the tree $T_{\pi}$, where every edge
is labeled with linear non-equality and every node $u$ is labeled with a clause $C_u$ such that if $f\neq 0$ and $g\neq 0$ are
labels of edges to the  left $l(u)$ and to the right $r(u)$ children respectively, then $C_u$ is a weakening and a simplification (the latter 
again in case of \reslin{R})
of the clause
$C\vee D\vee\alpha f+\beta g=0$ for some
$\alpha, \beta\in R$, such that $C_{l(u)}=(C\vee f=0)$, $C_{r(u)}=(D\vee g=0)$.

We now prove that $T_{\pi}$ is a valid decision tree of type $\DT{R}$ (respectively, $\DTsw{R}$) if $\pi$ is a \treslin{R}
derivation (respectively, \treslinsw{R} derivation). 
\medskip 

\case 1
Assume $\pi$ is \treslin{R} derivation. We prove inductively that for every node $u$ in $T_{\pi}$ we have $\neg C_u\subseteq \Phi_u$. 
\Base $u$ is the root $r$. We have $\Phi_r=\neg\mathcal{C}=\neg C_r$.
\induction For any other node $u$ assume $\neg C_p\subseteq \Phi_p\cup\{a\neq 0\,|\,a\in R\setminus 0\}$ holds for its parent node $p$. Let $f\neq 0$ be the 
label on the edge from $p$ to $u$. Then $C_u=(C\vee f=0)$ for some clause $C$ and $C_p$ must be of the form $(C\vee D)$
for some clause $D,$ and hence $\neg C_u\subseteq \neg C\cup\{f\neq 0\}\subseteq \neg C_p\cup\{f\neq 0\}\subseteq \Phi_p
\cup \{f\neq 0\} =\Phi_u $. 

Now we show that $T_{\pi}$ satisfies the conditions of Definition \ref{DTdef} for $\DT{R}$ trees.
\begin{itemize}
\item(Internal nodes) Let $u$ be an internal node of $T_{\pi}$ with outgoing edges labeled with $f\neq 0$ and $g\neq 0$.
$C_u$ must be both a weakening and a simplification of $(C\vee\alpha f+\beta g=0)$ for some $\alpha,\beta\in R$ and a linear clause $C$.
If $\alpha f+\beta g\neq 0\in\{a\neq 0\,|\,a\in R\setminus 0\},$ then the condition trivially holds, otherwise $\alpha f+\beta g=0$
cannot be eliminated via simplification and thus $\alpha f+\beta g\neq 0\in\neg C_u$ and $\neg C_u\subseteq \Phi_u$ imply
$\alpha f+\beta g\neq 0\in \Phi_u$ and the condition for internal nodes in Definition \ref{DTdef} is satisfied.
\item(Leaves) Let $u$ be a leaf of $T_{\pi}$. Then $C_u$ must be both a weakening and a simplification of some clause $C$ in 
$\phi\cup\{x=0\vee x=1\,|\,x\in vars(\phi)\}\cup\{0=0\}$,
that is $C_u=(C\vee D)$ for some
clause $D$. Therefore $\neg C_u\subseteq \Phi_u$ implies that $C$ is falsified by $\Phi_u$.
\end{itemize}

\case 2 
Assume $\pi$ is a \treslinsw{R} derivation. We prove inductively that for every node $u$ in $T_{\pi}$,  
$C_u\models \neg\Phi_u$ holds. 
\Base $u$ is the root $r$ and we have $\neg\Phi_r=\mathcal{C}= C_r$.
\induction  $u$ is a node which is not the root. If $C_p\models \neg\Phi_p$ holds for its parent $p$ and $f\neq 0$ is 
the label on the edge from $p$ to $u$, then $(C\vee D\vee \alpha f+\beta g=0)\models C_p$, $C_u=(C\vee f=0)$ for some $\alpha,\beta\in R$ a
linear form $g$ and some linear clauses $C,D$. Therefore, $C_u=(C\vee f=0)\models (C_p\vee f=0)\models (\neg\Phi_p
\vee f=0)=\neg\Phi_u$.

We now show that $T_{\pi}$ satisfies the conditions of Definition~\ref{DTdef} for $\DTsw{R}$ trees.
\begin{itemize}
\item(Internal nodes) Let $u$ be an internal node of $T_{\pi}$ with outgoing edges labeled with $f\neq 0$ and $g\neq 0$. Then 
$(C\vee\alpha f+\beta g=0)\models C_u$ for some $\alpha,\beta\in R$ and a linear clause $C$. Therefore $C_u\models \neg\Phi_u$ implies
$\Phi_u\models\alpha f+\beta g\neq 0$.
\item(Leaves) Let $u$ be a leaf of $T_{\pi}$. Then $C_u$ must be a weakening of some clause $C$ in $\phi\cup\{0=0\}$, that is, $C_u=(C\vee D)$ for some
clause $D$. Therefore $C_u\models \neg\Phi_u$ implies that $C$ is falsified by $\Phi_u$.
\end{itemize}
\end{proof}

An immediate corollary is the following:

\begin{proposition}\label{univUB}
If $\phi\cup\{C\}$ is a set of linear clauses over a ring $R$ such that $\phi\models C$, then there exists a \treslin{R} derivation of $C$ from $\phi$ of size $O(2^n|C|)$,
where $n=\big|vars(\phi\cup\{C\})\big|$.
\end{proposition}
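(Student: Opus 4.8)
The plan is to obtain this as a direct composition of Proposition~\ref{univDT} with the forward ($\Rightarrow$) direction of Theorem~\ref{treeDTequiv}: the former manufactures a small $\DT{R}$ tree, and the latter converts it into a \treslin{R} derivation of the desired clause. So the proof should be a short bookkeeping argument reconciling the hypotheses of the two statements.

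First I would put the hypothesis into the shape required by Proposition~\ref{univDT}. Take $\Phi := \neg C$, the set of linear non-equalities obtained by negating the disjuncts of $C$, so that $\neg\Phi = C$ as a linear clause, $vars(\phi\cup\{\neg\Phi\}) = vars(\phi\cup\{C\})$, and $|\Phi| = O(|C|)$. The one thing to verify is that $\phi\models C$ translates into $\Phi\models\neg\phi$ in the sense used in Proposition~\ref{univDT} and in problem DP1: a $0$-$1$ assignment satisfies every non-equality in $\Phi$ precisely when it falsifies $C$, and then, by $\phi\models C$, it must falsify some clause of $\phi$. Hence Proposition~\ref{univDT} applies with $\Phi = \neg C$ and yields a $\DT{R}$ tree $T$ for $\bigl(\phi\cup\{x=0\vee x=1\mid x\in vars(\phi\cup\{C\})\},\,\Phi\bigr)$ of size $O(2^n|\Phi|) = O(2^n|C|)$, where $n = |vars(\phi\cup\{C\})|$.

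Next I would feed $T$ into Theorem~\ref{treeDTequiv}. Setting $\phi^\ast := \phi\cup\{x=0\vee x=1\mid x\in vars(\phi\cup\{C\})\setminus vars(\phi)\}$ we have $vars(\phi^\ast) = vars(\phi\cup\{C\})$, so that $\phi^\ast\cup\{x=0\vee x=1\mid x\in vars(\phi^\ast)\}$ is exactly the clause set over which $T$ is an NLDT; the $\Rightarrow$ direction of Theorem~\ref{treeDTequiv} then produces a \treslin{R} derivation of $\neg\Phi = C$ from $\phi^\ast$ of size $O(|T|) = O(2^n|C|)$. Since every clause in $\phi^\ast\setminus\phi$ is a boolean axiom $x=0\vee x=1$, derivable in one step in \treslin{R}, this is already a \treslin{R} derivation of $C$ from $\phi$, of size $O(2^n|C|)$ up to the $O(n)$ overhead of those extra leaves. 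The argument is essentially pure bookkeeping; the only subtle point is that $C$ may mention variables absent from $\phi$, so one must ensure the boolean axioms over all of $vars(\phi\cup\{C\})$ — not merely $vars(\phi)$ — are available both when invoking Proposition~\ref{univDT} and when invoking Theorem~\ref{treeDTequiv}, which is exactly what the passage through $\phi^\ast$ arranges.
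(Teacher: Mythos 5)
Your proof is correct and follows exactly the paper's route: instantiate Proposition~\ref{univDT} with $\Phi=\neg C$ to obtain a $\DT{R}$ tree for $\bigl(\phi\cup\{x=0\vee x=1\mid x\in vars(\phi\cup\{C\})\},\,\neg C\bigr)$, then convert it into a \treslin{R} derivation of $\neg(\neg C)=C$ via the forward direction of Theorem~\ref{treeDTequiv}. You are in fact slightly more careful than the paper, which silently identifies the clause set from Proposition~\ref{univDT} with the one Theorem~\ref{treeDTequiv} expects; your passage through $\phi^\ast$ closes that small gap when $C$ mentions variables absent from $\phi$.
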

\begin{proof}
By Proposition~\ref{univDT} there exists a $\DT{R}$ tree for $(\phi\cup\{x=0\vee x=1\,|\,x\in vars(\phi\cup\{C\})\},\neg C)$ of size $O(2^n|C|)$ and, thus, by Theorem~\ref{treeDTequiv} there exists
a \treslin{R} derivation of $C$ from $\phi$ of size $O(2^n|C|)$.
\end{proof}

We construct an NLDT to prove the following upper bound:

\begin{proposition}\label{ImUB}
Let $R$ be a finite ring, $f=a_1x_1+\dots+a_nx_n$ a linear form over $R$, $s_f$ the size of $\imAx{f}$ (i.e., the size of its encoding) and $d_f=|im_2(f)|$. Then, there exists a \treslin{R}
derivation of $\imAx{f}$ of size $O(s_fn^{2d_f})$.
\end{proposition}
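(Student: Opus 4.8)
The plan is to go through the NLDT correspondence of Theorem~\ref{treeDTequiv}. Since $\neg\,\imAx{f}=\{f-A\neq 0 : A\in im_2(f)\}$ is a set of linear non-equalities whose negation is precisely the clause $\imAx{f}$, it suffices to construct a nondeterministic linear decision tree $T$ of type $\DT{R}$ for $(\{x_i=0\vee x_i=1\}_{i\in[n]},\ \neg\,\imAx{f})$ of size $O(s_f n^{2d_f})$; Theorem~\ref{treeDTequiv} then yields a \treslin{R} derivation of $\imAx{f}$ from the boolean axioms of the same size, up to a constant factor. The leaves of such a tree $T$ must each syntactically violate either the clause $0=0$ or some boolean axiom; in practice every branch will be closed by reaching a partial assignment which, plugged into one of the learned non-equalities via the substitution gadget from the proof of Proposition~\ref{univDT}, exposes a boolean violation.

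I would build $T$ by a recursion that eliminates variables one at a time. The structural fact that makes everything polynomial is that every linear form encountered is a restriction of a prefix sub-form $g=a_1x_1+\dots+a_kx_k$ of $f$, and $g(\epsilon)=f(\epsilon,0,\dots,0)\in im_2(f)$, so $|im_2(g)|\le d_f$; moreover the images of the prefix sub-forms form an increasing chain inside the finite ring $R$, hence stabilize after at most $d_f$ steps. Branching on $x_{k+1}$ at a node whose learned set is, after deleting boolean-tautological disjuncts, of the form $\{g-A\neq 0:A\in im_2(g)\}$, produces two children of the same shape for $g|_{x_{k+1}=0}$ and $g|_{x_{k+1}=1}$, the latter being a translate of the former; once the image has stabilized, adding a further variable does not change the image and the bijection $A\mapsto A+a_{k+1}$ on $im_2(g)$ lets the two subtrees be organized so that every clause stays of width $O(d_f)$ and the recursion does not blow up. Carrying the bookkeeping through — at most $n$ elimination rounds, each round contributing a polynomial-in-$n$ factor whose exponent is controlled by the (bounded) image size, together with $O(d_f)$ substitution micro-branchings per round, each of size $O(|f|)$ — yields the claimed bound $O(s_f n^{2d_f})$, which is polynomial in $n$ since $d_f\le|R|$ is a constant.

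The main obstacle is exactly controlling the tree-like blowup. A \emph{dag-like} derivation of $\imAx{f}$ is immediate (Proposition~\ref{imf}), but it reuses each intermediate $\imAx{g_k}$ twice — once to produce the $x_{k+1}=0$ case and once for the $x_{k+1}=1$ case — so a naive tree-like unfolding costs $2^{\Omega(n)}$. The real content is to reorganize the construction, most transparently in the NLDT picture, so that the effective branching is governed by the number of distinct partial-sum values rather than by $2^k$, and this is the step that genuinely uses finiteness of $R$, via the stabilization of the chain of images and the fact that all intermediate clauses have width $O(d_f)$. I expect the delicate points to be: (i) verifying that each substitution micro-branching is a legal $\DT{R}$ step and that its ``dead'' branch really does violate a boolean axiom, which requires choosing the multipliers carefully when $a_{k+1}$ is a zero divisor in $R$; and (ii) checking that the deletion of the boolean-tautological disjuncts $g|_\rho-A\neq 0$ with $A\notin im_2(g|_\rho)$ is carried out at no extra cost, so that the width bound — and hence the whole size accounting — stays intact.
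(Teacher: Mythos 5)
Your plan to go through the NLDT correspondence is the right frame (it is the paper's), but the internal construction you describe takes a genuinely different route, and I do not think it can deliver the stated bound. You propose a sequential, variable-by-variable elimination: branch on $x_1$, then on $x_2$, and so on, tracking the image of the prefix $g_k = a_1 x_1 + \dots + a_k x_k$. The stabilization observation you make is correct ($im_2(g_k)\subseteq im_2(g_{k+1})\subseteq im_2(f)$, so the chain of images stops growing after at most $d_f$ steps), but it bounds the number of distinct \emph{values}, not the number of distinct \emph{subtrees} you must build. An NLDT is a tree, not a dag: each time you branch on $x_{k+1}$ you must physically construct two subtrees, and the fact that they are translates of each other (by $A\mapsto A+a_{k+1}$) does not let you reuse one as the other. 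With $n$ sequential elimination rounds and a branching factor of at least $2$ per round, the tree has $\Omega(2^n)$ leaves regardless of how the images stabilize. There is no way to arrange the bookkeeping in your recursion so that the total drops to $\text{poly}(n)$; your own admission that ``a naive tree-like unfolding costs $2^{\Omega(n)}$'' describes exactly what your scheme produces.

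The missing idea is a \emph{balanced} divide-and-conquer, which is what the paper's proof actually does. Split $f=f^{(1)}+f^{(2)}$ where $f^{(1)}$ carries the first $\lfloor n/2\rfloor$ variables and $f^{(2)}$ the rest, and put $\Phi_r=\{f\neq A : A\in im_2(f)\}$ at the root. At each internal node $u$, branch on the non-equality $(f^{(1)}-A_1)+(f^{(2)}-A_2)\neq 0$ for some $A_i\in im_2(f^{(i)})$ not yet learned about $f^{(i)}$; this is legal in $\DT{R}$ because $(f^{(1)}-A_1)+(f^{(2)}-A_2)=f-(A_1+A_2)$ and $A_1+A_2\in im_2(f)$, hence this non-equality is already in $\Phi_r\subseteq\Phi_u$. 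After at most $2d_f$ such branchings, one of the two halves $f^{(i)}$ has its entire image forbidden, and you recurse on $f^{(i)}$ alone, which has at most $\lfloor n/2\rfloor+1$ variables. The crucial difference from your scheme is that the recursion depth is now $O(\log n)$ instead of $n$, and the $2^{2d_f}$ local branching factor per level gives $(2^{2d_f})^{O(\log n)}=n^{O(d_f)}$, matching the stated $O(s_f n^{2d_f})$. Finiteness of $R$ enters only to make $d_f\leq |R|$ a constant; the stabilization of prefix images, while true, plays no role and cannot substitute for the halving of the variable set.
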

\begin{proof}
We construct a decision tree of type $\DT{R}$ of size $O(s_fn^{2d_f})$ with the system $\Phi_r=\{f\neq A\}_{A\in im_2(f)}$ at its root $r$. By 
Theorem~\ref{treeDTequiv} this implies the existence of  a \treslin{R} proof of $\imAx{f}$ of the same size.

Let $f^{(1)}:=a_1x_1+\dots+a_{\lfloor\frac{n}{2}\rfloor}x_{\lfloor\frac{n}{2}\rfloor}$ and 
$f^{(2)}:=a_{\lfloor\frac{n}{2}\rfloor+1}x_{\lfloor\frac{n}{2}\rfloor+1}+\dots+a_nx_n$.
The decision tree for $\imAx{f}$ is constructed recursively as a tree of height $2d_f$, where a subtree  for $\imAx{f^{(1)}}$ or for 
$\imAx{f^{(2)}}$ is hanged from each leaf.
At every node $u$ of depth $d$ the system of non-equalities is of the form: $\Phi_u=\Phi_r\cup\Phi_u^{(1)}\cup\Phi_u^{(2)}$, where 
$\Phi_u^{(i)}\subseteq \{f^{(i)}\neq A\}_{A\in im_2(f^{(i)})},~i\in\{1,2\}$ and $|\Phi_u^{(1)}|+|\Phi_u^{(2)}|=d$. A node $u$ is a leaf if and
only if $\Phi_u^{(i)}=\{f^{(i)}\neq A\}_{A\in im_2(f^{(i)})}$ for some $i\in\{1,2\}$. The branching at an internal node $u$ is made by the non-equality
$f^{(1)}-A_1+f^{(2)}-A_2\neq 0$, for some  $A_i\in im_2(f^{(i)})$ where $f^{(i)}-A_i\notin \Phi_u^{(i)},i \in \{1,2\}$. The size $s_n$ of this
tree can be upper bounded as follows: $s_n\leq 2^{2d_f}s_{\lfloor\frac{n}{2}\rfloor+1}+s_f2^{2d_f}=O(s_fn^{2d_f})$.
\end{proof}


\subsection{Prover-Delayer Games}\label{sec:PD-game}

The \textit{Prover-Delayer game} is an approach to obtain lower bounds on resolution refutations  introduced by  Pudl\'ak and Impagliazzo \cite{PI00}. The idea is that the non-existence of small decision trees, and hence small tree-like resolution refutations, for an unsatisfiable formula,
can be phrased in terms of the existence of a certain
strategy for  Delayer in a  game against  Prover, associated to the unsatisfiable formula. We define such games $G^R$ and $G^R_{sw}$ for
decision trees $\DT{R}$ and $\DTsw{R},$
respectively. Below we show (Lemma~\ref{gamesTrees}) that the existence of certain strategies for the Delayer in $G^R$ and $G^R_{sw}$ imply lower bounds on
the size of $\DT{R}$ and $\DTsw{R}$ trees, respectively.

\para{The game.} Let $\phi$ be a set of linear clauses and $\Phi_s$ be a set of linear non-equalities. Consider the following game between two parties
called Prover and Delayer. The game goes in rounds, consisting of one move of Prover
followed by one move of Delayer. The position in the game is determined by a system of linear non-equalities $\Phi$, which is extended by one non-equality after
every round. The starting position is $\Phi_s$.

In each round, Prover presents to  Delayer a possible branching $f\neq 0$ and  $g\neq 0$ over  
a linear non-equality $f+g\neq 0$, such that $f+g\neq 0\in\Phi\cup\{a\neq 0\,|\,a\in R\setminus 0\}$ or $\Phi\models f+g\neq 0$ in $G^R$ and $G^R_{sw}$,
respectively. After that, Delayer  chooses either $f\neq 0$ or $g\neq 0$ to be added to $\Phi$, or leaves the choice to
the Prover and thus earns a coin. The game $G^R$ finishes, when $\neg C \subseteq \Phi$ for some $C\in\phi\cup\{0=0\},$ and $G^R_{sw}$
finishes, when $\Phi\models \neg C$ for some clause $C\in \phi\cup \{0=0\}$.
\begin{lemma}\label{gamesTrees}
If there exists a strategy with a starting position $\Phi_s$ for Delayer in the game $G^R$ (respectively, $G^R_{sw}$) that guarantees at least $c$ coins on
a set of linear clauses $\phi$, then the size of a $\DT{R}$ (respectively $\DTsw{R}$) tree for $\phi$, with the system $\Phi_s$ in the root, must be at
least $2^c$.
\end{lemma}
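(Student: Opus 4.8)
The plan is to argue by a counting/weighting argument over the leaves of a given NLDT, showing that a Delayer strategy guaranteeing $c$ coins forces at least $2^c$ leaves. First I would fix an NLDT $T$ of type $\DT{R}$ (resp.\ $\DTsw{R}$) for $\phi$ with the system $\Phi_s$ at its root, and I would assume we have a Delayer strategy guaranteeing at least $c$ coins. The idea is to let the Delayer strategy and the tree $T$ play against each other: interpret $T$ itself as a Prover strategy. At an internal node $v$ with outgoing edges labeled $f_v\neq 0$ and $g_v\neq 0$, Prover presents the branching $f_v\neq 0$, $g_v\neq 0$ (this is a legal Prover move since, by Definition~\ref{DTdef}, $\alpha f_v+\beta g_v\neq 0\in \Phi_v\cup\{a\neq 0\mid a\in R\setminus 0\}$ in the $\DT{R}$ case, resp.\ $\Phi_v\models \alpha f_v+\beta g_v\neq 0$ in the $\DTsw{R}$ case). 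The Delayer responds according to its strategy, which determines how we descend in $T$: if the Delayer picks $f_v\neq 0$ we go to the left child, if it picks $g_v\neq 0$ we go to the right child, and if the Delayer defers the choice (earning a coin) we may go to \emph{either} child. Since the tree's leaf condition ($\neg C\subseteq \Phi_v$, resp.\ $\Phi_v\models \neg C$, for some $C\in\phi\cup\{0=0\}$) matches the game-over condition, every maximal play of this interaction ends at a leaf of $T$.

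The key step is to associate to each leaf $\ell$ of $T$ a ``probability'' $p_\ell$ of reaching it when the deferred choices are resolved uniformly at random: whenever the Delayer defers, flip a fair coin to pick the child. Then $\sum_\ell p_\ell = 1$ (this is just that a random root-to-leaf walk lands somewhere), where the sum is over leaves reachable under this process. For a play reaching leaf $\ell$, if $d_\ell$ denotes the number of deferrals (= coins earned by Delayer) along that play, then $p_\ell = 2^{-d_\ell}$ — because the only randomness is the $d_\ell$ independent fair coin flips, and each contributes a factor $1/2$; all other steps (where the Delayer commits) are forced. Now, since the Delayer strategy guarantees at least $c$ coins against \emph{every} Prover, in particular against the Prover induced by $T$ and for every resolution of the random choices, we have $d_\ell \geq c$ for every reachable leaf $\ell$, hence $p_\ell \leq 2^{-c}$. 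Therefore
\[
1 \;=\; \sum_{\ell \text{ reachable}} p_\ell \;\leq\; \sum_{\ell \text{ reachable}} 2^{-c} \;\leq\; (\#\text{ leaves of } T)\cdot 2^{-c},
\]
which gives that $T$ has at least $2^c$ leaves, and a fortiori size at least $2^c$.

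The main obstacle — and the point requiring care — is to make precise the claim that $d_\ell$ is well-defined and at least $c$ \emph{uniformly} over the randomness: different resolutions of the deferred choices might in principle lead to different numbers of further deferrals, or even to different leaves, so I would phrase the argument as: for \emph{any} fixed sequence of Prover-resolutions of the deferred choices, the resulting play is a legal play of the game in which Delayer follows its strategy, hence earns $\geq c$ coins, hence makes $\geq c$ deferrals; consequently the leaf it reaches has $p_\ell = 2^{-(\#\text{deferrals})} \leq 2^{-c}$. A secondary subtlety is to confirm that the induced Prover moves are always legal — i.e.\ that the side condition on $f_v+g_v$ from Definition~\ref{DTdef} is exactly the side condition Prover must satisfy in the game $G^R$ (resp.\ $G^R_{sw}$) at the current position $\Phi = \Phi_v$ — which holds because the position $\Phi$ reached in the game coincides with $\Phi_v$ by construction, and the two definitions impose the same constraint. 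Once these bookkeeping points are settled, the counting inequality above closes the proof.
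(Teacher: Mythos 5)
Your proof is correct and takes essentially the same approach as the paper: interpret $T$ as a Prover strategy, let the Delayer strategy play against it branching both ways at each deferral, and conclude that the guaranteed $c$ coins force at least $2^c$ leaves. The paper presents the counting as an explicit inductive embedding of a full binary tree of height $c$ into $T$, whereas you phrase it as a probability argument (weight each reachable leaf by $2^{-d_\ell}$, sum to $1$, each weight at most $2^{-c}$) in the style of Pudl\'ak--Impagliazzo; the two presentations are interchangeable.
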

\begin{proof}
Assume that  $T$ is a tree of type $\DT{R}$ (respectively, $\DTsw{R}$) for $\phi$. We define an embedding of the full binary tree $B_c$ of height $c$ to $T$ inductively as follows. 
We simulate Prover in the game $G^R$ (respectively, $G^R_{sw}$) by choosing branchings from $T$ and following to a subtree chosen by the Delayer until Delayer decides to earn a
coin and leaves the choice to the Prover or until the game finishes. In case we are at a position where Delayer earns a coin, and which corresponds to a vertex $u$ in $T$, we map 
the root of $B_c$ to $u$ and proceed inductively by embedding two trees $B_{c-1}$ to the left and right subtrees of $u$, corresponding to two choices of the Prover.
\end{proof}

\subsection{Lower Bounds for the Subset Sum with Small Coefficients}

We now turn to tree-like lower bounds. 
In this section we prove \treslin\Q\ lower bound for $\subSum{f}$ including instances, where coefficients of
$f$ are small, and \treslinsw\F\ lower bound for $\negIm{\pm x_1\pm \dots\pm x_n}$.

The proof of \treslin{\Q} lower bound for $\subSum{f}$ goes in two stages. Assume $f$ depends on $n$ variables. First, as in the proof of
dag-like lower bound in Sec.~\ref{sec:LBDL} we use Theorem~\ref{thm:refToDer} to transform refutations $\pi$ of $f=0$ to derivations
$\pi'$ of a clause $C_{\pi}$ from only the boolean  axioms. We ensure that $\pi'$ is not much larger than $\pi$ and $C_{\pi}$ possesses the following
property, which makes it
hard to derive: for every disjunct $g=0$ in $C_{\pi}$ the linear polynomial $g$ depends on at least $\frac{n}{2}$ variables.
Second, we use Prover-Delayer games to prove the lower bound for derivations of any clause with this property. The proof that 
Delayer's strategy succeeds to earn sufficiently many coins is guaranteed by a bound on size of essential coverings of hypercubes.

\begin{definition}
Let $\mathcal{H}$ be a set of hyperplanes in $\Q^n$. We say that $\mathcal{F}$ forms
\textbf{essential cover} of the cube $B_n=\{0,1\}^n$ if:
\begin{itemize}
\item Every point of $B_n$ is covered by some hyperplane in $\mathcal{H}$.
\item No proper subset $\mathcal{H'}\subsetneq \mathcal{H}$ covers $B_n$.
\item No axis in $\Q^n$ is parallel to all hyperplanes in $\mathcal{H}$. In other words, if $\mathcal{H}=\{H_1,\ldots,H_m\}$ and $f_i=0$
is the linear equation defining $H_i$, $i\in[m]$, then every variable $x_j$, $j\in [n]$, occurs with nonzero coefficient in some $f_i$.
\end{itemize}
\end{definition}

\begin{theorem}[\cite{LINIAL05}]\label{thm:essCov}
Any essential cover of the cube $B_n$ in $\Q^n$ must contain at least $\frac{1}{2}(\sqrt{4n+1}+1)$ hyperplanes.
\end{theorem}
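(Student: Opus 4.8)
The plan is to follow the polynomial method in the style of Linial and Radhakrishnan \cite{LINIAL05}. Given an essential cover $\mathcal{H}=\{H_1,\dots,H_k\}$ of $B_n=\{0,1\}^n$ in $\Q^n$, write each $H_i$ as the zero set of an affine form $\ell_i(x)=\langle c_i,x\rangle-b_i$ with $c_i\in\Q^n$, $b_i\in\Q$, and set $P:=\prod_{i=1}^{k}\ell_i$, a polynomial of degree $k$. The only immediate use of the covering property is that $P$ vanishes on every point of $B_n$, hence its multilinear reduction $\widetilde P$ (obtained by repeatedly replacing $x_j^2$ by $x_j$) is identically $0$. Unpacking $\widetilde P=0$ in terms of the coefficients of $P$: for every $J\subseteq[n]$ with $|J|=k$, the coefficient of the squarefree monomial $\prod_{j\in J}x_j$ in $P$, which equals the permanent of the $k\times k$ submatrix $(c_{i,j})_{i\in[k],\,j\in J}$, must vanish, and likewise every lower-degree squarefree coefficient of $P$ must vanish. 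Alongside this I would record the consequences of the other two defining conditions: minimality supplies, for each $i$, a ``private'' point $p^{(i)}\in B_n$ lying on $H_i$ and on no $H_{i'}$ with $i'\neq i$; and the condition that no coordinate axis is parallel to all of $\mathcal{H}$ says exactly that every variable $x_j$ appears with a nonzero coefficient in some $\ell_i$, i.e. the matrix $C=(c_{i,j})$ has no zero column (and no zero row, since each $\ell_i$ is a genuine affine form).

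The heart of the argument is then a counting step showing that these constraints force $k(k-1)\ge n$, which is precisely equivalent to $k\ge\frac12(\sqrt{4n+1}+1)$. The idea is to assign to each of the $n$ coordinates an ordered pair of distinct hyperplanes, injectively. Given a coordinate $j$, choose a hyperplane $H_i$ whose form $\ell_i$ involves $x_j$ (possible by the full-support condition); flipping the $j$-th bit of the private point $p^{(i)}$ changes the value of $\ell_i$ by $\pm c_{i,j}\neq 0$, so the flipped point $p^{(i)}\oplus e_j$ lies off $H_i$ and hence on some other hyperplane $H_{i'}$, and we assign $j$ to the pair $(i,i')$. Since there are exactly $k(k-1)$ ordered pairs of distinct hyperplanes, once the assignment is shown to be realizable injectively — by choosing the index $i$, the private point $p^{(i)}$, and the target $i'$ judiciously — one gets $n\le k(k-1)$ and the theorem follows. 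Ruling out collisions (two coordinates forced onto the same pair) is exactly where the minimality witnesses and the squarefree-coefficient identities coming from $\widetilde P=0$ must be combined: a collision forces strong linear relations among the entries of a row of $C$ and the values of the forms at a private point, which one then plays against the vanishing identities of $P$.

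The main obstacle is this last combinatorial–algebraic lemma: converting the algebraic vanishing ($\widetilde P=0$, at all degrees, not only the top one) together with minimality into the sharp quadratic bound $k(k-1)\ge n$. The vanishing of all $k\times k$ column-permanents of $C$ by itself does not obviously bound $n$, so the proof has to exploit the lower-degree coefficient identities and the private points in a coordinated way; this is exactly the technical contribution of \cite{LINIAL05}, which I would invoke directly rather than reprove in full. For the present paper only the resulting lower bound $\Omega(\sqrt n)$ on the number of hyperplanes in an essential cover is needed, and it enters the subsequent tree-like Prover--Delayer analysis purely as a black box, so neither the exact constant in $\frac12(\sqrt{4n+1}+1)$ nor the (known to be non-optimal) question of its tightness matters here.
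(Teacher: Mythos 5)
The paper itself offers no proof of Theorem~\ref{thm:essCov}: it is imported verbatim from \cite{LINIAL05} and used later (in the proof of Theorem~\ref{thm:largeWeightLB}) purely as a black box. So your closing remark, that for the purposes of this paper the bound may simply be invoked, is exactly what the authors do, and at that level your treatment and theirs coincide.

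Read as a proof sketch, however, your proposal has a genuine gap, and you name it yourself: the entire quadratic bound $k(k-1)\ge n$ is precisely the step you defer to \cite{LINIAL05}, so what remains is not a proof but a citation wrapped in a plan. Moreover, the two concrete steps you do spell out are shaky. First, the algebraic unpacking is wrong as stated: vanishing of the multilinear reduction $\widetilde P$ does force each top-degree multilinear coefficient of $P$ (a $k\times k$ permanent of the coefficient matrix) to vanish, but it does \emph{not} force ``every lower-degree squarefree coefficient of $P$'' to vanish, because multilinearization folds all monomials with a given support into a single coefficient of $\widetilde P$; what vanishes is a sum of coefficients of $P$, not the squarefree one alone. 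A minimal counterexample is the essential cover $\{x_1=0,\ x_1=1\}$ of $B_1$: here $P=x_1(x_1-1)=x_1^2-x_1$ vanishes on $B_1$ and $\widetilde P\equiv 0$, yet the degree-one squarefree coefficient of $P$ is $-1\neq 0$. Second, the combinatorial heart --- assigning each coordinate injectively to an ordered pair of hyperplanes via bit-flips at private points --- is only a hope: the map depends on arbitrary choices of $i$, of the private point, and of the covering hyperplane $H_{i'}$, and no argument is given (nor is one obvious) that these choices can be made so that no two coordinates collide; ruling out such collisions is exactly the technical content of \cite{LINIAL05}, and nothing in your sketch substitutes for it. If the statement is to be kept as a quoted external theorem, as in the paper, the citation alone suffices; but the sketch as written should not be presented as a proof, since its novel steps either fail (the lower-degree coefficient claim) or are unsubstantiated (the injectivity of the coordinate-to-pair assignment).
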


We use Prover-Delayer games to prove the lower bound below.

\begin{theorem}\label{thm:largeWeightLB}
Any \treslin{\Q} derivation of any tautology of the form $\bigvee\nolimits_{j\in [N]}g_j=0$, for some positive $N$,  where each $g_j$ is linear over $\Q$ and depends on at least $\frac{n}{2}$
variables, is of size $2^{\Omega(\sqrt{n})}$.
\end{theorem}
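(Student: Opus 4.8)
The plan is to route everything through the decision-tree and Prover--Delayer machinery set up above and then feed in the essential-cover bound of Linial and Radhakrishnan. Write the tautology as $C=\bigvee_{j\in[N]}g_j=0$, derived from the boolean axioms. First, by Theorem~\ref{treeDTequiv} a \treslin{\Q} derivation of $C$ of size $s$ is, up to a constant factor, the same object as a decision tree of type $\DT{\Q}$ for $\bigl(\{x=0\vee x=1:x\in vars(C)\},\ \{g_j\neq0\}_{j\in[N]}\bigr)$, so it suffices to show every such tree has size $2^{\Omega(\sqrt n)}$. By Lemma~\ref{gamesTrees} this reduces, in turn, to exhibiting a Delayer strategy in the game $G^{\Q}$ with starting position $\Phi_s=\{g_j\neq0\}_{j\in[N]}$ (and the boolean axioms as the clause set $\phi$) that is guaranteed to collect $\Omega(\sqrt n)$ coins.

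For the strategy, Delayer maintains a small partial 0-1 assignment $\rho$ (the ``decided'' coordinates), cutting out a subcube $Q_\rho$ of the remaining coordinates, together with the invariant that every non-equality currently in $\Phi$ is still a genuine non-equality on $Q_\rho$ --- it does not vanish identically there. Presented with a branching $f\neq0$ / $g\neq0$ for which some $\alpha f+\beta g\neq0$ is already known (hence nonzero on $Q_\rho$): if both $f$ and $g$ are non-constant on $Q_\rho$, Delayer defers and takes a coin; if one of them vanishes identically on $Q_\rho$ (a dead branch) or is a nonzero constant there (a trivial branch), Delayer follows the live branch without a coin; and only when continuing would violate the invariant on the next round does Delayer instead decide one further coordinate, chosen to destroy the offending hyperplane. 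Since $(\alpha f+\beta g)|_{Q_\rho}\not\equiv0$, at least one branch is always live, so the invariant can always be kept.

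When the game halts, $\Phi$ must contain, up to scalars, $x_\ell\neq0$ and $x_\ell-1\neq0$ for some still-undecided coordinate $x_\ell$. Unwinding the play, the hyperplanes Delayer was forced to record --- the $g_j$'s, the hyperplanes learned en route, and the coordinate hyperplanes of the decided coordinates --- cover $Q_\rho$; pruning to a minimal subcover and, whenever a remaining coordinate axis is parallel to all surviving hyperplanes, deciding a few more coordinates to remove it, yields an \emph{essential} cover of a subcube $Q_{\rho'}$. The hypothesis that each $g_j$ depends on at least $\frac{n}{2}$ variables is used precisely at this point: it prevents any $g_j$ from being a coordinate hyperplane, and it caps at $\frac{n}{2}$ the number of coordinates that ever need to be decided to reach essentiality, so $\dim Q_{\rho'}\geq\frac{n}{2}$. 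Theorem~\ref{thm:essCov} then says this essential cover has at least $\frac{1}{2}\bigl(\sqrt{2n+1}+1\bigr)=\Omega(\sqrt n)$ hyperplanes, and charging each of them (other than the initial $g_j$'s) to a deferral, and each decided coordinate to the deferral that forced it, shows Delayer collects $\Omega(\sqrt n)$ coins; plugging this back in gives size $2^{\Omega(\sqrt n)}$. (Applying this to $C=\imAx{f}$ for a linear form $f$ on $n$ variables immediately yields Corollary~\ref{cor:imTLLB}.)

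The hard part will be calibrating the invariant so that it is simultaneously maintainable against \emph{every} legal Prover move --- in particular the degenerate moves that branch off a nonzero constant and the moves that would pin a coordinate --- and strong enough that the terminal position really does expose an essential cover living on a subcube of dimension $\Omega(n)$. Keeping the number of decided coordinates both a small fraction of $n$ and dominated by the coin count, and making the charging between deferrals, coordinate decisions and cover-hyperplanes clean, is where the real effort goes.
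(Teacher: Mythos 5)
Your reduction layer is the same as the paper's (Theorem~\ref{treeDTequiv}, Lemma~\ref{gamesTrees}, and the Linial--Radhakrishnan bound, Theorem~\ref{thm:essCov}), but the theorem lives entirely in the part you defer as ``calibrating the invariant,'' and the strategy you do describe fails. Your only invariant is that each learned non-equality individually does not vanish identically on the current subcube $Q_\rho$; this tracks nothing about \emph{joint} satisfiability, so Prover can end the game with $O(1)$ coins. Concretely, with $x_1$ undecided, Prover first branches on ``$x_1\neq 0$ or $1-x_1\neq 0$'' (legal in $\DT{\Q}$ since $x_1+(1-x_1)=1$ and constants $a\neq 0$ are admissible premises): both sides are non-constant on $Q_\rho$, so you defer and Prover adds $x_1\neq 0$. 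Prover then branches on ``$x_1-1\neq 0$ or $-x_1\neq 0$'' (premise the constant $-1$): again both sides are non-constant, you defer, and Prover adds $x_1-1\neq 0$. Now $\{x_1\neq 0,\;x_1-1\neq 0\}\subseteq\Phi$ and the game is over after two coins, while your coordinate-deciding clause never triggered, because no single non-equality ever became identically zero on $Q_\rho$. (Under the other reading of ``follow the live branch,'' pairing any hyperplane with the constant $1$ lets Prover insert it with no coin at all.) Your terminal accounting cannot rescue this: the hyperplanes $g_j=0$ cover the cube already at the start (the clause is a tautology), a minimal subcover at the end can be just $\{x_\ell=0,\ x_\ell=1\}$, which is essential only on a one-dimensional subcube, or can consist entirely of the initial $g_j$'s, which your charging scheme leaves unpaid; and nothing justifies the claim that the essential cover lives on a subcube of dimension at least $n/2$.

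What is missing is exactly the machinery the paper's proof builds. Delayer there maintains a partial assignment $\rho_I$ on a set $I$ of variables and two stronger invariants: (i) the set $\Phi_c$ of Prover-chosen (``coin'') non-equalities remains 0-1 satisfiable after restriction by $\rho_I$, and (ii) every non-coin learned non-equality still depends on at least $\frac{n}{2}-|I|$ variables after restriction (this, not a dimension count for a terminal cover, is where the hypothesis on the $g_j$'s is used). Before answering a query, Delayer checks whether some single non-constant non-equality over the unassigned variables would make $\Phi_c|_{\rho_I}$ unsatisfiable; if so, it takes a \emph{minimal} unsatisfiable subsystem, which is automatically an essential cover of the subcube on its own variable set $I'$, so Theorem~\ref{thm:essCov} shows it contains at least $\sqrt{|I'|}$ coin non-equalities, Delayer extends $\rho_I$ to $I\cup I'$ satisfying those coins, and the potential ``$\#\text{coins}\geq\sqrt{|I|}$'' survives because $\sqrt{|I|}+\sqrt{|I'|}\geq\sqrt{|I\cup I'|}$. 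Only after this does Delayer answer, refusing any branch whose restriction is the zero polynomial. At the endgame, $0\neq 0$ or a pair $x_i\neq 0,\ x_i\neq 1$ lies in $\Phi$, at least one member of which must be non-coin by invariant (i); invariant (ii) then forces $|I|\geq \frac{n}{2}-1$, hence $\Omega(\sqrt n)$ coins. In short, the essential-cover theorem has to be applied repeatedly, during play, to minimal unsatisfiable subsystems of the learned coin non-equalities, tying coins to assigned variables; applying it once at the end to a global cover, as you propose, does not bound the number of coins, and your strategy as stated is beaten by the two-move attack above.
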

\begin{proof}
According to the definitions in Sec.~\ref{sec:PD-game} the corresponding Prover-Delayer game is on $0=0$ and
starts with the position $$\Phi_r={\{g_j\neq 0\;|\;j\in [N]\}}\,.$$
The game finishes at a position $\Phi$, where $\{x_i\neq 0,~ x_i\neq 1\}\subseteq\Phi$ for some $i\in[n]$ or $0\neq 0\in\Phi$.

We now define a Delayer's strategy that guarantees $\Omega(\sqrt{n})$ coins and by Lemma~\ref{gamesTrees} obtain the lower bound. 

If $\Phi$ is a position in the game, denote by $\Phi_c\subset\Phi$ the subset of so-called ``coin'' non-equalities, that is, non-equalities that
were chosen by Prover when Delayer 
decided to leave the choice to Prover and earn a coin. The number $|\Phi_c|$ is then precisely the number of coins earned by Delayer
at $\Phi$. Throughout the game Delayer constructs a partial assignment $\rho_I$ for variables
in $I\subseteq [n]$ and a set of non-equalities $\Phi_I\subseteq\Phi_c$, such that:
\begin{enumerate}
\item 
 $|\Phi_I|=\Omega(\sqrt{|I|})$;
 \item  for all $g\neq 0\in(\Phi\rst_{\rho_I})\setminus(\Phi_c\rst_{\rho_I})$,
the function $g$ depends on at least $\frac{n}{2}-|I|$ variables;
\item  $\Phi_I$ contains variables only from $I$; and
\item  $\Phi_c\rst_{\rho_I}$ is 0-1 satisfiable. 
\end{enumerate}
In the beginning both $\rho_I$ and $\Phi_I$ are empty.

Let the position in the game be defined by a system $\Phi$ and let the branching chosen by the Prover be
$g_1\neq 0$ and $g_2\neq 0$, where $g_1+g_2\neq 0\in\Phi$. Delayer does the following. Before making any decision Delayer checks if 
there exists some nonconstant linear $g$ with variables in $[n]\setminus I$ such that $(\Phi_c\rst_{\rho_I})\cup \{g\neq 0\}$
is unsatisfiable over 0-1. 

In case it holds, $\Psi:=(\Phi_c\setminus\Phi_I)\rst_{\rho_I}\cup\{g\neq 0\}$ must be 0-1 unsatisfiable.
Consider a minimal subset $\Psi'\subseteq\Psi$ such that $\Psi'$ is 0-1 unsatisfiable and denote $I'\subseteq [n]$ the set of variables
that occur in $\Psi'$.  As $\Psi'':=\Psi'\setminus\{g\neq 0\}$ is 0-1 satisfiable, there
exists an assignment $\rho_{I'}$ for variables in $I'$, that satisfies $\Psi''$. Delayer extends the assignment $\rho_I$ with 
$\rho_{I'}$ to $\rho_{I\cup I'}$ and defines $\Phi_{I\cup I'}:=\Phi_I\cup \Psi''$.

If $\Psi'=\{g_1\neq 0, \ldots, g_k\neq 0\}$, then the hyperplanes $H_1,\ldots,H_k$ defined by the equations 
$g_1=0,\ldots,g_k=0$ form an essential cover of the cube $B_{|I'|}$. Therefore, by Theorem~\ref{thm:essCov}, $|\Psi''|=|\Psi'|-1\geq \sqrt{|I'|}$
and thus $|\Phi_{I\cup I'}|\geq \sqrt{|I|}+\sqrt{|I'|}\geq \sqrt{|I\cup I'|}$.

If necessary, Delayer repeats the above procedure constructing extensions $\rho_{I_1}\subset\dots\subset\rho_{I_L}$ and 
$\Phi_{I_1}\subset \dots\subset \Phi_{I_L}$, where $I_1=I\subset\ldots\subset I_L$, until there is no $g\neq 0$ 
inconsistent with $\Phi_c\rst_{\rho_{I_L}}$ as described above. The new value of $I$ is set to $I_L$. After that
Delayer does the following: 

\begin{enumerate}
\item\label{it:g2Case} if $g_1\rst_{\rho_I}=0$, then choose $g_2\neq 0$;
\item\label{it:g1Case} otherwise, if $g_2\rst_{\rho_I}=0$, then choose $g_1\neq 0$;
\item if none of the above cases hold, leave the choice to Prover and earn a coin.
\end{enumerate}

Denote by $\Phi'$ and $\Phi_c'\subseteq \Phi'$ the new position and the subset of ``coin'' non-equalities, respectively, 
after the choice is made. It is easy to see that 
the property that any $g\neq0\in (\Phi'\rst_{\rho_I})\setminus(\Phi_c'\rst_{\rho_I})$ depends on at least $\frac{n}{2}-|I|$
variables still holds.

It follows from the definition of Delayer's strategy that $\Phi_c$ is always 0-1 satisfiable. Therefore 
if $\Phi$ is the endgame position, that is if $0\neq 0\in \Phi$
or $\{x_i\neq 0,x_i\neq 1\}\subset\Phi$ for some $i\in[n]$, then $0\neq 0\in(\Phi\rst_{\rho_I})\setminus(\Phi_c\rst_{\rho_I})$
or $\{x_i\neq 0,x_i\neq 1\}\subset(\Phi\rst_{\rho_I})\setminus(\Phi_c\rst_{\rho_I})$
respectively. This implies that $|I|\geq \frac{n}{2}-1$ and therefore $|\Phi_c|\geq |\Phi_I|\geq \sqrt{|I|}=\Omega(\sqrt{n})$.
Thus the number of coins earned by Delayer is $\Omega(\sqrt{n})$.

\end{proof}

\begin{corollary}\label{cor:imTLLB}
Let $f$ be any linear polynomial over \Q\  that depends on $n$ variables. Then \treslin\Q\ derivations of $\imAx{f}$ are of size $2^{\Omega(\sqrt{n})}$. 
\end{corollary}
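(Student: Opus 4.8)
The plan is to deduce this directly from Theorem~\ref{thm:largeWeightLB}. The key point is purely syntactic: by definition $\imAx{f}=\bigvee_{A\in im_2(f)}(f=A)$, and rewriting each equation $f=A$ as $f-A=0$ presents $\imAx{f}$ as a clause of the form $\bigvee_{j\in[N]}g_j=0$ with $N=|im_2(f)|$ and $g_j:=f-A_j$. This clause is a tautology under $0$-$1$ assignments (indeed it is derivable from the boolean axioms alone by Proposition~\ref{imf}, but it is also clear directly: any $0$-$1$ assignment makes $f$ take some value in $im_2(f)$). Finally, each linear polynomial $g_j=f-A_j$ depends on exactly the same variable set $vars(f)$ as $f$, hence on all $n$ of its variables, in particular on at least $\tfrac n2$ of them. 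So $\imAx{f}$ meets the hypotheses of Theorem~\ref{thm:largeWeightLB}, which yields the claimed $2^{\Omega(\sqrt n)}$ lower bound.

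The one step that needs a short argument is that a given \treslin\Q\ derivation $\pi$ of $\imAx{f}$ might mention auxiliary variables outside $vars(f)$, so that Theorem~\ref{thm:largeWeightLB} (whose bound is against the total number of variables) does not apply verbatim. I would remove such variables by a restriction: let $\rho$ assign $0$ to every variable not in $vars(f)$ and apply $\rho$ line by line to $\pi$. This does not increase the size of any line, leaves $\imAx{f}$ unchanged (since $f$ does not depend on those variables), and preserves validity of the resolution, weakening and simplification rules; a boolean axiom $y=0\vee y=1$ for a restricted variable $y$ becomes $0=0\vee 1=0$, and one application of the simplification rule turns it into $0=0$, so the subtree above it remains a legitimate tree-like \reslin\Q\ derivation. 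Hence without loss of generality $\pi$ uses only the $n$ variables of $f$, and the derivation is no larger than the one we started from.

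With that normalization, Theorem~\ref{thm:largeWeightLB} applied to $\pi$ — with the disjuncts $g_j=f-A_j$ each depending on all $n$ variables of $f$ — gives $|\pi|=2^{\Omega(\sqrt n)}$, and since restriction only shrinks proofs the same bound holds for the original derivation. I do not expect a genuine obstacle here: all the real work is already done in Theorem~\ref{thm:largeWeightLB}, i.e.\ in its Prover--Delayer strategy combined with the Linial--Radhakrishnan essential-cover bound (Theorem~\ref{thm:essCov}); the only care required is the auxiliary-variable bookkeeping above.
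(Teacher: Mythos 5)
Your proposal is correct and takes essentially the same route as the paper, which treats the corollary as an immediate application of Theorem~\ref{thm:largeWeightLB} to $\imAx{f}=\bigvee_{A\in im_2(f)}(f-A=0)$, all of whose disjuncts depend on the same $n$ variables as $f$. Your auxiliary-variable normalization is an unnecessary (and slightly delicate) precaution -- the restricted boolean axiom $0=0\vee -1=0$ is not itself an axiom and cannot simply be simplified to $0=0$ if a later step resolves on that constant disjunct -- but nothing hinges on it, since Theorem~\ref{thm:largeWeightLB}, both as stated and as proved via the Prover--Delayer game, does not require the derivation to mention only the variables of $f$.
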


\begin{theorem}\label{thm:ssTLLB}
If $f$ is a linear polynomial over $\Q$, which depends on $n$ variables and $0\notin im_2(f)$, then every  \treslin\Q\ refutation
of $f=0$ is of size $2^{\Omega(\sqrt{n})}$.
\end{theorem}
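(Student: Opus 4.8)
The plan is to carry out the two–stage argument announced just before Theorem~\ref{thm:largeWeightLB}: first reduce a tree-like refutation of $f=0$ to a tree-like derivation, from the boolean axioms alone, of a tautological linear clause every disjunct of which depends on many variables (via Theorem~\ref{thm:refToDer}); then invoke the lower bound for derivations of such clauses (Theorem~\ref{thm:largeWeightLB}).

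First I would reduce to the case where the refutation mentions no variables besides those occurring in $f$. If $\pi$ is a tree-like \reslin{\Q} refutation of $\{f=0\}$ that uses auxiliary variables $y_1,\dots,y_m \notin vars(f)$, then $\pi\rst_{y_1\leftarrow 0,\dots,y_m\leftarrow 0}$ is again a tree-like \reslin{\Q} refutation of $\{f=0\}$ (since $f$ does not depend on the $y_j$, the axiom $f=0$ is unchanged, and every rule application remains valid under a restriction), of size at most $|\pi|$. So write $n=|vars(f)|$ and assume $vars(\pi)\subseteq vars(f)$, with $|\pi|=S$. Now define the predicate $\mathcal{P}(g)=1$ iff the linear polynomial $g$ depends on at least $\tfrac{n}{2}$ variables; this descends to the projective space, since $g$ and $\alpha g$ have the same set of relevant variables for $\alpha\neq 0$. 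I claim $\mathcal{P}$ satisfies the two hypotheses of Theorem~\ref{thm:refToDer}. Property~(2) is immediate: $b+f$ depends on exactly the $n$ variables of $f$, so $\mathcal{P}(b+f)=1$. For property~(1), suppose for distinct $a_1,a_2\in\Q$ that both $g+a_1f$ and $g+a_2f$ depend on fewer than $\tfrac{n}{2}$ variables. Then $(a_1-a_2)f=(g+a_1f)-(g+a_2f)$ depends only on variables occurring in one of the two polynomials, hence on fewer than $\tfrac{n}{2}+\tfrac{n}{2}=n$ variables; but $a_1-a_2\neq 0$, so $(a_1-a_2)f$ has the same $n$ relevant variables as $f$, a contradiction. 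Hence $\mathcal{P}(g+af)=1$ for all but at most one $a\in\Q$.

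Second, I would apply Theorem~\ref{thm:refToDer} to $\pi$ with this $\mathcal{P}$. This yields a tree-like \reslin{\Q} derivation $\pi'$, from the boolean axioms only, of a linear clause $C_{\pi}=\bigvee_{j\in[N]}g_j=0$ with $N\geq 1$, where each $g_j$ depends on at least $\tfrac{n}{2}$ variables, and with $|\pi'|=O(n\cdot S^3)$. Since $\pi'$ applies only the sound rules of \reslin{\Q} starting from boolean axioms, $C_{\pi}$ holds under every $0$–$1$ assignment; as its disjuncts are linear over $\Q$ and each depends on at least $\tfrac{n}{2}$ of the $n$ variables, $C_{\pi}$ is exactly a tautology of the kind ruled out by Theorem~\ref{thm:largeWeightLB}. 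That theorem gives $|\pi'|=2^{\Omega(\sqrt n)}$. Combining, $n\cdot S^3=2^{\Omega(\sqrt n)}$, and since $\log_2 n = o(\sqrt n)$ this forces $S=2^{\Omega(\sqrt n)}$, which is the assertion of the theorem.

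The argument is essentially a packaging of Theorems~\ref{thm:refToDer} and~\ref{thm:largeWeightLB}, so the only genuine content is the verification that the "depends on at least $\tfrac n2$ variables'' predicate meets the hypotheses of Theorem~\ref{thm:refToDer}; the crux there is the support computation in property~(1), which uses precisely that $f$ has full support $n$. The one bookkeeping point that must not be skipped is the initial restriction away from auxiliary variables: without it, a disjunct $g_j$ of $C_\pi$ could depend on $\tfrac n2$ variables drawn from a larger pool, which would not match the hypothesis of Theorem~\ref{thm:largeWeightLB}.
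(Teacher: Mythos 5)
Your proposal follows the same two-stage route as the paper: pick the predicate $\mathcal{P}(g)=1$ iff $g$ depends on at least $\tfrac{n}{2}$ variables, invoke Theorem~\ref{thm:refToDer} to turn a refutation of $f=0$ into a not-much-larger derivation of a clause whose disjuncts all satisfy $\mathcal{P}$, and then apply Theorem~\ref{thm:largeWeightLB}. You actually supply more detail than the paper, which merely asserts that $\mathcal{P}$ satisfies the hypotheses of Theorem~\ref{thm:refToDer}; your support-counting argument for property~(1) is correct (if both $g+a_1f$ and $g+a_2f$ depended on fewer than $\tfrac n2$ variables, their difference $(a_1-a_2)f$ would have support $<n$, contradiction).

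One small wrinkle in your preliminary reduction: after substituting $y_j\leftarrow 0$, a boolean axiom $y_j=0\vee y_j=1$ becomes $0=0\vee 0=1$, which is no longer an axiom of \reslin{\Q}, so "every rule application remains valid under a restriction'' isn't literally true. This is easy to patch (for instance substitute $y_j\leftarrow x_1$ for a fixed $x_1\in vars(f)$, which turns the boolean axiom into another boolean axiom and, being a legal linear substitution in the sense of Section~\ref{sec:notations}, preserves all rule applications). The paper does not make this reduction at all; your instinct to do it is reasonable, since both Theorem~\ref{thm:refToDer} (via the domain $\mathbb{P}(\F[x_1,\ldots,x_n]_{\leq 1})$ of $\mathcal{P}$) and Theorem~\ref{thm:largeWeightLB} (whose Prover--Delayer game runs over $[n]$) implicitly assume the ambient variable set is that of $f$. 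In any case this is a minor bookkeeping point; the substance of the argument matches the paper's.
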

\begin{proof}
Consider the following predicate $\mathcal{P}$ on linear polynomials: $\mathcal{P}(g)=1$ iff $g$ depends on at least $\frac{n}{2}$
variables. It is easy to see that $\mathcal{P}$ satisfies the conditions in Theorem~\ref{thm:refToDer} with respect to $f$. Therefore
by Theorem~\ref{thm:refToDer} for every refutation $\pi$ of $f=0$ there exists a derivation $\pi'$ of a clause $C_{\pi}$ from
the boolean  axioms such that $|\pi'|=O(n\cd |\pi|^3)$ and $\mathcal{P}(g)$ for every $g=0$ in $C_{\pi}$. Thus, by Theorem~\ref{thm:largeWeightLB}
$|\pi'|=2^{\Omega(\sqrt{n})}$ and $|\pi|=2^{\Omega(\sqrt{n})}$.  
\end{proof}

\begin{lemma}\label{immunityLm}
Let $\Phi$ be a satisfiable system of $m$ non-equalities over $\F$.
If $\Phi\models \epsilon_1x_1+\dots+\epsilon_nx_n=A$ for some
$\epsilon_i\in\{-1,1\}\subset{\F},A\in\F$, then $m\geq \frac{n}{4}$.
\end{lemma}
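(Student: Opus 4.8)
The plan is to strip the information in $\Phi$ down to a single nonzero multilinear polynomial that vanishes on half of the cube, and then apply (a special case of) the immunity of the parity function of Alekhnovich--Razborov~\cite{AR01}. Write $\Phi=\{g_1\neq 0,\dots,g_m\neq 0\}$ with each $g_i$ an affine form over $\F$, and $f=\epsilon_1x_1+\dots+\epsilon_nx_n$ with $\epsilon_i\in\{-1,1\}$; we work in characteristic $0$ (as in Theorem~\ref{thm:imAvLB}). Since $\Phi$ is $0$-$1$ satisfiable, fix $x^{*}\in\{0,1\}^n$ with $g_i(x^{*})\neq 0$ for all $i$; because $\Phi\models f=A$ this forces $A=f(x^{*})\in im_2(f)\subseteq\Z$. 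Now set $P:=\prod_{i=1}^{m}g_i$, regarded as a function $\{0,1\}^n\to\F$; its multilinear representative (obtained by reducing modulo $x_i^2=x_i$) has degree at most $m$, and $P(x^{*})\neq 0$, so $P$ is not identically zero on the cube. Finally, since every $\epsilon_i$ is odd we have $f(x)\equiv x_1+\dots+x_n\pmod 2$ for all $x\in\{0,1\}^n$; hence for every $x$ in the parity coset $S:=\{x\in\{0,1\}^n:\ x_1+\dots+x_n\equiv A+1\pmod 2\}$ we get $f(x)\neq A$ (different parities as integers, hence distinct in $\F$), so by the contrapositive of $\Phi\models f=A$ some $g_i(x)=0$, i.e.\ $P(x)=0$. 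Thus $P$ is a polynomial of degree $\le m$, not identically zero on the cube, vanishing on the coset $S$, which has $2^{n-1}$ elements.

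The main point is the degree bound: any function $\{0,1\}^n\to\F$ that is not identically zero on the cube but vanishes on a coset $\{x:\ x_1+\dots+x_n\equiv c\pmod 2\}$ has multilinear degree at least $n/2$ (in particular at least $n/4$); this is the $n/4$-immunity of the parity function in~\cite{AR01}, and I would include the following short self-contained proof. Expand the multilinear representative of $P$ in the character basis $\chi_T(x):=\prod_{i\in T}(1-2x_i)=(-1)^{\sum_{i\in T}x_i}$, say $P=\sum_{T\subseteq[n]}c_T\chi_T$. Since $\chi_T=\prod_{i\in T}(1-2x_i)$ expands into monomials $x^U$ with $U\subseteq T$, and conversely $\prod_{i\in T}x_i=\prod_{i\in T}\tfrac{1-\chi_{\{i\}}}{2}$ expands into $\chi_U$ with $U\subseteq T$, the change of basis is degree-preserving, so $\deg P=\max\{|T|:c_T\neq 0\}$ (a maximum over a nonempty set, as $P\not\equiv 0$). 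The indicator of the $c$-coset equals $\tfrac{1}{2}\bigl(1+(-1)^c\chi_{[n]}\bigr)$, so ``$P$ vanishes on the $c$-coset'' is equivalent to $P=(-1)^{c+1}P\,\chi_{[n]}$; comparing character coefficients and using $\chi_T\chi_{[n]}=\chi_{[n]\setminus T}$ yields $c_T=(-1)^{c+1}c_{[n]\setminus T}$ for every $T$. If $d=\deg P$ is attained at $T_0$, then $c_{[n]\setminus T_0}\neq 0$ as well, whence $n-d\le d$, i.e.\ $d\ge n/2$.

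Putting the two parts together, $m\ge\deg P\ge n/2\ge n/4$, which is the claim (the argument in fact gives the stronger bound $m\ge\lceil n/2\rceil$). Characteristic $0$ is used only in passing from ``$f(x)\not\equiv A\pmod 2$'' to ``$f(x)\neq A$'' and in having $\{\chi_T\}_{T}$ form a basis (characteristic $\ne 2$ suffices for the latter). I do not foresee a real obstacle here: conceptually the statement is just that eliminating resolutions with $f=0$ (exactly as in the proof of Theorem~\ref{thm:refToDer}) collapses the relevant object into the product of the learned hyperplanes, after which the parity-immunity bound does the counting; the only thing worth double-checking carefully is that the product $P$ is genuinely nonzero on the cube, which is where satisfiability of $\Phi$ enters.
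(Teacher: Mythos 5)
Your proof is correct and follows the paper's argument exactly: form $P=\prod_{i=1}^m g_i$, observe that $P$ vanishes on the parity coset $\{x:\ \sum_i x_i\equiv A+1\ (\mathrm{mod}\ 2)\}$ while being nonzero somewhere on the cube (by satisfiability of $\Phi$), and deduce $m\ge\deg P\ge n/4$ from the immunity of parity. The only (harmless) elaboration is that you supply a self-contained Fourier proof of the immunity bound --- incidentally obtaining the sharper $n/2$ --- where the paper simply cites Theorem~4.4 of Alekhnovich--Razborov~\cite{AR01}.
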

Note that $A$ must be an integer (inside \F), since the coefficients of variables are all $-1,1$, and the variables themselves are boolean  (since $\models$ stands for semantic implication over 0-1 assignments only).
\begin{proof}
Let $\Phi=\{\overline{a}_1\cdot\overline{x}+b_1\neq 0,\dots,\overline{a}_m\cdot\overline{x}+b_m\neq 0\}$ and put $\sigma=A\mod{2}$, 
$f=\epsilon_1x_1+\dots+\epsilon_nx_n$.
Then 
\begin{align*}
f\equiv1-\sigma~(\text{mod}\ 2) & \models f\neq A \\ 
& \models (\overline{a}_1\cdot\overline{x}+b_1)\cdot\ldots\cdot(\overline{a}_m\cdot\overline{x}+b_m)=0.
\end{align*}
By Theorem 4.4 in Alekhnovich-Razborov \cite{AR01}, the function $f\equiv1-\sigma ~(\text{mod}\ 2)$ is $\frac{n}{4}$-\textit{immune}, that is, the degree of any non-zero polynomial $g$ such that 
$f\equiv1-\sigma~(\text{mod}\ 2)\models g=0$ must be at least $\frac{n}{4}$. Therefore $m\geq \frac{n}{4}$.
\end{proof}


\begin{theorem}\label{thm:imAvLB}
We work over \Q. Let $f=\epsilon_1x_1+\dots+\epsilon_nx_n$, where $\epsilon_i\in\{-1,1\}$. Then 
any tree-like \reslinsw{\Q} refutation of $\negIm{f}$ is of size at least $2^{\frac{n}{4}}$.
%
%
\end{theorem}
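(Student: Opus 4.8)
The plan is to route the bound through the NLDT correspondence and a Prover--Delayer game, using the immunity estimate of Lemma~\ref{immunityLm}. By Theorem~\ref{treeDTequiv} (whose refutation-to-tree construction contracts weakening/simplification edges and hence does not increase size), a tree-like \reslinsw{\Q} refutation of $\negIm{f}$ of size $S$ yields a $\DTsw{\Q}$ tree for $(\negIm{f},\emptyset)$ of size at most $S$. So it suffices to show that every $\DTsw{\Q}$ tree for $\negIm{f}$ has size at least $2^{n/4}$, and by Lemma~\ref{gamesTrees} this follows once we exhibit a Delayer strategy in the game $G^{\Q}_{sw}$ on $\negIm{f}$, started from the empty position $\Phi_s=\emptyset$, that guarantees at least $n/4$ coins in every play.

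First I would pin down the end-game positions. Each clause of $\negIm{f}$ has the form $\langle f\neq A\rangle=\bigvee_{B\in im_2(f),\,B\neq A}(f=B)$ with $A\in im_2(f)$, so $\neg\langle f\neq A\rangle$ is the set $\{(f-B)\neq 0: B\in im_2(f),\,B\neq A\}$. Since any $0$--$1$ assignment gives $f$ a value in $im_2(f)$, a $0$--$1$-satisfiable system $\Phi$ satisfies $\Phi\models\neg\langle f\neq A\rangle$ iff $\Phi\models(f=A)$, while $\Phi\models\neg(0=0)$ iff $\Phi$ has no $0$--$1$ solution. Thus, as long as Delayer keeps $\Phi$ $0$--$1$-satisfiable, the game can only terminate at a position with $\Phi\models(f=A)$ for some $A\in im_2(f)$. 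The strategy I have in mind maintains exactly that invariant: when Prover offers a branching $f'\neq0$ / $g'\neq0$ with $\Phi\models f'+g'\neq0$, Delayer earns a coin and lets Prover choose if both $\Phi\cup\{f'\neq0\}$ and $\Phi\cup\{g'\neq0\}$ are $0$--$1$-satisfiable; otherwise exactly one of the two extensions is $0$--$1$-satisfiable (both cannot fail, else a $0$--$1$ solution of $\Phi$ violating both would contradict $\Phi\models f'+g'\neq0$), and Delayer picks that one and earns nothing. In the latter case the chosen non-equality is already implied by $\Phi$, so a ``forced'' move never changes the set of $0$--$1$ solutions, the invariant is preserved, and the game never ends via $\neg(0=0)$.

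Finally I would count the coins. Enumerate the non-equalities added along a play in order; since each non-coin one is implied by the system preceding it, a short induction on this enumeration shows that the set $\Phi_c$ of coin non-equalities has the same $0$--$1$ solutions as the whole end-game position $\Phi$: given a putative $0$--$1$ solution of $\Phi_c$ that falsifies some non-coin $h_j\neq0$ with $j$ minimal, it satisfies all earlier non-equalities, so it lies in the solution set forcing $h_j\neq0$, a contradiction. Hence $\Phi_c$ is $0$--$1$-satisfiable and $\Phi_c\models(f=A)$, with $f=\epsilon_1x_1+\dots+\epsilon_nx_n$, $\epsilon_i\in\{-1,1\}$, and $A\in im_2(f)\subseteq\Z$. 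Lemma~\ref{immunityLm} then gives $|\Phi_c|\ge n/4$, i.e.\ Delayer has earned at least $n/4$ coins, and Lemma~\ref{gamesTrees} yields the $2^{n/4}$ lower bound on $\DTsw{\Q}$ trees, which the first paragraph transfers to tree-like \reslinsw{\Q} refutations.

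The step I expect to be the main obstacle is precisely this coupling between coins and a satisfiable, ``implied-equation-free'' core: Delayer cannot simply take a coin every round, because Prover could then drive $\Phi$ to $0$--$1$-unsatisfiability and finish the game cheaply; one must take coins only on genuine branchings and then argue that the resulting core $\Phi_c$ is still $0$--$1$-satisfiable and still entails $f=A$, which is exactly the hypothesis needed to invoke the immunity bound.
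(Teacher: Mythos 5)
Your proof is correct and follows essentially the same route as the paper: the identical Delayer strategy (pick the $0$--$1$-consistent branch when exactly one is, otherwise take a coin and let Prover choose) in the game on $\negIm{f}$, combined with Lemma~\ref{gamesTrees} and the immunity bound of Lemma~\ref{immunityLm}. The only cosmetic difference is in the coin-counting: the paper extracts a subset $\Psi$ of non-equalities each not implied by its predecessors and notes $\Psi\subseteq\Phi_c$, whereas you apply the minimal-counterexample induction directly to $\Phi_c$; both arguments hinge on the same observation that every non-coin choice is entailed by the position preceding it.
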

\begin{proof}
According to the definitions in Sec.~\ref{sec:PD-game} the corresponding Prover-Delayer game is on $\negIm{f}$ and
starts with the empty position.
The game finishes at a position $\Phi$, where $\Phi\models f-A= 0$ for some $A\in im_2(f)$.

We now define a Delayer's strategy that guarantees $\frac{n}{4}$ coins and by Lemma~\ref{gamesTrees} obtain the lower bound. 

The strategy is as follows. Let the position in the game be defined by a system $\Phi$ and let the branching chosen by the Prover be
$g_1\neq 0$ and $g_2\neq 0$, where $\Phi\models g_1+g_2\neq 0$.  
Delayer does the following: 
\begin{enumerate}

\item\label{g1Case}

if $g_2\neq 0$ is inconsistent with $\Phi$, but $g_1\neq 0$ is consistent with $\Phi$, then choose $g_1\neq 0$;

\item\label{g2Case} 
if $g_1\neq 0$ is inconsistent with $\Phi$, but $g_2\neq 0$ is consistent with $\Phi$, then choose $g_2\neq 0$;

\item\label{coinCase} if none of the above holds, then leave the choice to the Prover and earn a coin. 
\end{enumerate}

We now prove that this strategy guarantees the required number of coins.
\medskip

Suppose that the game has finished at a position $\Phi$. The strategy of Delayer guarantees that $\Phi$ is satisfiable and
$\Phi$ contradicts a clause $\langle f\neq A\rangle$ of $\negIm{f}$, that is $\Phi\models f-A=0$ for some $A\in im_2(f)$.
Let $\zeta_1,\ldots,\zeta_\ell$ be the set of non-equalities in $\Phi$, in the order they were added to $\Phi$.
Let $\Psi\subseteq\Phi$ be the set of all $\zeta_i$, $i\in[\ell]$, such that $\zeta_i$ is not implied by previous
non-equalities $\zeta_j$, for $j<i$. Then, Delayer earns at least $|\Psi|$ coins, $\Psi\models f=A$, and by
Lemma~\ref{immunityLm} we conclude that $|\Psi|\geq\frac{n}{4}$.

\medskip

\end{proof}

\subsection{Lower Bounds for the Pigeonhole Principle}

Here we prove that every tree-like \reslinsw{\F} refutations of $\neg\text{PHP}^m_n$ must have size at least $2^{\Omega(\frac{n-1}{2})}$ (see Sec.~\ref{sec:PHP-intro} for the definition of $\neg\text{PHP}^m_n$). Together
with the upper bound for dag-like \reslin{\F} (Theorem \ref{phpUB}) this provides a separation between tree-like and dag-like
\reslinsw{\F} in the case $char(\F)=0$, for formulas in CNF. The lower bound argument is comprised of exhibiting a strategy for Delayer in the Prover-Delayer game.
Delayer's strategy  is similar to that in  \cite{IS14}. However, the proof that Delayer's strategy guarantees sufficiently many  coins relies
on Lemma~\ref{hyplem}, which is a generalization of Lemma 3.3 in \cite{IS14} for arbitrary fields. Since the proof of Lemma 3.3 in \cite{IS14} for the $\F_2$ case does not apply to arbitrary fields, our proof is different, and uses a result from Alon-F\" uredi \cite{AF93} on the hyperplane coverings of the hypercube. 
  
\begin{theorem}\label{phpLB}
For every field $\F$, the shortest \treslinsw{\F} refutation of $\neg\text{PHP}^m_n$ has size $2^{\Omega(\frac{n-1}{2})}$.
\end{theorem}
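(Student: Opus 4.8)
The plan is to use the Prover-Delayer game machinery developed in Section~\ref{sec:PD-game}: by Lemma~\ref{gamesTrees}, it suffices to exhibit a strategy for Delayer in the game $G^{\F}_{sw}$ on $\neg\text{PHP}^m_n$ (with empty starting position) that guarantees $\Omega\!\left(\tfrac{n-1}{2}\right)$ coins, since then every $\DTsw{\F}$ tree — and hence, by Theorem~\ref{treeDTequiv}, every tree-like \reslinsw{\F} refutation — must have size $2^{\Omega((n-1)/2)}$. Following Itsykson--Sokolov \cite{IS14}, Delayer maintains a \emph{partial matching}: a position in the game is a set $\Phi$ of linear non-equalities, and Delayer keeps track of a partial injective assignment of pigeons to holes consistent with $\Phi$, using fewer than $n/2$ pigeons, say. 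When Prover offers a branching $f\neq 0$ versus $g\neq 0$ with $f+g\neq 0$ implied by $\Phi$, Delayer inspects which of the two non-equalities is \emph{forced} (i.e.\ already implied by $\Phi$ together with the boolean structure of the still-unmatched pigeons/holes): if exactly one of $f\neq 0$, $g\neq 0$ is consistent with the current partial matching extended arbitrarily, Delayer picks it for free; otherwise both choices leave room, so Delayer leaves the choice to Prover and collects a coin, recording the new constraint by updating the partial matching (matching at most one new pigeon to one new hole per coin).

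The key combinatorial point — and the main obstacle — is to show that as long as the partial matching has matched fewer than, roughly, $(n-1)/2$ pigeons, the current system $\Phi$ \emph{cannot} yet semantically imply the negation of any axiom of $\neg\text{PHP}^m_n$; equivalently, $\Phi$ together with the boolean axioms still has a satisfying $0$-$1$ assignment extending the partial matching. In the $\F_2$ case this is Lemma~3.3 of \cite{IS14}, proved by a linear-algebra/parity argument specific to $\F_2$. Over a general field that argument fails, so the plan is to replace it with the following statement, which I will isolate as a lemma (the analogue of Lemma~\ref{hyplem} mentioned in the excerpt): if $\mathcal{H} = \{H_1,\dots,H_k\}$ is a collection of fewer than $n$ affine hyperplanes in $\F^n$, then there is a $0$-$1$ point of the cube $\{0,1\}^n$ lying on \emph{none} of them — and moreover one can prescribe the coordinates lying in a set of size $< n - k$ arbitrarily. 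This is exactly the kind of conclusion supplied by the Alon--F\"uredi theorem \cite{AF93} on hyperplane coverings of the hypercube: $\{0,1\}^n$ cannot be covered by fewer than $n$ hyperplanes, and a Combinatorial-Nullstellensatz-style refinement lets us keep a block of free coordinates on which we can realize any assignment. The reason this yields the bound: each non-equality in $\Phi$, after we fix the partial matching, imposes at most one hyperplane constraint on the free variables; as long as the number of such hyperplanes stays below the number of free variables, Alon--F\"uredi guarantees a $0$-$1$ completion of the partial matching to a genuine partial injection avoiding all constraints, so no axiom is violated and Delayer's invariant (consistency with a partial matching) is maintained.

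So the steps, in order, are: (1) set up the game on $\neg\text{PHP}^m_n$ and recall Lemma~\ref{gamesTrees}; (2) state and prove the hyperplane-covering lemma using Alon--F\"uredi \cite{AF93} — this is the step I expect to be delicate, since I must phrase it to give not just nonemptiness of the uncovered set but controllable free coordinates, and I must count carefully that matching one pigeon eliminates at most a bounded number of free hole-variables; (3) describe Delayer's strategy (maintain a partial matching; pick the forced branch when one exists, else take a coin and extend the matching); (4) verify the invariant inductively — Delayer always keeps $\Phi$ consistent with the partial matching, using step~(2) — so the game can only end with $\ge (n-1)/2$ pigeons matched, i.e.\ after $\Omega((n-1)/2)$ coins; (5) invoke Lemma~\ref{gamesTrees} to conclude size $\ge 2^{\Omega((n-1)/2)}$. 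The bookkeeping in step~(4) (exactly how many coins correspond to how many matched pigeons, and why the endgame forces $\sim n/2$ of them) is where the constant $\tfrac{n-1}{2}$ comes from, and is the part that must be done with care rather than by appeal to \cite{IS14}.
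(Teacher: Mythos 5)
Your high-level plan matches the paper's: exhibit a Delayer strategy in the $G^{\F}_{sw}$ game, appeal to Lemma~\ref{gamesTrees}, and replace the $\F_2$-specific linear-algebra argument of \cite{IS14} by a hyperplane-covering argument via Alon--F\"uredi~\cite{AF93}. However, the key combinatorial lemma you propose to extract from Alon--F\"uredi is wrong. You state that ``fewer than $n$ affine hyperplanes cannot cover $\{0,1\}^n$, and moreover one can prescribe the coordinates in a set of size $<n-k$ arbitrarily.'' Neither claim is correct: the two hyperplanes $x_1=0$ and $x_1=1$ already cover the whole cube, and even when $k$ hyperplanes do leave some point uncovered there is in general no block of free coordinates whose values you can set arbitrarily (take $n>2$, $k=1$, $H_1=\{x_1=0\}$; the uncovered points all have $x_1=1$). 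What Alon--F\"uredi actually gives, and what the paper's Lemma~\ref{hyplem} states, is a conditional nonemptiness: if $k<n$ hyperplanes already fail to cover (say they miss $\overline 0$), then the quantitative bound $M(2n-k)$ on the number of uncovered vertices forces the existence of a \emph{second} uncovered vertex. The crucial step you are missing is the bit-flipping corollary (Lemma~\ref{hypcor}): from any solution $\alpha$ to the system of $k$ non-equalities and any set $I$ of $k+1$ indices, at least one bit of $\alpha$ inside $I$ can be flipped while remaining a solution. It is this flipping lemma, not a ``free block'' lemma, that drives the paper's combinatorics.

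Your Delayer strategy is also subtly off. The paper does not maintain an explicit partial matching and does not (and cannot) attribute ``one new pigeon $\to$ one new hole per coin''---a single Prover query contributes one arbitrary linear non-equality $f\neq0$, which has no canonical pigeon--hole interpretation. Instead the paper's invariant is weaker and cleaner: Delayer merely maintains that $\Phi$ has a \emph{proper} $0$-$1$ solution (one violating no hole axiom), earning a coin exactly when both offered extensions remain consistent with this. The quantitative payoff then comes from Lemma~\ref{propsollem}: if $k\le\frac{n-1}{2}$ non-equalities admit a proper solution, then for every pigeon $i$ some proper solution satisfies $\bigvee_j x_{i,j}$. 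The proof of this lemma is where your plan has no analogue: first, apply Lemma~\ref{hypcor} repeatedly to reduce the number of ones in a proper solution to at most $k$ (flipping ones to zeros preserves properness); this leaves $n-k\ge k+1$ free holes; then apply Lemma~\ref{hypcor} to the $k+1$ variables $\{x_{i,j}: j \text{ free}\}$ to flip some $x_{i,j}$ from $0$ to $1$, sending pigeon $i$ to a free hole while staying proper. Without this two-phase flipping argument the endgame count of $\frac{n-1}{2}$ coins does not follow. So while the scaffolding (Prover--Delayer plus Alon--F\"uredi) is the right idea, the lemma you reach for is false, and the mechanism that actually converts the covering bound into the size bound is absent.
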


\begin{proof}
We prove that there exists a strategy for Delayer in the $\neg\text{PHP}^m_n$ game, which guarantees Delayer to earn $\frac{n-1}{2}$ coins. Following the terminology in \cite{IS14}, we call an assignment  $x_{i,j}\mapsto \alpha_{ij}$, for $\alpha \in \{0,1\}^{mn}$, \emph{proper}
if it does not violate $\phpInj^m_n$, namely, if it does not send two distinct pigeons to the same hole. We need to prove several lemmas before concluding the theorem.

\begin{lemma}\label{hyplem}
Let $A\overline{x} \doteqdot \overline{b}$ be a system of $k$ linear non-equalities over a field $\F$ with $n$ variables and where  $\overline{x}=0$ is a solution,
that is, $0\doteqdot \overline{b}$. If $k<n$, then there exists a non-zero boolean  solution to this system.
\end{lemma}
\begin{proof}
Let $\overline{a}_1,\dots,\overline{a}_k$ be the rows of the matrix $A$. The boolean  solutions to the system $A\overline{x} \doteqdot \overline{b}$ are all the points of the
$n$-dimensional boolean  hypercube $B_n:=\{0,1\}^n\subset\F^n$, that are not covered by the hyperplanes $H:=\{\overline{a}_1\overline{x}-b_1=0,\dots,\overline{a}_k\overline{x}-b_k=0\}$.
We need to show that if $k<n$ and $0\in B_n$ is not covered by $H$, then some other point in $B_n$ is not covered by $H$ as well. This follows from  \cite{AF93}:
\newtheorem*{AF93}{Corollary from Alon-F\"{u}redi \cite[Theorem 4]{AF93}}
\begin{AF93} Let  $Y(l):=\left\{(y_1,\dots,y_n)\in\F^n\;|\;\forall i\in[n], 0<y_i\leq 2, \text{ and }\sum_{i=1}^n y_i\geq l\right\}.$
For any field $\F$, if $k$ hyperplanes in $\F^n$ do not cover $B_n$ completely, then they do not cover at least $M(2n-k)$ points from $B_n$, where 
\begin{equation*}
M(l):=\min\limits_{(y_1,\ldots,y_n)\in Y(l)}\prod\limits_{1\leq i\leq n}y_i\,.
\end{equation*}
\end{AF93}
Thus, if $k<n$ hyperplanes do not cover $B_n$ completely, then they do not cover at least $M(n+1)$ points. The set $Y(n+1)$ in the Corollary above consists of all tuples $(y_1,\dots,y_n)$,
where $y_i=2$ for some $i\in[n]$ and $y_j=1$ for $j\in[n],j\neq i$. Therefore $M(n+1)=2$.
\end{proof}
For two boolean  assignments $\alpha,\beta\in\bits^n$, denote by $\alpha\oplus\beta$\ the bitwise \textsc{xor} of the two assignments. 
\begin{lemma}\label{hypcor}
Let $A\overline{x} \doteqdot \overline{b}$ be a system of $k$ linear non-equalities over a field $\F$ with $n>k$ variables and let  $\alpha\in\bits^n$ be a solution to the system. Then, for every choice $I$ of $k+1$ bits in $\alpha$, there exists at least one $i\in I$ so that flipping the $i$th bit in $\alpha$ results  in a new solution to $A\overline{x} \doteqdot \overline{b}$. In other words, if $I\subseteq [n]$ is such that $|I|=k+1$, then there exists a boolean  assignment $\beta\neq 0$ such that $\{i\;|\;\beta_{i}=1\}\subseteq I$ and $A(\alpha\oplus\beta)\doteqdot\overline{b}$. \end{lemma}
\begin{proof}
Let $I\subseteq\bits^n$. Denote by $A^\star_I$ the matrix with columns $\{(1-2\alpha_i)\overline{a}_i\;|\;i\in I\}$, where $\overline{a}_i$ is the $i$th \textit{column} of $A$. That is, $A^\star_I$ is the matrix $ A$ restricted to columns $i$ with $i\in I$ and where column $i$ flips its sign iff $\alpha_i$ is $1$.

Assume that $\beta\in\bits^n$ is nonzero and all its 1's must appear in the indices in $I$, that is, $\{i\;|\;\beta_{i}=1\}\subseteq I$.  Given a set of indices $J\subseteq [n]$, denote by $\beta_J$ the restriction of
$\beta$ to the indices in $J$. Similarly, for a vector $v\in\F^n$,  $v_J$ denotes the restriction of $v$ to the indices in $J$.
\begin{clm*} $A(\alpha\oplus\beta)\doteqdot\overline{b}$ iff $A^\star_I\beta_I\doteqdot \overline{b}-A\alpha$. 
\end{clm*}
\begin{proofclaim}
We prove that $A(\alpha\oplus\beta)=A^\star_I\beta_I+A\alpha $. Consider any row $\bf v$ in $A$, and the corresponding row ${\bf v}^\star_I$ in $ A^\star_I$. Notice that  $\bf v\cd (\alpha\oplus\beta)$ (for ``$\cd$'' the dot product) equals 
the dot product of $\bf v$ and $\alpha\oplus\beta$, where both vectors are restricted only to those entries in which $\alpha$ and $\beta$ differ.
Considering entries outside $I$, by assumption we have $\beta_{[n]\setminus I}=0$, which implies that 
\begin{equation}\label{eq:stam_efes}
{\bf v}_{[n]\setminus I}\cd(\alpha\oplus\beta)_{[n]\setminus I}={\bf v}_{[n]\setminus I}\cd \alpha_{[n]\setminus I} \,.
\end{equation}  
On the other hand, considering entries inside  $I$, we have 
\begin{equation}\label{eq:stam}
{\bf v}_I\cd (\alpha\oplus\beta)_{I}= {\bf v}_I\cd \alpha_I +  {\bf v}^\star_{I}\cd \beta_I\,.
\end{equation}
Equation \eqref{eq:stam} can be verified by inspecting all four cases for the $i$th bits in $\alpha,\beta$, for  $i\in I$, as follows:  for those indices $i\in I$, such that $\alpha_i=1$ and $\beta_i =0$,  only ${\bf v}_I\cd\alpha$ contributes to the right hand side in \eqref{eq:stam}. If  $\alpha_i=1$ and $\beta_i =1$, then by the definition of $A^\star_I$, the two summands in the right hand side in \eqref{eq:stam} cancel out. The cases  $\alpha_i=0, \beta_i =1$ and $\alpha_i=\beta_i =0$, can also be inspected to contribute the same values to both sides of \eqref{eq:stam}.

The two equations \eqref{eq:stam_efes} and \eqref{eq:stam}  concludes the claim.   
\end{proofclaim}

We know that $A\alpha\doteqdot \overline  b$, and we wish to show that for some nonzero $\beta\in\bits^n$ where $\{i\;|\;\beta_{i}=1\}\subseteq I$, it holds that  $A(\alpha\oplus\beta)\doteqdot\overline b$.  By the claim above it remains to show the existence of such $\beta$ where $A^\star_I\beta_I\doteqdot \overline{b}-A\alpha$. But notice that $\overline{b}-A\alpha\doteqdot 0$, since $A\alpha\doteqdot \overline b$, and that $A^\star_I\beta_I$ is a matrix of dimension $k\times(k+1)$. Therefore, by Lemma \ref{hyplem}, the system $A^\star_I\beta_I\doteqdot \overline{b}-A\alpha$ has a nonzero solution, that is, there exists a $\beta\neq 0$ for which all ones are in the $I$ entries, such that $A^\star_I\beta_I\doteqdot \overline{b}-A\alpha$.
\end{proof}

\begin{lemma}\label{propsollem}
Assume that a system $A\overline{x} \doteqdot \overline{b}$ of $k\leq \frac{n-1}{2}$ non-equalities over $\F$ with variables $\{x_{i,j}\}_{(i,j)\in [m]\times[n]}$
has a proper solution. Then, for every $i\in [m]$ there exists a proper solution to the system, that  satisfies the clause $\bigvee\nolimits_{j\in [n]}x_{i,j}$. In other words, for every pigeon, there exists a proper solution that sends the pigeon to some hole.    
\end{lemma}
\begin{proof}

We first show that if there exists a proper solution of $A\overline{x} \doteqdot \overline{b}$, then there exists a proper solution of this system with at most $k$
ones. Let $\alpha$ be a proper solution with at least $k+1$ ones. If $I$ is a subset of $k+1$ ones in $\alpha$, then Lemma~\ref{hypcor} assures us that some
other proper solution can be obtained from $\alpha$ by flipping some of these ones (note that flipping one to zero preserves the properness of assignments). Thus the number of ones can always be reduced until it is at most $k$.

Let $\alpha$ be a proper solution with at most $k$ ones. The condition $k\leq \frac{n-1}{2}$ implies that there are $n-k\geq k+1$ free holes. Let $J$ be a subset
of size $k+1$ of the set of indices of free holes. Then for any $i\in [m]$ some of the bits in $I=\{(i,j)\;|\;j\in J\}$ can be flipped and still satisfy $A\overline{x} \doteqdot \overline{b}$, by Lemma~\ref{hypcor}. (As before, flipping from one to zero maintains the properness of the solution.) Hence, the resulting
proper solution must satisfy the clause $\bigvee\nolimits_{j\in [n]}x_{i,j}$.
\end{proof}

We now describe the desired strategy for Delayer. \smallskip

\noindent\uline{Delayer's Strategy}: Let a position in the game be defined by the system of non-equalities $\Phi$ and assume that the branching
chosen by Prover is
$f_0\neq 0 $ or $f_1\neq 0$, where $\Phi\models f_0+f_1\neq 0$. The only  objective of Delayer is to ensure that the system $\Phi$ has proper solutions.
Delayer uses the opportunity to earn a coin whenever both $\Phi\cup \{f_0\neq 0\}$ and $\Phi\cup \{f_1\neq 0\}$ have proper
solutions by leaving the choice to Prover. Otherwise, in case $\Phi\wedge \phpInj^m_n\models f_i= 0$, for some $i\in\bits$, Delayer chooses
$f_{1-i}\neq 0$, which must satisfy $\Phi\wedge \phpInj^m_n\models f_{1-i}\neq 0$, and so the sets of proper
solutions of $\Phi$ and $\Phi\cup \{f_{1-i}\neq 0\}$ are identical.\smallskip \smallskip

This strategy ensures, that for  every end-game position $\Phi$, $\Phi$ has proper
solutions and $\Phi\models\neg\phpTot^m_n$. Note  that $\Phi$ has the same proper solutions as $\Phi'$, obtained by throwing away from $\Phi$ all non-equalities
that were added by Delayer when  making a choice. Therefore, if $\Phi\models\neg\phpTot^m_n$, then $\Phi'\wedge \phpInj^m_n\models\neg\phpTot^m_n$
and thus $|\Phi'|>\frac{n-1}{2}$ by Lemma~\ref{propsollem}. 

Since $|\Phi'|$ is precisely the number of coins earned by Delayer, this gives the desired lower bound. 
\end{proof}


\section{Size-Width Relation and Simulation by Polynomial Calculus}

In this section we prove a size-width relation for tree-like \reslin{R} (Theorem~\ref{sizeWidth}), which then implies an exponential lower bound 
on the size of \treslinsw{R} refutations in terms of the principal width of refutations (Definition~\ref{omegaDef}). The connection between the principal
width and the degree of PC refutations for finite fields \F, together with lower bounds on degree of PC refutations from \cite{AR01} on Tseitin mod $p$ formulas and random CNFs, imply exponential lower bounds for the size of
\treslinsw{\F} for these instances (Corollaries \ref{tsLB} and \ref{rndLB}).



\begin{proposition}\label{substProp}
Let $\phi=\{C_i\}_{1\leq i\leq m}$ be a set of linear clauses and $x\in vars(\phi)$. Assume that $l$ is a linear form in the
variables $vars(\phi)\setminus\{x\}$. Then, there is a \reslin{R} derivation $\pi$ of 
$\{C_i \rst_{x\leftarrow l}\vee \langle x-l\neq 0\rangle\}_{1\leq i\leq m}$ from $\phi$ of size polynomial in $|\phi|+|\imAx{l}|$
and such that $\omega_0(\pi)\leq \omega_0(\phi)+2$. 
\end{proposition}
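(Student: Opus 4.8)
The plan is to reduce the statement to substituting $x$ by $l$ inside each clause $C_i$ separately, by means of resolution steps against the tautological clause $\imAx{x-l}=(x-l=0)\vee\langle x-l\neq 0\rangle$, which we first derive from the boolean axioms. The basic move is this: if $(g=0)$ is a disjunct of a clause $C\vee(g=0)$ and $c_g\in R$ is the coefficient of $x$ in $g$, then resolving $C\vee(g=0)$ against $\imAx{x-l}=\langle x-l\neq 0\rangle\vee(x-l=0)$ on the disjuncts $(g=0)$ and $(x-l=0)$, with coefficients $1$ and $-c_g$ respectively, produces $C\vee\langle x-l\neq 0\rangle\vee\big(g-c_g(x-l)=0\big)$, and $g-c_g(x-l)$ is exactly $g\rst_{x\leftarrow l}$. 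Hence, starting from the axiom $C_i$, weakening it to $C_i\vee\langle x-l\neq 0\rangle$, and then performing one such step for each disjunct of $C_i$ in turn — each step replaces one not-yet-substituted disjunct $(g=0)$ by $(g\rst_{x\leftarrow l}=0)$, while the already-present sub-clause $\langle x-l\neq 0\rangle$ is contracted — after $\omega(C_i)$ steps we obtain $C_i\rst_{x\leftarrow l}\vee\langle x-l\neq 0\rangle$. Doing this for every $i$, and reusing the single derivation of $\imAx{x-l}$ (legitimate since we only claim a dag-like \reslin{R} derivation), yields the desired $\pi$.

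First I would dispose of the degenerate cases. If $0\notin im_2(x-l)$ (equivalently $im_2(l)\cap\{0,1\}=\varnothing$) then $\imAx{x-l}=\langle x-l\neq 0\rangle$ is already a tautology, and each target clause $C_i\rst_{x\leftarrow l}\vee\langle x-l\neq 0\rangle$ is simply a weakening of it; likewise if some $C_i$ is the empty clause the target is a weakening of that axiom, and if every clause of $\phi$ is empty there is nothing to prove. So from now on assume $\omega_0(\phi)\ge 1$ and $0\in im_2(x-l)$. For the size bound, Proposition~\ref{imf} gives a \reslin{R} derivation of $\imAx{x-l}$ of size polynomial in $|\imAx{x-l}|$ and of principal width at most $3$. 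Since $im_2(x-l)\subseteq\{-A,\,1-A:A\in im_2(l)\}$, we have $|im_2(x-l)|\le 2|im_2(l)|$ and every value occurring there has bit size $O\big(1+\max_{A\in im_2(l)}|A|\big)$; a direct estimate gives $|\imAx{x-l}|=\mathrm{poly}(|\imAx{l}|)$ (and $|l|\le|\imAx{l}|$). Each per-clause block for $C_i$ uses one weakening and $\omega(C_i)\le|C_i|$ resolution steps, and a conclusion of such a step has size $O\big(|C_i|+|\imAx{x-l}|+|l|+\max_g|c_g|\big)$ because the new linear form $g-c_g(x-l)$ has coefficients of bit size $O(|g|+|c_g|+|l|)$. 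Summing over $i$ and adding the shared prefix, the total size is $\mathrm{poly}(|\phi|+|\imAx{l}|)$, as required (with a few applications of the simplification rule to discard false constant disjuncts that may arise in the rare degenerate situation where a disjunct of $C_i$ is itself parallel to $x-l$).

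The delicate point, and the main obstacle, is the bound $\omega_0(\pi)\le\omega_0(\phi)+2$. The prefix deriving $\imAx{x-l}$ has principal width $\le 3\le\omega_0(\phi)+2$, so it remains to control the per-clause blocks. The crucial observation is that the substitution $x\leftarrow l$ can only \emph{merge} parallel classes of linear forms: if $g=\alpha g'+\beta$ then $g\rst_{x\leftarrow l}=\alpha g'\rst_{x\leftarrow l}+\beta$, so in particular $\omega_0(C_i\rst_{x\leftarrow l})\le\omega_0(C_i)$, and the weakened clause $C_i\vee\langle x-l\neq 0\rangle$ has principal width $\le\omega_0(C_i)+1$, since all disjuncts of $\langle x-l\neq 0\rangle$ lie in the single parallel class of $x-l$. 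The trick that makes the $+2$ work is to \emph{order the disjuncts of $C_i$ so that those lying in a common parallel class are consecutive}, and to substitute them in that order. Then at every intermediate clause the parallel classes of $C_i$ split into three groups: those already fully substituted (each contributing one class $[h\rst_{x\leftarrow l}]$), those not yet touched (each contributing one class $[h]$), and at most one class currently ``in progress'' (contributing at most two classes, $[h]$ and $[h\rst_{x\leftarrow l}]$); together with the one class of $\langle x-l\neq 0\rangle$ this is at most $\omega_0(C_i)+2\le\omega_0(\phi)+2$. Thus every clause appearing in $\pi$ has principal width at most $\omega_0(\phi)+2$, and since the reordering changes neither the number nor the sizes of the steps, the size bound is unaffected.
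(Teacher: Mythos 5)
Your proof is correct and follows essentially the same route as the paper: derive $\imAx{x-l}$ via Proposition~\ref{imf}, then substitute $x\leftarrow l$ in $C_i$ one disjunct at a time by resolving against $(x-l=0)\vee\langle x-l\neq 0\rangle$ with the appropriate scalar, while ordering the disjuncts so that parallel classes are handled consecutively to get the $\omega_0(\phi)+2$ width bound. The preliminary weakening step and the extra attention to degenerate cases are not in the paper's (terser) argument but are harmless.
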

\begin{proof}
The clause $x-l=0\vee \langle x-l\neq 0\rangle$ is derivable in \reslin{R} in polynomial in $|\imAx{l}|$ size by Proposition~\ref{imf}.  Assume 
$$C=\left(\bigvee\nolimits_{j\in[k]}f_j+a_jx+b_j^{(1)}=0\vee\dots\vee f_j+a_jx+b_j^{(N_j)}=0\right),$$ where $x\notin vars(f_i)$ and
we have grouped disjuncts so that $\omega_0(C)=k$.
Then we resolve these groups one by one with $x-l=0\vee \langle x-l\neq 0\rangle$ and after $N_1+\ldots+N_k$ steps yield 
$\left(\bigvee\nolimits_{j\in[k]}f_j+a_jl+b_j^{(1)}=0\vee\dots\vee f_j+a_jl+b_j^{(N_j)}=0\vee \langle x-l\neq 0\rangle\right)$. It is
easy to see that the principal width never exceeds $k+2$ along the way. Therefore $\omega_0(\pi)\leq \omega_0(\phi)+2$. 
\end{proof}

\begin{corollary}\label{substCor}
Let $\phi=\{C_i\}_{1\leq i\leq m}$ be a set of linear clauses and $x\in vars(\phi)$. Suppose that $l$ is a linear form with variables
 $vars(\phi)\setminus\{x\}$ and that $\pi$ is a \reslin{R} refutation of $\phi\rst_{x\leftarrow l}\cup\{l=0\vee l=1\}$.
Then, there exists a \reslin{R} derivation $\widehat{\pi}$ of
${\langle x-l\neq 0\rangle}$ from $\phi$, such that $S(\widehat{\pi})=O(S(\pi)+|\imAx{l}|)$ and 
$\omega_0(\widehat{\pi})\leq\max\left(\omega_0(\pi)+1,\omega_0(\phi)+2\right)$. Additionally,
there is a refutation $\widehat{\pi}'$ of $\phi\cup \{x-l=0\}$ where $\omega_0(\widehat{\pi}')\leq
\max(\omega_0(\pi),\omega_0(\phi)+2)$.
\end{corollary}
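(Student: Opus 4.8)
The plan is to recycle the given refutation $\pi$ of $\phi\rst_{x\leftarrow l}\cup\{l=0\vee l=1\}$ by \emph{undoing} the substitution $x\leftarrow l$, in the spirit of Proposition~\ref{substProp}. The statement is semantically sound: if a 0-1 assignment $\alpha$ satisfies $\phi$ and $\alpha(x)=l(\alpha)$, then $l(\alpha)\in\{0,1\}$, so the restriction of $\alpha$ to $vars(\phi)\setminus\{x\}$ satisfies every clause of $\phi\rst_{x\leftarrow l}$ together with $l=0\vee l=1$, contradicting the hypothesis; hence $\phi\models\langle x-l\neq 0\rangle$ and $\phi\cup\{x-l=0\}$ is unsatisfiable. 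Each part of the corollary turns this semantic observation into a small \reslin{R} derivation.

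For $\widehat\pi'$ (the refutation of $\phi\cup\{x-l=0\}$): I would use the axiom $x-l=0$ exactly the way Proposition~\ref{substProp} uses $\imAx{x-l}$, except that resolving against the \emph{single} equation $x-l=0$ never introduces any $\langle x-l\neq 0\rangle$ disjunct. Processing each clause $C\in\phi$ one parallelism class at a time, each resolution of $x-l=0$ against a disjunct $f+ax+b=0$ (with coefficients $1,-a$) replaces it by $f+al+b=0$; since $g\mapsto g\rst_{x\leftarrow l}$ sends parallel forms to parallel forms, at most one class of $C$ is ever ``half substituted'', so the principal width stays $\le\omega_0(C)+1\le\omega_0(\phi)+1$ (a substituted disjunct that degenerates to $0=0$ only makes the clause a tautology, which is discarded). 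Two further resolutions of $x-l=0$ with the boolean axiom $x=0\vee x=1$ produce $l=0\vee l=1$ (first $(x=1)\vee(l=0)$, then $(l=0)\vee(l=1)$), staying within principal width $2$. Appending $\pi$ gives $\widehat\pi'$ with $\omega_0(\widehat\pi')\le\max(\omega_0(\pi),\omega_0(\phi)+1)\le\max(\omega_0(\pi),\omega_0(\phi)+2)$.

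For $\widehat\pi$ (the derivation of $\langle x-l\neq 0\rangle$ from $\phi$): first invoke Proposition~\ref{substProp} to derive, from $\phi$, the set $\Psi:=\{C_i\rst_{x\leftarrow l}\vee\langle x-l\neq 0\rangle\}_{i\in[m]}$, of size $\mathrm{poly}(|\phi|+|\imAx{l}|)$ and principal width $\le\omega_0(\phi)+2$. Next derive $\imAx{x-l}=(x-l=0)\vee\langle x-l\neq 0\rangle$ by Proposition~\ref{imf} (principal width $\le 3$, size $\mathrm{poly}(|\imAx{l}|)$; if $0\notin im_2(x-l)$ this is already $\langle x-l\neq 0\rangle$ and we are done) and combine it with $x=0\vee x=1$ in a constant number of resolution and simplification steps to get $(l=0\vee l=1)\vee\langle x-l\neq 0\rangle$, all within principal width $3$. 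Finally take $\pi$ and append the disjunction $\langle x-l\neq 0\rangle$ to every line: the axioms $C_i\rst_{x\leftarrow l}$ become members of $\Psi$, the axiom $l=0\vee l=1$ becomes the gadget just derived, the boolean axioms become weakenings of boolean axioms, every resolution and simplification step of $\pi$ stays valid with the common rider attached, and the final empty clause becomes $\langle x-l\neq 0\rangle$. Since $\langle x-l\neq 0\rangle$ consists of pairwise parallel equations (common normal $x-l$), attaching it raises the principal width of every line of $\pi$ by at most $1$; together with the two ingredients above this yields $\omega_0(\widehat\pi)\le\max(\omega_0(\pi)+1,\omega_0(\phi)+2)$, and summing the three sizes gives $S(\widehat\pi)=O(S(\pi)+|\imAx{l}|)$.

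The routine part is the size accounting. The part that needs care is the principal-width bookkeeping: verifying that no substitution or resolution step creates more parallelism classes than claimed. The two facts that make the $+1$ and $+2$ tight are precisely that the substitution $x\leftarrow l$ preserves parallelism of linear forms, and that $\langle x-l\neq 0\rangle$ is a single parallelism class; the only mild nuisances — degenerate $0=0$ disjuncts arising from substitution, and the handling of the boolean axiom $x=0\vee x=1$ — are disposed of as indicated above.
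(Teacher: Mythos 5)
Your argument is correct, and for $\widehat\pi$ it is essentially the paper's: invoke Proposition~\ref{substProp} to derive $\{C_i\rst_{x\leftarrow l}\vee\langle x-l\neq 0\rangle\}_i$, supply $(l=0\vee l=1)\vee\langle x-l\neq 0\rangle$ (the paper folds the boolean axiom on $x$ into the $C_i$'s, you build it separately from $\imAx{x-l}$ and $x=0\vee x=1$ --- equivalent), then append $\pi$ with the single parallelism class $\langle x-l\neq 0\rangle$ riding along every line, so $\omega_0$ rises by at most one.

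For $\widehat\pi'$ you take a mildly different, cleaner route. The paper first produces the riddered derivation $\pi_s$ and then resolves each of its lines against the fresh axiom $x-l=0$ to strip the rider $\langle x-l\neq 0\rangle$ before appending $\pi$; you instead resolve against $x-l=0$ directly from the start, performing the substitution without ever introducing the rider. This avoids a pass and, as you note, gives the slightly tighter bound $\omega_0(\widehat\pi')\le\max(\omega_0(\pi),\omega_0(\phi)+1)$, which of course implies the $+2$ the paper claims. The parallelism bookkeeping you carry out --- process one parallelism class of each $C_i$ at a time, use that $g\mapsto g\rst_{x\leftarrow l}$ preserves parallelism so at most one class is ever half-substituted --- is exactly the check that justifies the $+1$, and it is correct.
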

\begin{proof}

By Proposition~\ref{substProp} there exists a derivation $\pi_s$ of 
$$
\{C_i\rst_{x\leftarrow l}\vee \langle x-l\neq 0\rangle\}_{1\leq i\leq m}\cup\{l=0\vee l=1\vee \langle x-l\neq 0\rangle\}
$$
from $\phi$ of width at most $\omega_0(\phi)+2$. Composing $\pi_s$ with $\pi\vee \langle x-l\neq 0\rangle$ yields the derivation
$\widehat{\pi}$ of $\langle x-l\neq 0\rangle$ from $\phi$. 

Moreover, by taking the derivation $\pi_s$ and adding to it the axiom $x-l=0$, and then using a sequence of resolutions of $\pi_s$
with $x-l=0$, we obtain a derivation of
$\phi\rst_{x\leftarrow l}\cup\{l=0\vee l=1\}$ from $\phi\cup\{x-l=0\}$. The latter derivation composed with $\pi$ yields the refutation
$\widehat{\pi}'$ of $\phi\cup\{x-l=0\}$ of width at most $\max(\omega_0(\pi),\omega_0(\phi)+2)$.
\end{proof}

\begin{theorem}\label{sizeWidth}
Let $\phi$ be an unsatisfiable set of linear clauses over a field \F. The following size-width relation holds for both \treslin{\F} and \treslinsw{\F}:
$$
{S(\phi\vdash\perp)=2^{\Omega(\omega_0(\phi\vdash\perp)-\omega_0(\phi))}}\,.
$$
\end{theorem}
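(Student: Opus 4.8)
The plan is to adapt the tree-like size--width tradeoff of Ben-Sasson--Wigderson \cite{BSW99} to our setting, with the role there played by restricting a propositional variable to a Boolean value now played by a \emph{linear substitution} $x\leftarrow l$, where $l$ is a linear form not containing $x$. The first thing to record is that such a substitution behaves well: applying it to a (tree-like) refutation yields a (tree-like) refutation of the substituted clauses, never increases the size, and --- crucially --- never increases the principal width, since under $x\leftarrow l$ a pencil of parallel hyperplanes $f_i=0$ is mapped into a single such pencil, so the number of $\sim$-classes in each clause can only drop. Granting this, it suffices to prove the ``narrowing'' half: every $\phi$ with a tree-like \reslin{\F} (resp.\ \treslinsw{\F}) refutation of size $S$ admits \emph{some} \reslin{\F} (resp.\ \reslinsw{\F}) refutation of principal width at most $\omega_0(\phi)+O(\log_2 S)$. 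Rearranging this bound gives $S(\phi\vdash\perp)=2^{\Omega(\omega_0(\phi\vdash\perp)-\omega_0(\phi))}$.

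For the recursive step, take a tree-like refutation $\pi$ of $\phi$ of size $S$. Its last steps derive $\perp$ by simplifying a false constant equation $a=0$ ($a\neq 0$) obtained by resolving two single-equation clauses $(f=0)$ and $(g=0)$ with $\alpha f+\beta g=a$; hence $f$ and $g$ define parallel hyperplanes, and after rescaling we may take $g=x-l$ with $x\notin vars(l)$ while $(f=0)$ is equivalent to $x-l=d$ for a nonzero constant $d$. Let $\pi_f,\pi_g$ be the subderivations of $(f=0)$ and $(g=0)$ from $\phi$; as $|\pi_f|+|\pi_g|\le S$, the smaller one --- say $\pi_g$ --- has size $\le S/2$. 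Writing $I:=im_2(x-l)$, for every $B\in I$ with $B\neq 0$ the substitution $x\leftarrow l+B$ turns the conclusion $(g=0)$ of $\pi_g$ into the false equation $B=0$, so $\psi_B:=\phi\rst_{x\leftarrow l+B}\cup\{(l+B=0)\vee(l+B=1)\}$ has one fewer variable than $\phi$ and a tree-like refutation of size $\le S/2+O(1)$. If $0\in I$, the single remaining formula $\psi_0=\phi\rst_{x\leftarrow l}\cup\{l=0\vee l=1\}$ likewise has one fewer variable and, via $\pi_f\rst_{x\leftarrow l}$, a tree-like refutation of size $<S$. (If $\pi_f$ is the smaller subderivation, the roles of $0$ and $d$ are swapped; the analysis is symmetric.)

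Now induct on $(|vars(\phi)|,S)$ lexicographically. The induction hypothesis supplies, for each $B\in I$, a refutation $\rho_B$ of $\psi_B$ of principal width $\le\omega_0(\psi_B)+O(\log_2 S)\le\omega_0(\phi)+O(\log_2 S)$: the factor-two saving on the $B\neq 0$ branches absorbs the constant overhead, while on the single full-size branch the extra missing variable lets the induction proceed, exactly as in classical tree-like Ben-Sasson--Wigderson. Applying Corollary~\ref{substCor} with the substitution $x\leftarrow l+B$ converts $\rho_B$ into a refutation $\widehat{\rho}_B$ of $\phi\cup\{x-(l+B)=0\}$ of principal width $\le\max(\omega_0(\rho_B),\omega_0(\phi)+2)$. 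Finally, derive the tautology $\imAx{x-l}=\bigvee_{B\in I}(x-l=B)$ from the Boolean axioms by Proposition~\ref{imf} (principal width $\le 3$), enumerate $I=\{B_1,\dots,B_k\}$, and peel disjuncts off one at a time: given a derivation from $\phi$ of $\bigvee_{i>j-1}(x-l=B_i)$, replace in $\widehat{\rho}_{B_j}$ the axiom $x-(l+B_j)=0$ by that clause and weaken every line of $\widehat{\rho}_{B_j}$ by $\bigvee_{i>j}(x-l=B_i)$, composing to obtain a derivation of $\bigvee_{i>j}(x-l=B_i)$; after $k$ steps one reaches $\perp$, i.e.\ a refutation of $\phi$. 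The point is that in this last phase every line is weakened only by a sub-disjunction of $\imAx{x-l}$, all of whose disjuncts define hyperplanes parallel to $x-l$, so each such weakening costs at most $1$ in principal width. Hence the assembled refutation has principal width $\le\omega_0(\phi)+O(\log_2 S)$, closing the induction. The same scheme applies to \treslinsw{\F}, the lifting steps being only simpler there since semantic weakening is always available.

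For the last sentence of the theorem, note that over a finite field \F\ one has $\omega(C)\le|\F|\cdot\omega_0(C)$ for every clause $C$ (a pencil of parallel hyperplanes has at most $|\F|$ distinct members), and that all the auxiliary clauses used above --- $\imAx{x-l}$, its sub-disjunctions, and the by-products of Corollary~\ref{substCor} --- have width $O(|\F|)$; running the identical argument with $\omega$ everywhere in place of $\omega_0$ then gives $\omega(\phi\vdash\perp)\le\omega(\phi)+O(\log_2 S)$ with constants depending only on $|\F|$, which is the stated $\omega$-version. I expect the main obstacle to be the recombination accounting: unlike in classical resolution, restricting the pivot produces an $|I|$-way, not a two-way, case split --- and $|I|$ can be super-constant when $\chara\F=0$ --- so one must verify both that the single full-size branch does not inflate the principal width (handled by the lexicographic induction) and that the $|I|$-fold recombination costs only $O(1)$ in principal width (handled by the parallel-hyperplane observation). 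The more routine part is checking that linear-substitution restriction and the lifting of Corollary~\ref{substCor} respect tree-likeness and carry over to the semantic-weakening system.
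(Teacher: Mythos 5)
Your proposal follows essentially the same route as the paper's proof: induct on the number of variables, locate the final resolution step of a minimal tree-like refutation (whose two resolvents are single-equation clauses defining parallel hyperplanes), split via linear substitutions $x\leftarrow l+B$ into $(n-1)$-variable subproblems, apply the induction hypothesis to each, lift back via Corollary~\ref{substCor}, and recombine. The only organizational difference is in the recombination: the paper first derives $f_1=0$ from the minor branches using Proposition~\ref{eqFromNeq}, then composes with the lifted major-branch refutation of $\phi\cup\{f_1=0\}$; you instead derive $\imAx{x-l}$ once and run a uniform peeling using all of the $\widehat{\rho}_{B_j}$'s. These two formulations are the same idea (Proposition~\ref{eqFromNeq} is itself a peeling of $\imAx{f}$), and yours treats the $B$'s a bit more symmetrically.

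There is, however, one point in the recombination accounting that you have flagged but not actually resolved. You correctly observe that each peeling step weakens $\widehat{\rho}_{B_j}$ by $\bigvee_{i>j}(x-l=B_i)$, costing at most $+1$ in principal width because the added disjuncts are all parallel to $x-l$. For the minor branches this $+1$ is absorbed, since those subproblems have tree-like size $\le S/2+O(1)$ and the induction hypothesis therefore hands back a bound with a spare unit from $\lceil\log_2(S/2)\rceil=\lceil\log_2 S\rceil-1$. But for the single full-size branch the substituted subproof has size $<S$, not $\le S/2$, so the induction hypothesis gives no such slack, and the $+1$ from weakening pushes that branch \emph{over} the target $\omega_0(\phi)+a\lceil\log_2 S\rceil+b$. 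Attributing the handling of this branch to the ``lexicographic induction'' conflates two separate things: the induction on $(n,S)$ lets you \emph{apply} the hypothesis (since $n$ drops), but it does nothing about the post-hoc $+1$ from weakening. The fix is that the full-size branch must occupy position $j=k$ in the peeling, where the weakening clause $\bigvee_{i>k}(\cdots)$ is empty and no $+1$ is incurred --- equivalently, one should compose its lifted refutation of $\phi\cup\{x-l=0\}$ at the very end rather than weaken it. This is exactly what the paper's two-stage recombination achieves. With that ordering pinned down, the remaining ingredients you use --- $\omega_0$-monotonicity under linear substitution, the parallel-hyperplane accounting, Corollary~\ref{substCor}, Proposition~\ref{imf} --- are all deployed correctly.
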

\begin{proof}

We prove by induction on $n$, the number of variables in $\phi$, the following:
$$
\omega_0(\phi\vdash\perp)\leq \lceil \log_2 S(\phi\vdash\perp)\rceil+\omega_0(\phi)+2\,.
$$ 

\Base $n=0$. Thus $\phi$ must contain only linear clauses $a=0$, for $a\in \F$, and the principal width for refuting $\phi$ is therefore 1.
 
\induction
Let $\pi$ be a tree-like refutation of $\phi=\{C_1,\dots,C_m\}$ such that $S(\pi)=S(\phi\vdash\perp)$ (i.e., $\pi$ is of minimal size).
Without loss of generality, we assume that the resolution rule in $\pi$ is only applied to simplified clauses, that is clauses not containing
disjuncts $1=0$ in case of \treslin{\F} and not containing unsatisfiable $f=0,\,0\notin im_2(f)$ in case of \treslinsw{\F}. The
former can be eliminated by the simplification rule and the latter by the semantic weakening rule.
By this assumption, the empty clause at the root of $\pi$ is derived in \treslin{\F} (resp.~\treslinsw{\F}) as a simplification (resp.~weakening)
of an unsatisfiable $h=0$ ($1=0$ in case of \treslin{\F}) equation, which is derived by application of the resolution rule. Denote the left and right subtrees,
corresponding to the premises of $h=0$, by $\pi_1$ and $\pi_2$, respectively.

The roots of $\pi_1$ and $\pi_2$ must be of the form $f_1=0$ and $f_2=0$, respectively, where $f_1-f_2=h$. 
Therefore, $$f_1=l(x_1,\dots,x_{n-1})+a_nx_n
\text{ ~and~ }
f_2=l(x_1,\dots,x_{n-1})+a_nx_n-h\,,$$ for some  $l(x_1,\dots,x_{n-1})=\sum\nolimits_{i=1}^{n-1}a_ix_i+B$, where $a_i,B\in \F$. 

Assume without loss of generality that $a_n\neq 0$ and $S(\pi_1)\leq S(\pi_2)$. We now use the induction hypothesis to construct a narrow derivation $\pi_1^{\bullet}$ of $f_1=0$ such that
\begin{align*}
\omega_0(\pi_1^{\bullet}) & \leq \lceil \log_2 S(\pi_1)\rceil+1+\omega_0(\phi)+2 \\
& \leq \lceil \log_2 S(\pi)\rceil+\omega_0(\phi)+2\,.
\end{align*}

For every nonzero $A\in im_2(f_1)$ define the partial linear substitution
$\rho_A$ as $x_n\leftarrow (A-l(x_1,\dots,x_{n-1}))a_n^{-1}$. Thus, $f_1\rst\rho_A = A$. The set of linear clauses  
\begin{equation}\label{eq:mashehu}
\phi\rst_{\rho_A}\cup\left\{{(A-l)a_n^{-1}=0}\vee {(A-l)a_n^{-1}=1} \right
\}\end{equation}
is  unsatisfiable and has  $n-1$ variables, and is refuted by $\pi_1\rst_{\rho_A}$. 

By induction hypothesis there exists a (narrow) refutation  $\pi_1^A$ of \eqref{eq:mashehu} with \begin{align*}
\omega_0(\pi_1^A)&\leq \lceil \log_2 S(\pi_1\rst_{\rho_A})\rceil+\omega_0(\phi)+2 \\
& \leq \lceil \log_2 S(\pi_1)\rceil+\omega_0(\phi)+2\,.
\end{align*}
By Corollary~\ref{substCor}
there exists a derivation $\widehat{\pi}_1^A$ of $\langle l+a_nx_n\neq A\rangle$ from $\phi$ such that
$\omega_0(\widehat{\pi}_1^A)\leq \max(\omega_0(\pi_1^A)+1,\omega_0(\phi)+2)\leq \lceil \log_2 S(\pi_1)\rceil+\omega_0(\phi)+3$. By Proposition~\ref{eqFromNeq}
there exists 
a derivation $\pi_1^{\bullet}$ of $f_1=0$ such that $\omega_0(\pi_1^{\bullet})\leq \lceil \log_2 S(\pi_1)\rceil+\omega_0(\phi)+3\leq \lceil \log_2 S(\pi)\rceil+\omega_0(\phi)+2$.

Consider the following substitution $\rho$: $x_n \leftarrow -l\cdot a_n^{-1}$. Then, $\pi_2|_{\rho}$ is a derivation of $h=0$ from
$\phi|_{\rho}\cup\{{-l\cdot a_n^{-1}=0}\vee {-l\cdot a_n^{-1}=1}\}$, which we augment to refutation $\pi'_2$ by taking composition with simplification (resp.~weakening) in case of \treslin{\F} (resp. \treslinsw{\F}). By induction hypothesis there exists a refutation $\pi_2^{\bullet}$ of width 
\begin{align*}
\omega_0(\pi_2^{\bullet}) &\leq \lceil \log_2 (S(\pi'_2)+1)\rceil+\omega_0(\phi)+2 \\ &
\leq \lceil \log_2 S(\pi)\rceil+\omega_0(\phi)+2\,,
\end{align*}
and thus by Corollary~\ref{substCor} there exists a refutation $\widehat{\pi}_2^{\bullet}$ of $\phi\cup\{f_1=0\}$ of width 
$\omega_0(\widehat{\pi}_2^{\bullet})\leq \lceil \log_2 S(\pi)\rceil+\omega_0(\phi)+2$. The combination of $\widehat{\pi}_2^{\bullet}$ and $\pi_1^{\bullet}$
gives a refutation of $\phi$ of the desired width.
\end{proof}

\begin{theorem}\label{pcsim}
Let $\F$ be a field and $\pi$ be a \reslin{\F} refutation of an unsatisfiable set of linear clauses $\phi$. Then, there exists a \PCa{\F} refutation
$\pi'$ of (the arithmetization of) $\phi$ of degree $\omega(\pi)$.
\end{theorem}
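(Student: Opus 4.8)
The goal is to translate a \reslin{\F} refutation $\pi$ of an unsatisfiable set of linear clauses $\phi$ into a \PCa{\F} refutation $\pi'$ of the arithmetization of $\phi$ whose degree is at most $\omega(\pi)$, the maximal width (number of disjuncts) of a clause in $\pi$. The natural approach is to arithmetize each linear clause: a linear clause $C=(f_1=0\vee\dots\vee f_m=0)$ is represented by the polynomial $\mathrm{ar}(C):=\prod_{i=1}^m f_i$, which vanishes on a $0$-$1$ assignment exactly when $C$ is satisfied. Note that this polynomial has total degree $m=\omega(C)\le\omega(\pi)$, since each $f_i$ is linear. The plan is then to go through each rule of \reslin{\F} and show that the corresponding step can be simulated in \PCa{\F} staying within degree $\omega(\pi)$, working modulo the boolean axioms $x_j^2-x_j$ which are available for free in polynomial calculus.

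First I would handle the \emph{axioms}. Input clauses $C\in\phi$ are translated to $\mathrm{ar}(C)$, which is exactly the arithmetization of $\phi$ that \PCa{\F} starts from. The boolean axioms $(x_j=0\vee x_j=1)$ arithmetize to $x_j(x_j-1)=x_j^2-x_j$, which is a \PCa{\F} axiom. The simplification rule, which removes a false constant disjunct $a\neq 0$ from $C\vee(a=0)$, corresponds to passing from $a\cdot\mathrm{ar}(C)$ to $\mathrm{ar}(C)$, i.e.\ multiplying by the scalar $a^{-1}$ — a linear-combination step of degree $\le\omega(\pi)$. The weakening rule, deriving $C\vee(f=0)$ from $C$, corresponds to multiplying $\mathrm{ar}(C)$ by $f$; this uses the multiplication rule of \PCa{\F} applied variable-by-variable (write $f=\sum a_j x_j+b$ and use $\alpha f+\beta g$ and $x\cdot f$ rules), and increases degree by exactly one, landing at $\omega(C)+1=\omega(C\vee f=0)\le\omega(\pi)$. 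The central case is the \emph{resolution rule}: from $C\vee(f=0)$ and $D\vee(g=0)$ derive $C\vee D\vee(\alpha f+\beta g=0)$. Here I would argue as follows. Let $P=\mathrm{ar}(C)\cdot f$ and $Q=\mathrm{ar}(D)\cdot g$ be the two premises' arithmetizations, of degrees $\le\omega(\pi)$. I want to derive $\mathrm{ar}(C)\cdot\mathrm{ar}(D)\cdot(\alpha f+\beta g)$. Multiply $P$ by $\mathrm{ar}(D)$ (via the multiplication rule, one variable at a time — but this may temporarily exceed the degree bound, which is the obstacle discussed below) and by the scalar $\alpha$ to get $\alpha\,\mathrm{ar}(C)\mathrm{ar}(D)f$; similarly get $\beta\,\mathrm{ar}(C)\mathrm{ar}(D)g$; add them. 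The resulting polynomial $\mathrm{ar}(C)\mathrm{ar}(D)(\alpha f+\beta g)$ is the arithmetization of the conclusion up to the fact that in the conclusion duplicate disjuncts are contracted; contraction only \emph{lowers} degree, and replacing a product $h\cdot h$ by $h$ is sound modulo $x_j^2-x_j$ for linear $h$ (since $h^2\equiv h$ when $h$ is a $0$-$1$ literal, and more generally one reduces modulo the boolean axioms), so this is handled by further linear-combination and multiplication steps within the degree bound of the conclusion $\le\omega(\pi)$.

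The main obstacle is controlling the \emph{intermediate} degree during the multiplication steps: to get $\mathrm{ar}(C)\mathrm{ar}(D)(\alpha f+\beta g)$ from $\mathrm{ar}(C)f$, one multiplies by the degree-$\omega(D)$ polynomial $\mathrm{ar}(D)$, and the product $\mathrm{ar}(C)\mathrm{ar}(D)(\alpha f+\beta g)$ has degree $\omega(C)+\omega(D)+1$, which can be far larger than $\omega(\pi)\ge\omega(C\vee D\vee(\alpha f+\beta g=0))$ because disjuncts shared between $C$ and $D$ (and the new disjunct) are contracted in the syntactic conclusion. The fix is to not take the naive product but to build the conclusion's arithmetization $\prod$ over the \emph{contracted} set of disjuncts directly: one reduces modulo $x_j^2-x_j$ and modulo the already-derived equations at each multiplication step, keeping only one copy of each repeated linear factor, so that every polynomial appearing has degree at most the width of the corresponding (contracted) clause. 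Concretely I would prove a small lemma: for linear clauses, if $E$ is obtained from $E'$ by contracting a duplicated disjunct, then $\mathrm{ar}(E)$ can be \PCa{\F}-derived from $\mathrm{ar}(E')$ in degree $\le\max(\deg\mathrm{ar}(E),\deg\mathrm{ar}(E'))$, using that $h^2-h\in(x_j^2-x_j:j)$ for linear $h$ taking $0$-$1$ values — and then interleave contraction with multiplication so the running degree never exceeds $\omega(\pi)$. With this lemma in hand, each rule application is simulated within degree $\omega(\pi)$, the empty clause becomes the constant polynomial $1$ (the empty product, or rather a nonzero scalar after simplification), and concatenating the simulations of all proof-lines of $\pi$ yields the desired \PCa{\F} refutation $\pi'$ of degree $\omega(\pi)$. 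I would also remark that size is polynomially preserved, though the statement only asks for the degree bound.
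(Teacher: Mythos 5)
Your setup — arithmetize each linear clause $C=(f_1=0\vee\dots\vee f_m=0)$ as $\mathrm{ar}(C)=\prod_i f_i$ and simulate each \reslin{\F} rule by product and linear-combination steps of \PCa{\F} — is the same as the paper's, and your handling of the axioms, weakening, and simplification cases is fine. You also correctly identify the real obstacle: in the resolution step the naive product $\mathrm{ar}(C)\cdot\mathrm{ar}(D)\cdot(\alpha f+\beta g)$ has degree $\omega(C)+\omega(D)+1$, which can exceed the width of the contracted conclusion. But your proposed fix is wrong. The lemma you lean on — that $h^2-h\in\langle x_1^2-x_1,\dots,x_n^2-x_n\rangle$ for linear $h$ — holds only when $h$ takes $0$--$1$ values on the boolean cube, and the disjuncts in a \reslin{\F} proof are arbitrary affine forms, which generically do not. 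For example, with $h=x_1+x_2$ one has $h^2-h\equiv 2x_1x_2\pmod{\langle x_j^2-x_j\rangle_j}$, which is nonzero whenever $\operatorname{char}(\F)\neq 2$; so there is no way to \PCa{\F}-derive $f_1f_2f_3$ from $f_1f_2^2f_3$ at low degree (or at all), and your ``interleave contraction with multiplication'' step breaks.

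The paper avoids contraction entirely by choosing what to multiply by rather than multiplying and contracting afterwards. Decompose the premises as $C\vee E\vee(f=0)$ and $C'\vee E\vee(g=0)$, where $E$ collects exactly the disjuncts common to the two side clauses and $C,C'$ are disjoint. Then $a(C\vee E\vee f{=}0)=a(C)\,a(E)\,f$, and you derive $a(C)\,a(C')\,a(E)\,f$ by multiplying, one linear factor at a time, only by the factors of $a(C')$ — precisely the factors not already present — so every intermediate polynomial has degree at most $|C|+|C'|+|E|+1$, the width of the conclusion. Doing the same from the other premise and taking an $(\alpha,\beta)$ linear combination yields $a(C)\,a(C')\,a(E)\,(\alpha f+\beta g)$, with no appeal to $h^2\equiv h$ anywhere. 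Replacing your contraction lemma with this ``factor out the shared part $E$ first'' bookkeeping repairs the argument; the rest of your proposal then goes through as written.
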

\vspace{-12pt} 
\begin{proof}
The idea is to replace every clause $C=(f_1=0\vee\ldots\vee f_m=0)$ in $\pi$ by its arithmetization $a(C):=f_1\cd\ldots\cd f_m$, and then augment this sequence to a valid \PCa{\F} derivation by simulating all the rule applications in $\pi$ by several \PCa{\F} rule applications.  

\case 1 If $D=(C\vee g_1=0\vee\ldots\vee g_m=0)$ is a weakening of $C$, then apply the product and the addition rules to derive $a(D)=a(C)\cd g_1\cd\ldots\cd g_m$
from $a(C)$.

\case 2 If $D$ is a simplification of $D\vee 1=0$, then $a(D)=a(D\vee 1=0)$.

\case 3 If $D=(x=0\vee x=1)$ is a a boolean  axiom, then $a(D)=x^2-x$ is an axiom of \PCa{\F}.

\case 4 If $D=(C\vee C'\vee E\vee \alpha f+\beta g=0)$ is a result of resolution of $(C\vee E\vee f=0)$ and $(C'\vee E\vee g=0)$, where $C$ and
$C'$ do not contain the same disjuncts, then by the product and addition rules of PC we derive $a(C)\cd a(C')\cd a(E)\cd f$ from  $a(C\vee E\vee f=0)=a(C)\cd a(E)\cd f$, and also derive $a(C)\cd a(C')\cd a(E)\cd g$ ~from $~a(C'\vee E\vee f=0)=a(C')\cd a(E)\cd f$, and then apply the addition rule to derive $a(C)\cd a(C')\cd a(E)\cd (\alpha f+\beta g)=a(D)$.

It is easy to see that the degree of the resulting \PCa{\F} refutation is at most $\omega(\pi)$.
\end{proof}

As a consequence of Theorems \ref{sizeWidth} and \ref{pcsim}, and the relation $\omega_0\geq \frac{1}{|\F|}\omega$ as well as the results from \cite{AR01}, we have the following:

\fedor{Actually, as of now, we only have lower bounds on width of \reslin{\F} refutations (not on \reslinsw{\F} refutations).
It should not be hard to see, whether \reslinsw{\F} width is the same as \reslin{\F} width.}

\begin{corollary}\label{tsLB}
For every prime  $p$ there exists a constant $d_0=d_0(p)$ such that the following holds. If $d\geq d_0$, $G$ is a $d$-regular Ramanujan
graph on $n$ vertices (augmented with arbitrary orientation to its edges) and $\F$ is a finite field with  $char(\F)\neq p$, then for every
function $\sigma$ such that $\neg\text{TS}^{(p)}_{G,\sigma}\in \text{UNSAT}$, every \treslin{\F} refutation of $\neg\text{TS}^{(p)}_{G,\sigma}$
has size $2^{\Omega(dn)}$.
\end{corollary}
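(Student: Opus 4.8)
The plan is to combine the three ingredients assembled immediately before the corollary: the size–width relation for tree-like refutations (Theorem~\ref{sizeWidth}), the polynomial calculus simulation (Theorem~\ref{pcsim}), and the Alekhnovich–Razborov degree lower bound for Tseitin mod $p$ formulas over fields of characteristic $\neq p$ (the proposition in Section~\ref{sec:tsDef}, instantiated at a $d$-regular Ramanujan graph via the Lubotzky--Phillips--Sarnak construction). Fix a prime $p$ and let $d_0=d_0(p)$ be the constant from the Alekhnovich--Razborov proposition; take $d\geq d_0$, $G$ a $d$-regular Ramanujan graph on $n$ vertices with an arbitrary orientation, $\F$ a finite field with $char(\F)\neq p$, and $\sigma$ with $\neg\text{TS}^{(p)}_{G,\sigma}\in\text{UNSAT}$. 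Write $\phi=\neg\text{TS}^{(p)}_{G,\sigma}$ and let $S$ be the size of a shortest \treslin{\F} refutation of $\phi$.

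First I would invoke Theorem~\ref{sizeWidth}. Since $\F$ is finite, that theorem gives the size–width relation in terms of the ordinary width $\omega$: $S=2^{\Omega(\omega(\phi\vdash\perp)-\omega(\phi))}$, equivalently $\omega(\phi\vdash\perp)\leq O(\log S)+\omega(\phi)+O(1)$. Here $\omega(\phi)$ is bounded by the maximum number of literals in a clause of $\phi$; since $\phi$ is the standard CNF encoding of degree-$d$ parity-type constraints, each clause has at most $d$ literals, so $\omega(\phi)\leq d=O(1)$ for fixed $p$ (recall $d$ ranges over a constant once $p$ is fixed, or at worst contributes an additive $d$ which we absorb below). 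Hence any shortest tree-like \reslin{\F} refutation can be converted into one of width $\omega=O(\log S)+O(d)$. Applying Theorem~\ref{pcsim} to that narrow refutation yields a \PCa{\F} refutation of (the arithmetization of) $\phi$ of degree at most $\omega=O(\log S)+O(d)$.

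On the other hand, by the Alekhnovich--Razborov proposition, every \PCa{\F} refutation of $\neg\text{TS}^{(p)}_{G,\sigma}$ has degree $\Omega(dn)$. Combining the two bounds on the degree gives $O(\log S)+O(d)\geq \Omega(dn)$, so $\log S=\Omega(dn)$, i.e. $S=2^{\Omega(dn)}$, which is the claimed bound. A minor point to check along the way is the precise form in which Theorem~\ref{pcsim} consumes the refutation: it is stated for \reslin{\F}, and a width-$\omega$ tree-like refutation is in particular a \reslin{\F} refutation of width $\omega$, so the arithmetization produces a \PCa{\F} refutation of degree $\le\omega$ as needed; also I should confirm that the arithmetization of the CNF $\phi$ used in Theorem~\ref{pcsim} matches the one for which Alekhnovich--Razborov prove their degree bound (both use the standard clause arithmetization $(x_1\vee\cdots)\mapsto\prod(1-x_i)\cdots=0$), so no translation loss beyond constants occurs.

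I do not expect a serious obstacle here: the corollary is a routine chaining of the three preceding results. The only thing requiring a moment of care is bookkeeping of the additive $\omega(\phi)$ and the additive $O(d)$ constants against the target $\Omega(dn)$ — since these are $O(d)$ while the target is $\Omega(dn)$, they are dominated for $n$ large, and for fixed $p$ (hence fixed admissible range considerations) this causes no trouble; one simply chooses the hidden constant in $2^{\Omega(dn)}$ small enough. One should also note explicitly that the relevant expander (Ramanujan) graphs of every degree $d\ge d_0$ exist by Lubotzky--Phillips--Sarnak, so the hypothesis of the corollary is non-vacuous, and that $\neg\text{TS}^{(p)}_{G,\sigma}\in\text{UNSAT}$ forces $\sum_{u}\sigma(u)\not\equiv 0\pmod p$, which is exactly the condition under which the Alekhnovich--Razborov degree bound applies.
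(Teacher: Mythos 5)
Your proposal is correct and follows essentially the same route as the paper: the paper likewise chains the Alekhnovich--Razborov \PCa{\F} degree bound, Theorem~\ref{pcsim}, and the size--width relation of Theorem~\ref{sizeWidth}, only phrasing it via principal width and the relation $\omega_0\geq\frac{1}{|\F|}\omega$ rather than via the finite-field (standard-width) clause of Theorem~\ref{sizeWidth} that you invoke. Your extra bookkeeping of the additive $\omega(\phi)=O(d)$ term is a harmless elaboration of the same argument.
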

\begin{proof}
Corollary 4.5 from \cite{AR01} states that the degree of \PCa{\F} refutations of $\neg\text{TS}^{(p)}_{G,\sigma}$ is $\Omega(dn)$. Theorem~\ref{pcsim}
implies that the principal width of \reslin{\F} refutations of $\neg\text{TS}^{(p)}_{G,\sigma}$ is $\Omega(\frac{1}{|\F|}dn)=\Omega(dn)$ and thus
by Theorem~\ref{sizeWidth} the size is $2^{\Omega(dn)}$.
\end{proof}

\begin{corollary}\label{rndLB}
Let $\phi\sim\mathcal{F}^{n,\Delta}_k,k\geq 3$ and $\Delta=\Delta(n)$ be such that $\Delta=o(n^{\frac{k-2}{2}})$ and let \F\ be any finite field. Then every \treslin{\F} refutation of $\phi$ has size $2^{\Omega\left(\frac{n}{\Delta^{2/(k-2)}\cdot\log{\Delta}}\right)}$ with probability $1-o(1)$.
\end{corollary}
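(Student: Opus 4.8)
The plan is to follow the template of the proof of Corollary~\ref{tsLB} essentially verbatim, replacing the Alekhnovich--Razborov degree lower bound on Tseitin mod $p$ formulas by their degree lower bound on random $k$-CNFs. Recall from \cite{AR01} (quoted above) that when $\Delta=\Delta(n)=o\!\left(n^{(k-2)/2}\right)$ and $k\ge 3$, with probability $1-o(1)$ \emph{every} \PCa{\F} refutation of $\phi\sim\mathcal{F}^{n,\Delta}_k$ has degree $\Omega\!\left(\frac{n}{\Delta^{2/(k-2)}\cdot\log{\Delta}}\right)$, and this holds over an arbitrary field $\F$, in particular a finite one. Fix $\phi$ in this $1-o(1)$ event for the remainder of the argument.

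First I would convert this \PCa{\F} degree lower bound into a width lower bound for \reslin{\F}. Let $\pi$ be \emph{any} \reslin{\F} refutation of (the set of linear clauses encoding) $\phi$. By Theorem~\ref{pcsim}, $\pi$ yields a \PCa{\F} refutation of the arithmetization of $\phi$ of degree at most $\omega(\pi)$; since every \PCa{\F} refutation of $\phi$ has degree $\Omega\!\left(\frac{n}{\Delta^{2/(k-2)}\cdot\log{\Delta}}\right)$, this forces $\omega(\pi)=\Omega\!\left(\frac{n}{\Delta^{2/(k-2)}\cdot\log{\Delta}}\right)$; as $\pi$ was arbitrary, $\omega(\phi\vdash\perp)$ is bounded below by the same quantity. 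Using $\omega_0\ge\frac{1}{|\F|}\omega$ and the fact that $|\F|$ is a constant, the same lower bound holds for the principal width, i.e.\ $\omega_0(\phi\vdash\perp)=\Omega\!\left(\frac{n}{\Delta^{2/(k-2)}\cdot\log{\Delta}}\right)$. Finally, since $\phi$ is a $k$-CNF we have $\omega_0(\phi)\le\omega(\phi)\le k=O(1)$, so the tree-like size--width relation of Theorem~\ref{sizeWidth} gives
\[
S(\phi\vdash\perp)=2^{\Omega(\omega_0(\phi\vdash\perp)-\omega_0(\phi))}=2^{\Omega\left(\frac{n}{\Delta^{2/(k-2)}\cdot\log{\Delta}}\right)}
\]
for tree-like \reslin{\F}, which is the claim, holding with probability $1-o(1)$ over the random choice of $\phi$. (One could instead invoke the finite-field version of Theorem~\ref{sizeWidth} stated directly in terms of the standard width $\omega$ and skip the passage through $\omega_0$.)

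There is no real obstacle here: the three ingredients --- Theorem~\ref{pcsim}, Theorem~\ref{sizeWidth}, and the random-$k$-CNF degree bound of \cite{AR01} --- compose directly, exactly as in the proof of Corollary~\ref{tsLB}, so the ``hard part'' is merely checking two pieces of bookkeeping. The first is that the clause arithmetization used in Theorem~\ref{pcsim} (which sends $x_i\vee\neg x_j$ to $(x_i-1)\cdot x_j$) agrees, up to multiplying each initial polynomial by an irrelevant overall sign, with the standard clause arithmetization for which \cite{AR01} proves its degree lower bound --- so the \PCa{\F} lower bound genuinely applies to the refutation output by Theorem~\ref{pcsim}. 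The second is that $k$ and $|\F|$ are to be treated as constants, so that the factors $k$ and $|\F|$ appearing along the way are absorbed into the $\Omega(\cdot)$ without affecting the asymptotics. As in Corollary~\ref{tsLB}, the statement is restricted to tree-like \reslin{\F} because the size--width relation is available only in the tree-like setting; the dag-like case over finite fields remains open.
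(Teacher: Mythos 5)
Your proof is correct and follows essentially the same route as the paper's own (one-paragraph) argument: quote the Alekhnovich--Razborov $\PCa{\F}$ degree lower bound for random $k$-CNFs, use Theorem~\ref{pcsim} to translate it to a width (and hence principal width, via $\omega_0\ge\frac{1}{|\F|}\omega$ for finite $\F$) lower bound for \reslin{\F}, and conclude via the size--width relation of Theorem~\ref{sizeWidth}. You are more careful than the paper in spelling out the bookkeeping (checking $\omega_0(\phi)\le k=O(1)$, the sign-difference in the arithmetization, the absorption of constant factors), and you correctly note a harmless issue the paper glosses over; the paper's proof also contains a copy-paste typo writing the lower bound as $\Omega(dn)$ rather than $\Omega\!\left(\frac{n}{\Delta^{2/(k-2)}\cdot\log{\Delta}}\right)$, which you avoid.
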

\begin{proof}
Corollary 4.7 from \cite{AR01} states that the degree of \PCa{\F} refutations of $\phi\sim\mathcal{F}^{n,\Delta}_k$, where $k\geq 3$, is $\Omega(dn)$ with
probability $1-o(1)$. Theorem~\ref{pcsim}
implies that the principal width of \reslin{\F} refutations of $\phi\sim\mathcal{F}^{n,\Delta}_k$ is $\Omega(\frac{1}{|\F|}dn)=\Omega(dn)$ and thus
by Theorem~\ref{sizeWidth} the size of the refutations is $2^{\Omega(dn)}$ with probability $1-o(1)$.
\end{proof}
 
\section*{Acknowledgments}
We wish to thank Dima Itsykson and Dima Sokolov for very helpful comments concerning this work, and telling us about the lower bound on random $k$-CNF formulas for tree-like \reslin{\F_2} that can be achieved using the results of Garlik and Ko\l odziejczyk. We thank Edward Hirsch for spotting a gap in the initial proof of the dag-like lower bound concerning the use of the weakening rule. 


%
\newcommand{\STOC}{STOC}

\small  
\bibliographystyle{plain}
\bibliography{PrfCmplx-Bakoma}
\normalsize


\end{document}